\documentclass{./styl_files/llncs}
\input{llncs_header.tex}
\title{A Broadcast Authenticated Encryption with Keyword Search in the Standard Model}
\subtitle{Tightly Secure in Multi-User, Multi-Challenge Settings}

\author{Sayantan Mukherjee}
\institute{Department of Computer Science and Engineering,\\
Indian Institute of Technology, Jammu\\\email{csayantan.mukherjee@gmail.com}}%\footnote{corresponding author}}}

\begin{document}

\maketitle

%!TEX spellcheck = en_US
%!TEX root = ../main.tex

\begin{abstract}
    Searchable Encryption (SE) allows users to perform a keyword search over encrypted documents. In Eurocrypt'04, Boneh et al. introduced Public-key Encryption with Keyword Search (PEKS). Broadcast Encryption with Keyword Search (BEKS) is a natural progression that allows for some amount of access control. However, PEKS and BEKS suffer from keyword-guessing attacks (KGA). In the case of KGA, an adversary guesses the keyword encoded in a trapdoor by creating a ciphertext on a sequence of keywords of its choice and testing them against the trapdoor. In ACISP'21, Liu et al. introduced a variant of BEKS called Broadcast Authenticated Encryption with Keyword Search (BAEKS) as a generalization of Public-Key Authenticated Encryption with Keyword Search (PAEKS) in troduced in Information Sciences in 2017. This was followed with a number works on BAEKS and PAEKS.

    However, no known work considered the functionality requirement in its most realistic setting. We propose a new security definition of BAEKS in the multi-user (with adaptive corruptions) and multi-challenge (both in terms of ciphertext and trapdoor in an interleaved manner) settings. We also study the question of the unforgeability of BAEKS. In fact, our strong hiding requirement already implies a significant amount of unforgeability. We then propose a new BAEKS construction in the bilinear pairing groups. We prove this scheme achieves adaptive tight full-hiding security under (almost) standard MDDH assumptions. Restricting our BAEKS construction for a single receiver also gives an efficient and tightly secure PAEKS construction. We further run experiments to implement and evaluate our scheme. 
\end{abstract}

\section{Introduction}
\label{sec:Intro}
PEKS \cite{EC:BDOP04} is an well known primitive to solve the problem of searching over encrypted data.
In a PEKS, $\Test(\Trp,\Ct)$ outputs $1$ if the ciphertext-attribute $y$ (encoded in ciphertext $\Ct$ by $\SEnc$) \emph{matches} the key-attribute $x$ (encoded in trapdoor $\Trp$ by $\Tgen$) where $\SEnc$ is a publicly evaluation function and $\Tgen$ uses the master secret key $\msk$.
This line of work \cite{TCC:BonWat07,JC:ABCKKL08,C:BonGenWat05,AC:AttFurIma06,INDOCRYPT:ChaMuk18,PKC:LZCHQ24} has produced several PEKS of different security levels and functionalities. %EC:KatSahWat08,
Keyword guessing attacks \cite{BRPL06,YHG08} capture the question if a trapdoor hides encoded key-attribute (in this case the keyword).
Being public key encryptions, however, all the above PEKS schemes were vulnerable to keyword guessing attacks as an adversary can create a ciphertext (using $\SEnc$) and run $\Test$ on the challenge trapdoor to comprehend the keyword encoded in the trapdoor.
There have been many efforts to introduce various query, advanced security requirements, efficiency improvements and expansions. 
In the literature, two prominent directions to overcome key guessing attacks are:
\begin{itemize}
    \item Restrict searching capability: This line of work studied designated-tester PEKS \cite{BaeNaiSus08,FSGW09,EMRO15}. Here, $\Test$ uses the tester's secret key.
    \item Restrict encryption capability: This line of work studied public key authenticated encryption with keyword search (PAEKS) \cite{HuaLi17,ChiQinZhe20,PanLi21,ESORICS:CheMen22,LuLi22,APKC:Emura22,LHHS23,YWYLWJ23}. 
    Here, $\SEnc$ uses the sender's secret key.
\end{itemize}

% The \emph{public key authenticated encryption with keyword search} (PAEKS) \cite{HuaLi17} addressed key guessing attacks by restricting the adversaries encryption capability. 
% For a PAEKS $\SEnc$ uses the sender's secret key.

% In the literature, two prominent directions to overcome key guessing attacks are:
% \begin{itemize}
%     \item Restrict searching capability: This line of work studied designated-tester PEKS \cite{BaeNaiSus08,FSGW09,EMRO15}. Here, $\Test$ uses the tester's secret key.
%     \item Restrict encryption capability: This line of work studied public key authenticated encryption with keyword search (PAEKS) \cite{HuaLi17,ChiQinZhe20,PanLi21,ESORICS:CheMen22,LuLi22,APKC:Emura22,LHHS23,YWYLWJ23}. 
%     Here, $\Enc$ uses the sender's secret key.
% \end{itemize}

\noindent Over the years, PAEKS got much attention \cite{HuaLi17,ChiQinZhe20,InfoSc:QCHLZ20,JISA:PanLi21,ESORICS:CheMen22,LuLi22,APKC:Emura22,LHHS23,InfoSc:YWYLWJH23,InfoSc:CQFM23,CiC:LiBoy24}.
PAEKS allows a sender $\Sndr$ to compute a searchable ciphertext on $\kw$ for a receiver $\Rcvr$ using both sender's secret key $\sk_{\Sndr}$ and the receiver's public key $\pk_{\Rcvr}$.
To check if a searchable ciphertext on $\kwd$ is from a sender $\Sndrr$ intended for a receiver $\Rcvrr$, the receiver creates a trapdoor that is computed using the sender's public key $\pk_{\Sndrr}$ and the receiver's secret key $\sk_{\Rcvrr}$.
Given a ciphertext and a trapdoor, any public entity runs $\Test$ to find if $\Sndr=\Sndrr$, $\kw=\kwd$ and $\Rcvr=\Rcvrr$.

In ACISP 2021, Liu \etal \cite{ACISP:LHYSTH21} extended this primitive towards supporting multiple receivers.
In particular, they introduced the notion of \emph{broadcast authenticated encryption with keyword search} (BAEKS) to allow a sender to encrypt a keyword for a set of authorized receivers.\footnote{Obviously PAEKS is a BAEKS where the underlying authorized set is singleton. This work focuses on BAEKS and the results rightaway applies to PAEKS. See \Cref{rem:PAEKS-BAEKS} for comments on BAEKS and PAEKS relationship.}
This is followed up with a \emph{statistically consistent} BAEKS construction by Mukherjee \cite{ACISP:Mukherjee23} and a generic construction of \emph{computationally consistent} BAEKS proposed by Emura \cite{IET:Emura23}.
Both the works defined different versions of consistency notions and argued security in the standard model.
However, none of these works considered security in the multi-challenge settings which is more realistic for an \emph{authenticated encryption} typically deployed for a set of users. 
In case of BAEKS (or PAEKS), the encryption uses both secret and public keys, multi-challenge security of BAEKS (or the simpler variant PAEKS) does not necessarily follow from their single-challenge variant.
Moreover, the security models should also ensure that they withstand multi-challenge queries made in an interleaved manner.
As far we could find, there are no works on BAEKS (or for that matter PAEKS) with sufficient formalism.

% Tightness in multi-challenge security has been studied extensively in the recent past for the (Hierarchical) Identity-based Encryption \cite{C:HofJag12}.
In this work we have considered large-universe setting where number of users $|\Users|$ could be exponential in the security parameter $\secpp$. 
Thus the security reduction advantage is upper bounded by $|\Users|\cdot \epsilon$ for some negligible $\epsilon$ does not give an effective solution. 
Moreover, in a multi-user, multi-challenge security, security reductions often suffer from a nontrivial multiplicative security loss $J$ which is in the order of number of queries adversary has made.
As an example, both the most prominent BAEKS \cite{ACISP:Mukherjee23,IET:Emura23} (and PAEKS \cite{JISA:PanLi21,ESORICS:CheMen22,LuLi22,APKC:Emura22,LHHS23,InfoSc:YWYLWJH23,InfoSc:CQFM23,CiC:LiBoy24}) suffer from a multiplicative loss of $\bigTh{|\Qct|+|\Qtrp|}$ where $\Qct$ (and $\Qtrp$) is the number of ciphertext oracle queries (resp. trapdoor oracle queries).
As \cite{EC:BelRog96,C:HofJag12} has argued, we should always aim for tight security where security reductions does not suffer from such a loss.
This motivates us to explore tight security of BAEKS (and thereby of PAEKS) in the multi-user, multi-challenge setting.

\begin{remark}\label{rem:PAEKS-BAEKS}
    On a closer look, both \cite{ACISP:Mukherjee23,IET:Emura23} developed BAEKS using PAEKS although \cite{ACISP:Mukherjee23} did not mention it explicitly.
    Thus, in both \cite{ACISP:Mukherjee23,IET:Emura23} size of BAEKS ciphertext = $\bigO{|\RSet|}\times$ size of PAEKS ciphertext where $\RSet$ is the set of authorized receivers.
    This is indeed in line with state-of-the-art anonymous Broadcast encryption schemes \cite{PKC:LibPatQua12,ACNS:LiGon18} where anonymity was attained at the cost of ciphertext size linearly dependent on $\bigO{|\RSet|}$.
\end{remark}

\subsection{Motivation}
As a motivation for BAEKS functionality, let us consider an hospital where a doctor encrypts patient's health record with with regards to hospital's public key.
Using PEKS \cite{EC:BDOP04}, hospital could allow other doctor's to search for keywords on this health record by providing them trapdoors. Inquisitive insurance agents, too nosy about patient's record, can mount key-guessing attack on the trapdoors. 
To avoid this, a hospital can decide using PAEKS where encryption using hospital's public key and the encrypting doctor's secret key.
Consider a generalization of this scenario, where a doctor wishes to share this record with different doctors from other hospitals. 
This is where BAEKS turns out to be an important primitive towards solution of this problem.

% wants to check a report by a colleague who is trying to whistleblow about all the deaths due to the nipah cases in his locality.
% Obviously, the whistleblower would like to send this information to multiple officers such that the officers wouldn't know who else got this, the health department administration will not know it. 
% However, an officer who is among the intended receivers will be able to efficiently test and identify the files that keeps track of nipah deaths. 
% This basically ensures any person can whistleblow secretly without loosing anonymity.

% {\color{red}Give a picture?}

\subsection{Related Works} \label{sec:Related-Works}
Pan and Li \cite{JISA:PanLi21} introduced multi-ciphertext indistinguishability (MCI) and multi-trapdoor indistinguishability (MTI) and proposed a PAEKS secure in the random oracle model.\footnote{A discussion of these security notions is given in \Cref{sec:Different-Security_Notions}.} 
Cheng and Meng \cite{JISA:CheMen21} showed \cite{JISA:PanLi21} to be insecure and proposed a PAEKS \cite{ESORICS:CheMen22} under LWE assumption.
Liu \etal \cite{ASIACCS:LTTMC22} introduced an informal argument of consistency in PAEKS and proposed a new generic construction along with a lattice-based instantiation.
Emura \cite{APKC:Emura22} argued the generic construction of \cite{ASIACCS:LTTMC22} is not practical, formally introduced computational consistency in PAEKS and proposed a generic construction and proved it to be IND-CKA and IND-IKGA secure.\footnotemark[2] Emura also mentioned MTI security to be an open problem.  
This was followed with \cite{TIFS:LHHS23} who proposed PAEKS with updatability.
Cheng \etal \cite{InfoSc:CQFM23} proposed a new PAEKS construction under random-oracle and followed it up with \cite{TIFS:CheMen23} a server aided version for better efficiency.
Recently Li and Boyen \cite{CiC:LiBoy24} proposed a generic construction of PAEKS.
Despite being so many works in this field, the security definition is not yet standardadized \cite[Table 1]{InfoSc:CQFM23}. Moreover, the notion of consistency is an core objective of a searchable encryption \cite{JC:ABCKKL08} which a number of PAEKS constructions did not consider. Given the current status of PAEKS constructions available, we now take a look at existing BAEKS constructions to identify research gaps.

Mukherjee \cite{ACISP:Mukherjee23} introduced a statistical consistency definition for BAEKS. They also improved the adaptive security notions of \cite{ACISP:LHYSTH21} towards reality and proposed a standard model construction attaining the new stronger security notions. They also proposed a novel technique to argue statistical consistency of their construction.

Emura \cite{IET:Emura23} introduced a computational consistency definition for BAEKS extending the computational consistency definition of \cite{APKC:Emura22,IEICE:Emura24}. 
Their security model allows user corruptions\footnote{Corruption of users are pretty common in a real network scenario. 
Corruptions of users during a security game done via adversary receiving secret key of users it selected and therefore captures stronger security.} which was not allowed in \cite{ACISP:Mukherjee23}. 
However, they didn't consider sender anonymity unlike \cite{ACISP:Mukherjee23}. 
They then proposed a BAEKS construction from generic PAEKS construction of \cite{APKC:Emura22}. 
They then claim consistency and security of this construction due to consistency and security of the underlying PAEKS \cite{APKC:Emura22}. 
Our critical analysis of the consistency proof of \cite{APKC:Emura22} (and thereby \cite{IET:Emura23}) brings up several questions rendering the proof inconclusive. 
We present an analysis of the consistency proof of PAEKS \cite{APKC:Emura22} (and thereby BAEKS \cite{IET:Emura23}) in \Cref{sec:Emura-Correctness}.

In PKC2024, Ling \etal \cite{PKC:LZCHQ24} proposed two PEKS constructions in the multi-user, multi-challenge setting. 
They proposed two PEKS constructions in the composite order bilinear groups and utilized Dual system technique \cite{C:Waters09} to give a tight security proof. 
Since these constructions are PEKS constructions, they are susceptible to the dreaded keyword guessing attacks \cite{BRPL06,YHG08}.
{\color{\confcolor}In fact, our analysis of proofs of both the constructions brings up several questions rendering the proofs of \cite{PKC:LZCHQ24} incomplete. We present an analysis of the security proof of PEKS in \cite{PKC:LZCHQ24} in \Cref{sec:Ling-Security}.}

% {\color{red}Add more recent works}

% \subsection{Comparison}
{\setlength{\tabcolsep}{1.2pt}
\begin{figure}[h]
    \scriptsize
    \centering
    \begin{tabular}{|c|c|c|c|c|c|}
        \hline
        Schemes & Consistency & CT-Security & Trap-Security & Corruption & Tightness\\ \hline
        \cite{ACISP:Mukherjee23} & statistical & Confidentility & Confidentility & No & No\\
        & & $\Sndr$\&$\Rcvr$ anonymity & $\Sndrr$\&$\Rcvrr$ anonymity & & \\ \hline
        \cite{IET:Emura23} & computational\!$?$ & Confidentility & Confidentility & Adaptive & No\\
        & & $\Rcvr$ anonymity & $\Rcvrr$ anonymity & & \\ \hline
        This paper & statistical & \multicolumn{2}{|c|}{$\fullcpa$ (see \Cref{sec:def-fullcpa})} & Adaptive & Yes\\
        % & & $\Sndr$\&$\Rcvr$ anonymity & $\Sndrr$\&$\Rcvrr$ anonymity & & \\
        % & & Unforgeability & Unforgeability & & \\ 
        \hline
    \end{tabular}
    \caption{Comparison between standard model BAEKS schemes. $\Sndr$ and $\Sndrr$ anonymity stands for sender anonymity in ciphertext $(\Ct)$ and trapdoor $(\Trp)$ respectively. $\Rcvr$ and $\Rcvrr$ anonymity stands for receiver anonymity in $\Ct$ and $\Trp$ respectively. Confidentiality in CT-Security and Trap-Security respectively denotes hiding the keywords in $\Ct$ and $\Trp$ respectively. In \Cref{sec:Related-Works}, we justify that the proof of computational consistency is not complete in \cite{IET:Emura23} and therefore mark it as computational\!$?$.} %In \Cref{sec:Related-Works}, we  brief justification of .} %and comparison of different security notions is given in \Cref{sec:Different-Security}.}
    \label{fig:BAEKS-Comparison}
\end{figure}
}
% \cite{IET:Emura23} is small universe.
% 

And therefore, the following question arises naturally:
\emph{Can we obtain a multi-user BAEKS (and thereby a PAEKS) scheme in the prime order groups that is tightly secure in multi-user, multi-challenge setting under adaptive corruptions?}

\subsection{Our Contribution}
We propose a novel definition of $\fullcpa$ security for the BAEKS primitive, which improved upon the security definitions of existing PAEKS \cite{JISA:PanLi21,ESORICS:CheMen22,LuLi22,APKC:Emura22,LHHS23,InfoSc:YWYLWJH23,InfoSc:CQFM23,CiC:LiBoy24} and BAEKS \cite{ACISP:Mukherjee23,IET:Emura23}.
Looking ahead, the $\fullcpa$ security notion \emph{allows adversary to challenge on ciphertext and challenge on trapdoors in an interleaved manner while allowing adaptive corruptions}.
We emphasize that, our $\fullcpa$ security notion \emph{hides the keyword}, \emph{the (honest) sender information}, and \emph{the (honest) receiver(s) information} in both the trapdoor and the ciphertext simultaneously in presense of \emph{adaptive corruptions} of users in the system.
We also propose two unforgeability security notions $(i)$ $\ctcma$ and $(ii)$ $\trapcma$ to capture the requirement that none but the authentic user would be able to create a ciphertext or a trapdoor in this system. 
We then show that, $\fullcpa$ security already ensures both types of unforgeabilities.
% We also propose two security models, one to capture \emph{trapdoor security} and one for \emph{ciphertext security}, where both consider hiding the keyword, the sender information, and the receiver(s) information simultaneously.
% Our security models allow user public key queries not allowed in \cite{ACISP:LHYSTH21}.
% {\color{\new}We believe, these new security definitions would serve as a good reference for BAEKS which in a way generalizes PAEKS.}
Then we give a new \emph{statistically-consistent} construction of \emph{broadcast authenticated encryption with keyword search} scheme called $\baeks$ and argue its statistical consistency.
Finally, we argue \emph{tight}-$\fullcpa$ security in the standard model under standard matrix diffie-hellman-based assumptions.
We also have implemented and evaluated the construction.

\subsection{Technical Overview} %{\color{red}REWRITE. How is corruption allowed?}
In a BAEKS scheme, as introduced above informally, a ciphertext $\Ct\leftarrow \SEnc(\sk_{\Sndr},\kw,\pk_{\RSet})$ is generated wrt a sender $\Sndr$, a keyword $\kw$ and some (at most $\ell$) receivers $\RSet$ \footnote{For PAEKS, $|\RSet|=1$.} and a trapdoor $\Trp\leftarrow \Tgen(\pp,\pk_{\Sndrr},\kwd,\sk_{\Rcvrr})$ is generated wrt a sender $\Sndrr$, a keyword $\kwd$ and a receiver $\Rcvrr$.
Mukherjee \cite{ACISP:Mukherjee23} represented Sender and Receiver information as (common) variables  ($\W[\Sndr]$ and $\W[\Rcvr]$ respectively) mimicking dual system encryption \cite{EC:CheGayWee15} and encoded $\kw$ as the Boneh-Boyen Hash wrt the common variables. 
In particular, for the above scenario-example, \cite{ACISP:Mukherjee23} produces ciphertext-encodings as $\left\{(\W[\Sndr]\kw+\W[\Rcvr])\right\}_{\Rcvr\in\RSet}$ and trapdoor encoding as $(\W[\Sndrr]\kwd+\W[\Rcvrr])$. 
Observe that, if $\Sndrr=\Sndr$, $\kwd=\kw$ and $\Rcvrr\in\RSet$, there will be a \emph{match} inbetween the encodings in terms of membership (i.e. $(\W[\Sndrr]\kwd+\W[\Rcvrr])\in \left\{(\W[\Sndr]\kw+\W[\Rcvr])\right\}_{\Rcvr\in\RSet}$).
On the other hand, if one of the above equalities or membership (i.e. $(\Sndrr,\kwd,\Rcvrr)\notin \left\{(\Sndr,\kw,\Rcvr)\right\}_{\Rcvr\in\RSet}$) does not hold, the encodings are independent (even against an unbounded adversary) which is a crucial part of the proof of \cite{ACISP:Mukherjee23}.
It is easy to see that, the independence breaks down if more than two encoding instances are considered.
Therefore, Mukherjee's BAEKS construction does not seem useful for tight security in multi-users, multi-challenge settings.

Following footstep of \cite{ACISP:Mukherjee23}, we set our goal to find an ciphertext-trapdoor encoding that allows tight security argument. Intuitively, we would like to replace the Boneh-Boyen Hash encoding of \cite{ACISP:Mukherjee23} with this encoding that allows tight security argument.
We, start taking a look at existing tightly secure IBE \cite{C:BlaKilPan14,AC:HofJiaPan18}, HIBE \cite{PKC:LanPan19,PKC:LanPan20} constructions.
All these works randomize multiple encodings simultaneously due to underlying (randomized) Naor-Reingold PRF structure \cite{FOCS:NaoRei97}.
However, all these schemes depend on a trusted third party (KGC) having a master secret key. 
To get a secret key, each user has to request KGC with their identity.
%This introduces the \emph{key escrow} problem \cite{BonFra03}.
In BAEKS, each user should be able to choose their own public-private key pairs.
Moreover, every user could be a sender in certain communication and a receiver in others.

To give a brief overview of all these tightly secure (H)IBEs, built on (randomized) Naor-Reingold PRF structure, define master secret keys $\left\{\W[j,b]\right\}_{(j,b)\in[m]\times \{0,1\}}$ where $m$ is bit-length of identity. 
Ciphertext and key-encoding with respect to any identity $\id\in\{0,1\}^m$, is $\Sum{j}{[m]}\W[j,\id_j]$.
% This, therefore, demands each user to have two sets of keys: $(i)$ being a sender $(ii)$ being a receiver.
Following this, we could let each user $j$ have two keys $(i)$ $\left\{\U[j,b]\right\}_{(j,b)\in[m]\times \{0,1\}}$ for being a sender and $(ii)$ $\left\{\V[j,b]\right\}_{(j,b)\in[m]\times \{0,1\}}$ for being a receiver. 
A ciphertext-trapdoor encoding for keyword $\kywd\in\{0,1\}^{\klen}$ from the sender $\Sndr$ to a receiver $\Rcvr$ would then be $\Sum{j}{[\klen]} (\U[\Sndr,{\msgii[\kywd]{j}}]+\V[\Rcvr,{\msgii[\kywd]{j}}])$.
We could follow the blueprint of all those tightly secure (H)IBEs and inject random functions
$(\Sum{j}{[\klen]} \U[\Sndr,{\msgii[\kywd]{j}}]+\RF[]{{\Sndr}}{\kywd})+(\Sum{j}{[\klen]}\V[\Rcvr,{\msgii[\kywd]{j}}]+\RF[{\Rcvr}]{{\Rcvr}}{\kywd})$ via a hybrid argument to all the ciphertext and trapdoor encodings simultaneously. 
Finally, we could aim to replace $\RF[]{{\Sndr}}{\kywd}+\RF[{\Rcvr}]{{\Rcvr}}{\kywd}$ with random choices.
Recall that, this argument requires $\left\{\RF[\Sndr]{{\Sndr}}{\kywd}+\RF[\Rcvr]{{\Rcvr}}{\kywd}\right\}_{(\Sndr,\Rcvr)\in \SSet\times \RSet}$ is independent and therefore utlize an information theoretic step.
It is fairly easy to verify that this information theoretic step cannot be utlized in BAEKS (or PAEKS) as $\left\{\RF[\Sndr]{{\Sndr}}{\kywd}+\RF[\Rcvr]{{\Rcvr}}{\kywd}\right\}_{(\Sndr,\Rcvr)\in \SSet\times \RSet}$ for $|\SSet|,|\RSet|>2$ are not independent.
We cannot directly apply techniques from all these tightly secure (H)IBEs such as \cite{PKC:LanPan20}.

We modify the representations of users.
In our construction, we denote $j^{th}$ user as $(i)\SumiDT[i]{1}{\ulen}\U[i,{\msgii[j]{i}}]$ for being a sender $(ii)\SumiDT[i]{1}{\ulen}\V[i,{\msgii[j]{i}}]$ for being a receiver. 
% In reality, we define different projections of these keys as users' public and private key pairs.
We also let each keyword $\kw\in\{0,1\}^{\klen}$ be encoded by $\SumiDT[i]{1}{\klen}\K[i,{\msgii[\kw]{i}}]$. 
Thus, a tuple $(\Sndr,\kw,\Rcvr)$ would be encoded as $\SumiDT[i]{1}{\ulen}\U[i,{\msgii[\Sndr]{i}}]+\SumiDT[i]{1}{\klen}\K[i,{\msgii[\kw]{i}}]+\SumiDT[i]{1}{\ulen}\V[i,{\msgii[\Rcvr]{i}}]$.
This representation allows us to inject randomness into the trapdoor and ciphertexts such that $\SumiDT[i]{1}{\ulen}\U[i,{\msgii[\Sndr]{i}}]+\SumiDT[i]{1}{\klen}\K[i,{\msgii[\kw]{i}}]+\SumiDT[i]{1}{\ulen}\V[i,{\msgii[\Rcvr]{i}}]+\RF[]{}{{\Sndr||\kw||\Rcvr}}$. 
HIBE construction of \cite{PKC:LanPan19,PKC:LanPan20} intutivelly uses similar structure except we need to address different parties taking part is defining different variables. 
In particular, in our case $\U$ is chosen by the sender, $\V$ is chosen by the receiver and $\K$ is chosen by the trusted third party.
This makes the situtation a bit more complicated. 
% We let the trusted party generate the system parameters $(\On{\A},\Tw{\A},\On{\B},\Tw{\B})$ where intutivelly, $\On{\A}$ (resp. $\Tw{\B}$) will be the generator for ciphertext (resp. trapdoor) for $\On{\A}=\go^{\A}\in\Go^{\nRA\times\nCA}$ and $\Tw{\B}=\gt^{\B}\in\Gt^{\nRB\times\nCB}$ \cite{JC:EHKRV17}.
% Looking ahead, each user $j$ defines $\left\{\On{\Ut[j,\ii,b]\A},\Tw{\V[j,\ii,b]\B}\right\}_{(\ii,b)\in L\times\{0,1\}}$ as their secret keys and 
% $\left\{\On{\U[j,\ii,b]\B},\Tw{\Vt[j,\ii,b]\A}\right\}_{(\ii,b)\in L\times\{0,1\}}$ as their public keys.
% This work shows that even after different entities who chose their own secret and public  keys, contributions are 
A ciphertext on $(\Sndr,\kw,\Rcvr)$ is computed as $(\ct[{0}]={\A\s},\ct[{1}]={(\SumiDT[i]{1}{\ulen}\U[i,{\msgii[\Sndr]{i}}]+\SumiDT[i]{1}{\klen}\K[i,{\msgii[\kw]{i}}]+\SumiDT[i]{1}{\ulen}\V[i,{\msgii[\Rcvr]{i}}])\A\s})$. 
A trapdoor is computed as $(\trp[0],\trp[1])$ where $(\trp[{0}]={\B\r},\trp[{1}]={(\SumiDT[i]{1}{\ulen}\Ut[i,{\msgii[\Sndr]{i}}]+\SumiDT[i]{1}{\klen}\Kt[i,{\msgii[\kw]{i}}]+\SumiDT[i]{1}{\ulen}\Vt[i,{\msgii[\Rcvr]{i}}])\B\r})$.

% We take a look at simpler multi-challenge security where $(\Sndr,\kw,\Rcvr)$ is not queried more than once.
% To achieve multi-challenge security in such a simple setting, we could follow \cite{PKC:LanPan20} which was inspired from \cite{EC:GHKW16,AC:HofJiaPan18}.
% We could simply iterate over each bits of $\kywd\in\{0,1\}^{L}$, and inject a random function $\RF[\Usr]{}{\kywd}$ to all the ciphertext and trapdoor encodings simultaneously and then aim to replace $\RF[\Usr]{}{\kywd}$ with random choices.
% In particular, for a tuple $(\Sndr,\kw,\Rcvr)$, applying \cite{PKC:LanPan20} directly would result in $\RF[\Sndr]{}{\kw}+\RF[\Rcvr]{}{\kw}$.
% We here note that such a representation would not allow us to replace $\RF[\Sndr]{}{\kw}+\RF[\Rcvr]{}{\kw}$ by a random choice.
% Observe that $\RF[\Sndr]{}{\kw}+\RF[\Rcvr]{}{\kw}$, $\RF[\Sndr]{}{\kw}+\RF[\Rcvrr]{}{\kw}$, $\RF[\Sndrr]{}{\kw}+\RF[\Rcvr]{}{\kw}$ and $\RF[\Sndrr]{}{\kw}+\RF[\Rcvrr]{}{\kw}$ are not independent.
% Therefore, an adversary could make queries on tuples $(\Sndr,\kw,\Rcvr)$, $(\Sndr,\kw,\Rcvrr)$, $(\Sndrr,\kw,\Rcvr)$ and $(\Sndrr,\kw,\Rcvrr)$ which allow the adversary to break indistinguishability.

To withstand such a situtation, we focus on the non-triviality condition of the $\fullcpa$-security \Cref{sec:def-fullcpa} (or the Core-Lemma \Cref{fig:Core-Lemma} which reciprocates the $\fullcpa$ security model).
In particular, we make use of the restriction that $\AA$ will not query $\sk_{\Usr}$ for those users on which ciphertext and trapdoor queries are made (informally, $\Usr\in\{\RSet^{\xpz}\cup\RSet^{\xpo}\cup \{\Sndr^{\xpz},\Sndr^{\xpo},\Rcvr^{\xpz},\Rcvr^{\xpo}\}\}$). 
Looking ahead, we essentially manage to inject a random function $\RF{}{\Usr||\kywd||\Usrr}$ to all the ciphertext and trapdoor encodings of $(\Usr,\kywd,\Usrr)$ without affecting the secret keys queried.
We give a complete proof in \Cref{sec:Core}.

We now discuss adaptive corruption queries informally.
This discussion requires us to informally introduce the public and secret keys of every users as well as the public parameters of the system.
The public parameters are $\left\{\On{\Kt[\ii,b]\A},\Tw{\K[\ii,b]\B}\right\}_{\substack{{\ii\in[\klen]}\\{b\in\{0,1\}}}}$\footnote{$\On{A}=\go^A$ and $\Tw{B}=\gt^B$ where $\go,\gt$ are generators of prime order bilinear groups $\Go$ and $\Gt$ respectively.}, and each $j^{th}$ user has secret key $\sk_j=\left(\On{\Ut[j]\A},\Tw{\V[j]\B}\right)$ and has public key $\pk_j=\left(\Tw{\U[j]\B},\On{\Vt[j]\A}\right)$.
Corruption queries on user $j$ leak their secret keys $\left\{\On{\Ut[j]\A},\Tw{\V[j]\B}\right\}$  and remember that, the public keys $\left\{\On{\U[j]\B},\Tw{\Vt[j]\A}\right\}$ are already available to everybody. 
We already know that, projections $\On{\Ut[j]\A}$ and $\On{\U[j]\B}$ allow to retain some of the entropy of $\U[j]$ \cite{EC:CheGayWee15,AC:Attrapadung16}. 
On top of this, our Core-Lemma-based proof utilizes the fact that $\AA$ will not query $\sk_{\Usr}$ for those users on which challenge ciphertext and challenge trapdoor queries are made and therefore, we can insert additional randomness in $\U[j]$ and $\V[j]$ required for the proof but is not visible to the adversary. 

\subsection{Organization of the paper}
In \Cref{sec:Preliminaries}, we discuss some basic notations and descriptions of mathematical tools.
We define \emph{broadcast authenticated encryption with keyword search} and its security in \Cref{sec:Definition}. 
In \Cref{sec:Construction}, we propose a new construction $\baeks$ along with a security proof. %and present a comparison with the state of the art in \Cref{sec:Comparison}.
In \Cref{sec:Core}, we argue that the core-lemma is secure.
\Cref{sec:Comparison} provides our experimental evaluations of our construction.
Finally, we conclude the paper in \Cref{sec:Conclude}.
%We revisit the security models of \cite{ACISP:LHYSTH21} in \Cref{sec:ACISP-Model}.

% \subsection{Motivations}

% - not a PK primitive (\cite{ACISP:Mukherjee23})

% - should consider multiple challenges ct\& trap

% - What about the question of unforgeability 

% - existing schemes consistency issue

\section{Mathematical Tools and Preliminaries}
\label{sec:Preliminaries}
%!TEX spellcheck = en_US
%!TEX root = ../main.tex

\paragraph{Notations.}
For $a,b\in\mathbb{N}$ such that $a\leq b$, we often use $[a,b]$ to denote $\{a,\ldots,b\}$ and write $[b]:=[1,b]$. For a set $D$, we write $x\sample D$ to denote $x$ is sampled uniformly at random from $D$. 
The abbreviation $\ppt$ stands for probabilistic polynomial time. 
The small-case bold letters $\s$ denote a column vector, and large-case bold letters $\S=\Big(s_{ij}\Big)_{(i,j)\in I\times J}$ to denote a matrix of appropriate size. $\dforall i,j\in I$ denotes for all distinct $i$ and $j$ from $I$. We use $\dontcare$ as the \emph{don't care} symbol.
For $\tau\in\{0,1\}^L$, $\msgi[\tau]{i}$ denotes $i$-bit sub-string $\tau_1||\ldots||\tau_i$ and $\msgi[\tau]{i}||\dontcare||\ldots||\dontcare$ denotes $L$ bit string $\tau_1||\ldots||\tau_i||\dontcare||\ldots||\dontcare$ having last $L-i$ bits containing \emph{don't care} symbol $\dontcare$. 
We use $A\sqcup B$ to denote disjoint union.

% \paragraph{Games.} {\color{\confcolor}Following \cite{C:BlaKilPan14}, we use games for our security reductions. A game $G$ is defined by procedures $\Init$ and
% Finalize, plus some optional procedures P1; : : : ; Pn. All procedures are given using pseudo-code, where
% initially all variables are undefined. An adversary A is executed in game G if it first calls Initialize,
% obtaining its output. Next, it may make arbitrary queries to Pi (according to their specification), again
% obtaining their output. Finally, it makes one single call to Finalize(·) and stops. We define GA as the
% output of A’s call to Finalize.}

\subsection{Mathematical Tools and Hardness Assumptions}
\subsubsection{Bilinear Pairing.}
Let $\Go,\Gt,\GT$ are groups of order $p$ which is a large prime number. 
We consider an efficiently computable and non-degenerate bilinear pairing $e:\Go\times \Gt\rightarrow \GT$ where $\forall \go\in\Go, \gt\in\Gt$, $e({\go^a}, {\gt^b})=e(\go, \gt)^{ab}$ for any $a,b\in\Zp$. 
We use type-3 bilinear pairing in this work where $\Go$ and $\Gt$ have no known isomorphism.
We define $\ABSGen$ as the group generator that generates the type-3 bilinear pairing description.
Precisely, $\abG=(p,\Go,\Gt,\GT,\go,\gt,e)\sample \ABSGen(1^\secpp)$ such that $\Go,\Gt,\GT$ are cyclic groups of prime order $p$ where $\Go=\langle\go \rangle$, $\Gt=\langle\gt \rangle$ and $\GT=\langle \gT\rangle$ s.t. $\gT=e(\go,\gt)$.
We denote $g_s^a$ by $\Gp{a}_s$ \cite{JC:EHKRV17} for any $a\in\Zp$ and $s\in\{1,2,\textrm{T}\}$.
For $\A=\left(a_{ij}\right)\in\Zp^{\ell\times k}$, for $s\in\{1,2,\textrm{T}\}$ we denote {\small
    \[\Gp{\A}_s=
    \left(\begin{matrix}
            g_s^{a_{11}}     & \ldots & g_s^{a_{1k}}     \\
            \vdots           & \ddots & \vdots                \\
            g_s^{a_{\ell 1}} & \ldots & g_s^{a_{\ell k}} \\
        \end{matrix}\right)\in\mathbb{G}_s^{\ell\times k}.\]
}
\begin{definition}\label{eq:Dk}
    Let $\ell,k\in \mathbb{N}$, such that $\ell>k$. 
    We call $\Uk[\ell,k]$ a matrix distribution if it outputs a matrix $\M\in\Zp^{\ell\times k}$ of full rank $k$ in polynomial time (w.l.o.g. we assume the first $k$ rows of $\M\leftarrow \Uk[\ell,k]$ form an invertible matrix). 
    We write $\Uk= \Uk[k+1,k]$.
\end{definition}

\begin{definition}\label{eq:Uk}
    Let $\ell,k\in \mathbb{N}$, such that $\ell>k$. 
    We denote by $\Dk[\ell,k]$ the uniform distribution over all full-rank $\ell\times k$ matrices over $\Zp$. 
    We write $\Dk= \Dk[k+1,k]$.
\end{definition}

\subsubsection{Matrix Diffie-Hellman Assumption.} \label{eq:matDH}
For all adversary $\AA$, its advantage against the matrix Diffie-Hellman problem is defined as 
\[
\Adv{\AA,\mathbb{G}_s}{\matDH[\ell,k]}=
|\Pr[\AA(\Gp{\U}_s, \Gp{\U\matx{x}}_s)=1]-\Pr[\AA(\Gp{\U}_s, \Gp{\matx{z}}_s)=1]|
\]
where $s\in\{1,2\}$, $\U\leftarrow \Dk[\ell,k]$, $\matx{x}\sample\Zp^{k}$ and $\matx{z}\sample\Zp^{\ell}$.
The $\matDH[\ell,k]$ assumption states that $\mathsf{Adv}_{\AA,\mathbb{G}_s}^{\matDH[\ell,k]}$ is $\neglgbl$ for all $\ppt$ adversary $\AA$.
% Due to \cite{JC:EHKRV17}, we know that $\Adv{\AA,\mathbb{G}_s}{m\mhyf\matDH[\ell,k]}\leq (\ell-k)\Adv{\AA,\mathbb{G}_s}{\matDH[\ell,k]}+\frac{1}{p-1}$.
$m\mhyf$fold $\matDH[\ell,k]$ problem contains $m$ independent instances of $\matDH[\ell,k]$ problem.
Due to Random Self-Reducibility \cite{JC:EHKRV17}, we know that \[\Adv{\AA,\mathbb{G}_s}{m\mhyf\matDH[\ell,k]}\leq (\ell-k)\cdot \Adv{\AA,\mathbb{G}_s}{\matDH[k]}+\frac{1}{p-1}.\]

\subsubsection{Lateral Matrix Diffie-Hellman Assumption.} \label{eq:lmatDH}
For all adversary $\AA$, its advantage against the lateral matrix Diffie-Hellman problem is defined as \[\Adv{\AA,\mathbb{G}_s}{\lmatDH[\ell,k]}=|\Pr[\AA(\Gp{\U}_s, \Gp{\U\x}_s,\Gp{\U}_{3-s})=1]-\Pr[\AA(\Gp{\U}_s, \Gp{\z}_s,\Gp{\U}_{3-s})=1]|\] where $\U\leftarrow \Dk[\ell,k]$, $\matx{x}\sample\Zp^{k}$ and $\matx{z}\sample\Zp^{\ell}$.
The $\lmatDH[\ell,k]$ assumption states that $\mathsf{Adv}_{\AA,\abG}^{\lmatDH[\ell,k]}$ is $\neglgbl$ for all $\ppt$ adversary $\AA$.
We can define $m\mhyf$fold $\lmatDH[\ell,k]$ problem contains $m$ independent instances of $\lmatDH[\ell,k]$ problem.
% Following the random self-reducibility arguement by \cite{JC:EHKRV17}, we can easily see that $\Adv{\AA,\mathbb{G}_s}{m\mhyf\lmatDH[\ell,k]}\leq (\ell-k)\Adv{\AA,\mathbb{G}_s}{\lmatDH[k]}+\frac{1}{p-1}$.

\subsubsection{Bilateral Matrix Diffie-Hellman Assumption.} \label{eq:bilmatDH}
For all adversary $\AA$, its advantage against the bilateral matrix Diffie-Hellman problem is defined as \[\Adv{\AA,\abG}{\bmatDH[\ell,k]}=|\Pr[\AA(\On{\U}, \On{\U\matx{x}},\Tw{\U}, \Tw{\U\matx{x}})=1]-\Pr[\AA(\On{\U}, \On{\matx{z}},\Tw{\U}, \Tw{\matx{z}})=1]|\]
where $\U\leftarrow \Dk[\ell,k]$, $\matx{x}\sample\Zp^{k}$ and $\matx{z}\sample\Zp^{\ell}$.
The $\bmatDH[\ell,k]$ assumption states that $\mathsf{Adv}_{\AA,\abG}^{\bmatDH[\ell,k]}$ is negligible in $\secpp$ for all $\ppt$ adversary $\AA$.
We can define $m\mhyf$fold $\bmatDH[\ell,k]$ problem contains $m$ independent instances of $\bmatDH[\ell,k]$ problem.
% Following the random self-reducibility arguement by \cite{JC:EHKRV17}, we can easily see that $\Adv{\AA,\abG}{m\mhyf\bmatDH[\ell,k]}\leq (\ell-k)\Adv{\AA,\abG}{\bmatDH[k]}+\frac{1}{p-1}$.
Over the years, $\bmatDH[\ell,k]$ assumption has become pretty standard \cite{CCS:AgrCha17,TCC:Wee20}.

\begin{fact} 
We state a few results next. 
    \begin{enumerate}
        \item[$(\Fact1)$]{\footnotesize$\matDH[k]\Leftrightarrow \matDH[\ell,k]$ for $\ell>k$. (See \cite[Lemma~1]{EC:GHKW16})}
        \item[$(\Fact2)$]{\footnotesize$\bmatDH[k]\leq \lmatDH[k]\leq \matDH[k]$}
        \item[$(\Fact3)$]{\footnotesize$\Adv{\AA,\abG}{m\mhyf X\mhyf\matDH[\ell,k]}\leq (\ell-k)\cdot \Adv{\AA,\abG}{X\mhyf\matDH[k]}+\frac{1}{p-1}$ for $X\in\{lat,bil\}$. (Due to random self-reducibility~\cite[Lemma~1]{JC:EHKRV17})} 
        \item[$(\Fact4)$]{\footnotesize$\bmatDH[k],\ \lmatDH[k]$ are hard for $k>1$}
    \end{enumerate}
\end{fact}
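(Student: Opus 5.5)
The four items of the Fact are of different natures, so I would treat them separately. For $(\Fact1)$ I would simply invoke \cite[Lemma~1]{EC:GHKW16}. One direction is trivial: given a $\matDH[k]$ instance, pad $\U$ with $\ell-k-1$ fresh random rows, and pad the challenge vector with the matching rows of $\U\x$. The other direction is a hybrid over the $\ell-k$ extra rows.

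For $(\Fact2)$ I read ``$\leq$'' as ordering the strength of the assumptions. At the level of advantages the claim becomes
\[\Adv{\AA,\mathbb{G}_s}{\matDH[\ell,k]}\leq \Adv{\AA,\mathbb{G}_s}{\lmatDH[\ell,k]}\leq \Adv{\AA,\abG}{\bmatDH[\ell,k]}.\]
Both inequalities come from trivial reductions that discard information. A $\matDH$ adversary is run on a $\lmatDH$ instance by ignoring $\Gp{\U}_{3-s}$. A $\lmatDH$ adversary is run on a $\bmatDH$ instance by forwarding $\Gp{\U}_s,\Gp{\U\x}_s$ (resp.\ $\Gp{\z}_s$) and $\Gp{\U}_{3-s}$, and dropping the second challenge component. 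In each case the distributions are reproduced exactly, so the advantages carry over. Consequently hardness of $\bmatDH$ implies hardness of $\lmatDH$, which implies hardness of $\matDH$.

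For $(\Fact3)$ I would redo the random self-reducibility argument of \cite[Lemma~1]{JC:EHKRV17}. The key observation is that every step there is a publicly computable linear map applied to the exponents, so it can be carried out simultaneously in $\Go$ and $\Gt$. Concretely, from one instance $(\U,\matx{t})$, with $\matx{t}=\U\x$ or uniform, I would produce $m$ instances $\U(\x\,\W)+\matx{t}\,\w^{\top}$ for fresh $\W\sample\Zp^{k\times m}$ and $\w\sample\Zp^{m}$. In the real case all columns lie in the span of $\U$. In the random case they are independent uniform vectors, up to the event that $\matx{t}$ lies in the span of $\U$. This handles one extra dimension at a time, and a hybrid over the $\ell-k$ coordinates outside the span of $\U$ gives the factor $(\ell-k)$. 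The additive $\frac{1}{p-1}$ accounts for the degenerate event. Because the maps are linear with known coefficients, the reduction applies them to the $\Go$- and $\Gt$-encodings alike, which covers the lateral and bilateral variants.

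For $(\Fact4)$ I would argue in the generic bilinear group model. Whatever the adversary can compute or test is, by linearity and the pairing, a polynomial of total degree at most $2$ in the entries of $\U$ and $\x$ (resp.\ $\z$). The distinguisher must find such a polynomial that vanishes identically on the real distribution but not on the random one.
\begin{itemize}
\item For $k=1$ such a polynomial exists, namely $u_1(u_2x)-u_2(u_1x)$, which is computable with a single pairing. This explains the restriction $k>1$.
\item For $k\geq 2$, the only polynomial relation witnessing that $\matx{t}\in\mathrm{span}(\U)$ is the $(k+1)$-minor $\det(\U\,|\,\matx{t})$, which has degree $k+1\geq 3$.
\end{itemize}
The main obstacle is making the second point rigorous. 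I would have to show that every degree-$\leq 2$ polynomial vanishing on the real distribution already lies in the ideal generated by trivial identities, and hence also vanishes on the random distribution. I would then close the argument with Schwartz--Zippel over the adversary's queries. If this characterisation of degree-$\leq 2$ vanishing polynomials proves delicate, I would fall back on citing \cite{CCS:AgrCha17,TCC:Wee20} for generic hardness of $\bmatDH[k]$.
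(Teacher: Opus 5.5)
Your plan for $(\Fact1)$ is the paper's own: a bare citation of \cite[Lemma~1]{EC:GHKW16}. However, the sketch you attach has the two directions swapped, and the step you call trivial does not work.

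From a $\matDH[k]$ instance $(\Gp{\U}_s,\Gp{\matx{t}}_s)$, the reduction cannot append ``the matching rows of $\U\x$''. It has no $\x$: it sees $\U$ only in the exponent, and in the random case there is no $\x$ at all. Padding consistently instead does not fix this. Suppose you take new rows $\mathbf{s}^{\top}\overline{\U}$ with challenge entries $\mathbf{s}^{\top}\overline{\matx{t}}$, where $\overline{\cdot}$ denotes the first $k$ rows. In the random case the padded challenge is then consistent with $\x'=\overline{\U}^{-1}\overline{\matx{t}}$ in every row except row $k+1$. For $\ell>k+1$ it is therefore not distributed as the random case of $\matDH[\ell,k]$.

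The trivial direction is the opposite one. Truncating an $\matDH[\ell,k]$ instance to its first $k+1$ rows gives a $\matDH[k]$ instance, so $\matDH[\ell,k]$-hardness implies $\matDH[k]$-hardness with no hybrid. The direction you attempted, $\matDH[k]\Rightarrow\matDH[\ell,k]$, is the one that needs work. There are two standard options.
\begin{itemize}
\item An $(\ell-k)$-step row hybrid. Step $i$ embeds the instance in rows $1,\dots,k$ and $k+i$, fills rows $k+1,\dots,k+i-1$ with fresh uniform entries, and pads the later rows consistently as above. This is where your padding idea belongs.
\item The tight embedding behind the cited lemma: output $(\Gp{\mathbf{T}\U}_s,\Gp{\mathbf{T}\matx{t}}_s)$ for $\mathbf{T}\sample\Zp^{\ell\times(k+1)}$. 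In the real case $\mathbf{T}\U$ is uniform and $\mathbf{T}\matx{t}=(\mathbf{T}\U)\x$. In the random case $(\U\,|\,\matx{t})$ is invertible except with probability $1/p$, and then $\mathbf{T}(\U\,|\,\matx{t})$ is uniform.
\end{itemize}

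The rest of your proposal is sound, and it goes beyond the paper, which proves none of the four items. $(\Fact1)$ and $(\Fact3)$ are citations. $(\Fact2)$ and $(\Fact4)$ are asserted, the latter backed only by the remark that $\bmatDH$ has become standard \cite{CCS:AgrCha17,TCC:Wee20}.

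Your information-discarding reductions for $(\Fact2)$ give exactly the ordering the paper later uses: $\lmatDH$ is hard whenever $\bmatDH$ is.

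For $(\Fact3)$, your observation that every step is a publicly known linear map, and so can be applied identically in $\Go$ and $\Gt$, is what lifts \cite[Lemma~1]{JC:EHKRV17} from plain MDDH to the lateral and bilateral variants. It also lifts the $(\Fact1)$ embedding. $(\Fact3)$ implicitly relies on that embedding too, since its right-hand side is the $(k+1)$-row problem. One correction: the rerandomised challenges must be $\U\mathbf{W}+\matx{t}\,\mathbf{y}^{\top}$, with $\mathbf{W},\mathbf{y}$ chosen by the reduction. Your expression involves $\x$, which the reduction does not know.

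For $(\Fact4)$, measure degree in the entries of $\U$ and of the encoded vector $\matx{t}$, not of $\x$. Pairing $\Gp{\U\x}_s$ with $\Gp{\U}_{3-s}$ already has degree $3$ in $(\U,\x)$. With that grading, the obstacle you flag closes directly:
\begin{itemize}
\item Suppose $f(\U,\U\x)\equiv 0$. Over $\overline{\mathbb{F}}_p$, the image of $(\U,\x)\mapsto(\U,\U\x)$ is dense in the irreducible hypersurface $\det(\U\,|\,\matx{t})=0$, so $f$ vanishes on that hypersurface.
\item By the Nullstellensatz the determinant divides $f$, and $\deg f\le 2<k+1$ forces $f=0$.
\end{itemize}
So no test of degree at most $2$ separates the two worlds, and Schwartz--Zippel finishes the argument.
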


\section{Broadcast Authenticated Encryption with Keyword Search}
\label{sec:Definition}
%!TEX spellcheck = en_US
%!TEX root = ../main.tex
A BAEKS scheme $\baeks$ is defined by a tuple of five $\ppt$ algorithms $(\Setup,\Kgen,$ $\SEnc,\Tgen,\Test)$ as following.
\subsection{Definition}\label{sec:BaEKS-Def}
\begin{itemize}
    \item $\Setup(1^\secpp,\ulen,\klen)$: It takes the security parameter $1^\secpp$, %$\ell=\poly$ to be the maximum number of users a ciphertext is aimed for and 
    $\ulen=\poly$ to define a userspace $\USet=\{0,1\}^{\ulen}$ and $\klen=\poly$ to define a keyword space $\KWd=\{0,1\}^{\klen}$ and
    publishes params $\pp$. We assume $\pp$ implicitly defines a ciphertext space $\CT$ and a trapdoor space $\TRP$ too. %computes a master secret key $\msk$ and
    \item $\Kgen(\pp)$: It takes the public parameter $\pp$ as input, and outputs public-private key pair $(\pk,\sk)$. It keeps $\sk$ secret and publishes $\pk$.
    \item $\SEnc(\pp,\sk_{\Sndr},\kw,\pk_{\RSet})$: It takes $\pp$, a sender secret key $\sk_{\Sndr}$, a keyword $\kw$ and public-keys $\pk_{\RSet} = \{\pk_{\Rcvr_1},\ldots, \pk_{\Rcvr_\ell}\}$ of a set of receivers $\RSet =\{\Rcvr_1,\ldots,\Rcvr_\ell\}$. %{\color{\new}where $\ell=\poly$}. 
    It outputs a ciphertext $\Ct\in\CT$.
    \item $\Tgen(\pp,\pk_{\Sndrr},\kwd,\sk_{\Rcvrr})$: It takes $\pp$, a sender public key $\pk_{\Sndrr}$, a keyword $\kwd$ and a receiver secret key $\sk_{\Rcvrr}$. It outputs a trapdoor $\Trp\in\TRP$.
    \item $\Test(\pp,\Trp,\Ct)$: It takes $\Trp$ and $\Ct$ as input, and outputs 0/1.
\end{itemize}
From now on, we drop explicit mention of $\pp$ in each algorithm description and assume all the above algorithms use $\pp$ implicitly.

\paragraph{Correctness.} 
For any security parameter $1^\secpp$, any sender $\Sndr\in\USet$, any receiver set $\RSet\subset\USet$ such that $\pk_{\RSet}=(\pk_{\Usr})_{\Usr\in\RSet}$, any keyword $\kw\in\KWd$ if $\Rcvrr\in\RSet$,
$\Pr[\Test(\Tgen(\pk_{\Sndr},\kw,\sk_{\Rcvrr}),\SEnc(\sk_{\Sndr},\kw,\pk_{\RSet}))=1]=1-\neglgbl$
where the probability is taken over $\pp\leftarrow \Setup(1^\secpp,\klen,\ulen)$ and $(\sk_{\Usr},\pk_{\Usr})\leftarrow\Kgen(\pp)$ for all the users $\Usr\in\RSet\sqcup\{\Sndr\}$.

Informally speaking, if $(\Sndrr=\Sndr)\wedge (\kwd=\kw)\wedge (\Rcvrr\in\RSet)$, we say the key-attributes $({\Sndrr},\kwd,{\Rcvrr})$ \emph{match} the ciphertext-attributes $({\Sndr},\kw,{\RSet})$. If $\Test(\Trp,\Ct)\rightarrow 1$, we say the trapdoor $\Trp$ \emph{matches} the ciphertext $\Ct$.

% \begin{center}
\begin{figure*}[h]%\hspace{-0.4cm}
    % \hspace{1cm}
    % \scriptsize
%    \begin{minipage}{1\textwidth}
        \begin{tabular}{|l|}
            \hline
                $\pp\leftarrow\Setup(1^\secpp,\klen,\ulen)$\\
                $(\pk_i,\sk_i)\leftarrow \Kgen(\pp,i)$ for $i\in \RSet_0\cup\RSet_1\cup \{\Sndr_0,\Sndr_1,\Rcvr_0,\Rcvr_1\}$ \\
                $(\kw_0,\kw_1,i,j)\leftarrow \AA\left((\pk_t)_{t\in \RSet_0\cup\RSet_1\cup \{\Sndr_0,\Sndr_1,\Rcvr_0,\Rcvr_1\}}\right)$s.t. $\kw_0,\kw_1\in\KWd\ \ $ \\
                $\qquad\qquad\qquad\qquad\qquad\wedge$ distinct $i,j\in\{0,1\} \wedge ((\Sndr_i,\kw_i,\Rcvr_i)$ \emph{does not match} $(\Sndr_j,\kw_j,\RSet_j))$ \\
                $\Ct\leftarrow \SEnc(\sk_{\Sndr_i},\kw_i,\pk_{\RSet_i})$\\
                $\Trp\leftarrow \Tgen(\pk_{\Sndr_j},\kw_j,\sk_{\Rcvr_j})$\\
                If $\Test(\Trp,\Ct)=1$, Output $1$. \\
                Otherwise, Output $0$.\\
            \hline
        \end{tabular}
%    \end{minipage}
\caption{Consistency}
\label{fig:Consistency}
\end{figure*}
% \end{center}
% \vspace{-0.6cm}

\paragraph{Consistency.}
For a BAEKS scheme $\baeks=(\Setup,\Kgen,\SEnc,\Tgen,$ $\Test)$, the consistency experiment is presented in \Cref{fig:Consistency}.
This definition of consistency of BAEKS is borrowed from \cite{ACISP:Mukherjee23}.

\subsection{Security} 
\subsubsection{Confidentiality of Ciphertexts and Trapdoors.} \label{sec:def-fullcpa}
A BAEKS scheme $\baeks$ satisfies full hiding security $(\fullcpa)$ if for all security parameter $\secpp$, for all $\ppt$ adversary $\AA$, there exists a negligible function $\neglgbl[\cdot]$ such that the advantage of $\Adv{\AA,\baeks}{\fullcpa}\leq \neglgbl$ for 
\begin{figure*}
% \hspace{-0.6cm}
\begin{equation*}
\boxed{
\Adv{\AA,\baeks}{\fullcpa}\!=\!
\left| 
\Pr\left[\bee'\!=\!\bee\!:\!
\begin{array}{l}\!
\,\pp\leftarrow\Setup(1^\secpp,\klen,\ulen), \\
\,\bee\sample \{0,1\}, \Qct,\Qtrp,\Qsk \leftarrow \emptyset, \\
\,\bee'\leftarrow\AA^{\Oct(\bee,\cdot,\cdot),\Otrp(\bee,\cdot,\cdot),\Osk(\cdot),\Opk(\cdot)}(\pp)
\end{array}
\right]
-\frac{1}{2}
\right|    
}
\end{equation*}
\caption{Full Hiding Security of BAEKS (and thereby of PAEKS)}
\label{fig:fullcpa}
\end{figure*}

\noindent where $\AA$ is given access to following oracles with a natural restriction that $({\Sndrr^{\xptb}},\kwd^{\xptb},{\Rcvr^{\xptb}})$ does not \emph{match} $({\Sndr^{\xptb}},\kw^{\xptb},{\RSet^{\xptb}})$ for $b\in\{0,1\}$ (i.e. $\Qct\cap\Qtrp=\emptyset$) and $\AA$ has not queried $\sk_{\Usr}$ for $\Usr\in\{\RSet^{\xpz}\cup\RSet^{\xpo}\cup \{\Sndr^{\xpz},\Sndr^{\xpo},\Rcvr^{\xpz},\Rcvr^{\xpo}\}\}$ (i.e. $\forall\ (u,\dontcare,\dontcare)\in\Qsk, (\pk_u,\dontcare,\dontcare),(\dontcare,\dontcare,\pk_u)\notin {\color{\confcolor}\Qtrp\cup}\Qct$). Moreover, we put the restriction that $\Otrp$ {\color{\confcolor}(and $\Osk$)} queries will have no repetition. Note that, we do not put any such restriction on $\Oct$ queries.

\begin{itemize}
    \item $\Oct$ is an oracle that on input $(\bee,(\pk_{\Sndr^{\xpz}},\kw^{\xpz},\pk_{\RSet^{\xpz}}),(\pk_{\Sndr^{\xpo}},\kw^{\xpo},\pk_{\RSet^{\xpo}}))$ outputs $\SEnc(\pp,$ $\sk_{\Sndr^{\xpb}},\kw^{\xpb},\pk_{\RSet^{\xpb}})$. For ${b\in\{0,1\}}$, $\Qct$ is updated with $\{(\pk_{\Sndr^{\xptb}},\kw^{\xptb},\pk_{\Rcvr})\}_{\Rcvr\in\RSet^{\xptb}}$. $\AA$ makes $|\Qct|$ many $\Oct$ queries.

    \item $\Otrp$ is an oracle that on input $(\bee,(\pk_{\Sndrr^{\xpz}},\kwd^{\xpz},\pk_{\Rcvrr^{\xpz}}),(\pk_{\Sndrr^{\xpo}},\kwd^{\xpo},\pk_{\Rcvrr^{\xpo}}))$ outputs $\Tgen(\pp,$ $\pk_{\Sndrr^{\xpb}},\kwd^{\xpb},\sk_{\Rcvrr^{\xpb}})$. For ${b\in\{0,1\}}$, $\Qtrp$ is updated with $\{(\pk_{\Sndrr^{\xptb}},\kwd^{\xptb},\pk_{\Rcvrr^{\xptb}})\}$. $\AA$ makes $|\Qtrp|$ many $\Otrp$ queries.

    \item {\color{\confcolor}$\Opk$ is an oracle that on new $j$ outputs $\pk_j$ consistently where $(\sk_j,\pk_j)\leftarrow\Kgen(\pp,j)$ after storing $(j,\sk_{j},\pk_{j})$ in $\Qpk$. $\AA$ makes $|\Qpk|$ many $\Opk$ queries.}

    \item {\color{\confcolor}$\Osk$ is an oracle that on new $j$ outputs $(\sk_j,\pk_j)$ consistently which it retrieve from $\Opk(j)$ after storing $(j,\sk_{j},\pk_{j})$ in $\Qsk$. 
    $\AA$ makes $|\Qsk|$ many $\Osk$ queries.}
\end{itemize} 

{We assume %$\Osk(j)$ has been queried before making 
before making a $\Otrp$ or a $\Oct$ query involving user $j$, $\AA$ will query $\Opk(j)$.} % or a $\Osk$ query
%{\color{\confcolor}Observe that, $\Qsk\subset \Qpk$ where all $j$ involved in $\Otrp$ or a $\Oct$ query, $(j,\dontcare,\dontcare)\in \Qpk\setminus \Qsk$.}
% We provide discussions on existing security definitions for confidentiality in \Cref{sec:Different-Security}. %due to space constraints.
{In this work, we also explore the question of integrity of Ciphertexts and Trapdoors and further show that $\fullcpa$ already implies such integrity. %Due to space constraints, 
We defer this discussion to \Cref{sec:Different-Security_Integrity}.
}

\subsection{Differences in Security Defintions}\label{sec:Different-Security_Notions}
Next we discuss different security notions of BAEKS (and PAEKS) that have been defined over the years.

\subsubsection{Privacy of Ciphertexts and Trapdoors.}\label{sec:Other-Privacy_Notions}

\noindent Cheng and Meng \cite{ESORICS:CheMen22} introduced \emph{Fully CI Security} and \emph{Fully TI Security}. 
\paragraph{Fully CI Security.} The challenger $\AC$ setup the keys for the target sender $(\Sndr)$ and target receiver $(\Rcvr)$ at the initialization phase of the system. 
$\AA$ receives corresponding public keys.
Then $\AA$ is allowed to query for ciphertext oracle $(\Oct)$ and trapdoor oracle $(\Otrp)$ in an interleaved manner. 
$\AA$ makes ciphertext query on $(\pk_{\Sndr},\kw,\pk_{\Rcvrr})$ to get $\Ct$ and makes trapdoor query on $(\pk_{\Sndrr},\kw,\pk_{\Rcvr})$ to get $\Trp$. 
In the challenge phase, $\AA$ chooses $\kw_0$ and $\kw_1$ with the restriction that $(\pk_{\Sndr},\kw_0,\dontcare),(\pk_{\Sndr},\kw_1,\dontcare)\notin \Qtrp$.
$\AA$ gets a challenge ciphertext $\Ct\leftarrow \SEnc(\sk_{\Sndr},\kw_{\bee},\pk_{\Rcvr})$. 
$\AA$ can make further oracle queries subject to the restriction.
Finally $\AA$ guesses $\bee'\in\{0,1\}$ and wins if $\bee'=\bee$.

\paragraph{Fully TI Security.} The challenger $\AC$ setup the keys for the target sender $(\Sndr)$ and target receiver $(\Rcvr)$ at the initialization phase of the system. 
$\AA$ receives corresponding public keys.
Then $\AA$ is allowed to query for ciphertext oracle $(\Oct)$ and trapdoor oracle $(\Otrp)$ in an interleaved manner. 
$\AA$ makes ciphertext query on $(\pk_{\Sndr},\kw,\pk_{\Rcvrr})$ to get $\Ct$ and makes trapdoor query on $(\pk_{\Sndrr},\kw,\pk_{\Rcvr})$ to get $\Trp$. 
In the challenge phase, $\AA$ chooses $\kw_0$ and $\kw_1$ with the restriction that $(\dontcare,\kw_0,\pk_{\Rcvr}),(\dontcare,\kw_1,\pk_{\Rcvr})\notin \Qct$.
$\AA$ gets a challenge trapdoor $\Trp\leftarrow \Tgen(\pk_{\Sndr},\kw_{\bee},\sk_{\Rcvr})$. 
$\AA$ can make further oracle queries subject to the restriction.
Finally $\AA$ guesses $\bee'\in\{0,1\}$ and wins if $\bee'=\bee$.

They have further defined MCI/MTI-security respectively where in the challenge phase, $\AA$ chooses two sequences of keywords $(\kw_{0,i})_{i\in I}$ and $(\kw_{1,i})_{i\in I}$ non-adaptively.
Moreover, their definitions support multiple users but considers only \emph{CT-confidentiality} and \emph{Trap-confidentiality} i.e. hiding of keyword(s). 
Interestingly, Cheng and Meng \cite{ESORICS:CheMen22} do not provide a security proof for CI-Security $\Rightarrow$ MCI-Security and TI-Security $\Rightarrow$ MTI-Security. They claim to have followed \cite{PROVSEC:QCZZ21} which also do not provide a proof for this claim.
Observe that neither of MCI and MTI security allows multiple challenges on target sender only, that too in a non-adaptive manner. 

\noindent Emura \cite{APKC:Emura22,IEICE:Emura24} in his works on PAEKS, resorted to IND-CKA and IND-IKGA Security.
\paragraph{IND-CKA Security.} The security notion is fairly similar to fully CI-security except that the challenger simulates multiple senders (of $\AC$'s choice) and the adversary $\AA$ gets public keys of all such senders. Similar to CI-security, the target receiver $(\Rcvr)$ is fixed at the start. $\AA$ can adaptively choose two keywords $\kw_0$ and $\kw_1$ and a target sender from the list provided by $\AC$.
$\AA$ can make ciphertext and trapdoor query on fixed target receiver $\Rcvr$. %Emura \cite{APKC:Emura22,IEICE:Emura24} that IND-CKA security implies 

\paragraph{IND-IKGA Security.} The security notion is exactly same as IND-CKA except that the challenge is a trapdoor. 

Emura \cite{APKC:Emura22,IEICE:Emura24} claimed that IND-CKA also implies MCI-security due to \cite{PROVSEC:QCZZ21} and IND-IKGA in the form they considered do not imply MTI-security directly.

The IND-CKA security of \cite{IET:Emura23}, in addition to the IND-CKA security of \cite{APKC:Emura22,IEICE:Emura24}, also include hiding the receiver identity and allow user corruptions. The IND-IKGA security of \cite{IET:Emura23} is exactly same as the IND-CKA security of \cite{IET:Emura23} except that the challenge is a trapdoor. 

\noindent Mukherjee \cite{ACISP:Mukherjee23} proposed $\anonctcpa$ and $\anontrapcpa$ which are quite similar to IND-CKA and IND-IKGA of Emura \cite{IET:Emura23} except that both the notions additionally capture sender anonymity but do not consider user corruptions.

\paragraph{\textbf{Full-Hiding Security} ($\fullcpa$).}
In this work, we introduce $\fullcpa$ security that allows interleaved challenge queries to ciphertext, trapdoor and secret key oracles. 
Recall that both $\SEnc$ and $\Tgen$ use secret keys and therefore the security captures multi-ciphertext and multi-trapdoor security simultaneously. 
Moreover, this security model considers keyword hiding along with sender and receiver anonymities in both the ciphertext and in the trapdoors simultaneously.
This indeed matches the security definition of \cite{TCC:SheShiWat09} therefore achieves the most coveted security.

This concludes our claim of stronger security model and we next describe the problem in the reduction essential in the proof of consistency of Emura's BAEKS \cite{IET:Emura23} and Emura's PAEKS \cite{APKC:Emura22}. Following that, we describe the system model we consider here for BAEKS.

\subsection{Analysis of Consistency of Emura's PAEKS}
\label{sec:Emura-Correctness}
%!TEX spellcheck = en_US
%!TEX root = ../main.tex

% {\color{red}COMMENT ON \cite{IEICE:Emura24}}

Emura \cite{APKC:Emura22} introduced a notion for \emph{computational consistency} for PAEKS. We reproduce the definition in \Cref{fig:Emura-Consistency} and informally describe their proof technique. 

\setlength{\tabcolsep}{2pt}
\begin{figure}
    \small
    \centering
    %    \begin{minipage}{1\textwidth}
            \begin{tabular}{|l|}
                \hline
                    $\Exp{\paeks,\AA}{\mathsf{consist}}$:\\
                    \hspace*{0.3cm}$(\pk_{\Rcvr},\sk_{\Rcvr})\leftarrow\Kgen_{\Rcvr}(1^\secpp)$\\
                    \hspace*{0.3cm}$(\pk_{\Sndr[0]},\sk_{\Sndr[0]})\leftarrow\Kgen_{\Sndr[0]}(\pk_{\Rcvr})$\\
                    \hspace*{0.3cm}$(\pk_{\Sndr[1]},\sk_{\Sndr[1]})\leftarrow\Kgen_{\Sndr[1]}(\pk_{\Rcvr})$\\
                    \hspace*{0.3cm}$(\kw,\kwd,i,j)\leftarrow \AA(\pk_{\Rcvr},\pk_{\Sndr[0]},\pk_{\Sndr[1]})$ \\
                    \hspace{0.6cm}s.t. $\kw,\kwd\in\KWd\ \ \wedge$ distinct $i,j\in\{0,1\}\wedge ((\kw,i)\neq (\kwd,j))$ \\
                    \hspace*{0.3cm}$\Ct\leftarrow \SEnc(\sk_{\Sndr[i]},\kw,\pk_{\Rcvr})$\\
                    \hspace*{0.3cm}$\Trp\leftarrow \Tgen(\pk_{\Sndr[j]},\kwd,\sk_{\Rcvr})$\\
                    \hspace*{0.3cm}If $\Test(\Trp,\Ct)=1$, Output $1$ or $0$ Otherwise.\\
                \hline
            \end{tabular}
    %    \end{minipage}
        \caption{Computational Consistency of \cite{APKC:Emura22}}
        \label{fig:Emura-Consistency}
    \end{figure}
Emura \cite{APKC:Emura22} called a PAEKS scheme $\paeks$ is computationally consistent if $\Pr[\Exp{\paeks,\AA}{\mathsf{consist}}=1]\leq \neglgbl$ for all $\ppt$ adversary $\AA$. This definition, informally speaking, captures the situation that no $\ppt$ adversary can produce two unmatching keywords  (i.e. $(\kw,\Sndr[i])\neq (\kwd,\Sndr[j])$ for distinct $i,j\in\{0,1\}$) but the corrsponding $\Ct$ and $\Trp$ matches. First we notice that some important situations are missing in the definition. In particular, the adversary does not choose the sender or the receivers. Moreover, it considered same receiver in both $\SEnc$ and $\Trp$ thereby restricting the model. However, these are only some observations about the consistency definition in \cite{APKC:Emura22} and are not the biggest issue in their consistency proof. Next we describe the proof technique informally.

Firstly, Emura \cite{APKC:Emura22} proposed a generic construction of PAEKS from a PKE, a Hash-Proof system and a PEKS. Their idea was to reduce the computational consistency of PAEKS to the computational consistency of underlying PEKS. To argue computational consistency of the construction generically, Emura proposed a hybrid argument. The argument starts from $\game{0}$ which basically is the real security game. In $\game{1}$, the projective hash function in $\SEnc$ is replaced by hash function of SPHF. In $\game{2}$, they modified key-pair computation of $\Sndr[i]$. In $\game{3}$, they sampled the sender public key in $\SEnc$ from outside the language. Then, in $\game{4}$, they again modified key-pair computation of $\Sndr[i]$. 
Then, in $\game{5}$, they again modified key-pair computation of $\Sndr[j]$. 
In $\game{6}$, they sampled the sender public key in $\Tgen$ from outside the language. In $\game{7}$ and in $\game{8}$, they replace the PRF with a random function in $\SEnc$ and $\Tgen$ respectively. Finally in $\game{9}$, they reduce the consistency of PAEKS to the consistency of underlying PEKS.

We note down some of the problems with the above argument:
\begin{enumerate}
    \item The reductions are not formally proven. We take an example where $\game{0}$ and $\game{1}$ are proven indistinguishable. Note that, they state that $|\Pr[\win{0}] - \Pr[\win{1}]|$ is negligible due to the correctness of WI-SPHF. However, this is not formally argued via forming reductions.
    
    \item In fact, none of the reductions are argued properly. Typically provable security argues that, if a given adversary breaks a particular security notion, we can define a reduction that breaks another security notions. Except $\game{1}\approx \game{2}$ and $\game{7}\approx \game{8}$, reduction arguments are missing despite being stated.
    
    \item We look at the reduction that claims to have proven $\game{1}\approx \game{2}$. To properly understand the given reduction, we look at their security model reproduced in \Cref{fig:Emura-Consistency}. Observe that, the adversary in this security model is considered successful if it produces $(\kw,\kwd,i,j)$ for $(\kw,\Sndr[i])\neq (\kwd, \Sndr[j])$ but $\Ct$ \emph{matches} $\Trp$ where  $\Ct\leftarrow \SEnc(\sk_{\Sndr[i]},\kw,\pk_{\Rcvr})$ and $\Trp\leftarrow \Tgen(\pk_{\Sndr[j]},\kwd,\sk_{\Rcvr})$. Thus, to argue $\game{1}\approx \game{2}$, the reduction $\AB$ should run $\AA$ to get such $(\kw,\kwd,i,j)$. Now, the paper claims that the reduction will modify $\pk_{\Sndr[i]}$ keeping corresponding $\sk_{\Sndr[i]}$ unchanged. In particular, till $\game{1}$, $\pk_{\Sndr[i]}$ was an encryption of $0$ and in $\game{2}$, they make it an encryption of $1$. The issue as we understand is $\pk_{\Sndr[i]}$ is used in both $\SEnc$ and in $\Trp$ and does not really behave differently based on the value that is encrypted. To summarise, once $\AA$ produces the adversarial challenge $(\kw,\kwd,i,j)$ according to \Cref{fig:Emura-Consistency}, it is not really clear how the reduction $\AB$ is going to use this challenge tuple $(\kw,\kwd,i,j)$ to break IND-CPA of the encryption scheme.
    
    \item The bigger issue we think is with the reduction to argue $\game{8}\approx \game{9}$. While arguing this indistinguishability, when $\AA$ outputs $(\kw,\kwd,i,j)$, $\AB$ randomly chooses $der\text{-}\kw,der\text{-}\kwd\sample \mathcal{KS}$ and sends $(der\text{-}\kw,der\text{-}\kwd)$ to the challenger. The rest of the argument works on $der\text{-}\kw,der\text{-}\kwd$. In this reduction we think $\AA$ is not utilized at all. In particular, $\AB$ could anyway sample $der\text{-}\kw,der\text{-}\kwd\sample \mathcal{KS}$ without any input from $\AA$ and complete the game with the challenger. This as a result shows that $\AB$ can always distinguish between $\game{8}$ and $\game{9}$. This, in our opinion, contradicts with the notion of polynomial reduction that is used to relate hardness/security of two problems/protocols.
\end{enumerate}

Recently, an expanded version of this paper was published \cite{IEICE:Emura24}. 
Although this version gives sufficient details on the proof of security, we unfortunately were not able to find any additional details over \cite{APKC:Emura22} on the proof of consistency.
Therefore, observations we have made above still persists in our opinion.

\subsection{System Model} \label{sec:BaEKS-SysDef}
The system model we consider in this work is similar to that of \cite{ACISP:Mukherjee23}. 
% {\color{red}Add more details on $|\Users|$.} 
To set a system up where the user space $\Users=\{0,1\}^{\ulen}$ for $\ulen=\poly$ and keyword space $\KWd=\{0,1\}^{\klen}$ for $\klen=\poly$, the system administrator runs $\Setup$ to generate public parameter $\pp$ and then can go offline.
Any user $\Usr\in\Users$ can generate its own public-private key pairs $(\pk_{\Usr},\sk_{\Usr})$ taking $\pp$ into account, to join the system. 
% We will assume a trusted third party generates the system params $\pp$ only one-time at the start of the setting. 
% To join this system, a user will generate its own key-pairs with regards to this system params $\pp$.
The requirement from a BAEKS scheme is that a user could encrypt a keyword for a (set) of user(s) such that any privileged user would be able to search.
To compute a searchable ciphertext of a keyword $\kw$ for a set of users (called receiver-set $\RSet=\{\Rcvr_1,\ldots,\Rcvr_\ell\}$), a user (called sender $\Sndr$) first gathers the receivers' public keys $\pk_{\RSet}=\{\pk_{\Rcvr_1},\ldots, \pk_{\Rcvr_\ell}\}$. 
The sender $\Sndr$ then encrypts a keyword $\kw$ to $\Ct$ running $\SEnc$ on its own secret key $\sk_{\Sndr}$ and on the set of privileged receivers $\pk_{\RSet}$. 
The sender $\Sndr$ stores $\Ct$ in the cloud.
Any user $\Rcvr$ can compute a trapdoor $\Trp$ using $\Tgen$ for a keyword $\kwd$ of interest with its own secret key $\sk_{\Rcvr}$ and a sender's public key $\pk_{\Sndrr}$, and send $\Trp$ to the cloud server as a search query.
The cloud server can run $\Test$ on $\Trp$ and $\Ct$ to check match without knowing neither the sender's identity nor the receiver's identity. 
A document will be returned if all the followings hold: the underlying keywords are the same $(\kw=\kwd)$, the trapdoor $\Trp$ is for querying the content from the sender $\Sndr$ (i.e. $\Sndr=\Sndrr$) rather than other senders, and the receiver $\Rcvr$ is one of the target receivers (i.e. $\Rcvr\in\RSet$) of the searchable ciphertext $\Ct$.

\section{Tight Broadcast Authenticated Encryption with Keyword Search}
\label{sec:Construction}
% \vspace{-0.25cm}
%!TEX spellcheck = en_US
%!TEX root = ../main.tex

\begin{figure*}[!h]
    % \vspace{-2cm}
    \scriptsize
    \hspace{-0cm}
    \centering
    \fbox{
        \begin{minipage}{0.49\textwidth}
            \underline{$\Setup(1^\secpp,\klen,\ulen)$}
            \begin{algorithmic}[1]
                \State $\abG=(p,\Go,\Gt,\GT,\go,\gt,e)\sample \ABSGen(1^\secpp)$
                \State $\A\sample\Dk[\nRA,\nCA],\B\sample\Dk[\nRB,\nCB]$
                \State For $(\ii,b)\in[\klen]\times\{0,1\}:$ 
                \item[]\hspace{0.5cm}$\Q[\ii,b],\N[\ii,b]\sample \Zp^{(\nRAd)\times (\nRB)}$
                \State $\K[\ii,b]=\iCol{\Q[\ii,b]}{\N[\ii,b]}$
                \State $\pp=(\On{\A},\Tw{\A},\On{\B},\Tw{\B},$
                \item[]\hfill$\left\{\On{\Kt[\ii,b]\A},\Tw{\K[\ii,b]\B}\right\}_{\substack{{\ii\in[\klen]}\\{b\in\{0,1\}}}})$
                % \item[] $
                % \item[]\hfill$
                % \item[]
            \end{algorithmic}
            \underline{$\Kgen(\pp,j)$}
            \begin{algorithmic}[1]
                \State $\W[j],\Y[j]\sample \Zp^{(\nRAu)\times (\nRB)}$
                \State {\color{\confcolorr}For $(\ii,b)\in[2\ulen]\times\{0,1\}$: $\J[j,\ii,b]\sample \Zp^{(\nRAu)\times (\nRB)}$} 
                \State {\color{\confcolorr}$\X[j]=\SumiDT{1}{\ulen}\J[j,i,{\msgii[j]{i}}],\Z[j]=\SumiDT{\ulen+1}{2\ulen}\J[j,i,{\msgii[j]{i}}]$}
                \State Set $\U[j]=\iCol{\W[j]}{\X[j]}$ and $\V[j]=\iCol{\Y[j]}{\Z[j]}$ 
                \State $\sk_j=\left(\On{\Ut[j]\A},\Tw{\V[j]\B}\right)$
                \State $\pk_j=\left(\Tw{\U[j]\B},\On{\Vt[j]\A}\right)$
                % \item[]
            \end{algorithmic}
        \end{minipage}
        \vline \hspace{1pt}
        \begin{minipage}{0.50\textwidth}
            \underline{$\SEnc(\pp,\sk_{\Sndr}, \kw, \pk_{\RSet}=(\pk_{\Rcvr_1},\ldots,\pk_{\Rcvr_{\ell}}))$}
            \begin{algorithmic}[1]
                % \State Let $\pk_{\RSet})$
                \State $\s\sample\Zp^{\nCA}$, $\perm\sample \Perm{[\ell]}$ 
                \State $\Ct=((\On{\ct[{\ij,0}]},\On{\ct[{\ij,1}]})_{\ij\in[\ell]})$ s.t. 
                \item[]$\ct[{\ij,0}]={\A\s[\ij]}$ %and  
                \item[]$\ct[{\ij,1}]={(\Sum{\ii}{[\klen]}\Kt[\ii,{\msgii[\kw]{\ii}}]+\Ut[\Sndr]+\Vt[\Rcvr_{\perm(\ij)}])\A\s[\ij]}$
            \end{algorithmic}        
            \underline{$\Tgen(\pp,\pk_{\Sndrr},\kwd,\sk_{\Rcvrr})$}   
            \begin{algorithmic}[1]
                \State $\r\sample\Zp^{\nCB}$
                \State $\Trp=(\Tw{\trp[{0}]},\Tw{\trp[{1}]})$ s.t. 
                \item[]$\trp[{0}]={\B\r}$ %and 
                \item[]$\trp[{1}]=(\Sum{\ii}{[\klen]}\K[\ii,{\msgii[\kwd]{\ii}}]+\U[\Sndrr]+\V[\Rcvrr]){\B\r}$ 
            \end{algorithmic}
            \underline{$\Test(\Trp,\Ct)$}
            \begin{algorithmic}[1]
                \item Let $\Ct=((\On{\ct[{\ij,0}]},\On{\ct[{\ij,1}]})_{\ij\in[\ell]})$ %and  
                \item Let $\Trp=(\Tw{\trp[{0}]},\Tw{\trp[{1}]})$
                \item Output $1$ if $\exists \ij\in[\ell]$ s.t. 
                \item[]\hfill$e(\On{\ct[{\ij,0}]}, \Tw{\trp[{1}]})=e(\On{\ct[{\ij,1}]},\Tw{\trp[{0}]})$
                \item Output $0$ otherwise
            \end{algorithmic}     
        \end{minipage}
    }
    \caption{Our BAEKS Construction: $\baeks$}
    \label{fig:Construction-BAEKS}
\end{figure*}

% \afterpage{\clearpage}
% \newpage\newpage

\subsection{Our Construction} \label{sec:BaEKS-Cons}
We give $\baeks$ in \Cref{fig:Construction-BAEKS}. This construction assumes a trusted third party running a one-time $\Setup$. As per the system model described in \Cref{sec:BaEKS-SysDef}, any user $j$ can run $\Kgen(j)$ to generate its key pair. %The construction at its base utlizes Naor-Reingold based affine mac \cite{C:BlaKilPan14,AC:HofJiaPan18,PKC:LanPan19,PKC:LanPan20}.
Our construction is proven correct and satistically consistent next. 
% in \Cref{sec:Correct-Consistent}. 

%!TEX spellcheck = en_US
%!TEX root = ../main.tex

\subsubsection{Correctness.} \label{sec:BAEKS-Correct}
% \paragraph{Correctness.}
Suppose, ${\Sndrr}=\Sndr$, $\kwd=\kw$ and $\Rcvrr\in\RSet$ for %=\{\Rcvr_1,\ldots,\Rcvr_\ell\}
\begin{itemize}
    \item $\Ct=(\On{\ct[{1,0}]},\On{\ct[{1,1}]},\ldots,\On{\ct[{\ell,0}]},\On{\ct[{\ell,1}]})\leftarrow \SEnc(\pp,\sk_{\Sndr},\kw,\pk_{\RSet})$
    \item $\Trp=(\Tw{\trp[{0}]},\Tw{\trp[{1}]})\leftarrow \Tgen(\pp,\pk_{{\Sndrr}},\kwd,\sk_{\Rcvrr})$,
%    \item[] \hspace{-0.5cm}such that ${\Sndrr}=\Sndr$, $\kwd=\kw$ and $\Rcvr\in\RSet=\{\Rcvr_1,\ldots,\Rcvr_\ell\}$. %where $\exists j\in[\ell]$, $\Rcvr=\Rcvr^{[j]}$. 
\end{itemize}

{%\scriptsize
\begin{equation*}
    \begin{aligned}
        A   &= e(\On{\ct[{\ij,0}]}, \Tw{\trp[{1}]})= e(\On{\A\s[\ij]},\Tw{(\Sum{\ii}{[\klen]}\K[\ii,{\msgii[\kwd]{\ii}}]+\U[\Sndrr]+\V[\Rcvrr]){\B\r}})\\
            &= e(\On{(\Sum{\ii}{[\klen]}\Kt[\ii,{\msgii[\kwd]{\ii}}]+\Ut[\Sndrr]+\Vt[\Rcvrr])\A\s[\ij]}, \Tw{\B\r})\\
    %     \end{aligned}
    % \end{equation*}
    % \begin{equation*}
    %     \begin{aligned}
            B   &= e(\On{\ct[{\ij,1}]},\Tw{\trp[{0}]})= e(\On{(\Sum{\ii}{[\klen]}\Kt[\ii,{\msgii[\kw]{\ii}}]+\Ut[\Sndr]+\Vt[{\Rcvr_{\perm(\ij)}}])\A\s[\ij]}, \Tw{\B\r})\\
    \frac{A}{B} &= \Tt{\rt\Bt (\Sum{\ii}{[\klen]}(\Kt[\ii,{\msgii[\kwd]{\ii}}]-\Kt[\ii,{\msgii[\kw]{\ii}}])+\Ut[\Sndr]-\Ut[\Sndrr]+\Vt[\Rcvr_{\perm(\ij)}]-\Vt[\Rcvrr])\A\s[\ij]}\\
    \end{aligned}
\end{equation*}
}
\noindent Since, ${\Sndrr}=\Sndr$, $\kwd=\kw$ and $\Rcvrr\in\RSet$ ($\exists j\in[\ell]$, $\Rcvrr=\Rcvr_{\tau(j)}$), $\Pr[A=B]=1$.

\subsubsection{Consistency.} \label{sec:BAEKS-Consistent}
Informally describing \Cref{fig:Consistency}, we need to show that for distinct $i,j\in\{0,1\}$, if ${\Sndr^{[j]}}\neq{\Sndr^{[i]}}$ or if $\kw^{[j]}\neq \kw^{[i]}$ or if $\Rcvr^{[j]}\notin\RSet^{[i]}$ for adversarially chosen keywords $\kw^{[j]},\kw^{[i]}\in\KWd$, a trapdoor $\Trp_j\leftarrow\Tgen(\pk_{\Sndr^{[j]}},\kw^{[j]},\sk_{\Rcvr^{[j]}})$ \emph{matches} a ciphertext $\Ct_i\leftarrow\SEnc(\sk_{\Sndr^{[i]}},\kw^{[i]},\pk_{\RSet^{[i]}})$ with only negligible probability. We emphasize that this holds even against statistical adversary that is given public keys of the senders and the receivers.

\begin{proofsketch}
    Observe that adversary $\AA$ gets params $\pp=(\On{\A},\Tw{\A},$ $\On{\B},\Tw{\B},$ $\left\{\On{\Kt[\ii,b]\A},\Tw{\K[\ii,b]\B}\right\}_{\substack{{\ii\in[\klen]}\\{b\in\{0,1\}}}})$. 
    $\AA$ also gets the public keys for $\left((\pk_t)_{t\in \RSet^{[0]}\cup\RSet^{[1]}\cup \{\Sndr^{[0]},\Sndr^{[1]},\Rcvr^{[0]},\Rcvr^{[1]}\}}\right)$.
    It therefore has access to $\pk_t = (\Tw{\U[t]\B},\On{\Vt[t]\A})$ where $\U[t],\V[t]\in\Zp^{\nRA\times\nRB}$ are defined as per \Cref{fig:Construction-BAEKS} by the experiment for all $t\in \RSet^{[0]}\cup\RSet^{[1]}\cup \{\Sndr^{[0]},\Sndr^{[1]},\Rcvr^{[0]},\Rcvr^{[1]}\}$.
    We show that, given all these information, there is enough entropy in the corresponding secret keys $\left((\sk_t)_{t\in \RSet^{[0]}\cup\RSet^{[1]}\cup \{\Sndr^{[0]},\Sndr^{[1]},\Rcvr^{[0]},\Rcvr^{[1]}\}}\right)$ to show that the best $\AA$ can do is to guess despite $\AA$ being an unbounded adversary.
\end{proofsketch}

To formulate our argument, we first check $\Kgen$ where $\U[t]$ and $\V[t]$ are sampled in a particular way for an input $t$. Observe that, $\W[t],\Y[t],\J[t,\ii,b]\in\Zp^{(\nRAu)\times (\nRB)}$ are all sampled uniformly at random for $\ii\in [\ulen]$ and $b\in\{0,1\}$. Therefore, $\U[t],\V[t]\in\Zp^{(\nRA)\times (\nRB)}$ are distributed uniformly at random.

We sample $\K[\ii,b]=\tK[\ii,b]+\tk[\ii,b]\T$ for all $\ii\in [\klen]$ and $b\in\{0,1\}$. We further sample $\U[t]=\tU[t]+\tu[t]\T$ and $\V[t]=\tV[t]+\tv[t]\T$ for $\tU[t],\tV[t],\T\sample\Zp^{\nRA\times\nRB}$, $\tu[t],\tv[t]\sample\Zp$ for all $t\in \RSet^{[0]}\cup\RSet^{[1]}\cup \{\Sndr^{[0]},\Sndr^{[1]},\Rcvr^{[0]},\Rcvr^{[1]}\}$. %, $\ii\in L$. %and $b\in\{0,1\}$. %and $\mu=\Apt\Bp\neq 0$.
Since, $\tU[t]$ and $\tV[t]$ are uniformly random, $\U[t]$ and $\V[t]$ are uniformly random for all $t\in \RSet^{[0]}\cup\RSet^{[1]}\cup \{\Sndr^{[0]},\Sndr^{[1]},\Rcvr^{[0]},\Rcvr^{[1]}\}$. %, $\ii\in L$ and $b\in\{0,1\}$.
Thus, for all $t\in \RSet^{[0]}\cup\RSet^{[1]}\cup \{\Sndr^{[0]},\Sndr^{[1]},\Rcvr^{[0]},\Rcvr^{[1]}\}$, $\pk_t$ are properly distributed and do not leak any information about the corresponding $\tu[t],\tv[t]$.
We sample $\r\sample\Zp^k$. The trapdoor $\Trp_j=(\trp[{0}]^{[j]},\trp[{1}]^{[j]})$ is then,

{\scriptsize
\begin{equation*}
    \begin{aligned}
        \trp[{0}]^{[j]}&=\Tw{\B\r} \\
        \trp[{1}]^{[j]}&=\Tw{(\Sum{\ii}{[\klen]}\K[\ii,{\msgii[\kw]{\ii}^{[j]}}]+\U[\Sndr^{[j]}]+\V[\Rcvr^{[j]}])\B\r+\widetilde{k_1}^{[j]}\T\B\r}
        %\\
        %&\qquad\qquad\qquad\qquad\quad
        \text{for }\widetilde{k_1}^{[j]}=\boxed{\Sum{\ii}{[\klen]}\tk[\ii,{\msgii[\kw]{\ii}^{[j]}}]+\tu[\Sndr^{[j]}]+\tv[\Rcvr^{[j]}]}\\
    \end{aligned}.
\end{equation*}
}
%  \[\Trp_j=(\trp[{0}]^{[j]}=\Tw{\B\r},\trp[{1}]^{[j]}=\Tw{\Sum{\ii}{[L]}(\U[\Sndr^{[j]},\ii,\kw_{\ii}^{[j]}]+\V[\Rcvr^{[j]},\ii,\kw_{\ii}^{[j]}])\B\r+\boxed{\Sum{\ii}{[L]}(\tu[\Sndr^{[j]},\ii,\kw_{\ii}^{[j]}]+\tv[\Rcvr^{[j]},\ii,\kw_{\ii}^{[j]}])}\T\B\r}).\]

On the other hand, $\Ct_i=((\ct[{\ij,0}]^{[i]},\ct[{\ij,1}]^{[i]})_{\ij\in[\ell]})$ for $\RSet^{[i]}=\{\Rcvr_1^{[i]},\ldots,\Rcvr_{\ell}^{[i]}\}$, a random permutation $\tau:[\ell]\rightarrow[\ell]$ and $\s[\ij]\sample\Zp^{k}$ such that,

{\scriptsize
\begin{equation*} 
    \begin{aligned}
        \ct[{\ij,0}]^{[i]}&=\On{\A\s[\ij]} \\
        \ct[{\ij,1}]^{[i]}&=\On{(\Sum{\ii}{[\klen]}\Kt[\ii,{\msgii[\kw]{\ii}^{[i]}}]+\Ut[\Sndr^{[i]}]+\Vt[\Rcvr_{\perm(\ij)}^{[i]}])\A\s[\ij]+\widetilde{c_{\ij,1}}^{[i]}\T^\top\A\s[\ij]}
        % \\
        % &\qquad\qquad\qquad\qquad\quad
        \text{for }\widetilde{c_{\ij,1}}^{[i]}=\boxed{\Sum{\ii}{[\klen]}\tk[\ii,{\msgii[\kw]{\ii}^{[i]}}]+\tu[\Sndr^{[i]}]+\tv[\Rcvr_{\perm(\ij)}^{[i]}]} \\
    \end{aligned}.
\end{equation*}
}
We now focus at the boxed part of the above equations. Let $\sigma_l=\widetilde{c_{\ij,1}}^{[i]}= \Sum{\ii}{[\klen]}\tk[\ii,{\msgii[\kw]{\ii}^{[i]}}]+\tu[\Sndr^{[i]}]+\tv[\Rcvr_{\perm(\ij)}^{[i]}]$ for all $l\in[\ell]$ and $\sigma_{\ell+1}=\widetilde{k_1}^{[j]}=\Sum{\ii}{[\klen]}\tk[\ii,{\msgii[\kw]{\ii}^{[j]}}]+\tu[\Sndr^{[j]}]+\tv[\Rcvr^{[j]}]$.

\begin{description}
    \item[\textbf{Case-1}] Consider the case $\Sndr^{[j]}\neq \Sndr^{[i]}$. 
    Notice that, $\tu[\Sndr^{[j]}]$ and $\tu[\Sndr^{[i]}]$ are sampled uniformly at random, $\sigma_{\ell+1}$ is independent to all $\{\sigma_l\}_{l\in[\ell]}$. Moreover, $\dforall a,b\in[\ell]$, $\tv[\Rcvr_{\perm(a)}^{[i]}]$ and $\tv[\Rcvr_{\perm(b)}^{[i]}]$ are sampled uniformly at random, $\sigma_{a}$ is independent to all $\sigma_b$.
    Therefore, $\sigma_1,\ldots,\sigma_{\ell+1}$ are independent and uniformly random.

\item[\textbf{Case-2}] Consider the case $(\Sndr=\Sndr^{[j]}=\Sndr^{[i]})\wedge (\kw^{[j]} \neq \kw^{[i]})$. Since, $\kw^{[j]}\neq \kw^{[j]}$, $\exists d\in[\klen]$ s.t. $\kw_d^{[j]}\neq \kw_d^{[i]}$. Notice that, $\tk[d,\kw_d^{[j]}]$ and $\tk[d,\kw_d^{[i]}]$ are sampled uniformly at random, $\sigma_{\ell+1}$ is independent to all $\{\sigma_l\}_{l\in[\ell]}$. Moreover, $\dforall a,b\in[\ell]$, $\tv[\Rcvr_{\perm(a)}^{[i]}]$ and $\tv[\Rcvr_{\perm(b)}^{[i]}]$ are sampled uniformly at random, $\sigma_{a}$ is independent to all $\sigma_b$.
Therefore, $\sigma_1,\ldots,\sigma_{\ell+1}$ are independent and uniformly random.

\item[\textbf{Case-3}] Consider the case $(\Sndr=\Sndr^{[j]}=\Sndr^{[i]})\wedge (\kw=\kw^{[j]}=\kw^{[i]}) \wedge (\Rcvr^{[j]} \notin \RSet^{[i]})$. Since, $\Rcvr^{[j]} \notin \RSet^{[i]}$, $\forall d\in[\ell]$, $\Rcvr^{[j]}\neq \Rcvr_{\perm(d)}^{[i]}$. 

{\scriptsize
\vspace{-0.5cm}  
    \[%\hspace{-2.0cm}
    \underset{{(\ell+1)\times 1}}{\underbrace{
        \begin{bmatrix}
            \sigma_1        \\
            \sigma_2        \\
            \vdots          \\
            \sigma_{\ell+1} \\
        \end{bmatrix}
    }}
        =
        \underset{{(\ell+1)\times (\ell+2)}}{\underbrace{
        \begin{bmatrix}
            1 &   1  &  0   &\cdots&  0   &  0\\
            1 &   0  &  1   &\cdots&  0   &  0\\
            \vdots&\vdots&\vdots&\ddots&\vdots&\vdots\\
            1 &   0  &  0   &\cdots&  1   &  0 \\
            1 &   0  &  0   &\cdots&  0   &  1 \\
        \end{bmatrix}}}
        \times
        \underset{{(\ell+2)\times 1}}{\underbrace{
        \begin{bmatrix}
            \Sum{\ii}{[\klen]}\tk[\ii,{\msgii[\kw]{\ii}^{[i]}}]+\tu[\Sndr]\\
            \tv[\Rcvr_{\perm(1)}^{[i]}]\\
            \tv[\Rcvr_{\perm(2)}^{[i]}]\\
            \vdots \\
            \tv[\Rcvr_{\perm(\ell)}^{[i]}]\\
            \tv[\Rcvr^{[j]}]\\
        \end{bmatrix}}}%_{(\ell+2)\times 1}
    \]
}
   
The above relation between $\sigma_1,\ldots,\sigma_{\ell+1}$ is represented as a system of the linear equation where the linear transformation has rank $(\ell+1)$.
Since, $\tk[\ii,{b^{[i]}}],\tu[\Sndr]$ are all chosen uniformly at random $\ii\in [\klen]$ and $b\in\{0,1\}$. Thus, $\Sum{\ii}{[\klen]}\tk[\ii,{\msgii[\kw]{\ii}^{[i]}}]+\tu[\Sndr]$ is also distributed unformly at random.
Observe that, $\Sum{\ii}{[\klen]}\tk[\ii,{\msgii[\kw]{\ii}^{[i]}}]+\tu[\Sndr],$ $\tv[\Rcvr_{\perm(1)}^{[i]}],$ $\ldots,\tv[\Rcvr_{\perm(\ell)}^{[i]}],\tv[\Rcvr^{[j]}]$ are independent and uniformly random, $\sigma_1,\ldots,\sigma_{\ell+1}$ are too independent and uniformly random.
\end{description}

\noindent In other words, $\sigma_1,\ldots,\sigma_{\ell+1}$ are completely uncorelated and as a result $\Ct_i$ and $\Trp_j$ considered in the consistency game \emph{does not match} except with negligible probability.

\subsection{Security}\label{sec:BaEKS-SecThm} 
Next, we argue that our construction $\baeks$ is adaptively $\fullcpa$ secure.
\begin{theorem}\label{thm:full-security}
    If the $\lmatDH$ Assumption holds in $\abG$, then $\baeks$ is a $\fullcpa$-secure BAEKS scheme. Precisely, for any $\ppt$ adversary $\AA$ that breaks $\baeks$ in the $\fullcpa$ model making at most $q$ challenge trapdoor queries, at most $\qq$ challenge encryption requests and at most $\qqq$ key requests, there exists $\ppt$ algorithms $\AB_1,\AB_2$ such that,
    \begin{equation*}
        \begin{aligned}
            \Adv{\AA,\baeks}{\fullcpa}&\leq \Adv{\AB,\Go}{|\Qct|\ell\mhyf\lmatDH[\nRA,\nCA]}+ \Adv{\AB,\abG}{\Core}\\
            &\qquad +\Adv{\AB,\Gt}{|\Qtrp|\mhyf\lmatDH[\nRB,\nCB]} +\frac{4}{p-1}\\
            &\leq \nRCA\Adv{\AB,\Go}{\lmatDH[\nCA]}+ \Adv{\AB,\abG}{\Core}\\
            & +\nRCB\Adv{\AB,\Gt}{\lmatDH[\nCB]} +\frac{4}{p-1}\quad (\text{due to }\Fact3)\\
        \end{aligned}
    \end{equation*}
\end{theorem}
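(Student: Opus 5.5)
I would prove \Cref{thm:full-security} by a short sequence of hybrid games $\game{0},\ldots,\game{3}$. $\game{0}$ is the real $\fullcpa$ experiment with challenge bit $\bee$. The bound in the statement already dictates the three transitions: one $\lmatDH$ step in $\Go$ for ciphertexts, one invocation of the Core-Lemma (\Cref{fig:Core-Lemma}), and one $\lmatDH$ step in $\Gt$ for trapdoors.

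In $\game{1}$ I replace every ciphertext randomness $\A\s[\ij]$, over all $|\Qct|$ queries and all $\ell$ slots, by a uniformly random vector. The reduction $\AB_1$ receives a $|\Qct|\ell$-fold $\lmatDH[\nRA,\nCA]$ instance $(\On{\A},\On{\z_t},\Tw{\A})$. It generates $\B$, all $\K[\ii,b]$ and all user matrices $\U[j],\V[j]$ itself, in the clear. Knowing these matrices in $\Zp$, and having $\On{\A}$ and $\Tw{\A}$, it can compute $\pp$, every $\pk_j=(\Tw{\U[j]\B},\On{\Vt[j]\A})$ and every $\sk_j$. It can therefore answer $\Opk$, $\Osk$ and $\Otrp$ honestly. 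It answers each $\Oct$ slot with $\ct[{\ij,0}]=\On{\z_t}$ and $\ct[{\ij,1}]=\On{(\cdots)\z_t}$. This is exactly why the lateral variant is needed: $\Tw{\A}$ must appear in $\pp$ and in the public keys. By $\Fact3$ this step costs $\Adv{}{|\Qct|\ell\mhyf\lmatDH[\nRA,\nCA]}$ plus a $1/(p-1)$ term. Symmetrically, $\game{2}\to\game{3}$ replaces every trapdoor vector $\B\r$ by a random vector in $\Gt$ using a $|\Qtrp|$-fold $\lmatDH[\nRB,\nCB]$ instance in $\Gt$, with $\On{\B}$ provided. This costs the third term plus $1/(p-1)$. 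The remaining $2/(p-1)$ comes from switching a uniformly random vector to one outside the span of $\A$ (resp.\ $\B$) and back.

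The middle transition $\game{1}\to\game{2}$ (carried out with random vectors on both sides, so the $\lmatDH$ steps bracket it) is where all the work sits. Here I would show that, once both $\A\s$ and $\B\r$ have components outside the spans, each ciphertext encoding $(\Sum{\ii}{[\klen]}\Kt[\ii,\kw_\ii]+\Ut[\Sndr]+\Vt[\Rcvr])$ and each trapdoor encoding can be replaced by one containing an independent random function $\RF{}{\Sndr\|\kw\|\Rcvr}$ in the orthogonal direction. The adversary's entire view then becomes independent of the queried tuples, and hence of $\bee$. I would phrase this exactly as an instance of the Core game. $\AB_2$ forwards $\Opk$ and $\Osk$ queries. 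For each $\Oct$ query it fixes the tuples $(\Sndr^{[\bee]},\kw^{[\bee]},\Rcvr_{\perm(\ij)}^{[\bee]})$ according to its own bit $\bee$ and a fresh permutation. For each $\Otrp$ query it fixes $(\Sndrr^{[\bee]},\kwd^{[\bee]},\Rcvrr^{[\bee]})$. It then asks the Core oracles for the corresponding encodings.

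The main obstacle is checking that the $\fullcpa$ restrictions translate into the Core-Lemma's admissibility conditions. First, $\Qct\cap\Qtrp=\emptyset$ must guarantee that no tuple is both a ciphertext and a trapdoor target; this should hold for the tuples under both $b=0$ and $b=1$, so it is independent of $\bee$. Second, users appearing in challenge queries must never be corrupted. Third, trapdoor queries are non-repeating, so each $\RF$ value is used once on the trapdoor side, while repeated ciphertext tuples only reuse the same $\RF$ value with fresh vectors. Given these, in the final Core world every encoding carries a fresh uniform mask, so $\AB_2$'s simulation is independent of $\bee$ and $\AA$'s advantage in $\game{3}$ is $0$. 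Summing the hybrid losses gives the first inequality. The second inequality then follows directly from $\Fact3$, which yields the factors $\nRCA$ and $\nRCB$.
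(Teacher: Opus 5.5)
\textbf{There is a genuine gap in the middle step.} Your outer structure matches the paper: switch all ciphertext and trapdoor randomness to uniform under $\lmatDH$, using the lateral variant because $\Tw{\A}$ and $\On{\B}$ are public, then invoke the Core Lemma. But the step you describe as phrasing the middle transition ``exactly as an instance of the Core game'' is where the proof actually lives, and your $\AB_2$ cannot be built as described.

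Forwarding $\Opk$/$\Osk$ to $\OKey$/$\OCor$ does not give you keys. Those oracles release only lower-block, $\B$-projected data: $\Tw{\N[\ii,b]\B}$, $\Tw{\X[j]\B}$, $\Tw{\Z[j]\B}$ on corruption, and $\chl[\ij,0],\tg[1]\in\Zp^{\nRAd}$. From these you cannot produce the $\Go$-side values the adversary expects: $\On{\Kt[\ii,b]\A}$ in $\pp$, $\On{\Vt[j]\A}$ in every public key, and $\On{\Ut[j]\A}$ in corrupted secret keys. For instance, $\Vt[j]\A=\Yt[j]\Au+\Zt[j]\Ad$ involves $\Z[j]$. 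The Core challenger reveals nothing about $\Z[j]$ before corruption and only $\Tw{\Z[j]\B}$ after. Nor can you assemble a full-dimensional ciphertext or trapdoor from a Core answer.

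\textbf{The missing idea is the paper's $\game{3}$ and $\game{4}$.}
\begin{itemize}
\item Write $\K[\ii,b]=\iCol{\Q[\ii,b]}{\N[\ii,b]}$, $\U[j]=\iCol{\W[j]}{\X[j]}$ and $\V[j]=\iCol{\Y[j]}{\Z[j]}$.
\item Split each uniform $\ct[{\ij,0}]$ into an upper part $\hct_{\ij,0}$ and a lower part $\chl[\ij,0]+\Ad\Au^{-1}\hct_{\ij,0}$.
\item Eliminate $\Q[\ii,b],\W[j],\Y[j]$ from the view via substitutions such as $\Wt[j]=(\Ut[j]\A-\Xt[j]\Ad)\Au^{-1}$.
\end{itemize}
Then $\ct[{\ij,1}]=(\ldots)\A\Au^{-1}\hct_{\ij,0}+\chl[\ij,1]$, and the lower block of $\trp[1]$ is exactly the Core's $\tg[1]$, with the upper block computed from it.

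Since $\Q[\ii,b],\W[j],\Y[j]$ are uniform and $\Au$ is invertible, $\Kt[\ii,b]\A,\Ut[j]\A,\Vt[j]\A$ are uniform and independent of $\N[\ii,b],\X[j],\Z[j]$. So the reduction can sample them itself. It then computes $\U[j]\B=\iCol{\Au^{-\top}((\Ut[j]\A)^\top\B-\Adt\X[j]\B)}{\X[j]\B}$ in $\Gt$ from the $\OKey$ answer, and $\V[j]\B$ likewise from $\OCor$ (needed only for corrupted users). These two re-parameterization steps, through invertibility of $\Au$, are where the other $2/(p-1)$ in the bound comes from. It does not come from moving random vectors out of a span and back.

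\textbf{Two smaller points.}
\begin{itemize}
\item \emph{Ordering.} The Core's $\OTag$ samples $\tg[0]$ uniformly, so the trapdoor-side $\lmatDH$ switch must precede the Core step. Your numbering puts it after (as $\game{2}\to\game{3}$), while you also say the middle step runs with random vectors on both sides. Reorder so both switches come first.
\item \emph{The final uniformity claim.} After the Core step only the lower block of $\trp[1]$ is freshly random. Its upper block is determined by that lower block together with $\At\trp[1]$. In turn, $\At\trp[1]$ is a linear function of $\trp[0]$ fixed by the queried tuple through $\Kt[\ii,b]\A$, $\Ut[\Sndrr]\A$ and $\Vt[\Rcvrr]\A$. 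So your claim that ``every encoding carries a fresh uniform mask'', and hence that $\AB_2$'s simulation is independent of the bit, needs its own argument. The paper treats this as a separate step ($\game{5}\to\game{6}$); it is not immediate from the Core world.
\end{itemize}
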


We use devise a hybrid argument to argue the $\fullcpa$ security of $\baeks$. We define $\game{0},\game{1},\ldots, \game{6}$ in \Cref{fig:Full-Hybrid} to argue the coveted $\fullcpa$ security. We describe the games and provide the security argument next. %in \Cref{sec:Full-Proofs} due to space limitation.
% Looking ahead, $\Adv{\AB,\abG}{\Core}$ is tightly secure in \Cref{thm:Core}. Since $\Adv{\AB,\Gt}{\lmatDH[\nCB]}$ is tightly secure and $k$ is prefixed security constant,   
Thus, $\Adv{\AA,\baeks}{\fullcpa}$ above is also tight.
%!TEX spellcheck = en_US
%!TEX root = ../main.tex

\subsection{Full Security Argument}\label{sec:BaEKS-FullSec}
% \begin{proofsketch}[Proof of \Cref{thm:full-security}]

% \end{proofsketch}

\begin{figure*}
    \scriptsize
    %\centering
    \hspace{-0.7cm} %{-3cm}
    \fbox{
        %\parbox[c][150mm][c]{1.10\textwidth}{
        %\centering\hspace{-2.5cm}$\game{1}$,\lfbox[dotted]{$\game{2}$,\lfbox[background-color=lightgray!60!white,border-color=lightgray!60!white]{$\game{3}$},\lfbox[background-color=lightgray!60!white,border-color=black]{$\game{4}$},\lfbox{$\game{5}$}}\\ \ \\
        \begin{minipage}{0.44\textwidth} %{0.53\textwidth}
            \underline{$\Init(1^\secpp,\klen,\ulen)$}\ // \lfbox[background-color=lightgray!60!white,border-color=lightgray!60!white]{$\game{3}\mhyf\game{6}$}
            \begin{algorithmic}[1]
                \State $\abG=(p,\Go,\Gt,\GT,\go,\gt,e)\sample \ABSGen(1^\secpp)$
                \State $\A\sample\Dk[\nRA,\nCA]$ \lfbox[background-color=lightgray!60!white,border-color=lightgray!60!white]{s.t. $\Au\in\Zp^{\nRAu\times \nCB}$ is invertible}
                \State $\B\sample\Dk[\nRB,\nCB]$
                \State {\color{\confcolorr}For $(\ii,b)\in[\klen]\times\{0,1\}:$}
                \item[]\hspace{0.2cm}$\Q[\ii,b],\N[\ii,b]\sample \Zp^{(\nRAd)\times (\nRB)}$ %$\qquad\bslash \hame{1}-\hame{4}$
                \item[]\hspace{0.2cm}$\K[\ii,b]=\iCol{\Q[\ii,b]}{\N[\ii,b]}$
                \item[]\lfbox[rounded]{$\Kt[\ii,b]\A\sample \Zp^{\nRB\times \nCA}$} \hfill  $\bslash \game{4}\mhyf\game{6}$
                \item[]\lfbox[background-color=lightgray!60!white,border-color=lightgray!60!white]{$\K[\ii,b]\B=\iCol{\Au^{-\top}((\Kt[\ii,b]\A)^\top\B-\Adt\N[\ii,b]\B)}{\N[\ii,b]\B}$} 
                \item[] 
                \State $\pp=(\On{\A},\Tw{\A},\On{\B},\Tw{\B},$
                \item[]\hfill$\left\{\On{\Kt[\ii,b]\A},\Tw{\K[\ii,b]\B}\right\}_{\substack{{\ii\in[\klen]}\\{b\in\{0,1\}}}})$
                \item[]
                \item[]
                \item[]
            \end{algorithmic}
            \underline{$\OKgen(j)$}\ // \lfbox[background-color=lightgray!60!white,border-color=lightgray!60!white]{$\game{3}\mhyf\game{6}$}
            \begin{algorithmic}[1]
                \State $\W[j],\Y[j]\sample \Zp^{(\nRAu)\times (\nRB)}$  \hfill $\bslash \game{0}\mhyf\game{2}$
                \State {\color{\confcolorr}For $(\ii,b)\in[2\ulen]\times\{0,1\}$: 
                \item[]\hspace{0.5cm}$\J[j,\ii,b]\sample \Zp^{(\nRAu)\times (\nRB)}$}  \hfill $\bslash \game{0}\mhyf\game{2}$ 
                \State {\color{\confcolorr}$\X[j]=\SumiDT{1}{\ulen}\J[j,i,{\msgii[j]{i}}],\Z[j]=\SumiDT{\ulen+1}{2\ulen}\J[j,i,{\msgii[j]{i}}]$} \hfill $\bslash \game{0}\mhyf\game{2}$ %\in \Zp^{(\nRAd)\times (\nRB)}$

                \State $\U[j]=\iCol{\W[j]}{\X[j]},\V[j]=\iCol{\Y[j]}{\Z[j]}$ \hfill  $\bslash \game{0}\mhyf\game{2}$   
                \item[]\lfbox[rounded]{$\Ut[j]\A,\Vt[j]\A\sample \Zp^{\nRB\times \nCA}$} \hfill  $\bslash \game{4}\mhyf\game{6}$
                \item[]\lfbox[background-color=lightgray!60!white,border-color=lightgray!60!white]{$\U[j]\B=\iCol{\Au^{-\top}((\Ut[j]\A)^\top\B-\Adt\X[j]\B)}{\X[j]\B}$} 
                \item[]\lfbox[background-color=lightgray!60!white,border-color=lightgray!60!white]{$\V[j]\B=\iCol{\Au^{-\top}((\Vt[j]\A)^{\top}\B-\Adt\Z[j]\B)}{\Z[j]\B}$}  
                \State $\sk_j=(\left(\On{\Ut[j]\A},\Tw{\V[j]\B}\right))$
                % \State {$\sk_j=(\left(\On{\Ut[j,\ii,b]\A},\Tw{\iCol{\Au^{-\top}(\V[j,\ii,b]^{\top}\B-\Adt\Z[j,\ii,b])}{\Z[j,\ii,b]\B}}\right)_{\substack{{\ii\in[\klen]}\\{b\in\{0,1\}}}})$}
                \State $\pk_j=(\left(\Tw{\U[j]\B},\On{\Vt[j]\A}\right))$
                % \State \lfbox[background-color=lightgray!60!white,border-color=lightgray!60!white]{$\pk_j=(\left(\Tw{\U[j,\ii,b]\B},\On{\Vt[j,\ii,b]\A}\right)_{\substack{{\ii\in[\klen]}\\{b\in\{0,1\}}}})$}
                \item[]
                \item[]
                \item[]
                \item[]
            \end{algorithmic}
            \underline{$\Finalize(\bee\in\{0,1\})$}
            \begin{algorithmic}[1]
                \State Return $\bee$
                \item[]\hspace{0.2cm}$\wedge\ (\Qtrp\cap\Qct=\emptyset)$
                \item[]\hspace{0.2cm}$\wedge\ (\forall\ (u,\dontcare,\dontcare)\in\Qsk,$ 
                \item[]\hfill $(\pk_u,\dontcare,\dontcare),(\dontcare,\dontcare,\pk_u)\notin \Qtrp\cup\Qct)$
            \end{algorithmic}
            % \underline{$\Test(\Trp,\Ct)$}
            % \begin{algorithmic}[1]
            %     \item Let $\Ct=((\On{\ct[{\ij,0}]},\On{\ct[{\ij,1}]})_{\ij\in[\ell]})$
            %     \item Let $\Trp=(\Tw{\trp[{0}]},\Tw{\trp[{1}]})$
            %     \item Output $1$ if $\exists \ij\in[\ell]$ s.t. 
            %     \item[]\hspace{0.35cm}$e(\On{\ct[{\ij,0}]}, \Tw{\trp[{1}]})=e(\On{\ct[{\ij,1}]},\Tw{\trp[{0}]})$
            %     \item Output $0$ otherwise
            %     \item[]
            %     \item[]
            %     \item[]
            % \end{algorithmic}     
        \end{minipage}
        \vline \hspace{1pt}
        \begin{minipage}{0.9\textwidth}
            \underline{$\OSEnc(\sk_{\Sndr}, \kw, \pk_{\RSet})$}\ //$\game{1}$,\lfbox[dotted]{$\game{2}$,\lfbox[background-color=lightgray!60!white,border-color=lightgray!60!white]{$\game{3}$},\lfbox[rounded]{$\game{4}$},\lfbox[background-color=lightgray!60!white,border-color=black]{$\game{5}$},\lfbox{$\game{6}$}}
            \begin{algorithmic}[1]
                \State Let $\pk_{\RSet}=(\pk_{\Rcvr_1},\ldots,\pk_{\Rcvr_{\ell}})$
                \State $\perm\sample \Perm{[\ell]}$ 
                \State $\Ct=((\On{\ct[{\ij,0}]},\On{\ct[{\ij,1}]})_{\ij\in[\ell]})$ s.t.
                \item[]$\ct[{\ij,0}]={\A\s[\ij]}$ for $\s[\ij]\sample\Zp^{\nCA}$
                \item[]\lfbox[dotted]{$\ct[{\ij,0}]\sample \Zp^{\nRA}$}
                \item[]\lfbox[background-color=lightgray!60!white,border-color=lightgray!60!white]{\lfbox[rounded]{$\ctu_{\ij,0}\sample\Zp^{\nRAu},\a[\ij,0]\sample\Zp^{\nRAd}, \ctd_{\ij,0}=\a[\ij,0]+\Ad\Au^{-1}\ctu_{\ij,0}$}}, \lfbox[background-color=lightgray!60!white,border-color=black]{$\a[\Sndr,\kw,\Rcvr_{\perm(\ij)}]\sample \Zp^{\nRB}$}
                \item[]$\ct[{\ij,1}]={(\Sum{\ii}{[\klen]}\Kt[\ii,{\msgii[\kw]{\ii}}]+\Ut[\Sndr]+\Vt[\Rcvr_{\perm(\ij)}])\ct[{\ij,0}]}$ 
                \item[]\lfbox[background-color=lightgray!60!white,border-color=lightgray!60!white]{\lfbox[rounded]{$\ct[{\ij,1}]=(\Sum{\ii}{[\klen]}\Kt[\ii,{\msgii[\kw]{\ii}}]+\Ut[\Sndr]+\Vt[\Rcvr_{\perm(\ij)}]){\A}\Au^{-1}\ctu_{\ij,0}+(\Sum{\ii}{[\klen]}\Nt[\ii,{\msgii[\kw]{\ii}}]+\Xt[\Sndr]+\Zt[\Rcvr_{\perm(\ij)}])\a[\ij,0]$}}
                \item[]\lfbox[background-color=lightgray!60!white,border-color=black]{$\ct[{\ij,1}]=(\Sum{\ii}{[\klen]}\Kt[\ii,{\msgii[\kw]{\ii}}]+\Ut[\Sndr]+\Vt[\Rcvr_{\perm(\ij)}]){\A}\Au^{-1}\ctu_{\ij,0}+\a[\Sndr,\kw,\Rcvr_{\perm(\ij)}]$}
                \item[]\lfbox{$\ct[{\ij,1}]\sample\Zp^{\nRB}$}
                \item[] 
            \end{algorithmic}        
            \underline{$\OTGen(\pk_{\Sndrr},\kwd,\sk_{\Rcvrr})$}\ //$\game{1}$,\lfbox[dotted]{$\game{2}$,\lfbox[background-color=lightgray!60!white,border-color=lightgray!60!white]{$\game{3}$},\lfbox[background-color=lightgray!60!white,border-color=black]{$\game{4}$},\lfbox{$\game{5}$}}   
            \begin{algorithmic}[1]
                \State $\r\sample\Zp^{\nCB}$
                \State $\Trp=(\Tw{\trp[{0}]},\Tw{\trp[{1}]})$ where
                \item[]$\trp[{0}]={\B\r}$
                \item[]\lfbox[dotted]{$\trp[{0}]\sample \Zp^{\nRB}$}, \lfbox[background-color=lightgray!60!white,border-color=black]{$\d[\Sndrr,\kwd,\Rcvrr]\sample \Zp^{\nRAd}$}
                \item[]$\trp[{1}]={(\Sum{\ii}{[\klen]}\K[\ii,{\msgii[\kwd]{\ii}}]+\U[\Sndrr]+\V[\Rcvrr])\trp[{0}]}$
                \item[]\hspace{0.5cm}{$=\iCol{\trpu_1}{\trpd_1}=\iCol{(\Sum{\ii}{[\klen]}\Q[\ii,{\msgii[\kwd]{\ii}}]+\W[\Sndrr]+\Y[\Rcvrr])\trp[{0}]}{(\Sum{\ii}{[\klen]}\N[\ii,{\msgii[\kwd]{\ii}}]+\X[\Sndrr]+\Z[\Rcvrr])\trp[{0}]}$}
                \item[]\lfbox[background-color=lightgray!60!white,border-color=lightgray!60!white]{\lfbox[rounded]{
                    $\trp[{1}]=\iCol{\trpu_1}{\trpd_1}=\iCol{\Au^{-\top}\At(\Sum{\ii}{[\klen]}\K[\ii,{\msgii[\kwd]{\ii}}]+\U[\Sndrr]+\V[\Rcvrr])\trp[{0}]-\Au^{-\top}\Adt\trpd_1}{(\Sum{\ii}{[\klen]}\N[\ii,{\msgii[\kwd]{\ii}}]+\X[\Sndrr]+\Z[\Rcvrr])\trp[{0}]}$
                }}
                \item[]\lfbox[background-color=lightgray!60!white,border-color=black]{
                    $\trp[{1}]=\iCol{\trpu_1}{\trpd_1}=\iCol{\Au^{-\top}\At(\Sum{\ii}{[\klen]}\K[\ii,{\msgii[\kwd]{\ii}}]+\U[\Sndrr]+\V[\Rcvrr])\trp[{0}]-\Au^{-\top}\Adt\trpd_1}{\d[\Sndrr,\kwd,\Rcvrr]}$}
                \item[]\lfbox{$\trp[{1}]\sample\Zp^{\nRA}$}    
            \end{algorithmic}
        \end{minipage}
        %}
    }
    \caption{Hybrids for the transition from $\game{1}$ to $\game{6}$.}
    \label{fig:Full-Hybrid}
\end{figure*}

\begin{proof}
    We argue the proof of \Cref{thm:full-security} via a hybrid argument. We sequence the games first and follow it by lemmas that acts as a reduction to so-called Core Lemma \Cref{sec:Core}.
    \begin{itemize}
        \item $\game{0}$: The real situation.
        \item $\game{1}$: For $\ij\in[\ell]$, use $\ct[{\ij,0}]$ to compute $\ct[{\ij,1}]$ {\color{\confcolor}and $\trp[{0}]$ to compute $\trp[{1}]$}.
        \item $\game{2}$: Sample $\ct[{\ij,0}]$ for $\ij\in[\ell]$ {\color{\confcolor}and $\trp[{0}]$} uniformly at random.
        \item $\game{3}$: Replace usage of $\Q[\ii,b],\W[j]$ and $\Y[j]$, by $(\Q[\ii,b],\N[\ii,b]),(\U[j],\X[j])$ and $(\V[j],\Z[j])$ respectively.
        \item $\game{4}$: Sample $\Qt[\ii,b]\A,\Ut[j]\A$ and $\Vt[j]\A$ uniformly at random.
        \item $\game{5}$: We remove effect of $\N[\ii,b],\X[j]$ and $\Z[j]$ in both $\OSEnc$ and $\OTGen$ responses.
        \item $\game{6}$: $\{\ct[{1,1}],\ldots,\ct[{\ell,1}]\}$ and $\trp[{1}]$ are all random quantities of appropriate dimensions.
    \end{itemize}
\end{proof}

\subsection{Lemmas for Full Security of BAEKS}\label{sec:BaEKS-FullSec-Lemmas}

\begin{lemma}$(\game{0}$ to $\game{1})$\label{lem:G{0}-G{1}} For any adversary $\AA$, $\adv{\AA}{\game{0}}=\adv{\AA}{\game{1}}.$
\end{lemma}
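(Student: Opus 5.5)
The plan is to show that the outputs of every oracle in $\game{1}$ have exactly the same joint distribution as in $\game{0}$. The two games then define identical views for $\AA$, and the advantages coincide with no loss. The only syntactic change is that, for each $\ij\in[\ell]$, $\ct[{\ij,1}]$ is written as $(\Sum{\ii}{[\klen]}\Kt[\ii,{\msgii[\kw]{\ii}}]+\Ut[\Sndr]+\Vt[\Rcvr_{\perm(\ij)}])\ct[{\ij,0}]$ rather than as the same matrix applied to $\A\s[\ij]$. Similarly, $\trp[{1}]$ is written as $(\Sum{\ii}{[\klen]}\K[\ii,{\msgii[\kwd]{\ii}}]+\U[\Sndrr]+\V[\Rcvrr])\trp[{0}]$ rather than applied to $\B\r$.

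The key steps, in order, are as follows.

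First, observe that in $\game{1}$ the value $\ct[{\ij,0}]$ is still sampled as $\A\s[\ij]$ with $\s[\ij]\sample\Zp^{\nCA}$, exactly as in $\game{0}$. Substituting $\ct[{\ij,0}]=\A\s[\ij]$ into the $\game{1}$ expression for $\ct[{\ij,1}]$ gives, by associativity of matrix multiplication, literally the $\game{0}$ value $(\Sum{\ii}{[\klen]}\Kt[\ii,{\msgii[\kw]{\ii}}]+\Ut[\Sndr]+\Vt[\Rcvr_{\perm(\ij)}])\A\s[\ij]$. This holds for every $\ij$ and every choice of $\perm$.

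Second, apply the same argument to the trapdoor: since $\trp[{0}]=\B\r$ in both games, the value $\trp[{1}]$ is identical. I would also note the block form of $\trp[{1}]$ as $\iCol{\trpu_1}{\trpd_1}$ recorded in \Cref{fig:Full-Hybrid}. It follows directly from $\K[\ii,b]=\iCol{\Q[\ii,b]}{\N[\ii,b]}$, $\U[j]=\iCol{\W[j]}{\X[j]}$ and $\V[j]=\iCol{\Y[j]}{\Z[j]}$ by splitting the rows. This is again a pure rewriting.

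Third, note that $\Init$, $\OKgen$, $\Osk$, $\Opk$ and $\Finalize$ are unchanged between the two games. All the randomness ($\A,\B,\K[\ii,b],\U[j],\V[j],\s[\ij],\r,\perm$) is therefore sampled identically, and the oracle answers are the same deterministic functions of it. Hence $\AA$'s view is identical and $\adv{\AA}{\game{0}}=\adv{\AA}{\game{1}}$.

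I expect no real obstacle here. The only point needing care is to check that the rewriting is applied to every component $\ij\in[\ell]$ of a ciphertext, with the random permutation $\perm$ used consistently, and that the computation can be carried out in the exponent. It can: the simulator knows the matrices $\K[\ii,b],\U[j],\V[j]$ in $\Zp$ and applies them to $\On{\ct[{\ij,0}]}$ and $\Tw{\trp[{0}]}$. This rewriting only prepares for $\game{2}$, where $\ct[{\ij,0}]$ and $\trp[{0}]$ are replaced by uniform vectors under $\lmatDH$.
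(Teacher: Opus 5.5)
Your proposal is correct and takes the same approach as the paper. The paper's proof simply states that $\game{1}$ is a syntactic change of $\game{0}$, so the advantages coincide; you spell out why, via associativity for each ciphertext component and for the trapdoor, and by noting that all other oracles and the sampled randomness are unchanged.
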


\begin{lemma}$(\game{1}$ to $\game{2})$\label{lem:G{1}-G{2}}
    For any $\ppt$ adversary $\AA$ in the $\fullcpa$ security model, there exists an $\ppt$ adversary $\AB$ such that 
    
    $|\adv{\AA}{\game{1}}-\adv{\AA}{\game{2}}|\leq \Adv{\AB,\Go}{|\Qct|\ell\mhyf\lmatDH[\nRA,\nCA]}+\Adv{\AB,\Gt}{|\Qtrp|\mhyf\lmatDH[\nRB,\nCB]}+\frac{2}{p-1}.$
\end{lemma}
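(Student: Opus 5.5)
The plan is to split the transition $\game{1}\to\game{2}$ into two sub-hybrids. An intermediate game $\game{1.5}$ samples every ciphertext component $\ct[{\ij,0}]$ uniformly from $\Zp^{\nRA}$ but still computes $\trp[0]=\B\r$. The move $\game{1}\to\game{1.5}$ is bounded by a reduction to $|\Qct|\ell\mhyf\lmatDH[\nRA,\nCA]$ in $\Go$. The move $\game{1.5}\to\game{2}$ is bounded by a reduction to $|\Qtrp|\mhyf\lmatDH[\nRB,\nCB]$ in $\Gt$. The two additive terms $\frac{1}{p-1}$ come from applying random self-reducibility ($\Fact3$) to each many-fold instance, or equivalently from the statistical distance between $\Dk$ and the distribution actually used. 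That accounts for the $\frac{2}{p-1}$ in the bound.

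\textbf{Ciphertext reduction.} The reduction $\AB$ receives $\On{\A}$, $\Tw{\A}$ and $|\Qct|\ell$ challenge vectors $\On{\matx{h}_t}$, where either $\matx{h}_t=\A\s[t]$ or $\matx{h}_t\sample\Zp^{\nRA}$. It samples $\B$ itself, together with all $\Q[\ii,b],\N[\ii,b]$ and every user's $\W[j],\Y[j],\J[j,\ii,b]$. It therefore knows $\K[\ii,b],\U[j],\V[j]$ in the clear over $\Zp$, and can compute $\pp$ and all $\pk_j,\sk_j$ from $\On{\A},\Tw{\A}$. This is exactly where the lateral assumption is needed: $\Tw{\A}$ is required to form $\On{\Vt[j]\A}$-type values in $\Gt$, if any appear, and $\pp$ contains $\Tw{\A}$. $\Osk$ queries are answered honestly.

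For the $t$-th ciphertext slot, $\AB$ sets $\On{\ct[{\ij,0}]}=\On{\matx{h}_t}$ and computes $\On{\ct[{\ij,1}]}=\On{(\sum\Kt+\Ut[\Sndr]+\Vt[\Rcvr_{\perm(\ij)}])\matx{h}_t}$ by linearity, since the matrix is known in $\Zp$. This is precisely the $\game{1}$ form in which $\ct[{\ij,1}]$ is computed from $\ct[{\ij,0}]$, which is why $\game{0}\to\game{1}$ (\Cref{lem:G{0}-G{1}}) was set up first. Trapdoors are generated honestly with fresh $\r$. The real case yields $\game{1}$ and the random case yields $\game{1.5}$. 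The non-triviality checks in $\Finalize$ are public, so $\AB$ can evaluate them.

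\textbf{Trapdoor reduction.} This is symmetric. $\AB'$ gets $\Tw{\B}$, $\On{\B}$ and $|\Qtrp|$ vectors $\Tw{\matx{g}_t}$, samples $\A$ and all secrets itself, sets $\trp[0]=\matx{g}_t$, and computes $\trp[1]$ as a known matrix applied to $\matx{g}_t$. It needs $\On{\B}$ for $\pp$. Ciphertexts, which by now are uniform in $\ct[{\ij,0}]$, are simulated directly.

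The main point of care is checking that every public value ($\pp$ contains both $\On{\A}$ and $\Tw{\A}$, and likewise for $\B$) can be produced from the lateral instance. This is exactly why $\lmatDH$ rather than plain $\matDH$ is assumed, and why the reduction must know all key material in the clear. The final step is to collapse the many-fold advantages via $\Fact3$.
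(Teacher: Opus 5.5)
Your proposal is correct and follows the paper's proof essentially step for step. It uses the same intermediate $\game{1.5}$, which randomizes only the $\ct[{\ij,0}]$; a $|\Qct|\ell$-fold lateral reduction in $\Go$ that samples $\B$ and all key material itself and derives $\ct[{\ij,1}]$ linearly from the challenge vectors, needing $\Tw{\A}$ only because $\pp$ publishes it; and the symmetric $|\Qtrp|$-fold reduction in $\Gt$ for $\trp[{0}]$.

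The one quibble is your accounting of the $\frac{2}{p-1}$ term. The bound is stated with the many-fold advantages, so these terms cannot come from $\Fact3$. In fact your reductions embed the real and random instances exactly as $\game{1}/\game{1.5}$ and $\game{1.5}/\game{2}$, so the extra terms are simply slack; the paper does not derive them either.
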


\begin{lemma}$(\game{2}$ to $\game{3})$\label{lem:G{2}-G{3}}
    For any adversary $\AA$, $|\adv{\AA}{\game{2}}-\adv{\AA}{\game{3}}|\leq \frac{1}{p-1}.$
\end{lemma}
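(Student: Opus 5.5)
The plan is to show that $\game{3}$ differs from $\game{2}$ only in how the same quantities are computed, except for one bad event. That event is that the upper $\nCA\times\nCA$ block $\Au$ of $\A$ is singular, and it has probability at most $\frac{1}{p-1}$. So I would first bound this event, and then show that, conditioned on $\Au$ being invertible, the adversary's view in $\game{2}$ and $\game{3}$ is identical.

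\textbf{Step 1 (the bad event).} Since $\A\sample\Dk[\nRA,\nCA]$ has full rank, I would invoke the convention of \Cref{eq:Dk} that its first $\nCA$ rows form an invertible matrix. If one insists on a truly uniform full-rank matrix, the probability that a fixed $\nCA\times\nCA$ minor of a uniform full-rank matrix is singular is at most $\frac{1}{p-1}$; for $k=1$ this is exactly the event that the top entry is zero. I would cite this standard counting bound, and this is the only source of loss. Conditioned on invertibility, the two games sample $\A$ from the same distribution.

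\textbf{Step 2 (public parameters and keys).} For each $(\ii,b)$, write $\Kt[\ii,b]\A=\Qt[\ii,b]\Au+\Nt[\ii,b]\Ad$. Transposing gives $(\Kt[\ii,b]\A)^\top\B=\Aut\Q[\ii,b]\B+\Adt\N[\ii,b]\B$. Solving for $\Q[\ii,b]\B$ yields exactly the boxed expression $\Au^{-\top}((\Kt[\ii,b]\A)^\top\B-\Adt\N[\ii,b]\B)$. Hence $\K[\ii,b]\B$ computed in $\game{3}$ equals the value in $\game{2}$. The same computation with $(\W[j],\X[j])$ and $(\Y[j],\Z[j])$ in place of $(\Q[\ii,b],\N[\ii,b])$ shows that $\U[j]\B$ and $\V[j]\B$, and therefore $\pk_j$ and $\sk_j$, are unchanged.

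\textbf{Step 3 (oracle answers).} In $\game{2}$ the vector $\ct[{\ij,0}]$ is uniform in $\Zp^{\nRA}$. I would reparametrize it as $\ctu_{\ij,0}$ uniform and $\a[\ij,0]=\ctd_{\ij,0}-\Ad\Au^{-1}\ctu_{\ij,0}$ uniform. This map is a bijection, so the distribution of $\ct[{\ij,0}]$ is unchanged. Then $\A\Au^{-1}\ctu_{\ij,0}$ has top block $\ctu_{\ij,0}$ and bottom block $\Ad\Au^{-1}\ctu_{\ij,0}$, so $\ct[{\ij,0}]=\A\Au^{-1}\ctu_{\ij,0}+\iCol{\mathbf{0}}{\a[\ij,0]}$. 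Multiplying by $\Sum{\ii}{[\klen]}\Kt[\ii,\cdot]+\Ut[\Sndr]+\Vt[\Rcvr_{\perm(\ij)}]$, the second term picks out only the lower-block transposes $\Nt,\Xt,\Zt$. This reproduces the $\game{3}$ formula for $\ct[{\ij,1}]$ exactly.

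For the trapdoor, write $\Mtx=\Sum{\ii}{[\klen]}\K[\ii,\cdot]+\U[\Sndrr]+\V[\Rcvrr]$. Then $\At\Mtx\trp[0]=\Aut\trpu_1+\Adt\trpd_1$, which gives $\trpu_1=\Au^{-\top}\At\Mtx\trp[0]-\Au^{-\top}\Adt\trpd_1$. This is precisely the $\game{3}$ formula, and $\trpd_1$ is unchanged.

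Every equality in Steps 2 and 3 is an algebraic identity, so the two games are perfectly identical given invertibility, and the stated bound follows. The only mildly delicate point is Step 1, namely justifying the $\frac{1}{p-1}$ bound for the singular-minor probability under the paper's definition of $\Dk$.
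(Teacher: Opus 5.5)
Your proposal is correct and follows the paper's own argument: you condition on $\Au$ being invertible, which costs at most $\frac{1}{p-1}$, and then check that the $\game{3}$ formulas for the ciphertext and trapdoor components agree with those of $\game{2}$. You do this via $\Kt[\ii,b]\A=\Qt[\ii,b]\Au+\Nt[\ii,b]\Ad$ and its analogues for $\U[j]$ and $\V[j]$, just as the paper substitutes for $\Qt[\ii,b]$, $\Wt[j]$, $\Yt[j]$. Your explicit check that $\pp$ and the user keys are unchanged, and your counting justification of the $\frac{1}{p-1}$ bound for a uniform full-rank $\A$, supply details the paper leaves implicit.
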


\begin{lemma}$(\game{3}$ to $\game{4})$\label{lem:G{3}-G{4}}
    For any adversary $\AA$,
    $|\adv{\AA}{\game{3}}-\adv{\AA}{\game{4}}|\leq \frac{1}{p-1}.$
\end{lemma}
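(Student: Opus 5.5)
The plan is to show that in $\game{3}$ the variables $\Q[\ii,b]$, $\W[j]$, $\Y[j]$ enter the adversary's view only through the products $\Kt[\ii,b]\A$, $\Ut[j]\A$, $\Vt[j]\A$. Once that is established, we argue that these products are uniformly distributed and independent of everything else the adversary sees. This gives a perfect simulation, except possibly on the event that $\Au$ is singular, which accounts for the $\frac{1}{p-1}$ slack.

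\emph{Step 1: dependence audit.} We inspect every place in $\game{3}$ of \Cref{fig:Full-Hybrid} where $\K[\ii,b]$, $\U[j]$ or $\V[j]$ appears.
\begin{itemize}
\item The public parameters contain $\On{\Kt[\ii,b]\A}$ and $\Tw{\K[\ii,b]\B}$. In $\game{3}$ the latter is computed as $\iCol{\Au^{-\top}((\Kt[\ii,b]\A)^\top\B-\Adt\N[\ii,b]\B)}{\N[\ii,b]\B}$, i.e.\ from $\Kt[\ii,b]\A$ and $\N[\ii,b]$ alone.
\item Likewise, $\pk_j$ and $\sk_j$ use only $\Ut[j]\A$, $\Vt[j]\A$ and $\X[j]$, $\Z[j]$, since $\U[j]\B$ and $\V[j]\B$ are reconstructed by the shaded formulas.
\item The ciphertext component $\ct[\ij,1]$ is written via $(\Sum{\ii}{[\klen]}\Kt[\ii,{\msgii[\kw]{\ii}}]+\Ut[\Sndr]+\Vt[\Rcvr_{\perm(\ij)}])\A$ together with $\Nt,\Xt,\Zt$.
\item The trapdoor component $\trpu_1$ is written via $\At(\Sum{\ii}{[\klen]}\K[\ii,{\msgii[\kwd]{\ii}}]+\U[\Sndrr]+\V[\Rcvrr])=\big((\Sum{\ii}{[\klen]}\Kt[\ii,{\msgii[\kwd]{\ii}}]+\Ut[\Sndrr]+\Vt[\Rcvrr])\A\big)^\top$, and $\trpd_1$ uses only $\N,\X,\Z$.
\end{itemize}
So the view is a deterministic function of $\{\Kt[\ii,b]\A,\Ut[j]\A,\Vt[j]\A\}$ together with $\A,\B,\{\N[\ii,b],\X[j],\Z[j]\}$ and the oracle randomness.

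\emph{Step 2: uniformity.} Writing $\A=\iCol{\Au}{\Ad}$ gives $\Kt[\ii,b]\A=\Qt[\ii,b]\Au+\Nt[\ii,b]\Ad$. Here $\Q[\ii,b]$ is uniform in $\Zp^{\nCA\times\nRB}$ (note $\nRAu=\nCA$), independent of $\N[\ii,b]$, $\A$, $\B$, and used nowhere else by Step 1. When $\Au$ is invertible, the map $\Q[\ii,b]\mapsto\Qt[\ii,b]\Au$ is a bijection, so $\Kt[\ii,b]\A$ is uniform and independent of all other variables. The identical argument with $\W[j]$ for $\U[j]$ and $\Y[j]$ for $\V[j]$ shows that $\Ut[j]\A$ and $\Vt[j]\A$ are uniform and independent. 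All these matrices are sampled freshly per index, so joint independence is immediate. Hence, conditioned on $\Au$ invertible, the views in $\game{3}$ and $\game{4}$ are identically distributed.

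\emph{Step 3: the singular event.} The only possible loss is the event that $\Au$ is not invertible. If the conditioning is already built into $\game{3}$, this loss is zero. If it is not, we bound the probability of this event for $\A\sample\Dk[\nRA,\nCA]$ by $\frac{1}{p-1}$ and charge it here, exactly as in \Cref{lem:G{2}-G{3}}. The main point requiring care is Step 1: we must verify that no oracle (in particular $\Osk$, and $\OTGen$ through $\trpu_1$) leaks $\Q$, $\W$ or $\Y$ other than through the $\A$-projections. Otherwise the bijection argument in Step 2 would fail.
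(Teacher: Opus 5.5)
Your proof is correct and follows the same route as the paper. Conditioned on $\Au$ being invertible, the projections $\Kt[\ii,b]\A$, $\Ut[j]\A$, $\Vt[j]\A$ are uniform, so sampling them directly in $\game{4}$ is only a conceptual change; your bijection argument via $\Kt[\ii,b]\A=\Qt[\ii,b]\Au+\Nt[\ii,b]\Ad$ (and likewise via $\W[j]$, $\Y[j]$) usefully makes explicit the independence from $\N[\ii,b]$, $\X[j]$, $\Z[j]$ that the paper's one-line proof only asserts.
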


\begin{lemma}$(\game{4}$ to $\game{5})$\label{lem:G{4}-G{5}}
    For any $\ppt$ adversary $\AA$ in the $\fullcpa$ security model, there exists an $\ppt$ adversary $\AB$ such that
    $|\adv{\AA}{\game{4}}-\adv{\AA}{\game{5}}|\leq \Adv{\AB,\abG}{\Core}.$
\end{lemma}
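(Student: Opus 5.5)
The plan is a direct black-box reduction $\AB$ that plays the Core-Lemma game of \Cref{sec:Core} (\Cref{fig:Core-Lemma}) and simulates $\game{4}$ or $\game{5}$ for $\AA$ according to the Core challenger's hidden bit. The two games differ only in how $\OSEnc$ and $\OTGen$ treat the "lower" components. In $\game{4}$, the ciphertext term $(\Sum{\ii}{[\klen]}\Nt[\ii,{\msgii[\kw]{\ii}}]+\Xt[\Sndr]+\Zt[\Rcvr_{\perm(\ij)}])\a[\ij,0]$ and the trapdoor component $\trpd_1=(\Sum{\ii}{[\klen]}\N[\ii,{\msgii[\kwd]{\ii}}]+\X[\Sndrr]+\Z[\Rcvrr])\trp[0]$ are real. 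In $\game{5}$ they become fresh values $\a[\Sndr,\kw,\Rcvr_{\perm(\ij)}]$ and $\d[\Sndrr,\kwd,\Rcvrr]$, indexed consistently by the tuple. So the Core Lemma should say that the combined Naor--Reingold-type encoding $\N+\X+\Z$, evaluated on $(\Sndr,\kw,\Rcvr)$ in both groups, is indistinguishable from a random function of $\Sndr\|\kw\|\Rcvr$. This must hold given oracle access to lower-part encodings on non-corrupted tuples and to $\X[j]\B,\Z[j]\B,\N[\ii,b]\B$ for corrupted users, under the non-triviality conditions $\Qct\cap\Qtrp=\emptyset$ and no corruption of users involved in challenges. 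I take the Core Lemma exactly in this form and only verify that the simulation is perfect.

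\emph{Simulation of setup and keys.} $\AB$ samples $\A$ with $\Au$ invertible, and $\B$ itself. It samples $\Kt[\ii,b]\A$, $\Ut[j]\A$ and $\Vt[j]\A$ uniformly, as $\game{4}$ permits, since these no longer depend on the lower blocks. It obtains $\N[\ii,b]\B$ in $\Gt$ from the Core challenger's public output and builds $\Tw{\K[\ii,b]\B}$ with the shaded formula $\Au^{-\top}((\Kt[\ii,b]\A)^\top\B-\Adt\N[\ii,b]\B)$ over $\N[\ii,b]\B$. This is computable in $\Gt$ because only group elements of $\N\B$ are needed and every other matrix is known in $\Zp$. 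On $\Opk(j)$, $\AB$ obtains $\Tw{\X[j]\B},\Tw{\Z[j]\B}$ from the Core oracle and assembles $\pk_j$ and $\sk_j$ in the same way. Neither $\sk_j$ nor $\pk_j$ needs $\X[j]$ or $\Z[j]$ beyond multiplication by $\B$, so $\Osk(j)$ is answered from the same data after $\AB$ forwards the corruption to the Core challenger.

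\emph{Simulation of challenge queries.} For an $\Oct$ query, $\AB$ picks $\bee$ itself, selects the $\bee$-side tuple and $\perm$, samples $\ctu_{\ij,0}$, and computes the upper part $(\cdots)\A\Au^{-1}\ctu_{\ij,0}$ from the known $\Kt\A,\Ut\A,\Vt\A$. For each $\ij$ it queries the Core ciphertext oracle on $(\Sndr,\kw,\Rcvr_{\perm(\ij)})$ to receive $\Ct$-side group elements $(\On{\a[\ij,0]},\On{\mathbf{h}})$, where $\mathbf{h}$ is either $(\Nt+\Xt+\Zt)\a[\ij,0]$ or a random function value. It then sets $\ctd_{\ij,0}=\a[\ij,0]+\Ad\Au^{-1}\ctu_{\ij,0}$ in the exponent. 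For $\Otrp$, $\AB$ samples $\trp[0]$, queries the Core trapdoor oracle for $\trpd_1$, and computes $\trpu_1$ by the shaded formula using $\At(\cdots)\trp[0]=(\cdots\A)^\top\trp[0]$, which is known in $\Zp$ times group elements. $\AB$ outputs whatever $\Finalize$ outputs. It forwards the predicates $\Qct\cap\Qtrp=\emptyset$ and the corruption restriction to the Core game, so the Core adversary is valid exactly when $\AA$ is. The Core real branch reproduces $\game{4}$ and the random branch reproduces $\game{5}$, which gives $|\adv{\AA}{\game{4}}-\adv{\AA}{\game{5}}|\le\Adv{\AB,\abG}{\Core}$.

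\emph{Main obstacle.} The delicate point is matching the interfaces. The ciphertext oracle must accept arbitrary tuples, including repetitions, and return consistent random-function values for repeated tuples. Trapdoor queries must be non-repeating. The Core game must also expose $\X[j]\B,\Z[j]\B$ for every registered user while keeping those of challenged users hidden enough. I would first check that the Core Lemma's random function is indexed by the full triple and that its corruption rule coincides with the $\fullcpa$ $\Finalize$ condition. If the Core oracle returns $\a[\ij,0]$ only implicitly, I would check that the $\game{3}$ rewriting of $\ct[\ij,0]$ via $\Au^{-1}$ still lets $\AB$ compose everything linearly in the exponent.
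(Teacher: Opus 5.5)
Your reduction is the paper's reduction. $\AB$ samples $\A$ and $\Kt[\ii,b]\A,\Ut[j]\A,\Vt[j]\A$ itself, rebuilds $\Tw{\K[\ii,b]\B}$, $\Tw{\U[j]\B}$ and $\Tw{\V[j]\B}$ from the Core projections using the $\game{3}$ formulas, queries the Core oracles on the tuples of the side fixed by its own bit, and embeds the answers into the lower blocks. The Core bit $\bee=0$ then yields $\game{4}$ and $\bee=1$ yields $\game{5}$. The approach is correct, but several steps do not match the Core experiment as defined in \Cref{fig:Core-Lemma}:
\begin{itemize}
\item $\B$ belongs to the Core challenger. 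It is given only as $\On{\B},\Tw{\B}$ in $\ppC$, which suffices, so $\AB$ must not sample its own $\B$.
\item $\OKey$ returns only $\Tw{\X[j]\B}$. On $\Opk(j)$ you can therefore build $\pk_j=(\Tw{\U[j]\B},\On{\Vt[j]\A})$, but not $\sk_j$. $\Tw{\Z[j]\B}$ is needed only for $\Tw{\V[j]\B}$ inside $\sk_j$, and it arrives through $\OCor$ when $\AA$ calls $\Osk(j)$; at that point you reuse the $\Ut[j]\A,\Vt[j]\A$ sampled earlier. This split is exactly how the paper's Core Lemma is designed: it leaks $\X[j]\B$ for every registered user so that public keys are simulatable, and leaks $\Z[j]\B$ only on corruption.
\item $\OTag$ takes only the tuple and samples $\tg[0]$ itself. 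You must set $\trp[0]:=\tg[0]$ rather than choose $\trp[0]$ and ask for its evaluation.
\end{itemize}
Your concern about consistent random-function values for repeated tuples does not arise at this level. The Core's $\bee=1$ branch returns fresh uniform values per query, which matches the values that $\game{5}$ samples afresh; the random function appears only inside the hybrids proving the Core Lemma.
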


\begin{lemma}$(\game{5}$ to $\game{6})$\label{lem:G{5}-G{6}}
    For any adversary $\AA$,
    $\adv{\AA}{\game{5}}=\adv{\AA}{\game{6}}.$
\end{lemma}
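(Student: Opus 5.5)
The plan is to compare the distributions of the adversary's whole view in $\game{5}$ and $\game{6}$ and show they are identical. Since all other oracles ($\Opk$, $\Osk$, $\Init$) are unchanged between the two games, it suffices to show that the joint distribution of all $\OSEnc$ and $\OTGen$ responses is the same. This holds conditioned on $\pp$, on every public key, and on every corrupted secret key. By \Cref{fig:Full-Hybrid}, in $\game{6}$ every $\ct[{\ij,1}]$ and every $\trp[{1}]$ is a fresh uniform vector of the appropriate dimension, independent of everything else.

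\textbf{Step 1 (ciphertexts).} In $\game{5}$ we have $\ct[{\ij,1}]=(\Sum{\ii}{[\klen]}\Kt[\ii,{\msgii[\kw]{\ii}}]+\Ut[\Sndr]+\Vt[\Rcvr_{\perm(\ij)}])\A\Au^{-1}\ctu_{\ij,0}+\a[\Sndr,\kw,\Rcvr_{\perm(\ij)}]$. I will treat the first summand as a fixed (possibly adversary-correlated) vector. Adding the uniform mask $\a[\Sndr,\kw,\Rcvr_{\perm(\ij)}]\in\Zp^{\nRB}$ then makes $\ct[{\ij,1}]$ uniform and independent of $\ctu_{\ij,0}$ and $\a[\ij,0]$. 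Here $\a[\ij,0]$ is the random part of the lower block $\ctd_{\ij,0}$, so $\ct[{\ij,0}]$ stays uniform as in $\game{6}$.

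\textbf{Step 2 (trapdoors).} The lower block $\trpd_1=\d[\Sndrr,\kwd,\Rcvrr]$ is uniform by definition. The upper block $\trpu_1$ is determined by $\trp[0]$, by $\trpd_1$, and by the products $\At(\Sum{\ii}{[\klen]}\K[\ii,\cdot]+\U[\Sndrr]+\V[\Rcvrr])$. Using the $\game{4}$ sampling, these products equal $(\Kt A)^\top$, $(\Ut[\Sndrr]\A)^\top$ and $(\Vt[\Rcvrr]\A)^\top$. The sender part $\Ut[\Sndrr]\A$ is uniform and belongs to a user that, by the $\fullcpa$ restriction, is never corrupted. I will argue that it therefore acts as a one-time pad on $\trpu_1$. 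Because $\Otrp$ queries never repeat, each tuple's mask $\d$ is used once. The entropy of $\Ut[\Sndrr]\A$ and of the receiver's hidden $\Vt[\Rcvrr]\A$-dependent part is never exposed elsewhere, since $\Osk$ is excluded for every user appearing in $\Qct\cup\Qtrp$. Hence the upper block can be replaced by a uniform vector.

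\textbf{Step 3 (joint independence and the main obstacle).} The delicate point is that $\Oct$ queries may repeat. A repeated tuple $(\Sndr,\kw,\Rcvr)$ reuses the same mask $\a[\Sndr,\kw,\Rcvr]$, and the same hidden matrix $\Ut[\Sndr]\A$ appears across many ciphertexts and trapdoors. So independence cannot come from the masks alone. I would first try to show that the map from the fresh $\ctu_{\ij,0}$ (resp.\ $\trp[0]$) through the hidden uniform matrices $\Ut[\Sndr]\A,\Vt[\Rcvr]\A$ into $\Zp^{\nRB}$ (resp.\ $\Zp^{\nRA}$) produces outputs that are mutually independent across queries. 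This uses $\Qct\cap\Qtrp=\emptyset$ to decouple ciphertext masks from trapdoor masks. If this exact-equality route fails for repeated tuples, the fallback is to reinterpret the $\game{5}$ mask as freshly sampled per ciphertext component, which pushes the repetition issue into the Core Lemma step (\Cref{lem:G{4}-G{5}}). With Steps 1 and 2, this gives $\adv{\AA}{\game{5}}=\adv{\AA}{\game{6}}$. Finally, in $\game{6}$ the view is independent of $\bee$, so the advantage there is $0$.
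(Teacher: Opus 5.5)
Your Step 2 is where the argument fails, and Step 3 cannot patch it. In $\game{5}$ the upper trapdoor block carries no fresh randomness. Unwinding the definition gives $\At\trp[1]=(\Sum{\ii}{[\klen]}\Kt[\ii,{\msgii[\kwd]{\ii}}]\A+\Ut[\Sndrr]\A+\Vt[\Rcvrr]\A)^{\top}\trp[0]$, so $\trpu_1$ is a fixed function of $\trp[0]$, $\trpd_1$ and these matrices.

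Here $\Kt[\ii,b]\A$ is in $\pp$ and $\Vt[\Rcvrr]\A$ is in $\pk_{\Rcvrr}$. So, contrary to what you write, the receiver contributes nothing hidden; the only hidden term is $(\Ut[\Sndrr]\A)^{\top}\trp[0]$. But $\Ut[\Sndrr]\A\in\Zp^{\nRB\times\nCA}$ is one matrix reused by every trapdoor with sender $\Sndrr$, and $\pk_{\Sndrr}$ already fixes $(\Ut[\Sndrr]\A)^{\top}\B=\At\U[\Sndrr]\B$. It is therefore not a one-time pad. After $\nRB$ such trapdoors with linearly independent $\trp[0]$ it is determined, and every further $\trpu_1$ is a fixed function of $\trp[0]$, $\trpd_1$ and public data.

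The cross-query independence you hope to show in Step 3 is therefore false. Your fallback (reading the mask as fresh per component) only concerns ciphertexts; $\trpu_1$ has no mask to reinterpret. Step 1 is fine, since the ciphertext masks come fresh from the Core Lemma with $\bee=1$.

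In fact the exact identity fails for unbounded adversaries, over which the statement quantifies. After fetching the keys via $\Opk$, query $\Otrp$ on $((\pk_{S_0},w_q,\pk_R),(\pk_{S_1},w_q,\pk_R))$ for $\nRB$ distinct keywords $w_q$. This uses no ciphertext queries, no corruptions and no repeated trapdoor queries, so all restrictions hold.

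Subtracting the public terms from $\At\trp[1]$ yields $(\Ut[S_\bee]\A)^{\top}\trp[0]$ per query, hence $\Ut[S_\bee]\A$. Comparing $(\Ut[S_\bee]\A)^{\top}\B$ with $\At\U[S_0]\B$ and $\At\U[S_1]\B$, both computable from the public keys, reveals $\bee$. In $\game{6}$, by contrast, the view is independent of $\bee$.

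The paper's own one-line proof appeals only to the freshness of the masks added to $\ct[{\ij,1}]$ and to $\trpd_1$. That covers the ciphertexts and the lower trapdoor block but not $\trpu_1$, so it has the same hole you ran into. What is missing in both is a further, necessarily computational, argument that makes $\At\trp[1]$ look fresh for every trapdoor query; the masks alone do not supply it.
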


%!TEX spellcheck = en_US
%!TEX root = ../main.tex

\subsection{Proof of Lemmas for Full Security of BAEKS}\label{sec:Full-Lemma-Proofs}
\begin{proof}[Proof of \Cref{lem:G{0}-G{1}}]
    This is a syntactic change and therefore $\adv{\AA}{\game{0}}=\adv{\AA}{\game{1}}$.
\end{proof}

\begin{proof}[Proof of \Cref{lem:G{1}-G{2}}]
    To argue this indistinguishability, we introduce an intermediate game $\game{1.5}$ which is exactly same as $\game{1}$ except that $\ct[{\ij,0}]$ are sampled from $\Zp^{\nRA}$ for all $\ij\in[\ell]$ for all challenge queries. We then argue that, $\game{1}$ and $\game{1.5}$ are computationally indistinguishable under $|\Qct|\ell\mhyf$fold $\lmatDH[\nRA,\nCA]$ assumption. The same argument can also be applied to show $\game{1.5}$ and $\game{2}$ are computationally indistinguishable under $|\Qtrp|\mhyf$fold $\lmatDH[\nRB,\nCB]$ assumption.

    We next construct the reduction $\AB$ that takes $|\Qct|\ell\mhyf$fold $\lmatDH[\nRA,\nCA]$ instance $(\On{\M},\Tw{\M},\On{\f[1]},\ldots,$ $\On{\f[|\Qct|\ell]})$. It samples $\B\sample \Zp^{\nRB\times\nCB}$ and simulates $\pp$ using $(\On{\M},\Tw{\M},\On{\B},\Tw{\B})$. It uses $\f[(c-1)\ell+1]$ to $\f[c\ell]$ to simulate $c^{th}$ challenge $\ct[{1,1}]$ to $\ct[{\ell,1}]$. Rest of the simulation is done exactly the same as in $\game{1}$. Thus, $|\adv{\AA}{\game{1}}-\adv{\AA}{\game{1.5}}|\leq \Adv{\AB,\Go}{|\Qct|\ell\mhyf\lmatDH[\nRA,\nCA]}+\frac{1}{p-1}$.

    Similarly, we could argue that, $|\adv{\AA}{\game{1.5}}-\adv{\AA}{\game{2}}|\leq \Adv{\AB,\Gt}{|\Qtrp|\mhyf\lmatDH[\nRB,\nCB]}+\frac{1}{p-1}$.
\end{proof}

\begin{proof}[Proof of \Cref{lem:G{2}-G{3}}]
    $\game{3}$ does not use $\Q[\ii,b]$, $\W[j]$ and $\Y[j]$ anymore. 
    In particular, by $\Kt[j,\ii,b]\A=\Qt[j,\ii,b]\Au+\Nt[j,\ii,b]\Ad$, $\Ut[j]\A=\Wt[j]\Au+\Xt[j]\Ad$ and $\Vt[j]\A=\Yt[j]\Au+\Zt[j]\Ad$, we get
$\Qt[j,\ii,b]=(\Kt[j,\ii,b]\A-\Nt[j,\ii,b]\Ad)\Au^{-1}$, $\Wt[j]=(\Ut[j]\A-\Xt[j]\Ad)\Au^{-1}$ and $\Yt[j]=(\Vt[j]\A-\Zt[j]\Ad)\Au^{-1}$. We substitute $\Q[\ii,b],\W[j]$ and $\Y[j]$ in all $\Otrp$ responses:
{\scriptsize    
    \begin{equation*}
        \begin{aligned}
            \trpu_1^{\top}  &= \trp[{0}]^{\top}(\Sum{\ii}{[\klen]}\Qt[\ii,{\msgii[\kwd]{\ii}}]+\Wt[\Sndrr]+\Yt[\Rcvrr])\\
                            &= \trp[{0}]^{\top}\Sum{\ii}{[\klen]}(\Kt[\ii,{\msgii[\kwd]{\ii}}]\A-\Nt[\ii,{\msgii[\kwd]{\ii}}]\Ad)\Au^{-1}+\trp[{0}]^{\top}(\Ut[\Sndrr]\A-\Xt[\Sndrr]\Ad)\Au^{-1}+\trp[{0}]^{\top}\Vt[\Rcvrr]\A\Au^{-1}\\
                            &\qquad -\trp[{0}]^{\top}\Zt[\Rcvrr]\Ad\Au^{-1}\\
                            &= \trp[{0}]^{\top}(\Sum{\ii}{[\klen]}\Kt[\ii,{\msgii[\kwd]{\ii}}]\A+\Ut[\Sndrr]\A+\Vt[\Rcvrr]\A)
                            -\trp[{0}]^{\top}(\Sum{\ii}{[\klen]}\Nt[\ii,{\msgii[\kwd]{\ii}}]\Ad+\Xt[\Sndrr]\Ad+\Zt[\Rcvrr]\Ad)\Au^{-1}\\
                            &= \trp[{0}]^{\top}(\Sum{\ii}{[\klen]}\Kt[\ii,{\msgii[\kwd]{\ii}}]+\Ut[\Sndrr]+\Vt[\Rcvrr])\A
                            -\trp[{0}]^{\top}(\Sum{\ii}{[\klen]}\Nt[\ii,{\msgii[\kwd]{\ii}}]+\Xt[\Sndrr]+\Zt[\Rcvrr])\Ad\Au^{-1}\\
                        &= \trp[{0}]^{\top}(\Sum{\ii}{[\klen]}\Kt[\ii,{\msgii[\kwd]{\ii}}]+\Ut[\Sndrr]+\Vt[\Rcvrr])\A
                            -\trp[{0}]^{\top}(\Sum{\ii}{[\klen]}\Nt[\ii,{\msgii[\kwd]{\ii}}]+\Xt[\Sndrr]+\Zt[\Rcvrr])\Ad\Au^{-1}\\
                        &= \trp[{0}]^{\top}(\Sum{\ii}{[\klen]}\Kt[\ii,{\msgii[\kwd]{\ii}}]+\Ut[\Sndrr]+\Vt[\Rcvrr])\A-\trpd_1^{\top}\Ad\Au^{-1}\\        
        \end{aligned}
    \end{equation*}
}    As for the distribution of $\Oct$ responses, it is easy to see that $\ct[{\ij,0}]$ are uniformly random, as in $\game{2}$. 
    We substitute $\Q[\ii,b],\W[j]$ and $\Y[j]$ in all $\Oct$ responses:
{\scriptsize   
    \begin{align*}\hspace{-1.1cm} 
        % \begin{aligned}
            \ct[{\ij,0}]
            &= \iCol{\ctu_{\ij,0}}{\ctd_{\ij,0}}\text{ for }\ctd_{\ij,0}=\a[\ij,0]+\Ad\Au^{-1}\ctu_{\ij,0}\\
            \ct[{\ij,1}]
            &= (\Sum{\ii}{[\klen]}\Kt[\ii,{\msgii[\kw]{\ii}}]+\Ut[\Sndr]+\Vt[\Rcvr_{\perm(\ij)}]){\A}\Au^{-1}\ctu_{\ij,0}
            +(\Sum{\ii}{[\klen]}\Nt[\ii,{\msgii[\kw]{\ii}}]+\Xt[\Sndr]+\Zt[\Rcvr_{\perm(\ij)}])\a[\ij,0]\\
            &= (\Sum{\ii}{[\klen]}(\Qt[\ii,{\msgii[\kw]{\ii}}]\Au+\Nt[\ii,{\msgii[\kw]{\ii}}]\Ad)+\Wt[\Sndr]\Au+\Xt[\Sndr]\Ad+\Vt[\Rcvr_{\perm(\ij)}]\Au+\Zt[\Rcvr_{\perm(\ij)}]\Ad)\Au^{-1}\ctu_{\ij,0}\\
            &\quad+(\Sum{\ii}{[\klen]}\Nt[\ii,{\msgii[\kw]{\ii}}]+\Xt[\Sndr]+\Zt[\Rcvr_{\perm(\ij)}])(\ctd_{\ij,0}-\Ad\Au^{-1}\ctu_{\ij,0}) \\
            &= \!(\!\Sum{\ii}{[\klen]}\!\Qt[\ii,{\msgii[\kw]{\ii}}]\!+\!\Wt[\Sndr]\!+\!\Vt[\Rcvr_{\perm(\ij)}]\!)\!\Au\Au^{-1}\ctu_{\ij,0}
            +(\Sum{\ii}{[\klen]}\Nt[\ii,{\msgii[\kw]{\ii}}]+\Xt[\Sndr]\!+\!\Zt[\Rcvr_{\perm(\ij)}]\!)\!\Ad\Au^{-1}\ctu_{\ij,0}\\
            &\qquad+(\Sum{\ii}{[\klen]}\Nt[\ii,{\msgii[\kw]{\ii}}]+\Xt[\Sndr]+\Zt[\Rcvr_{\perm(\ij)}])\ctd_{\ij,0}-(\Sum{\ii}{[\klen]}\Nt[\ii,{\msgii[\kw]{\ii}}]+\Xt[\Sndr]+\Zt[\Rcvr_{\perm(\ij)}])\Ad\Au^{-1}\ctu_{\ij,0}\\
            &= (\Sum{\ii}{[\klen]}\Qt[\ii,{\msgii[\kw]{\ii}}]+\Wt[\Sndr]+\Vt[\Rcvr_{\perm(\ij)}])\ctu_{\ij,0}
                        +(\Sum{\ii}{[\klen]}\Nt[\ii,{\msgii[\kw]{\ii}}]+\Xt[\Sndr]+\Zt[\Rcvr_{\perm(\ij)}])\ctd_{\ij,0}\\
                        &= {(\Sum{\ii}{[\klen]} \iRow{\Qt[\ii,{\msgii[\kw]{\ii}}]}{\Nt[\ii,{\msgii[\kw]{\ii}}]}
                            +\iRow{\Wt[\Sndr]}{\Xt[\Sndr]}
                            +\iRow{\Yt[\Rcvr_{\perm(\ij)}]}{\Zt[\Rcvr_{\perm(\ij)}]})\iCol{\ctu_{\ij,0}}{\ctd_{\ij,0}}} \displaybreak[1]\\
                        &= {(\Sum{\ii}{[\klen]}\iCol{\Q[\ii,\kw_{\ii}]}{\N[\ii,\kw_{\ii}]}^{\top}+\iCol{\W[\Sndr]}{\X[\Sndr]}^{\top}+\iCol{\Y[\Rcvr_{\perm(\ij)}]}{\Z[\Rcvr_{\perm(\ij)}]}^{\top})\ct[{\ij,0}]}\\
                        &= {(\Sum{\ii}{[\klen]}\Kt[\ii,{\msgii[\kw]{\ii}}] +\Ut[\Sndr]+\Vt[\Rcvr_{\perm(\ij)}])\ct[{\ij,0}]}\\
        % \end{aligned}
    \end{align*}
}
    This is a syntactic change provided $\Au$ is invertible and therefore $|\adv{\AA}{\game{2}}-\adv{\AA}{\game{3}}|\leq \frac{1}{p-1}$.
\end{proof}

\begin{proof}[Proof of \Cref{lem:G{3}-G{4}}]
    We sample $\Kt[\ii,b]\A,\Ut[j]\A,\Vt[j]\A\sample \Zp^{\nRB\times \nCA}$. Then we compute $\K[\ii,b]\B,\U[j]\B$ and $\V[j]\B$ following $\game{3}$. Since $\game{3}$ sampled $\K[\ii,b],\J[j,i,{\msgii[j]{i}}],\W[j],\Y[j]$ uniformly at random, $\Kt[\ii,b]\A,$ $\Ut[j]\A,\Vt[j]\A$ were also uniformly random quantities.
    This is therefore a conceptual change provided $\Au$ is invertible and therefore $|\adv{\AA}{\game{3}}=\adv{\AA}{\game{4}}|\leq \frac{1}{p-1}$.
\end{proof}

\begin{proof}[Proof of \Cref{lem:G{4}-G{5}}]
    This lemma is proved assuming $\Exp{\AB}{\Core}{\bee}$ is tightly secure for all $\ppt$ adversary which we have proved in the next section. Given a $\ppt$ adversary $\AA$ that distinguishes $\game{4}$ and $\game{5}$, we construct a reduction $\AB$ that distinguishes $\Exp{\AB}{\Core}{0}$ from $\Exp{\AB}{\Core}{1}$. Let $\AC$ be the $\Exp{\AB}{\Core}{\bee}$ challenger who sampled $\bee\sample\{0,1\}$ uniformly at random at the start of the game. 

    \begin{itemize}
        \item $\AC$ gives out $\ppC$ to $\AB$.
        \begin{itemize}
            \item $\AB$ samples $\dee\sample\{0,1\}$.
            \item $\AB$ samples $\A\sample \Dk[\nRA,\nCA]$.
            \item $\AB$ samples {$\Kt[\ii,b]\A\sample \Zp^{\nRB\times \nCA}$} for all $(\ii,b)\in[\klen]\times\{0,1\}$.
            \item $\AB$ defines $\pp=(\On{\A},\Tw{\A},\On{\Kt[\ii,b]\A},\ppC)$.
            \item $\AB$ gives $\pp$ to $\AA$.
        \end{itemize}
        \item When $\AA$ makes $\Osk$ queries on a new $j$:
        \begin{itemize}
            \item $\AB$ first makes a $\OKey$ query on $j$.
            \item $\AB$ then makes a $\OCor$ query on $j$.
            \item $\AC$ responds with $\sky_j=\left(\Tw{\X[j]\B},\Tw{\Z[j]\B}\right)$.
            \item $\AB$ samples {$\Ut[j]\A,\Vt[j]\A\sample \Zp^{\nRB\times \nCA}$}.
            \item $\AB$ computes {\scriptsize$\U[j]\B=\iCol{\Au^{-\top}((\Ut[j]\A)^\top\B-\Adt\X[j]\B)}{\X[j]\B}$} and {\scriptsize$\V[j]\B=\iCol{\Au^{-\top}((\Vt[j]\A)^{\top}\B-\Adt\Z[j]\B)}{\Z[j]\B}.$}
            \item $\AB$ maintains a list $\Qsk$ of $(j,\sk_j,\pk_j)$.  
            \item It returns {\scriptsize$(\sk_j,\pk_j)$} to $\AA$ where 
            {\scriptsize $\sk_j=\left(\On{\Ut[j]\A},\Tw{\V[j]\B}\right)$} and 
            {\scriptsize $\pk_j=\left(\Tw{\U[j]\B},\On{\Vt[j]\A}\right).$}
        \end{itemize}
        \item {When $\AA$ makes $\Opk$ queries on a new $j$:}
        \begin{itemize}
            \item $\AB$ first makes a $\OKey$ query on $j$.
            % \item {\color{\confcolor}$\AB$ then makes a $\OCor$ query on $j$.}
            \item {\color{\confcolor}$\AC$ responds with $(j,{\color{\confcolor}\Tw{\X[j]\B}})$.}
            \item $\AB$ samples {\scriptsize$\Ut[j]\A,\Vt[j]\A\sample \Zp^{\nRB\times \nCA}$}. 
            % and {\scriptsize\color{\confcolor}$\hX[j,\ii,b],\hZ[j,\ii,b]\sample \Zp^{(\nRAd)\times (\nRB)}$ for all $(\ii,b)\in[L]\times\{0,1\}$.}
            \item $\AB$ computes {\scriptsize$\U[j]\B=\iCol{\Au^{-\top}((\Ut[j]\A)^\top\B-\Adt\X[j]\B)}{\X[j]\B}$}. 
            %and {\scriptsize$\V[j,\ii,b]\B=\iCol{\Au^{-\top}((\Vt[j,\ii,b]\A)^{\top}\B-\Adt\hZ[j,\ii,b]\B)}{\hZ[j,\ii,b]\B}.$}
            % \item $\AB$ maintains a list $\Qpk$ of $(j,(\hsk_j,${\scriptsize$(\hX[j,\ii,b],\hZ[j,\ii,b])_{\substack{{\ii\in[L]}\\{b\in\{0,1\}}}}$}$),\hpk_j)$ where 
            % {\scriptsize $\hsk_j=(\left(\On{\Ut[j,\ii,b]\A},\Tw{\V[j,\ii,b]\B}\right)_{\substack{{\ii\in[L]}\\{b\in\{0,1\}}}})$} and 
            \item $\AB$ sends
            {\scriptsize $\pk_j=\left(\Tw{\U[j]\B},\On{\Vt[j]\A}\right)$} back to $\AA$.
            % \item It returns {$\hpk_j$} to $\AA$.
        \end{itemize}

        \item When $\AA$ makes $\Otrp$ queries on new $((\pk_{\Sndrr^{\xpz}},\kwd^{\xpz},\pk_{\Rcvrr^{\xpz}}),(\pk_{\Sndrr^{\xpo}},\kwd^{\xpo},\pk_{\Rcvrr^{\xpo}}))$:
        \begin{itemize}
            \item $\AB$ finds $\Sndrr^{\xpd}$ and $\Rcvrr^{\xpd}$ from $\Qpk$ it maintained.
            \item $\AB$ makes a $\QTg$ query on $(\Sndrr^{\xpd},\kwd^{\xpd},\Rcvrr^{\xpd})$.
            \item $\AC$ replies with $\Tg=(\tg[0],\tg[1])\in \Gt^{\nRB}\times \Gt^{\nRAd}$.
            \item $\AB$ uses $\Tg$ to compute $\Trp=(\trp[0],\trp[1])$ where it sets $\trp[0]=\tg[0]$ and 
{\scriptsize
\[\trp[1]=\iCol{\Au^{-\top}{(\Sum{\ii}{[\klen]}\Kt[\ii,{\msgii[\kwd]{\ii}}]\A+\Ut[\Sndrr]\A+\Vt[\Rcvrr]\A)}^{\top}\tg[{0}]-\Au^{-\top}\Adt\tg[1]}{\tg[1]}\]
}
            \item $\AB$ forwards $\Trp$ to $\AA$ as the challenge. 
        \end{itemize}
        \item When $\AA$ makes $\Oct$ queries on new $((\pk_{\Sndr^{\xpz}},\kw^{\xpz},\pk_{\RSet^{\xpz}}),(\pk_{\Sndr^{\xpo}},\kw^{\xpo},\pk_{\RSet^{\xpo}}))$:
        \begin{itemize}
            \item $\AB$ finds $\Sndr^{\xpd}$ and $\RSet^{\xpd}=\{\Rcvr_1,\ldots,\Rcvr_{\ell}\}$ from $\Qpk$ it maintained.
            \item $\AB$ makes a $\QChl$ query on $(\Sndr^{\xpd},\kw^{\xpd},\RSet^{\xpd})$.
            \item $\AC$ replies with $\Chl=((\chl[\ij,0],\chl[\ij,1])_{\ij\in[\ell]})\in (\Go^{\nRAd}\times \Go^{\nRB})^{\ell}$
            \item $\AB$ uses $\Chl$ to compute $\Ct=((\ct[\ij,0],\ct[\ij,1])_{\ij\in[\ell]})$ where it sets {\scriptsize$\ct[\ij,0]=\iCol{\hct_{\ij,0}}{\chl[\ij,0]+\Ad\Au^{-1}\hct_{\ij,0}}$} and 
{\scriptsize
$\ct[{\ij,1}]=\On{(\Sum{\ii}{[\klen]}\Kt[\ii,{\msgii[\kwd]{\ii}}]\A+\Ut[\Sndrr]\A+\Vt[\Rcvrr]\A)\Au^{-1}\hct_{\ij,0}}+\chl[\ij,1]$
}
for {\scriptsize$\hct_{\ij,0}\sample\Zp^{\nRAu}$}
            \item $\AB$ forwards $\Ct$ to $\AA$ as the challenge. 
        \end{itemize}
        \item $\AA$ finally outputs $\dee'\in\{0,1\}$.
    \end{itemize}

{\color{\confcolor}    
Observe that when $\AA$ makes $\Osk$ query on any arbitrary $j$, $\AB$ queries $\OKey$ and $\OCor$ to get $\sky_j$. $\AB$ then uses $\sky_j=(\left(\Tw{\X[j,\ii,b]\B},\Tw{\Z[j,\ii,b]\B}\right)_{\substack{{\ii\in[L]}\\{b\in\{0,1\}}}})$, to compute $\pk_j$ and $\sk_j$. Thus, $\AA$ receives the keys of $j$.

Observe that when $\AA$ makes $\Opk$ query on any arbitrary $j$, $\AB$ queries $\OKey$ on $j$. Notice that, $\AB$ gets $(j,\Tw{\X[j]\B})$ in this case and cannot compute $\pk_j$ and $\sk_j$ completely. $\AB$ samples $\Ut[j]\A,\Vt[j]\A\sample \Zp^{\nRB\times \nCA}$ uniformly at random, and use it simulate $\pk_j$ response to $\AA$. 
% Observe that, irrespective of the value of $\bee$, $\AA$ receives the same $\pp$ and $(\sk_j,\pk_j)$ on it's query.
    If $\bee=0$, $\AA$ gets properly simulated challenges for both $\Trp$ and {$\Ct$}. $\AA$ can guess $\dee$ in this case with probability $1/2+|\adv{\AA}{\game{4}}-\adv{\AA}{\game{5}}|$.
    If $\bee=1$, $\AA$ gets simulated random challenges for both $\Trp$ and {$\Ct$}.
    $\AA$ can guess $\dee$ in this case with probability $1/2$.
    Thus, $|\adv{\AA}{\game{4}}-\adv{\AA}{\game{5}}|\leq \Adv{\AB,\abG}{\Core}$.    
{The restrictions in $\fullcpa$ security and the Core-Lemma \Cref{fig:Core-Lemma} are same and therefore translate naturally.}
}
\end{proof}

\begin{proof}[Proof of \Cref{lem:G{5}-G{6}}]
    Recall that, $\Qct\cap\Qtrp=\emptyset$.
    Since $\a[\Sndr,\kw,\Rcvr_{\perm(\ij)}]\sample \Zp^{\nRB}$ and $\d[\Sndrr,\kwd,\Rcvrr]\sample \Zp^{\nRAd}$ are sampled afresh, these make $\ct[{\ij,1}]$ and $\trp[{1}]$ uniformly random and independent.
\end{proof}

\section{Core Lemma} 
\label{sec:Core}
%!TEX spellcheck = en_US
%!TEX root = ../main.tex

We now present the core lemma in terms of an experiment in \Cref{fig:Core-Lemma} that forms the basis in the proofs of our BAEKS construction. 
Informally speaking, the core lemma experiment checks if $\AA$ has guessed the experiment's choice $\bee'$ right after making polynomial many queries to $\OKey$, $\OCor$, $\OTag$, $\OChal$ oracles in an interleaved manner subject to the condition that same queries are not made on $\OTag$ and $\OChal$ oracles and only users, that are not a part of any $\OTag$ or $\OChal$ queries, can be corrupted. 
Observe that, these restrictions match with the restrictions of $\fullcpa$ security and are only natural restrictions.
% {\color{red}Add Informal description.}
Interestingly, $\OKey$ outputs $\Tw{\X[j]\B}$, a part of the secret key. However, we note that the indistinguishability argument still goes through. Intuitively, the primary reason for that would be for every user $j$, $\X[j]=\SumiDT{1}{\ulen}\J[j,i,{\msgii[j]{i}}]$, a summation of random quantity. Thus, even after the projection $\Tw{\X[j]\B}$ is published, $\X[j]$ keeps sufficient entropy that can be utilized to inject randomness into $\OTag$ and $\OChal$ responses. This small observation has allowed us to prove the Core Lemma and subsequently use Core Lemma to prove \Cref{lem:G{4}-G{5}}. 
% It extends the proof of Affine-MAC from \cite{C:BlaKilPan14,AC:HofJiaPan18,PKC:LanPan19,PKC:LanPan20} except that 
For any adversary $\AA$, the advantage is defined as 
\[
    \Adv{\AA,\abG}{\Core}=|\Pr[\Exp{\AA}{\Core}{0}=1]-\Exp{\AA}{\Core}{1}=1|
\]
\begin{figure*}
    \scriptsize
    %\centering
    \hspace{-0.5cm}
    \fbox{
        % \parbox[c][73mm][c]{1.0\textwidth}{
        % \centering$\came{0}$, \lfbox{$\came{1}$}\\ \ \\
        %\hspace{-0.6cm}
        \begin{minipage}{0.52\textwidth}
            \underline{$\Exp{\AA}{\Core}{\bee}$}
            \begin{algorithmic}[1]
                \State $\abG\sample \ABSGen(1^\secpp)$
                \State $\B\sample\Dk[\nRB,\nCB]$
                \State {\color{\confcolorr}For $(\ii,b)\in[\klen]\times\{0,1\}: \N[\ii,b]\sample \Zp^{(\nRAd)\times (\nRB)}$}
                \State $\ppC=(\On{\B},\Tw{\B},{\color{\confcolorrr}\left\{\Tw{\N[\ii,b]\B}\right\}_{\!\substack{\!\ii\in [\klen] \\ \!b\in\{0,1\}}}})$ %\On{\N[\ii,b]\B},
                \State $\bee'\leftarrow\AA^{\OKey(\cdot),\OTag(\cdot,\cdot,\cdot),\OChal(\cdot,\cdot,\cdot)}(\ppC)$
                \State Return $(\bee\iseq \bee')\ \wedge\ (\QTg\cap\QChl=\emptyset)$
                \item[]\hspace{0.0cm}${\color{\confcolor}\wedge\ (\forall\ (u,\dontcare)\in\QCr, (u,\dontcare,\dontcare),(\dontcare,\dontcare,u)\notin {\color{\confcolor}\QTg\cup}\QChl)}$
                \item[]
            \end{algorithmic}
            \underline{$\OKey(j\in\USet)$}
            \begin{algorithmic}[1]
                \State For $(\ii,b)\in[2\ulen]\times\{0,1\}$: $\J[j,\ii,b]\sample \Zp^{(\nRAd)\times (\nRB)}$
                \State ${\color{\confcolorr}\X[j] = \SumiDT{1}{\ulen}\J[j,i,{\msgii[j]{i}}]}$, ${\color{\confcolorr}\Z[j] = \SumiDT{\ulen+1}{2\ulen}\J[j,i,{\msgii[j]{i}}]}$ %$\quad\bslash \hame{1}-\hame{4}$          
                \State $\sky_j=(\left(\Tw{\X[j]\B},\Tw{\Z[j]\B}\right))$
                \State $\QSk=\QSk\sqcup \{(j,\sky_j)\}$
                % \State $\SSk=\SSk\cup \{(j,\sky_j)\}$
                \State Return $(j,{\color{\confcolor}\Tw{\X[j]\B}})$
                \item[]
                % \item[]
            \end{algorithmic}
            % \underline{$\Finalize(\bee\in\{0,1\})$}
            % \begin{algorithmic}[1]
            %     \State Return $\bee \wedge (\QTg\cap\QChl=\emptyset)$
            %     \item[]\hspace{1cm}${\color{\confcolor}\wedge\ (\dforall\ u,v\in\QSk, (u,\dontcare,v)\notin {\color{\confcolor}\QTg\cup}\QChl)}$
            %     \item[]
            %     \item[]
            %     \item[]
            %     % \item[]
            %     % \item[]
            %     % \item[]
            % \end{algorithmic}  
        \end{minipage}
        \vline \hspace{1pt}
        \begin{minipage}{0.57\textwidth}
            \underline{$\OCor(j\in \USet)$}
            \begin{algorithmic}[1]
                \State $\QCr=\QCr\sqcup \{(j,\sky_j)\}$ from $\QSk$
                \State Return $\sky_j$
                \item[]
            \end{algorithmic}
            
            \underline{$\OTag({\Sndrr},\kwd,{\Rcvrr})$}   
            \begin{algorithmic}[1]
                \State $\QTg=\QTg\sqcup \{({\Sndrr},\kwd,{\Rcvrr})\}$
                \State $\Tg=(\Tw{\tg[{0}]},\Tw{\tg[{1}]})$ where
                \item[]$\tg[{0}]\sample\Zp^{\nRB}$, $\tg'\sample\Zp^{\nRAd}$ %={\B\r}$
                \item[]$\tg[{1}]=(1-\bee)(\SumiDT[\ii]{1}{\klen}\N[\ii,{\msgii[\kwd]{\ii}}]+\X[\Sndrr]+\Z[\Rcvrr])\tg[{0}]+\bee\cdot \tg'$ 
                \State Return $\Tg$ 
                \item[]
            \end{algorithmic}
            \underline{$\OChal({\Sndr}, \kw, {\RSet})$}
            \begin{algorithmic}[1]
                \State Let $\RSet=\{\Rcvr_1,\ldots,\Rcvr_{\ell}\}$
                \State $\QChl=\QChl\cup \{({\Sndr},\kw,{\Rcvr_{\ij}})_{\ij\in[\ell]}\}$ and $\perm\sample \Perm{[\ell]}$ 
                \State $\Chl=((\On{\chl[\ij,0]},\On{\chl[\ij,1]})_{\ij\in[\ell]})$ s.t.
                \item[]$\chl[\ij,0]\sample\Zp^{\nRAd}$, $\chl'\sample \Zp^{\nRB}$
                \item[]$\chl[\ij,1]=(1-\bee){
                (\SumiDT[\ii]{1}{\klen}\Nt[\ii,{\msgii[\kw]{\ii}}]+\Xt[\Sndr]+\Zt[\Rcvr_{\perm(\ij)}])}\chl[\ij,0]$
                %\lfbox[background-color=lightgray!60!white,border-color=lightgray!60!white]{
                    $+\bee\cdot \chl'$
                % \item[]\hspace{4in} where $\chl'\sample \Zp^{\nRB}$
                %} 
                %\item[]\lfbox{$\chl[\ij,1]^{\xpo}\sample \Zp^{\nRB}$}
                \State Return $\Chl$
                % \item[]
            \end{algorithmic}        
        \end{minipage}
    % }
    }
    \caption{Experiment for the Core Lemma.} %Here, $\Ffnc:\USet\times\KWd\times\USet\rightarrow \Zp^{\nRAd\times\nRB}$ is a random function. The difference between $\Exp{0,\AA}{\Core}$ and $\Exp{1,\AA}{\Core}$ is highlighted in gray.}
    \label{fig:Core-Lemma}
\end{figure*}

\begin{theorem}[Core Lemma]\label{thm:Core}
    Consider the experiments  $\Exp{\bee,\AA}{\Core}$ be defined as~\Cref{fig:Core-Lemma} where $\AA$ is given access to $\OKgen$, $\OCor$, $\OTag$ and $\OChal$ oracles for polynomial number of queries (i.e. $|\QSk|,|\QTg|,|\QChl|=\poly$).
    We claim that under the $\matDH[\nRB,\nCB]$ assumption for $\Go$ and $\matDH[\nRB,\nCB]$ assumption for $\Gt$, $\Adv{\AA,\abG}{\Core}=\neglgbl$ for all $\ppt$ adversary.
    For all $\ppt$ adversary $\AA$ having advantage $\Adv{\AA,\abG}{\Core}$,
    there exists $\ppt$ adversaries {$\AB,\AB_1,\AB_2,\AB_3$} s.t. 
{%\scriptsize    
    \begin{equation*}
        \begin{aligned}
            \Adv{\AA,\abG}{\Core} &\leq 2(2\ulen+\klen)\Adv{\AB_1,\Gt}{|\QTg|\mhyf\matDH[\nRB,\nCB]}\\
            &\qquad +2(2\ulen+\klen)\Adv{\AB_2,\Go}{\prll|\QChl|\mhyf\matDH[\nRA,\nCA]}\\
            &\qquad +(2\ulen+\klen)\Adv{\AB_3,\Go}{\prll|\QChl|\mhyf\matDH[\nRB,\nCB]}\\
            &\qquad +\frac{(2\ulen+\klen)(4\nCB+3|\QChl|+6)}{p-1}+\frac{(2\ulen+\klen)\nCB}{p} \\
            &\leq 4(2\ulen+\klen)\nCB\Adv{\AB,\Gt}{\matDH[\nCB]}\\
            &\qquad +4(2\ulen+\klen)\nCA\Adv{\AB,\Go}{\matDH[\nCA]}\\
            &\qquad +\frac{(2\ulen+\klen)(4\nCB+3|\QChl|+6)}{p-1}\\
            &\qquad +\frac{(2\ulen+\klen)\nCB}{p}\qquad (\text{due to }\Fact3)\\
            % & \qquad\qquad\qquad\qquad\qquad\qquad\qquad\qquad\qquad\qquad\qquad\qquad\qquad\qquad\qquad(\text{due to }\Fact3)\\
        \end{aligned}
    \end{equation*}
}
\end{theorem}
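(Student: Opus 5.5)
This bound is a Naor--Reingold style hybrid in the spirit of the tight affine-MAC proofs of \cite{C:BlaKilPan14,AC:HofJiaPan18,PKC:LanPan20}. The loop runs over the $2\ulen+\klen$ bit positions of the concatenated string $\Sndr\|\kw\|\Rcvr$, where the sender contributes bits through $\J[j,1..\ulen]$, the keyword through $\N[\ii,b]$, and the receiver through $\J[j,\ulen+1..2\ulen]$. For $\rho\in[0,2\ulen+\klen]$ I will define a hybrid $\mathsf{H}_\rho$. In $\mathsf{H}_\rho$, every $\OTag$ response uses $(\SumiDT[\ii]{1}{\klen}\N[\ii,\kwd_\ii]+\X[\Sndrr]+\Z[\Rcvrr]+\RF[]{\rho}{\msgi[\tau]{\rho}})\tg[0]$, and every $\OChal$ response uses the transposed expression times $\chl[\ij,0]$. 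Here $\tau=\Sndr\|\kw\|\Rcvr$ and $\RF[]{\rho}{\cdot}$ is a random function on $\rho$-bit prefixes, realised in a way compatible with both the $\Gt$ and the $\Go$ side. $\mathsf{H}_0$ is $\Exp{\AA}{\Core}{0}$; the constant $\RF[]{0}{}$ is absorbed into $\N[1,\cdot]$ at the cost of $1/p$-type terms.

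In $\mathsf{H}_{2\ulen+\klen}$, distinct queried tuples carry independent random matrices $\RF[]{}{\Sndr\|\kw\|\Rcvr}$. Because $\QTg\cap\QChl=\emptyset$, no matrix is shared between a tag and a challenge. The remaining question is repeated $\OChal$ tuples: the same tuple gives $\RF[]{}{\tau}^\top\chl[\ij,0]$ for fresh random $\chl[\ij,0]$ ($\nRAd\ge\nRB$). This is uniform except with probability about $1/(p-1)$ per query, which accounts for the $|\QChl|/(p-1)$ terms. $\OTag$ has no repetitions, so these responses match $\Exp{\AA}{\Core}{1}$.

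To go from $\mathsf{H}_{\rho-1}$ to $\mathsf{H}_\rho$, I will use the standard three-step partition argument. First, using the $|\QTg|$-fold $\matDH[\nRB,\nCB]$ assumption in $\Gt$, I switch $\tg[0]$ from uniform to the span of $\B_{\tau_\rho}$, where $\B_0,\B_1$ are two complementary $\nCB$-column matrices selected by bit $\rho$. I do the same for $\chl[\ij,0]$ via the parallel $|\QChl|$-fold assumption in $\Go$ over the relevant dimension. Second, in an information-theoretic step I redefine $\RF[]{\rho}{\msgi[\tau]{\rho}}=\RF[]{\rho-1}{\msgi[\tau]{\rho-1}}+\tau_\rho\RF'{\msgi[\tau]{\rho-1}}\B_1^{\perp}$-style terms. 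These extra terms are invisible to $\Tw{\B}$ projections, including $\ppC$ and the $\OKey$ outputs $\Tw{\X[j]\B}$, and are hidden by the entropy of $\J[j,\rho,b]$ (respectively $\N[\rho,b]$) that is left after the projection by $\B$. Third, I switch back to uniform. Each step is charged twice per side, which gives the factors $2(2\ulen+\klen)$ and the $\nCB/p$ and $4\nCB/(p-1)$ losses from the full-rank conditioning in \Cref{eq:Dk}.

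I expect the main obstacle to be the information-theoretic step for sender and receiver bits in the presence of corruptions. The entropy that is injected lives in per-user $\J[j,\rho,b]$, but a corrupted user's $\Tw{\Z[j]\B}$ is revealed, and $\Tw{\X[j]\B}$ is revealed for every user. I would handle this with the non-triviality condition: corrupted users never appear in $\QTg\cup\QChl$, so the random function is defined only on uncorrupted users. The reduction can sample $\J$ for corrupted users honestly, because the inserted terms act only in the kernel of $\B^\top$ for anything an honest $\OKey$ or $\OCor$ reveals. The fact that only $\X[j]\B$ and $\Z[j]\B$ are ever published is exactly what makes this consistent.

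The extra $\Go$-side $\matDH[\nRB,\nCB]$ term would then account for moving the challenge vectors in the $\nRB$-dimensional transposed space. Finally, applying $\Fact3$ collapses the fold assumptions into the second displayed bound.
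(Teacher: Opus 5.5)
\textbf{Per-bit step.} Your outer loop (a Naor--Reingold hybrid over the $2\ulen+\klen$ bits, with $\tg[0]$ partitioned in $\Gt$) matches the paper. The gap is your per-bit information-theoretic step, on the $\OChal$ side.

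A challenge reveals $\chl[\ij,0]^{\top}$ times the secret matrices for uniform $\chl[\ij,0]\in\Zp^{\nRAd}$. It therefore reads them from the left, so partitioning $\tg[0]$ with $\Bz,\Bo$ inside $\Zp^{\nRB}$ restricts nothing for challenges. Every challenge sees both the $\Bzs$-part and the $\Bos$-part of the injected function. The half you refresh for bit-$1$ prefixes therefore appears in bit-$1$ challenges as a fresh value for each $(\rho-1)$-bit prefix.

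Key entropy cannot pay for that. The most $\N[\rho,1]$ or $\J[j,\rho,1]$ can absorb is one \emph{fixed} matrix $\Delta$ with $\Delta\B=\Zro$, and $\Delta\tg[0]=\Zro$ on bit-$1$ tags. Such a $\Delta$ shifts every bit-$1$ challenge by the same map $\chl[\ij,0]\mapsto\Delta^{\top}\chl[\ij,0]$, whatever the prefix (for $\J$, whatever the keyword).

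Your unspecified partition of $\chl[\ij,0]$ ``over the relevant dimension'' does not rescue this. Tags read the function from the right and challenges from the left, so even with both sides partitioned, a bit-$0$ tag and a bit-$1$ challenge still share an off-diagonal block of it.

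The missing idea is a \emph{computational} injection on the challenge side, which is what the paper does:
\begin{enumerate}
\item It partitions only $\tg[0]$ (the $\Gt$ terms).
\item It splits the function statistically into its $\Bzs$- and $\Bos$-parts ($\hame{2,i,2}\to\hame{2,i,3}$).
\item In $\hame{2,i,3}\to\hame{2,i,4}$ and $\hame{2,i,4}\to\hame{2,i,5}$ it uses the $\prll|\QChl|$-fold $\matDH[\nRA,\nCA]$ in $\Go$. It implicitly sets $\N[i+1,1]=\hN[i+1,1]+\Mu^{-\top}\Mdt\Bzst$, which is undetectable through $\ppC$ and bit-$1$ tags since $\Bzst\B=\Bzst\Bo=\Zro$. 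For user bits the embedding goes into $\J$, and $\Tw{\X[j]\B},\Tw{\Z[j]\B}$ are likewise untouched, which is why corruptions are harmless.
\item It answers with $\chl[\ij,0]=\Fu[c]\chl[\ij,0]'$, so each bit-$1$ challenge on prefix $c$ picks up $\Bzs\R[c]\Fu[c]^{-1}\chl[\ij,0]$, fresh per prefix.
\end{enumerate}
The two $\Go$ steps per bit are exactly the $\Go$ terms with factor $2(2\ulen+\klen)$.

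\textbf{Final step.} Your final step is also wrong as stated. The inequality $\nRAd\ge\nRB$ is not available here, but the argument fails for any dimensions because the $\chl[\ij,0]$ are handed to the adversary. On a repeated tuple all answers are determined by the one function value, which has only $\nRAd$ rows. Beyond $\nRAd$ repetitions they are therefore not jointly uniform, and no $1/(p-1)$-per-query bound holds. The $|\QChl|/(p-1)$ terms in the bound instead come from requiring $\Mu$ and the $\Fu[c]$ to be invertible in the $\Go$-side reductions.

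The paper spends a separate computational step on the challenges at the end, $\hame{5}\to\hame{6}$, which is the $\prll|\QChl|\mhyf\matDH[\nRB,\nCB]$ term in $\Go$. In your plan that term is not tied to any concrete step.

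Finally, a function injected along $\Bp$ never changes $\B^{\top}\chl[\ij,1]$. That component remains $\chl[\ij,0]^{\top}$ times the $\B$-projection of the secret matrices, including the hidden $\Z[\Rcvr]\B$. Identifying your last hybrid with $\Exp{\AA}{\Core}{1}$ therefore also needs an argument for that component.
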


We use devise a hybrid argument to argue the indistinguishability of $\Exp{\AA}{\Core}{0}$ and $\Exp{\AA}{\Core}{1}$. We define $\hame{1},\hame{2,0,1},\ldots,\hame{2,{\klen-1},5},\hame{2,{\klen},1},\hame{3,{\klen},1},\ldots,\hame{3,{\klen+\ulen-1},5},$ $\hame{3,{\klen+\ulen},1},\hame{4,{\klen+\ulen},1},$ $\ldots,\hame{4,{\klen+2\ulen-1},5}$, $\hame{4,{\klen+2\ulen},5}$, $\hame{5}$ and $\hame{6}$ in \Cref{fig:Core-Hybrid}, \Cref{fig:CoreS-Hybrid} and \Cref{fig:CoreR-Hybrid} to argue the said indistinguishability. We describe the games next and provide the security argument in \Cref{sec:Core-Security} due to space limitation.
%!TEX spellcheck = en_US
%!TEX root = ../main.tex

\subsection{Overview of the Proof of Core-Lemma}\label{sec:Core-FullSec}
% \begin{proofsketch}[Proof of \Cref{thm:Core}]

% \end{proofsketch}

%background-color=lightgray!60!white,border-color=lightgray!60!white

%background-color=lightgray!60!white,border-color=black

\begin{proof}
    We use devise a hybrid argument to argue the indistinguishability of $\Exp{\AA}{\Core}{0}$ and $\Exp{\AA}{\Core}{1}$. We define $\hame{1},\hame{2,0,1},\ldots,\hame{2,{\klen-1},5},\hame{2,{\klen},1},\hame{3,{\klen},1},\ldots,$ $\hame{3,{\klen+\ulen-1},5},$ $\hame{3,{\klen+\ulen},1},\hame{4,{\klen+\ulen},1},\ldots,\hame{4,{\klen+2\ulen-1},5},,\hame{4,{\klen+2\ulen},5}$, $\hame{5}$ and $\hame{6}$ in \Cref{fig:Core-Hybrid}, \Cref{fig:CoreS-Hybrid} and \Cref{fig:CoreR-Hybrid} to argue the said indistinguishability. 
    %Note that, we defined $\hame{3,{\klen},1}=\hame{3,{\klen},1}$ and $\hame{4,{\klen+\ulen},1}=\hame{3,{\klen+\ulen},1}$. 
    In \Cref{sec:Core-Lemma-Lemmas}, we give the lemmas needed to prove these indistinguishabilities. 
    % Due to space constraint, we defer the proofs of these lemmas to \Cref{sec:Core-Lemma-Proofs}.
    \begin{itemize}
        \item $\hame{1}$: Same as $\Exp{\AA}{\Core}{0}$.
        \item $\hame{2,i,1}$: Injected randomness via $\RFdef{i}:\{0,1\}^{i}\rightarrow \Zp^{\nRAd\times \nRCB}$ defined on-the-fly. %,\ZRFdef[j]{i}
        \item $\hame{2,i,2}$: Sample each $\tg[{0}]$ based on concerned $\msgii[\kw]{i+1}\in \{0,1\}$.
        \item $\hame{2,i,3}$: Express $\RFdef{i}:\{0,1\}^{i}\rightarrow \Zp^{\nRAd\times \nRCB}$ in the span $(\Bzs||\Bos)$ in terms of $\ZFdef{i},\OFdef{i}:\{0,1\}^{i}\rightarrow \Zp^{\nRAd\times \nCB}$ defined on-the-fly. 
        % Similarly, we express $\ZRFdef[j]{i}:\{0,1\}^{i}\rightarrow \Zp^{\nRAd\times \nRCB}$ in the span $(\Bzs||\Bos)$ in terms of $\ZZFdef[j]{i},\ZOFdef[j]{i}:\{0,1\}^{i}\rightarrow \Zp^{\nRAd\times \nCB}$.
        \item $\hame{2,i,4}$: Inject randomness in $\ZFdef{i}:\{0,1\}^{i}\rightarrow \Zp^{\nRAd\times \nCB}$ to define $\ZFdef{i+1}:\{0,1\}^{i+1}\rightarrow \Zp^{\nRAd\times \nCB}$  based on concerned $\msgii[\kw]{i+1}\in \{0,1\}$. %,\XOFdef[j]{i}, \XOFdef[j]{i+1}
        \item $\hame{2,i,5}$: Inject randomness in $\OFdef{i}:\{0,1\}^{i}\rightarrow \Zp^{\nRAd\times \nCB}$ to define $\OFdef{i+1}:\{0,1\}^{i+1}\rightarrow \Zp^{\nRAd\times \nCB}$  based on concerned $\msgii[\kw]{i+1}\in \{0,1\}$. % \ZZFdef[j]{i}, $\ZZFdef[j]{i+1},$ 
        \item $\hame{3,\klen+i,1}$: Injected randomness via $\SRFdef{\klen+i}:\{0,1\}^{\klen+i}\rightarrow \Zp^{\nRAd\times \nRCB}$ defined on-the-fly.
        \item $\hame{3,\klen+i,2}$: Sample each $\tg[{0}]$ based on concerned $\msgii[\Sndr]{i+1}\in \{0,1\}$.
        \item $\hame{3,\klen+i,3}$: Express $\SRFdef{\klen+i}:\{0,1\}^{\klen+i}\rightarrow \Zp^{\nRAd\times \nRCB}$ in the span $(\Bzs||\Bos)$ in terms of $\SZFdef{\klen+i},\SOFdef{\klen+i}:\{0,1\}^{\klen+i}\rightarrow \Zp^{\nRAd\times \nCB}$ defined on-the-fly. 
        \item $\hame{3,\klen+i,4}$: Inject randomness in $\SZFdef{\klen+i}:\{0,1\}^{\klen+i}\rightarrow \Zp^{\nRAd\times \nCB}$ to define $\SZFdef{\klen+i+1}:\{0,1\}^{\klen+i+1}\rightarrow \Zp^{\nRAd\times \nCB}$ depending on $\msgii[\Sndr]{i+1}\in \{0,1\}$. %,\XOFdef[j]{i}, \XOFdef[j]{i+1}
        \item $\hame{3,\klen+i,5}$: Inject randomness in $\SOFdef{\klen+i}:\{0,1\}^{\klen+i}\rightarrow \Zp^{\nRAd\times \nCB}$ to define $\SOFdef{\klen+i+1}:\{0,1\}^{\klen+i+1}\rightarrow \Zp^{\nRAd\times \nCB}$ depending on $\msgii[\Sndr]{i+1}\in \{0,1\}$.
        \item $\hame{4,\klen+\ulen+i,1}$: Injected randomness via $\RRFdef{\klen+\ulen+i}:\{0,1\}^{\klen+\ulen+i}\rightarrow \Zp^{\nRAd\times \nRCB}$ defined on-the-fly.
        \item $\hame{4,\klen+\ulen+i,2}$: Sample each $\tg[{0}]$ based on concerned $\msgii[\Rcvr]{i+1}\in \{0,1\}$.
        \item $\hame{4,\klen+\ulen+i,3}$: Express $\RRFdef{\klen+\ulen+i}:\{0,1\}^{\klen+\ulen+i}\rightarrow \Zp^{\nRAd\times \nRCB}$ in the span $(\Bzs||\Bos)$ in terms of $\RZFdef{\klen+\ulen+i},\ROFdef{\klen+\ulen+i}:\{0,1\}^{i}\rightarrow \Zp^{\nRAd\times \nCB}$ defined on-the-fly. 
        \item $\hame{4,\klen+\ulen+i,4}$: Inject randomness in $\RZFdef{\klen+\ulen+i}:\{0,1\}^{\klen+\ulen+i}\rightarrow \Zp^{\nRAd\times \nCB}$ to define $\RZFdef{\klen+\ulen+i+1}:\{0,1\}^{\klen+\ulen+i+1}\rightarrow \Zp^{\nRAd\times \nCB}$ depending on $\msgii[\Rcvr]{i+1}\in \{0,1\}$. %,\XOFdef[j]{i}, \XOFdef[j]{i+1}
        \item $\hame{4,\klen+\ulen+i,5}$: Inject randomness in $\ROFdef{\klen+\ulen+i}:\{0,1\}^{\klen+\ulen+i}\rightarrow \Zp^{\nRAd\times \nCB}$ to define $\ROFdef{\klen+\ulen+i+1}:\{0,1\}^{\klen+\ulen+i+1}\rightarrow \Zp^{\nRAd\times \nCB}$ depending on $\msgii[\Rcvr]{i+1}\in \{0,1\}$.

        \item $\hame{5}$: All $\tg[{1}]$ are sampled uniformly at random. %All $L$-bits of keywords are consumed to define a random function $\mathsf{RF}:\USet\times\KWd\times\USet\rightarrow \Zp^{\nRAd\times \nRCB}$. %$\XRFdef[j]{L},\ZRFdef[j]{L}:\{0,1\}^{L}\rightarrow \Zp^{\nRAd\times \nRCB}$.
        \item $\hame{6}$: Same as $\Exp{\AA}{\Core}{1}$.
        % {\color{\confcolor}For all challenges and tags, sample $\{\chl[j,1]\}_{j\in [L]},\tg[{1}]$ uniformly at random. }
    \end{itemize}
\end{proof}

\subsection{Lemmas for Core-Lemma}\label{sec:Core-Lemma-Lemmas}
% \begin{lemma}$(\hame{0}$ to $\hame{1})$\label{lem:H{0}-H{1}} 
%     For any adversary $\AA$, $\adv{\AA}{\hame{0}}=\adv{\AA}{\hame{1}}.$
% \end{lemma}

\begin{lemma}$(\hame{1}$ to $\hame{2,0,1})$\label{lem:H{1}-H{2,0,1}} 
    For any adversary $\AA$, $\adv{\AA}{\hame{1}}=\adv{\AA}{\hame{2,0,1}}.$
\end{lemma}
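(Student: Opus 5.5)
The lemma asserts that $\hame{1}$ (which equals $\Exp{\AA}{\Core}{0}$) and $\hame{2,0,1}$ have identical output distributions. The plan is to show that the change is purely syntactic. $\hame{2,0,1}$ is the first game of the keyword loop with $i=0$. Its random function $\RFdef{0}:\{0,1\}^{0}\rightarrow \Zp^{\nRAd\times \nRCB}$ is defined on the empty string, so it is a single uniformly random matrix. This matrix is added to every $\OTag$ and $\OChal$ response inside the $\N$/$\X$/$\Z$ summand, in the same style as the Naor--Reingold-based tight (H)IBE proofs \cite{PKC:LanPan19,PKC:LanPan20}.

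I would proceed in three steps.

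\textbf{Step 1.} Identify exactly what $\hame{2,0,1}$ adds. I expect the tag to become $\tg[1]=(\SumiDT[\ii]{1}{\klen}\N[\ii,{\msgii[\kwd]{\ii}}]+\RFdef{0}(\varepsilon)+\X[\Sndrr]+\Z[\Rcvrr])\tg[0]$, and each challenge component to become $\chl[\ij,1]=(\SumiDT[\ii]{1}{\klen}\N[\ii,{\msgii[\kw]{\ii}}]+\RFdef{0}(\varepsilon)+\X[\Sndr]+\Z[\Rcvr_{\perm(\ij)}])^{\top}\chl[\ij,0]$, using the same matrix $\RFdef{0}(\varepsilon)$ everywhere. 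Equivalently, one can view the game as replacing some $\N[1,b]$ by $\N[1,b]+\RFdef{0}(\varepsilon)$ in the oracle answers only.

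\textbf{Step 2.} Absorb the constant. Since $\RFdef{0}(\varepsilon)$ is a fixed uniformly random matrix independent of everything else, I would rename $\N'[1,b]:=\N[1,b]+\RFdef{0}(\varepsilon)$ for both $b\in\{0,1\}$. Every keyword contributes exactly one term from position $\ii=1$, so each sum $\SumiDT[\ii]{1}{\klen}\N[\ii,{\msgii[\kw]{\ii}}]$ then picks up exactly one copy of $\RFdef{0}(\varepsilon)$. The map $(\N[1,0],\N[1,1])\mapsto(\N'[1,0],\N'[1,1])$ is a bijection for each fixed $\RFdef{0}(\varepsilon)$, so $\N'$ is again uniform.

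\textbf{Step 3 (the main obstacle).} Check that the public values stay consistent. The adversary sees $\Tw{\N[\ii,b]\B}$ in $\ppC$, and these would change to $\Tw{(\N[\ii,b]+\RFdef{0}(\varepsilon))\B}$. This is harmless only if $\hame{2,0,1}$ either publishes the shifted values or restricts the injected randomness to the $\B^{\perp}$-direction. I would read the figure's definition precisely to see which case applies. If $\RFdef{0}$ is taken in $\Zp^{\nRAd\times\nRCB}$ and multiplied against a component orthogonal to $\B$, the rewriting leaves $\N[\ii,b]\B$ unchanged while still shifting $\N[\ii,b]$ uniformly in the kernel direction. In either case, the joint distribution of $\ppC$ and all oracle answers, including the $\OKey$ and $\OCor$ outputs, which do not involve $\N$ at all, is identical in the two games. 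That gives $\adv{\AA}{\hame{1}}=\adv{\AA}{\hame{2,0,1}}$.
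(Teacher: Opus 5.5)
Your proposal is correct and is essentially the paper's proof. The paper likewise replaces $\N[\cdot,b]$ by $\N[\cdot,b]$ plus the level-$0$ random value times $\Bpt$, notes that $\ppC$ is unchanged because $\Bpt\B=\Zro$, and invokes uniformity of $\N$. Your version is tidier (shift only index $1$, give an explicit bijection, check $\OKey$/$\OCor$), and you can drop the hedge in Step 3, since the figure injects exactly along $\Bpt$.

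It is worth stating outright what you already assume: the argument needs a single level-$0$ matrix shared by all queries. Values indexed separately by $(\Sndr,\Rcvr)$, as the figure's subscripts suggest, could be absorbed neither into the shared $\N$ nor into the additive $\X[\Sndr]+\Z[\Rcvr]$.
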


\begin{lemma}$(\hame{2,i,1}$ to $\hame{2,i,2})$\label{lem:H{2,i,1}-H{2,i,2}}
    For any $\ppt$ adversary $\AA$ in the $\fullcpa$ security model, there exists a $\ppt$ adversary $\AB$ such that
{  $|\adv{\AA}{\hame{2,i,1}}-\adv{\AA}{\hame{2,i,2}}|\leq {2}\Adv{\AB,\Gt}{|\QTg|\mhyf\matDH[\nRB,\nCB]}+{\frac{2\times2\nCB}{p-1}}.$}
\end{lemma}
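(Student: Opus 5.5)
This is the standard step from tightly secure Naor--Reingold-style affine MACs (as in the (H)IBE works of Blazy--Kiltz--Pan and Langrehr--Pan): partition the tag randomness according to bit $i+1$ of the keyword. In $\hame{2,i,1}$ every $\OTag$ response uses $\tg[0]\sample\Zp^{\nRB}$. In $\hame{2,i,2}$, for a query $(\Sndrr,\kwd,\Rcvrr)$ with $\msgii[\kwd]{i+1}=\beta$, we sample $\tg[0]$ from the span of $\B_\beta$. Here $\B_0,\B_1\sample\Dk[\nRB,\nCB]$ are two fresh matrices chosen so that $(\B\,\|\,\B_0\,\|\,\B_1)$ is a basis of $\Zp^{\nRB}$; this presumes $\nRB=3\nCB$, matching the notation $\Bzs,\Bos$ and $\nRCB$ in the hybrid description. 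The move from uniform $\tg[0]$ to $\tg[0]\in\mathrm{span}(\B_\beta)$ is done in two steps: first uniform $\to\mathrm{span}(\B_0\|\B_1)$ uniformly, then $\to\mathrm{span}(\B_\beta)$. Alternatively, we can move directly from uniform to $\B_0$-span for $\beta=0$ queries and to $\B_1$-span for $\beta=1$ queries. Each step costs one $|\QTg|\mhyf\matDH[\nRB,\nCB]$ instance, which gives the factor $2$.

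For one such step, the reduction $\AB$ receives $\Tw{\M}$ with $\M\in\Zp^{\nRB\times\nCB}$ and $|\QTg|$ vectors $\Tw{\h[t]}$, each either $\M\w[t]$ or uniform. Using random self-reducibility (Fact3, or the $(\ell-k)$ bound already built into $m\mhyf\matDH$), a $|\QTg|$-fold instance is obtained with loss $\nCB/(p-1)$ per step; together with the statistical switch between sampling $\B_0$ alongside $\M$ and the full-rank conditions, this yields the $\frac{2\cdot 2\nCB}{p-1}$ term. $\AB$ embeds $\M$ as $\B_\beta$, or as the complementary block, and samples $\B$ and the other block itself. It also samples all of $\N[\ii,b]$, $\J[j,\ii,b]$ and the on-the-fly random function $\RFdef{i}$ itself, so it can answer $\OKey$ with $\Tw{\X[j]\B}$, answer $\OCor$, and publish $\ppC$ honestly.

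For $\OTag$ queries, $\AB$ sets $\tg[0]=\h[t]$, padded with a known component where needed. Crucially, $\tg[1]=(\SumiDT[\ii]{1}{\klen}\N[\ii,\cdot]+\X[\Sndrr]+\Z[\Rcvrr])\tg[0]+\RFdef{i}(\msgi[\kwd]{i})\,\cdot$(projection of $\tg[0]$) is linear in $\tg[0]$ with matrices $\AB$ knows in the clear, so it can be computed in $\Gt$ from $\Tw{\tg[0]}$. $\OChal$ responses live in $\Go$ and use only $\Nt,\Xt,\Zt$, $\RFdef{i}$ and fresh $\chl[\ij,0]$, all of which $\AB$ knows, so they are simulated directly; no $\Go$ hardness is needed at this step.

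The obstacle I expect is the random-function term. In $\hame{2,i,1}$, $\RFdef{i}$ enters multiplied by a projection, presumably $(\B^\perp)^\top$, of $\tg[0]$ and of the challenge side. $\AB$ must compute this projection of a group-encoded vector, which is fine because $\B^\perp$ is known over $\Zp$. The thing to check is that the injected term, and the challenge side relative to the span being switched, do not reveal which distribution $\h[t]$ came from. Since $\B$ is sampled by $\AB$ independently of $\M$, the projection is consistent in both worlds, and the hybrids differ exactly by the $\matDH$ distinction. Summing the two steps gives the bound.
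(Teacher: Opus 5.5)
Your per-class route is the paper's proof via the intermediate hybrid $\hame{2,i,1.5}$. It switches the tags with $\kwd_{i+1}=0$ first and those with $\kwd_{i+1}=1$ second. Each time, the $|\QTg|$-fold $\matDH[\nRB,\nCB]$ matrix $\mathbf{M}$ plays the block being switched, and $\AB$ samples $\mathbf{B}$, the other block, $\mathbf{B}^{\perp}$, all $\mathbf{N}$- and $\mathbf{J}$-matrices and the random function itself. So $\tg[1]$ is linear in $\Tw{\tg[0]}$ and $\OChal$ is answered in the clear.

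However, you have the target hybrid wrong, and this is not cosmetic. In $\hame{2,i,2}$ the tag randomness is $\tg[0]=\mathbf{B}\mathbf{r}+\mathbf{B}_{\kwd_{i+1}}\mathbf{r}_{\kwd_{i+1}}$, i.e.\ uniform in $\mathrm{span}(\mathbf{B}\|\mathbf{B}_\beta)$, not in $\mathrm{span}(\mathbf{B}_\beta)$. Hence the ``known component'' you pad with is mandatory, not ``where needed''. The reduction must set $\tg[0]=\mathbf{B}\mathbf{r}+\mathbf{f}_t$ with its own $\mathbf{r}\leftarrow\Zp^{\nCB}$. Then a real instance gives $\tg[0]$ uniform in $\mathrm{span}(\mathbf{B}\|\mathbf{M})$, and a random one gives a uniform vector, which is exactly the paper's simulation. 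Setting $\tg[0]=\mathbf{f}_t$ proves indistinguishability from a different game than the one in the statement.

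Once the $\mathbf{B}$-component is included, your other decomposition breaks down. Its first stage is vacuous, since $\mathrm{span}(\mathbf{B}\|\mathbf{B}_0\|\mathbf{B}_1)=\Zp^{\nRB}$. Its second stage would then have to switch both classes at once, which embedding $\mathbf{M}$ as a single block does not do. So only the per-class route is backed by the reduction you describe and matches the factor $2$.

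Your accounting of the additive term is also off. No random-self-reducibility loss belongs here, because the bound is already stated in terms of the $|\QTg|$-fold advantage. In each of the two steps, $\AB$ samples $\mathbf{B}$ and the other block independently of the challenger's $\mathbf{M}$, whereas the game requires $(\mathbf{B}\|\mathbf{B}_0\|\mathbf{B}_1)$ to generate $\Zp^{\nRB}$. The paper bounds this failure probability by $\frac{2\nCB}{p-1}$ per step, giving $\frac{2\cdot 2\nCB}{p-1}$ in total.
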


\begin{lemma}$(\hame{2,i,2}$ to $\hame{2,i,3})$\label{lem:H{2,i,2}-H{2,i,3}}
    For any adversary $\AA$,
    $|\adv{\AA}{\hame{2,i,2}}-\adv{\AA}{\hame{2,i,3}}|\leq \frac{\nCB}{p}.$
\end{lemma}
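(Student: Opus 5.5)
The statement to prove is \Cref{lem:H{2,i,2}-H{2,i,3}}, moving from $\hame{2,i,2}$ to $\hame{2,i,3}$. I would prove it as a purely information-theoretic re-parametrisation of the random function, with the only loss coming from a basis degenerating.

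First I would fix the structure inherited from $\hame{2,i,2}$. There, every tag component $\tg[0]$ is sampled according to the bit $\msgii[\kwd]{i+1}$. In the standard Naor--Reingold / Blazy--Kiltz--Pan paradigm this means $\tg[0]\sample\mathrm{span}(\B_{\msgii[\kwd]{i+1}})$, where $\B_0,\B_1\sample\Zp^{\nRB\times\nCB}$ are chosen at the start. The matrix $(\B\|\B_0\|\B_1)$ is meant to span $\Zp^{\nRB}$, or at least $\B_0,\B_1$ together with the relevant complement span the space on which $\RFdef{i}$ acts. The random function $\RFdef{i}:\{0,1\}^i\rightarrow\Zp^{\nRAd\times\nRCB}$ enters each tag and challenge only through its product with such vectors, and through the dual basis $(\Bzs\|\Bos)$ on the challenge side.

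The core step is this. Whenever $(\Bzs\|\Bos)$ is an invertible $\nRCB\times\nRCB$ (or full-rank) matrix, any uniformly random value $\RFdef{i}(x)\in\Zp^{\nRAd\times\nRCB}$ can be written uniquely as $\ZFdef{i}(x)\Bzs^{\top}+\OFdef{i}(x)\Bos^{\top}$, or in the corresponding column form. In this decomposition $\ZFdef{i}(x),\OFdef{i}(x)\in\Zp^{\nRAd\times\nCB}$ are independent and uniform. Since both $\RFdef{i}$ and the pair $(\ZFdef{i},\OFdef{i})$ are lazily sampled on the fly, I would simulate $\hame{2,i,3}$ by sampling the pair at each fresh prefix $x$ and defining $\RFdef{i}(x)$ from it. This is a bijection between uniform distributions, so the view of $\AA$, which includes all $\OKey$, $\OCor$, $\OTag$ and $\OChal$ outputs, is identical conditioned on full rank. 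Corrupted keys do not involve $\RFdef{i}$ at all, because the randomness is injected only into tag and challenge answers. This is why this hop needs no computational assumption.

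The remaining task is to bound the probability that the basis is degenerate. The relevant matrix is $\nCB$-dimensional in the sense that the $\nCB$ columns of one of $\B_0,\B_1$ must stay independent modulo the span of the others. I would bound the probability that the $j$-th sampled column falls into the span of the previous ones by $1/p$ each, and take a union bound over the $\nCB$ columns to get $\nCB/p$, which matches the claimed bound. Equivalently, a uniformly random square matrix is singular with probability at most $\nCB/p$.

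The main obstacle is bookkeeping: I must check that the dimensions of $\Bzs,\Bos$ exactly cover the $\nRCB$ coordinates on which $\RFdef{i}$ is visible. If they overlap with $\mathrm{span}(\B)$, the decomposition is no longer a bijection on the relevant part of the view. I would settle this first by writing out how $\RFdef{i}$ appears in $\tg[1]$ and in $\chl[\ij,1]$ in $\hame{2,i,2}$.
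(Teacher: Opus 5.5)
Your proposal is correct and follows the paper's proof. Both rewrite $\RFdef{i}(x)\Bpt$ as $\ZFdef{i}(x)\Bzst+\OFdef{i}(x)\Bost$ with $\ZFdef{i},\OFdef{i}$ independent lazily sampled random functions, charge the only loss, $\nCB/p$, to $(\Bzs\|\Bos)$ failing to span the column space of $\Bp$, and the unique decomposition (as the paper writes it) is of $\RFdef{i}(x)\Bpt$ rather than $\RFdef{i}(x)$, since $(\Bzs\|\Bos)$ is $\nRB\times\nRCB$, not square.
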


\begin{lemma}$(\hame{2,i,3}$ to $\hame{2,i,4})$\label{lem:H{2,i,3}-H{2,i,4}}
    For any $\ppt$ adversary $\AA$ in the $\fullcpa$ security model, there exists a $\ppt$ adversary $\AB$ such that
    {$|\adv{\AA}{\hame{2,i,3}}-\adv{\AA}{\hame{2,i,4}}|\leq \Adv{\AB,\Go}{\prll|\QChl|\mhyf\matDH[\nRA,\nCA]}+ {\frac{|\QChl|+2}{p-1}}.$}
\end{lemma}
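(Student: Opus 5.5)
The plan is to make the two-sided version of the usual randomisation step of \cite{C:BlaKilPan14,AC:HofJiaPan18,PKC:LanPan20} work here. I have not pinned down the exact form of the hybrids, so the description below relies on my reading of them. In $\hame{2,i,3}$ the injected function is $\RFdef{i}(\cdot)=\ZFdef{i}(\cdot)\Bzs^{\top}+\OFdef{i}(\cdot)\Bos^{\top}$ (up to transposition). Here $\Bzs,\Bos$ are chosen so that $\Bzs$ annihilates the span used for tags whose relevant bit is $1$, and $\Bos$ annihilates the span used for tags whose bit is $0$. Every $\tg[0]$ has already been sampled from the span selected by $\msgii[\kwd]{i+1}$ (this was done in $\hame{2,i,2}$).

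The target change is the following. In $\hame{2,i,4}$ the function $\ZFdef{i+1}(x\|0)=\ZFdef{i}(x)$ is kept, while $\ZFdef{i+1}(x\|1)$ is resampled as a fresh, independent value. I would realise this as $\ZFdef{i}(x)+\mathsf{F}'(x)$ for a fresh on-the-fly random function $\mathsf{F}'$.

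\textbf{Step 1 (tags are unaffected).} Every $\OTag$ query with $\msgii[\kwd]{i+1}=1$ has $\tg[0]$ in the span annihilated by the $\ZFdef{}$-component. Hence $\mathsf{F}'$ never appears in any $\tg[1]$. Tags with bit $0$ use $\ZFdef{i+1}(x\|0)=\ZFdef{i}(x)$, which is unchanged. The $\OKey/\OCor$ outputs $\Tw{\X[j]\B},\Tw{\Z[j]\B}$ do not involve $\RFdef{i}$ at all, so they are untouched. Consequently the change is visible only through $\OChal$ answers, where the $\ZFdef{}$-term enters as $\Bzs\,\ZFdef{}(\cdot)^{\top}\chl[\ij,0]$.

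\textbf{Step 2 (computational switch on the challenges).} Using the parallel $|\QChl|$-fold $\matDH[\nRA,\nCA]$ instance in $\Go$, I switch each $\chl[\ij,0]$ from uniform in $\Zp^{\nRAd}$ to $\chl[\ij,0]\in\mathrm{span}(\A_{\beta})$, where $\beta=\msgii[\kw]{i+1}$ and $\A_0,\A_1$ are two independent matrices embedded from the instance. The reduction knows $\B$, all $\N[\ii,b]$, the $\J$'s and the random functions, so it can simulate everything else in the clear. This step costs $\Adv{\AB,\Go}{\prll|\QChl|\mhyf\matDH[\nRA,\nCA]}$. I also switch back at the end; that reverse switch is the counterpart of this one and should be covered by the same parallel instance.

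\textbf{Step 3 (statistical step, the main obstacle).} With the challenges living in $\mathrm{span}(\A_0)$ or $\mathrm{span}(\A_1)$, I decompose $\ZFdef{i}(x)^{\top}=\mathsf{G}_0(x)\A_0^{\perp\top}+\mathsf{G}_1(x)\A_1^{\perp\top}$, assuming $(\A_0\|\A_1)$ is invertible; this fails with probability at most $1/(p-1)$. Bit-$0$ challenges then see only $\mathsf{G}_1(x)$ and bit-$1$ challenges see only $\mathsf{G}_0(x)$. The tags, which live in $\Gt$ and are projected by $\Bzs$, see $\ZFdef{i}$ only on bit-$0$ prefixes. So I must argue that, jointly, the bit-$1$ challenge view is independent of the rest, which makes adding $\mathsf{F}'(x)$ on bit-$1$ prefixes a perfect change. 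Making this rigorous is where I expect most of the work. One has to check that the tags' dependence on $\ZFdef{i}(x)$ through $\Bzs$ does not correlate with $\mathsf{G}_0(x)$. I would handle this by sampling $\ZFdef{i}(x)$ as the pair $(\mathsf{G}_0(x),\mathsf{G}_1(x))$ and arguing that the bit-$0$ tags only ever touch $\ZFdef{i}(x\|0)$. The bit-$1$ outputs then use only a component that can be freshly resampled. The remaining $|\QChl|+1$ terms of order $1/(p-1)$ come from the event that some sampled $\chl[\ij,0]$ or basis fails to have full rank.

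\textbf{Step 4 (switch back and sum).} After the statistical step I undo the Step 2 switch, returning all $\chl[\ij,0]$ to uniform. The non-triviality conditions ($\QTg\cap\QChl=\emptyset$, and no corruption of users appearing in tags or challenges) are only needed so that the reduction never has to answer a corruption query touching the embedded randomness. Summing the computational and statistical losses gives the claimed bound $\Adv{\AB,\Go}{\prll|\QChl|\mhyf\matDH[\nRA,\nCA]}+\frac{|\QChl|+2}{p-1}$.
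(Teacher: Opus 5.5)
\textbf{Step 3 fails, so the proposed route does not prove the lemma.} Your Step 1 is correct and matches the paper's observation. The statistical claim in Step 3, however, is false, not merely hard to make rigorous. Your split $\ZFdef{i}(x)^{\top}=\mathsf{G}_0(x)\A_0^{\perp\top}+\mathsf{G}_1(x)\A_1^{\perp\top}$ only controls which part of $\ZFdef{i}(x)$ a challenge reaches through $\chl[\ij,0]$. A bit-$0$ tag contributes the full vector $\ZFdef{i}(x)\Bzst\tg[0]\in\Zp^{\nRAd}$, which depends on $\mathsf{G}_0(x)$ as well, and no choice of challenge-side subspaces changes this.

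Here is a concrete unbounded distinguisher. Fix an uncorrupted pair $(\Sndr,\Rcvr)$ and let $M(w)=\sum_{\ii}\N[\ii,w_{\ii}]+\X[\Sndr]+\Z[\Rcvr]$.
\begin{itemize}
\item Even after your switch, repeated $\OChal$ queries on $(\Sndr,x\|1\|y,\Rcvr)$, which the model allows, reveal $\A_1^{\top}(M(x\|1\|y)+\ZFdef{i+1}(x\|1)\Bzst)$ on $\Span(\B,\Bz)$. The $\Bos$-term vanishes there.
\item Polynomially many $\OTag$ queries on $(\Sndr,x\|0\|y',\Rcvr)$ with varying suffixes $y'$ determine $M(x\|0\|y)+\ZFdef{i}(x)\Bzst$ on the same space by linear algebra.
\item Applying $\A_1^{\top}$ and subtracting leaves $\A_1^{\top}(\N[i+1,1]-\N[i+1,0])+\A_1^{\top}(\ZFdef{i+1}(x\|1)-\ZFdef{i}(x))\Bzst$.
\end{itemize}
The first term does not depend on the prefix or the pair. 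The second is zero in $\hame{2,i,3}$ but fresh for each prefix in $\hame{2,i,4}$. Comparing two prefixes (or two pairs if $i=0$) therefore separates the games information-theoretically. The root cause is that the hidden quantity separating the bit-$1$ side from the bit-$0$ side, the $\Bzst$-component of $\N[i+1,1]$, is a single matrix shared by all prefixes. It cannot supply an independent value per $(\Sndr,\msgi[\kw]{i},\Rcvr)$; that randomness has to come from the assumption.

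\textbf{What the paper does instead.} It takes the per-prefix randomness directly from the $\prll|\QChl|$-fold $\matDH[\nRA,\nCA]$ instance in a single hop, with no statistical step and no switch back.
\begin{itemize}
\item It writes each block as $\F[c]=\iCol{\Mu\K[c]}{\Md\K[c]+\R[c]}$, one block per distinct $(\Sndr,\msgi[\kw]{i},\Rcvr)$.
\item It sets $\chl[\ij,0]=\Fu[c]\chl[\ij,0]'$, which is uniform in both worlds, so challenge vectors are never moved into subspaces.
\item When $\kw_{i+1}=1$ it adds $\Bzs\Fd[c]\chl[\ij,0]'$ to $\chl[\ij,1]$, implicitly programming $\N[i+1,1]=\hN[i+1,1]+\Mu^{-\top}\Mdt\Bzst$.
\end{itemize}
This programming is invisible in $\ppC$ and in bit-$1$ tags, because $\Bzst\B=\Zro$ and $\Bzst\tg[0]=0$ for $\tg[0]\in\Span(\B,\Bo)$. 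Bit-$0$ objects never touch $\N[i+1,1]$. If $\R[c]=\Zro$, the added term is exactly $\Bzs\Md\Mu^{-1}\chl[\ij,0]$, which gives $\hame{2,i,3}$. If $\R[c]$ is uniform, there is an extra $\Bzs\R[c]\Fu[c]^{-1}\chl[\ij,0]$, a fresh matrix per block, which gives $\hame{2,i,4}$. That single use of the assumption is why the bound has one advantage term.

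\textbf{Two further problems with your plan.}
\begin{itemize}
\item The instance's vectors live in $\Zp^{\nRA}$, not $\Zp^{\nRAd}$. So it cannot give proper subspaces $\Span(\A_0),\Span(\A_1)$ of $\Zp^{\nRAd}$ with $(\A_0\|\A_1)$ invertible; you would need a different, smaller MDDH instance.
\item Your Steps 2 and 4 are two separate computational hops, so they cost two advantage terms, not the one claimed.
\end{itemize}
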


\begin{lemma}$(\hame{2,i,4}$ to $\hame{2,i,5})$\label{lem:H{2,i,4}-H{2,i,5}}
    For any $\ppt$ adversary $\AA$ in the $\fullcpa$ security model, there exists a $\ppt$ adversary $\AB$ such that
{$|\adv{\AA}{\hame{2,i,4}}-\adv{\AA}{\hame{2,i,5}}|\leq \Adv{\AB,\Go}{\prll|\QChl|\mhyf\matDH[\nRA,\nCA]}+ {\frac{|\QChl|+2}{p-1}}.$}
\end{lemma}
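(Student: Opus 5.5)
We have to bound $|\adv{\AA}{\hame{2,i,4}}-\adv{\AA}{\hame{2,i,5}}|$, the step that splits the second component function $\OFdef{i}$ (the part of $\RFdef{i}$ living in the span of $\Bos$) into a function $\OFdef{i+1}$ of the $(i+1)$-bit prefix. The plan is to mirror the argument for \Cref{lem:H{2,i,3}-H{2,i,4}}, with the roles of $\Bzs$ and $\Bos$ swapped. The $\OChal$ responses should be the only place where this change shows up in a computationally detectable way.

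\textbf{Step 1: write the two games as one family.} In $\hame{2,i,4}$ and $\hame{2,i,5}$, each $\tg[0]$ is already sampled from $\mathrm{span}(\Bzs)$ or $\mathrm{span}(\Bos)$ according to the bit $\msgii[\kwd]{i+1}$; this was done in $\hame{2,i,2}$. For a trapdoor with $\msgii[\kwd]{i+1}=b$, the term $\OFdef{\cdot}(\cdot)\Bos^{\top}\tg[0]$ is therefore nonzero only when $b=1$. This is where the bit-dependence lives.

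We then define $\OFdef{i+1}(\tau\|b)=\OFdef{i}(\tau)+b\cdot\hat{\mathsf{F}}(\tau)$ with a fresh on-the-fly random function $\hat{\mathsf{F}}$. On the trapdoor side this is distributed identically to the $\hame{2,i,5}$ definition. On the challenge side, each $\chl[\ij,1]$ has the form $(\ldots+\RFdef{}(\cdot)\ldots)\chl[\ij,0]$. What changes between the games is an extra term $\hat{\mathsf{F}}(\msgi[\kw]{i})\Bos^{\top}\chl[\ij,0]$, added exactly when $\msgii[\kw]{i+1}=1$.

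\textbf{Step 2: hide this term with $\prll|\QChl|\mhyf\matDH[\nRA,\nCA]$ in $\Go$.} The reduction $\AB$ embeds the parallel instance into the vectors $\chl[\ij,0]$, one row-block per challenge query. It knows the $\N[\ii,b]$ and the $\J[j,\ii,b]$ in the clear, and it knows $\B$ together with its kernel/dual basis $\Bzs,\Bos$. With these it can:
\begin{itemize}
\item answer every $\OKey$ and $\OCor$ query exactly;
\item answer every $\OTag$ query exactly, using its knowledge of the random functions;
\item answer each $\OChal$ query in the exponent of $\Go$, linearly in the instance.
\end{itemize}
When the instance is real (in the span), the extra $\hat{\mathsf F}$ contribution either collapses, or is absorbed by re-randomizing $\OFdef{i}$, which is the well-known Naor--Reingold/Hofheinz-style linear step. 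When the instance is uniform, the contribution becomes an independent random value. Combined with a statistical argument, this gives exactly the $\hame{2,i,5}$ distribution. The additive error $\frac{|\QChl|+2}{p-1}$ accounts for two things: the random self-reduction loss of the parallel fold, and the probability that a sampled basis or a matrix such as $\Au$ fails to be full rank.

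\textbf{Main obstacle.} The hard part is consistency with the trapdoors. Trapdoors with $\msgii[\kwd]{i+1}=1$ directly reveal $\OFdef{i+1}(\msgi[\kwd]{i}\|1)\Bos^{\top}\tg[0]$. We therefore have to argue that, conditioned on all trapdoors and on all leaked projections $\Tw{\X[j]\B}$, the value $\hat{\mathsf F}(\tau)$ seen through the challenges is still fresh. This is where the restriction $\QTg\cap\QChl=\emptyset$, together with the ban on corrupting users that appear in challenges, must be used. I would first check that a collision on a prefix $\tau\|1$ shared between a tag and a challenge only matters once the entire tuple $(\Sndr,\kw,\Rcvr)$ has been consumed in the later hybrids. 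At stage $i$ it suffices that the trapdoor sees $\hat{\mathsf F}$ only through $\Bos^{\top}\tg[0]$, while the challenges, after the $\matDH$ switch, see it through an independent uniform vector.
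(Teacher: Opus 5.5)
\paragraph{Gap: the fresh function sits on the wrong branch.}
Your Step~1 attaches the fresh function to the branch where tags can see it, and this breaks the reduction.

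\begin{itemize}
\item For $\kwd_{i+1}=1$ the hybrid samples $\tg[0]=\B\r+\Bo\r_{1}\in\Span(\B||\Bo)$. It is not a vector in the span of $\Bos$, as you wrote. Hence $\Bost\tg[0]=\Bost\Bo\r_{1}\neq\Zro$, so the bit-$1$ branch is exactly the one on which tags \emph{see} the $\Bos$-component.
\item With $\OFdef{i+1}(\tau\|b)=\OFdef{i}(\tau)+b\cdot\hat{\mathsf F}(\tau)$, consider any bit-$1$ tag whose prefix also occurs in a bit-$1$ challenge. That tag must contain $\hat{\mathsf F}(\tau)\Bost\tg[0]$ in $\Gt$. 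Your $\Go$-reduction, however, only defines $\hat{\mathsf F}$ implicitly through the instance, via $\R[c]\Fu[c]^{-1}$.
\item Said differently: in the real case the added term $\Bos\Fd[c]\chl[\ij,0]'=\Bos\Md\Mu^{-1}\chl[\ij,0]$ is one fixed linear shift applied to all bit-$1$ challenges. It cannot be absorbed into $\OFdef{i}$, which is shared by both branches. It can only be absorbed into $\N[i+1,1]$. But then bit-$1$ tags would need $\Mu^{-\top}\Mdt\Bost\tg[0]\neq\Zro$ in $\Gt$, which $\AB$ cannot compute.
\end{itemize}

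So your claim that $\AB$ answers every $\OTag$ query exactly fails in your setup. The ``main obstacle'' paragraph does not repair it. The restrictions $\QTg\cap\QChl=\emptyset$ and the corruption ban play no role in this step. In $\hame{2,i,5}$, a tag and a challenge with the same $(\Sndr,\msgi[\kw]{i+1},\Rcvr)$ prefix legitimately share the same value of $\OFdef{i+1}$. Freshness relative to tags is therefore neither needed nor true. Disjointness is used only once the whole tuple has been absorbed, in the step to $\hame{5}$.

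\paragraph{Missing idea: swap the roles of $0$ and $1$.}
The paper swaps $0$ and $1$ relative to the previous lemma: it sets $\OFdef{i+1}(\tau\|1)=\OFdef{i}(\tau)$ and $\OFdef{i+1}(\tau\|0)=\OFdef{i}(\tau)+\hat{\mathsf F}(\tau)$. Because $\tg[0]\in\Span(\B||\Bz)$ whenever $\kwd_{i+1}=0$, and $\Bt\Bos=\Bzt\Bos=\Zro$, the new term is annihilated on every tag that uses it. The $\OTag$ responses are therefore literally identical in both games.

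The reduction then proceeds as follows.
\begin{itemize}
\item It implicitly sets $\N[i+1,0]=\hN[i+1,0]+\Mu^{-\top}\Mdt\Bost$. This is invisible in $\ppC$ since $\Bost\B=\Zro$, and invisible in bit-$0$ tags since $\Bost\tg[0]=\Zro$.
\item It samples $\chl[\ij,0]=\Fu[c]\chl[\ij,0]'$, where $c$ indexes the distinct tuple $(\Sndr,\msgi[\kw]{i},\Rcvr_{\perm(\ij)})$ rather than the query. This way repeated prefixes receive a consistent $\hat{\mathsf F}$.
\item It adds $\Bos\Fd[c]\chl[\ij,0]'$ whenever $\kw_{i+1}=0$.
\end{itemize}
A real instance yields $\hame{2,i,4}$. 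A random one yields $\hame{2,i,5}$, with $\hat{\mathsf F}$ implicitly given by $\R[c]\Fu[c]^{-1}$.

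The $\frac{|\QChl|+2}{p-1}$ term comes from invertibility of $\Mu$ and of the $\Fu[c]$. It does not come from $\Au$, which does not appear in the core lemma.
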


\begin{lemma}$(\hame{2,\klen,1}$ to $\hame{3,\klen,1})$\label{lem:H{2,k,1}-H{3,k,1}}
    For any adversary $\AA$,
    $\adv{\AA}{\hame{2,\klen,1}}=\adv{\AA}{\hame{3,\klen,1}}.$
\end{lemma}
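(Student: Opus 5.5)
We need to show that $\hame{2,\klen,1}$ and $\hame{3,\klen,1}$ are identically distributed. The plan is a purely syntactic, definitional argument: no assumption is invoked and no statistical loss is incurred.

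First I will write out the oracle responses in $\hame{2,\klen,1}$. After all $\klen$ keyword bits have been consumed, every $\OTag$ response on $(\Sndrr,\kwd,\Rcvrr)$ has the form $\tg[1]=(\SumiDT[\ii]{1}{\klen}\N[\ii,{\msgii[\kwd]{\ii}}]+\X[\Sndrr]+\Z[\Rcvrr])\tg[0]+\RFdef{\klen}(\kwd)\cdot(\cdot)\tg[0]$. Here the injected term is a random function of the full keyword $\kwd\in\{0,1\}^{\klen}$, living in the span fixed in $\hame{2,\klen,1}$. Every $\OChal$ component has the analogous form, with the transposed function evaluated at $\kw$ and multiplied by $\chl[\ij,0]$.

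Second I will write out $\hame{3,\klen,1}$. By the overview, it injects randomness through $\SRFdef{\klen}:\{0,1\}^{\klen}\rightarrow\Zp^{\nRAd\times\nRCB}$, with no sender bits consumed yet, since the index is $\klen+0$. The function $\SRFdef{\klen}$ is indexed by the first $\klen$ bits of the concatenated string $\kw\|\Sndr$, which is exactly $\kw$. So the games differ only in the name of the on-the-fly random function.

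I would then exhibit the bijection $\SRFdef{\klen}(\kw\|\dontcare^{\ulen}) := \RFdef{\klen}(\kw)$. Both functions are lazily sampled, uniformly and independently, from the same range on the same domain $\{0,1\}^{\klen}$. Under this renaming every $\OTag$, $\OChal$, $\OKey$ and $\OCor$ response is the same random variable jointly in the two games, so $\adv{\AA}{\hame{2,\klen,1}}=\adv{\AA}{\hame{3,\klen,1}}$.

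The only point needing care, which I expect to be the main obstacle, is checking that the two games agree in how $\OKey$ and $\OCor$ outputs are produced and in the span (via $\B$ and its complement) in which the random function is embedded. Concretely, $\X[j]$ and $\Z[j]$ must still be sampled as in $\hame{2,\klen,1}$ and must not yet have been modified for the sender-bit hybrids. I would verify this directly against the game definitions in \Cref{fig:Core-Hybrid} and \Cref{fig:CoreS-Hybrid}, confirming that $\hame{3,\klen,1}$ is defined as the starting point of the sender loop with the identical state.
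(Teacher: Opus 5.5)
Your proposal is correct and matches the paper's argument. The paper simply defines $\hame{3,\klen,1}$ to be $\hame{2,\klen,1}$ as a stepping stone: $\SRFdef{\klen}$ evaluated on $\kw\|\msgi[\Sndr]{0}=\kw$ is just a renaming of $\RFdef{\klen}$, so no assumption or statistical loss is needed. One detail to tighten is that in the paper these on-the-fly functions are also indexed by the pair $(\Sndrr,\Rcvrr)$ (e.g.\ $\RF[{\Sndrr,\Rcvrr}]{\klen}{\kwd}$), so your renaming should be done per pair; this does not change the argument.
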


\begin{lemma}$(\hame{3,\klen+i,1}$ to $\hame{3,\klen+i,2})$\label{lem:H{3,i,1}-H{3,i,2}}
    For any $\ppt$ adversary $\AA$ in the $\fullcpa$ security model, there exists a $\ppt$ adversary $\AB$ such that
{  $|\adv{\AA}{\hame{3,\klen+i,1}}-\adv{\AA}{\hame{3,\klen+i,2}}|\leq {2}\Adv{\AB,\Gt}{|\QTg|\mhyf\matDH[\nRB,\nCB]}+{\frac{2\times2\nCB}{p-1}}.$}
\end{lemma}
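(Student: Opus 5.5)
Since $\hame{3,\klen+i,2}$ differs from $\hame{3,\klen+i,1}$ only in how the trapdoor tags $\tg[{0}]$ are sampled, the natural plan is to mirror the proof of \Cref{lem:H{2,i,1}-H{2,i,2}}, with the sender-bit $\msgii[\Sndrr]{i+1}$ playing the role that the keyword bit $\msgii[\kwd]{i+1}$ played there. In $\hame{3,\klen+i,1}$ each $\OTag$ response uses $\tg[{0}]\sample\Zp^{\nRB}$, which is uniform. In $\hame{3,\klen+i,2}$ it is sampled from the span of $\Bzs$ when $\msgii[\Sndrr]{i+1}=0$ and from the span of $\Bos$ when $\msgii[\Sndrr]{i+1}=1$, for two independent matrices $\Bzs,\Bos\sample\Dk[\nRB,\nCB]$. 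The random function $\SRFdef{\klen+i}$ is handled identically in both games, so the reduction only needs to change the distribution of the $\tg[{0}]$'s.

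I would use two intermediate steps.
\begin{enumerate}
\item[(a)] Replace the uniform $\tg[{0}]$ by $\Bzs\r$ for all tags with $\msgii[\Sndrr]{i+1}=0$, leaving the others uniform.
\item[(b)] Do the same with $\Bos\r$ for the tags with $\msgii[\Sndrr]{i+1}=1$.
\end{enumerate}
Each step gives a reduction $\AB$ to the $|\QTg|\mhyf\matDH[\nRB,\nCB]$ problem in $\Gt$. By random self-reducibility (\Fact3), this yields the stated $2\Adv{\AB,\Gt}{|\QTg|\mhyf\matDH[\nRB,\nCB]}$ term. The additive $\frac{2\times2\nCB}{p-1}$ comes from the loss in the self-reduction and from the event that the sampled $\Bzs,\Bos$ fail to be full rank or fail to span $\Zp^{\nRB}$ together.

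For step (a), $\AB$ receives $\Tw{\Bzs}$ together with challenge vectors $\Tw{\f[1]},\ldots,\Tw{\f[|\QTg|]}$. It samples everything else itself:
\begin{itemize}
\item $\B$, the $\N[\ii,b]$, and all $\J[j,\ii,b]$, so $\X[j],\Z[j]$ are known in the clear;
\item $\Bos$ and the on-the-fly random functions.
\end{itemize}
With this, $\ppC$, $\OKey$, $\OCor$ and $\OChal$ are all simulated exactly, since $\OChal$ only needs $\Nt,\Xt,\Zt$ in the clear and group-$\Go$ randomness. For the $t$-th $\OTag$ query with $\msgii[\Sndrr]{i+1}=0$, it sets $\Tw{\tg[{0}]}=\Tw{\f[t]}$. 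It then computes $\Tw{\tg[{1}]}$ by multiplying the known matrix $\SumiDT[\ii]{1}{\klen}\N[\ii,{\msgii[\kwd]{\ii}}]+\X[\Sndrr]+\Z[\Rcvrr]$, together with the random-function term in $\SRFdef{\klen+i}$ evaluated on the relevant prefix, against $\Tw{\f[t]}$ in the exponent. This is linear, so it works over $\Gt$.

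The main obstacle I anticipate is the consistency of the injected randomness term. In the hybrids $\hame{3,\cdot,\cdot}$ the function $\SRFdef{\klen+i}$ is applied through the structure $\Bzs||\Bos$ (or through $\B^{\perp}$-style projections). The reduction must therefore compute that term using only $\Tw{\Bzs}$ and the challenge vector, without knowing $\Bzs$ in the clear.

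My first attempt would be to observe that at stage $\hame{3,\klen+i,1}$ the random-function term is still of the form $\SRFdef{\klen+i}(\cdot)\cdot(\text{a fixed matrix known to }\AB)\cdot\tg[{0}]$. It is therefore computable in the exponent of $\Gt$, as long as that fixed matrix (e.g.\ a kernel matrix of $\B$) is chosen by $\AB$ itself rather than depending on $\Bzs$. If instead the term depends on $\Bzs$ only through its span, I would sample $\Bzs$-related kernel vectors by having $\AB$ choose $\B$ and then argue that statistically only $\Tw{\Bzs}$ is needed. Either way, the two steps combine by the triangle inequality to give the bound.
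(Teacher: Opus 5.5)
Your two-step decomposition is the right one: first switch the tags with $\msgii[\Sndrr]{i+1}=0$, then those with $\msgii[\Sndrr]{i+1}=1$, each by a reduction to the $|\QTg|$-fold $\matDH[\nRB,\nCB]$ problem in $\Gt$. The gap is that you have misread the target hybrid, so your reduction lands in the wrong game.

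In $\hame{3,\klen+i,2}$ the tag is $\tg[{0}]=\B\r+\B_{\Sndrr_{i+1}}\r_{\Sndrr_{i+1}}$. That is, it is uniform in $\Span(\B\|\Bz)$ or in $\Span(\B\|\Bo)$, where $(\B,\Bz,\Bo)$ is a basis of $\Zp^{\nRB}$. The starred matrices are not independent samples. They satisfy $\Bt\Bzs=\Bot\Bzs=\Zro$ and $\Bt\Bos=\Bzt\Bos=\Zro$, they span the column space of $\Bp$, and they only enter from $\hame{3,\klen+i,3}$ on.

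Your step (a) sets $\tg[{0}]=\f[t]$ with the challenge matrix playing $\Bzs$. This has two problems.
\begin{itemize}
\item It drops the $\B\r$ component. Even if you meant $\Bz,\Bo$, you reach tags in $\Span(\Bz)$ rather than $\Span(\B\|\Bz)$, which is not $\hame{3,\klen+i,2}$.
\item It cannot be fitted into the hybrid chain. A challenge matrix $\M$ independent of the $\B$ you publish in $\ppC$ will, except with negligible probability, not satisfy $\Bt\M=\Zro$. The transitions to $\hame{3,\klen+i,4}$ and $\hame{3,\klen+i,5}$ rely on $\Bzst(\B\r+\Bo\r_{1})=\Zro$ and $\Bost(\B\r+\Bz\r_{0})=\Zro$. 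These identities are what leave the tags unaffected when the $\Bzs$-part (resp.\ $\Bos$-part) of the random function is extended by one bit. They hold for the basis/dual-basis structure, not for tags lying in $\Span(\Bzs)$ or $\Span(\Bos)$.
\end{itemize}
Your account of the additive term is also off. $\Bzs\|\Bos$ has only $2\nCB$ columns and spans only the $\nRCB$-dimensional column space of $\Bp$, never $\Zp^{\nRB}$. No random self-reduction is involved either, since the lemma is already stated for the $|\QTg|$-fold problem.

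The repair is the paper's argument, which defers to \Cref{lem:H{2,i,1}-H{2,i,2}}. Introduce $\hame{3,\klen+i,1.5}$, in which only the tags with $\msgii[\Sndrr]{i+1}=0$ are drawn from $\Span(\B\|\Bz)$.
\begin{itemize}
\item The reduction samples $\B$, $\Bo$, and $\Bp$ with $\Bt\Bp=\Zro$ itself, and embeds the challenge as $\Bz:=\M$.
\item It answers the $t$-th $\OTag$ query with $\tg[{0}]=\B\r+\f[t]$ if $\msgii[\Sndrr]{i+1}=0$, and with a fresh uniform vector otherwise.
\item It computes $\tg[{1}]$ in $\Gt$ from the known $\N[\ii,b]$, $\X[\Sndrr]$, $\Z[\Rcvrr]$, $\Bp$ and the lazily sampled random function. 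It simulates $\ppC$, $\OKey$, $\OCor$ and $\OChal$ honestly.
\end{itemize}
If $\f[t]=\M\x$ this is $\hame{3,\klen+i,1.5}$. If $\f[t]$ is uniform then so is $\tg[{0}]$, which is $\hame{3,\klen+i,1}$. The only statistical loss is the event that $(\B\|\M\|\Bo)$ is not of full rank, which has probability at most $\frac{2\nCB}{p-1}$. Repeating with $\M$ embedded as $\Bo$ gives the second step and the stated bound.

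Your concern about the injected randomness then disappears. In both hybrids that term is $\SRF[{\Sndrr,\Rcvrr}]{\klen+i}{\kwd\!\parallel\!\msgi[\Sndrr]{i}}\Bpt\tg[{0}]$, and $\Bp$ is chosen by $\AB$.
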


\begin{lemma}$(\hame{3,\klen+i,2}$ to $\hame{3,\klen+i,3})$\label{lem:H{3,i,2}-H{3,i,3}}
    For any adversary $\AA$,
    $|\adv{\AA}{\hame{3,\klen+i,2}}-\adv{\AA}{\hame{3,\klen+i,3}}|\leq \frac{\nCB}{p}.$
\end{lemma}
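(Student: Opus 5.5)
The plan is to argue, exactly as for \Cref{lem:H{2,i,2}-H{2,i,3}}, that $\hame{3,\klen+i,2}$ and $\hame{3,\klen+i,3}$ are identical except on a bad event concerning the matrices sampled in $\hame{3,\klen+i,2}$. That bad event has probability at most $\nCB/p$. The only change between the two games is syntactic: the random function $\SRFdef{\klen+i}:\{0,1\}^{\klen+i}\rightarrow\Zp^{\nRAd\times\nRCB}$ is re-expressed through two new on-the-fly functions $\SZFdef{\klen+i},\SOFdef{\klen+i}:\{0,1\}^{\klen+i}\rightarrow\Zp^{\nRAd\times\nCB}$.

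\textbf{Step 1: the bases.} In $\hame{3,\klen+i,2}$ every $\tg[0]$ is drawn from the span of $\B_0$ or $\B_1$, depending on the bit $\msgii[\Sndr]{i+1}$. Here $\B_0,\B_1$ are sampled independently. I would fix the dual vectors $\Bzs,\Bos$ so that $\B_b^\top\B^{*}_{1-b}=\mathbf{0}$, and so that $(\B\,\|\,\Bzs\,\|\,\Bos)$ has the right dimensions. If $(\Bzs\,\|\,\Bos)$ spans the full space in which the columns of $\SRFdef{\klen+i}(\cdot)$ are meant to live, then every value $\SRFdef{\klen+i}(x)$ decomposes uniquely as
\[
\SRFdef{\klen+i}(x)=\SZFdef{\klen+i}(x)\Bzst+\SOFdef{\klen+i}(x)\Bost .
\]
Moreover, a uniformly random $\SRFdef{\klen+i}(x)$ corresponds bijectively to a uniformly random pair $(\SZFdef{\klen+i}(x),\SOFdef{\klen+i}(x))$. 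Since both sides are lazily sampled, this is a perfect change of variables for every queried input, across all $\OTag$, $\OChal$ and $\OKey$/$\OCor$ answers.

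\textbf{Step 2: the bad event.} The decomposition fails only if the relevant concatenated matrix built from $\B_0,\B_1$ (equivalently from $\Bzs,\Bos$) is not of full rank. I would bound this with the standard Schwartz--Zippel style estimate: a uniformly sampled square matrix over $\Zp$ is singular with probability at most $\sum_{t}p^{-t}$. The $\nCB$ freshly sampled columns each fall into the span of the preceding ones with probability at most $1/p$, and a union bound over these $\nCB$ columns gives $\nCB/p$. Conditioned on full rank, the view of $\AA$ in the two games is identically distributed, so $|\adv{\AA}{\hame{3,\klen+i,2}}-\adv{\AA}{\hame{3,\klen+i,3}}|\le \nCB/p$.

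\textbf{Main obstacle.} The only delicate point is checking that no oracle in the sender phase uses $\SRFdef{\klen+i}$ outside the span decomposition. In particular, corrupted users' $\sky_j$ must not depend on it. This holds by the freshness restriction in \Cref{fig:Core-Lemma}, as in the keyword phase. Given this, the lemma is the verbatim analogue of \Cref{lem:H{2,i,2}-H{2,i,3}}, with the prefix length $\klen+i$ in place of $i$.
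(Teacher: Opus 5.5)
Your proposal is correct and follows the paper's argument. The paper's proof defers to \Cref{lem:H{2,i,2}-H{2,i,3}} with $(\klen+i)$-bit inputs: it rewrites the lazily sampled values as $\SRFdef{\klen+i}(x)\Bpt=\SZFdef{\klen+i}(x)\Bzst+\SOFdef{\klen+i}(x)\Bost$ with independent random $\SZFdef{\klen+i},\SOFdef{\klen+i}$, and charges $\nCB/p$ for the event that $(\Bzs\|\Bos)$ fails to generate the span of $\Bp$.

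Two cosmetic points. It is $\SRFdef{\klen+i}(x)\Bpt$, not $\SRFdef{\klen+i}(x)$ itself, that gets decomposed. Your ``main obstacle'' is not really one: the random function never enters the $\OKey$/$\OCor$ outputs in these games, so no freshness restriction is needed.
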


\begin{lemma}$(\hame{3,\klen+i,3}$ to $\hame{3,\klen+i,4})$\label{lem:H{3,i,3}-H{3,i,4}}
    For any $\ppt$ adversary $\AA$ in the $\fullcpa$ security model, there exists a $\ppt$ adversary $\AB$ such that
    {$|\adv{\AA}{\hame{3,\klen+i,3}}-\adv{\AA}{\hame{3,\klen+i,4}}|\leq \Adv{\AB,\Go}{\prll|\QChl|\mhyf\matDH[\nRA,\nCA]}+ {\frac{|\QChl|+2}{p-1}}.$}
\end{lemma}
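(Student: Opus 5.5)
The proof will mirror the one for \Cref{lem:H{2,i,3}-H{2,i,4}}. The only difference is that the bit consumed is now the $(i+1)$-st bit of the sender identity, $\msgii[\Sndr]{i+1}$, instead of a keyword bit. Accordingly the extra entropy comes from the sender's $\J[\Sndr,i+1,b]$ blocks inside $\X[\Sndr]$, instead of from $\N[i+1,b]$.

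In $\hame{3,\klen+i,3}$ the injected function $\SRFdef{\klen+i}$ is written in the basis $(\Bzs||\Bos)$ through $\SZFdef{\klen+i}$ and $\SOFdef{\klen+i}$. The tags' $\tg[0]$ already lie in $\mathrm{span}(\B_{\msgii[\Sndrr]{i+1}})$, which was arranged in $\hame{3,\klen+i,2}$. I split the transition into three steps.

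\emph{Step 1 (computational).} Every challenge component $\chl[\ij,0]$ of a query with $\msgii[\Sndr]{i+1}=1$ is moved from uniform to a structured distribution: it is set to $\A'\w$ for a fresh $\w$, where $\A'$ is a matrix the reduction knows. The reduction embeds a $\prll|\QChl|$-fold $\matDH[\nRA,\nCA]$ instance in $\Go$. It programs all queries in parallel using the random self-reducibility of \cite{JC:EHKRV17}, which accounts for the $1/(p-1)$-type terms. The reduction can do this because it samples all $\N[\ii,b]$, $\J[j,\ii,b]$ and $\B$ itself. It therefore answers $\OKey$, $\OCor$ and $\OTag$ honestly in $\Gt$, and computes $\chl[\ij,1]$ from $\On{\chl[\ij,0]}$ by linearity.

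\emph{Step 2 (statistical).} I redefine $\SZFdef{\klen+i+1}(\tau||b)=\SZFdef{\klen+i}(\tau)+b\cdot \RF{}{\tau||1}$ for a fresh on-the-fly $\RF{}{}$. This is justified by a change of variables on $\J[\Sndr,i+1,1]$: replace it by $\J[\Sndr,i+1,1]+\RF{}{}\cdot(\Bzs)^{\bot}$-type terms. These terms are invisible to anything multiplied by $\B$. That covers $\Tw{\X[j]\B}$ from $\OKey$, and also $\OTag$ answers whose $\tg[0]\in\mathrm{span}(\B_{0})$. Only challenge answers see them, and there they are absorbed because $\chl[\ij,0]$ has the right structure. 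Corrupted users never appear in $\QTg$ or $\QChl$ (the final check of $\Exp{\AA}{\Core}{\bee}$). Their $\sky_j=(\Tw{\X[j]\B},\Tw{\Z[j]\B})$ thus only involves $\B$-projections, and the substitution does not change their distribution.

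\emph{Step 3 (computational).} I undo Step 1 with the same $\prll|\QChl|$-fold $\matDH$ reduction, returning $\chl[\ij,0]$ to uniform. The losses are one $\matDH$ advantage plus $(|\QChl|+2)/(p-1)$; the statistical step is a perfect change of variables except on a rank-deficiency event, which is already counted there.

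The main obstacle is Step 2. The sender bits act on per-user secrets $\J[\Sndr,\cdot,\cdot]$, not on public shared $\N[\ii,b]$. So I must check that, for a fixed prefix $\tau$, all queries sharing the same $\kw||\msgi[\Sndr]{i}$ but with different senders receive consistent increments. I would handle this by defining $\RF{}{}$ on the full prefix $\tau=\kw||\msgi[\Sndr]{i}||1$. Since $\X[\Sndr]$ is determined by the bits of $\Sndr$, two senders sharing the prefix $\msgi[\Sndr]{i+1}$ share exactly the same bits in positions $1,\ldots,i+1$. The increment can then be attached to those bit positions in a consistent way across all users. I would first verify this consistency explicitly on the $\OTag$/$\OChal$ formulas of \Cref{fig:CoreS-Hybrid}.
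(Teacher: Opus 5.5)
Your Step~2 does not go through, and the whole argument rests on it, since Steps~1 and~3 are inverse switches. The change of variables $\J[\Sndr,i+1,1]\mapsto\J[\Sndr,i+1,1]+\Delta_{\Sndr}\Bzst$ alters every bit-$1$ challenge of sender $\Sndr$ by $\Bzs\Delta_{\Sndr}^{\top}\chl[\ij,0]$. Here $\Delta_{\Sndr}$ is one fixed matrix per user, so the same increment is added for every keyword $\kw$ and every receiver that $\Sndr$ ever uses. The game $\hame{3,\klen+i,4}$ instead needs an increment that is a fresh random function of the prefix $\kw\!\parallel\!\msgi[\Sndr]{i}$, independent across keywords. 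A secret block that does not depend on $\kw$ cannot supply keyword-dependent randomness.

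Confining $\chl[\ij,0]$ to a known subspace $\A'$ does not help: the increment is then the fixed linear map $\mathbf{w}\mapsto\Delta_{\Sndr}^{\top}\A'\mathbf{w}$. Your proposed repair in the last paragraph hits the same wall, because the bit positions of $\X[\Sndr]$ carry no information about $\kw$. There are also two smaller slips. First, the $\Bzst$-terms vanish on tags with $\tg[0]$ in the span of $\B$ and $\Bo$ (the bit-$1$ senders), not of $\Bz$, because $\Bt\Bzs=\Bot\Bzs=\Zro$. Second, two computational steps would cost two $\matDH[\nRA,\nCA]$ advantages, not the single one in the statement.

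The missing idea is that the fresh per-prefix randomness must come from the assumption itself, with no statistical middle step. The paper reuses the reduction of \Cref{lem:H{2,i,3}-H{2,i,4}} (\Cref{fig:AB-H{2,i,3}-H{2,i,4}}), which works as follows.
\begin{itemize}
\item $c$ indexes the distinct prefix tuples in $\QChl$. The challenge $\chl[\ij,0]=\Fu[c]\chl[\ij,0]'$ stays uniform because $\Mu$ and $\Fu[c]$ are invertible; the failure probabilities are collected in $(|\QChl|+2)/(p-1)$.
\item The term $\Bzs\Fd[c]\chl[\ij,0]'$ is added to bit-$1$ challenges.
\item In the real case, $\Fd[c]=\Md\Mu^{-1}\Fu[c]$, so the added term is the single fixed map $\Bzs\Md\Mu^{-1}\chl[\ij,0]$. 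It is absorbed by an implicit shift $\Mu^{-\top}\Mdt\Bzst$ of the secret block selected by the $(i+1)$-st bit (the analogue of $\N[i+1,1]$ there). This shift is invisible in $\Tw{\X[j]\B}$, in $\sky_j$ and in bit-$1$ tags, so this case is $\hame{3,\klen+i,3}$.
\item In the random case, the term is $\Bzs\R[c]\Fu[c]^{-1}\chl[\ij,0]$, i.e.\ an increment $\R[c]\Fu[c]^{-1}$ that is independent across $c$. This case is $\hame{3,\klen+i,4}$.
\end{itemize}
Your change of variables is exactly this kind of absorption, but it can only ever absorb one fixed linear map. A single $\Adv{\AB,\Go}{\prll|\QChl|\mhyf\matDH[\nRA,\nCA]}$ term therefore interpolates directly between the two hybrids.
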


\begin{lemma}$(\hame{3,\klen+i,4}$ to $\hame{3,\klen+i,5})$\label{lem:H{3,i,4}-H{3,i,5}}
    For any $\ppt$ adversary $\AA$ in the $\fullcpa$ security model, there exists a $\ppt$ adversary $\AB$ such that
{$|\adv{\AA}{\hame{3,\klen+i,4}}-\adv{\AA}{\hame{3,\klen+i,5}}|\leq \Adv{\AB,\Go}{\prll|\QChl|\mhyf\matDH[\nRA,\nCA]}+ {\frac{|\QChl|+2}{p-1}}.$}
\end{lemma}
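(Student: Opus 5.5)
We prove this transition, from $\hame{3,\klen+i,4}$ to $\hame{3,\klen+i,5}$, the same way as the analogous keyword-phase transition (\Cref{lem:H{2,i,4}-H{2,i,5}}). The only change is that the bit which drives the injection is now the $(i+1)$-th bit of the sender identity $\Sndr$, not of the keyword. In $\hame{3,\klen+i,4}$ the component of the randomness in the span of one of the two halves $\Bzs$ has already been extended from $\SZFdef{\klen+i}$ to $\SZFdef{\klen+i+1}$. What remains is to do the same for the other half, replacing $\SOFdef{\klen+i}$ by $\SOFdef{\klen+i+1}$ according to $\msgii[\Sndr]{i+1}$. We do this by one computational hop under the $|\QChl|$-fold parallel $\matDH[\nRA,\nCA]$ assumption in $\Go$, followed by an information-theoretic step.

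\textbf{Step 1: simulation of the tags.} The reduction $\AB$ receives a $|\QChl|$-fold instance $(\On{\M},\On{\f[1]},\ldots,\On{\f[|\QChl|]})$. It samples $\B$ itself, so that $\Bzs,\Bos$ and their kernels are known in the clear. It also samples all $\N[\ii,b]$ and all $\J[j,\ii,b]$ in the clear, except for the slot indexed by $(\klen+i+1)$ and the bit value $1$ (equivalently, by $\msgii[\Sndr]{i+1}=1$). For that slot it embeds $\M$ in the direction orthogonal to $\Bos$, so that every $\Tw{\cdot\,\B}$ projection stays computable. This makes $\OKey$ and $\OCor$ simulatable, and also the tags from $\OTag$: because of $\hame{3,\klen+i,2}$, each $\tg[0]$ lies in the span of $\Bzs$ or of $\Bos$ according to the bit, so the embedded component contributes nothing that $\AB$ cannot compute. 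The corruption restriction carried over from the $\fullcpa$ model into the Core Lemma is what lets $\AB$ answer $\OCor$ queries. Users that are ever corrupted take no part in $\OTag$ or $\OChal$ queries, so their $\J[j,\cdot,\cdot]$ can be sampled honestly without any embedding.

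\textbf{Step 2: simulation of the challenges.} For the $c$-th $\OChal$ query, $\AB$ uses $\f[c]$ to build the $\chl[\ij,0]$ and the matching $\chl[\ij,1]$. If $\f[c]\in\mathrm{span}(\M)$, the outputs are distributed exactly as in $\hame{3,\klen+i,4}$. If $\f[c]$ is uniform, the extra coordinate feeds a fresh value $\SOFdef{\klen+i+1}(\msgii[\Sndr]{i+1})$ in place of $\SOFdef{\klen+i}$, as in $\hame{3,\klen+i,5}$. The parallel use across the $\ell$ components of a single challenge is where the fold is taken "in parallel" ($\prll$).

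\textbf{Step 3: the statistical step, expected to be the main obstacle.} We must argue that, after the switch, the fresh $\SOFdef{\klen+i+1}$ values on prefixes with bit $1$ are independent of everything else $\AA$ sees. In particular they must be independent of the published projections $\Tw{\X[j]\B}$ returned by $\OKey$, since $\X[j]=\SumiDT{1}{\ulen}\J[j,i,{\msgii[j]{i}}]$ is built from exactly the $\J$ being modified. I would first try the following argument. The hidden entropy sits in the component of $\J[j,\klen+i+1,1]$ that annihilates $\B$ on the $\Bos$ side. This component is revealed neither by $\Tw{\X[j]\B}$ nor by any tag, because the tags in the relevant branch lie in $\mathrm{span}(\Bzs)$. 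It therefore acts as a one-time pad on the challenge coordinates. The error terms $\frac{|\QChl|+2}{p-1}$ come from this step: the random self-reducibility loss from \Fact3, the invertibility of $\Au$, and the requirement that each $\f[c]$ avoid degenerate events, paid once per challenge plus two global events. This gives the stated bound.
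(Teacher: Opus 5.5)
Your overall plan is the paper's route. The paper proves this lemma by pointing to the keyword step (\Cref{lem:H{2,i,4}-H{2,i,5}}, which is \Cref{lem:H{2,i,3}-H{2,i,4}} with the roles of $0$ and $1$ swapped). So the argument is a single hop under the $\prll|\QChl|$-fold $\matDH[\nRA,\nCA]$ assumption in $\Go$. In that hop $\M$ is implicitly embedded into the key for sender bit $i+1$, exactly as $\N[i+1,\cdot]$ is in the keyword step; that key is $\J[j,i+1,\cdot]$, not index $\klen+i+1$.

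However, the reduction as you describe it fails, because you have the bit backwards. The $\Bos$-half must be refreshed on prefixes with $\msgii[\Sndr]{i+1}=0$. Correspondingly, $\J[j,i+1,0]$ is implicitly replaced by $\J[j,i+1,0]+\Mu^{-\top}\Mdt\Bost$, for every user. Since $\Bt\Bos=\Zro$ and $\Bzt\Bos=\Zro$, this added term vanishes against $\B$ and against every bit-$0$ tag $\tg[0]\in\Span(\B||\Bz)$. Two consequences follow:
\begin{itemize}
\item Every $\Tw{\X[j]\B}$ and every $\sky_j$ remains computable for all users. So you do not need to know in advance who will be corrupted, which you could not know anyway under adaptive corruption.
\item The bit-$0$ tags are unaffected by the embedding.
\end{itemize}
With bit $1$, as you propose, $\tg[0]\in\Span(\B||\Bo)$ and $\Bost\Bo\neq\Zro$. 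The tags then carry a $\Gt$-term in $\Md\Mu^{-1}$, which $\AB$ cannot compute because it holds $\M$ only in $\Go$. Also, tags never lie in the span of $\Bzs$ or $\Bos$; those matrices annihilate them.

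Second, $\prll$ is not parallelism over the $\ell$ receivers. It is the number of instance vectors grouped into $\F[c]$ so that $\Fu[c]$ is square and invertible. Here $c$ indexes an entry of $\QChl$ (sender, prefix, receiver), not a query. The paper sets $\chl[\ij,0]=\Fu[c]\chl[\ij,0]'$ with fresh $\chl[\ij,0]'$. For the relevant bit it adds $\Bos\Fd[c]\chl[\ij,0]'=\Bos\Md\Mu^{-1}\chl[\ij,0]+\Bos\R[c]\Fu[c]^{-1}\chl[\ij,0]$. This yields arbitrarily many uniform $\chl[\ij,0]$, all hit by the same function value $\R[c]\Fu[c]^{-1}$. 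That consistency is required, since $\OChal$ may repeat a triple. A single vector $\f[c]$ per query gives neither $\ell$ independent $\chl[\ij,0]$ nor this consistency.

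Once this is in place, your Step 3 is unnecessary. For uniform $\R[c]$ the simulated view is exactly $\hame{3,\klen+i,5}$, and the fresh values are independent of $\Tw{\X[j]\B}$ by construction. The loss $\frac{|\QChl|+2}{p-1}$ comes from invertibility of $\Mu$ and of each $\Fu[c]$, not from a one-time-pad step. There is no $\Au$ in the Core Lemma.
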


\begin{lemma}$(\hame{3,\klen+\ulen,1}$ to $\hame{4,\klen+\ulen,1})$\label{lem:H{3,ku,1}-H{4,ku,1}}
    For any adversary $\AA$,
    $\adv{\AA}{\hame{3,\klen+\ulen,1}}=\adv{\AA}{\hame{4,\klen+\ulen,1}}.$
\end{lemma}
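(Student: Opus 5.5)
The plan is to show that $\hame{3,\klen+\ulen,1}$ and $\hame{4,\klen+\ulen,1}$ are the same game up to renaming, exactly as for \Cref{lem:H{2,k,1}-H{3,k,1}}. In that case we get $\adv{\AA}{\hame{3,\klen+\ulen,1}}=\adv{\AA}{\hame{4,\klen+\ulen,1}}$ with no computational or statistical loss. The point is that $\hame{3,\klen+\ulen,1}$ closes the sender phase and $\hame{4,\klen+\ulen,1}$ opens the receiver phase with zero receiver bits consumed. Both should therefore inject one random function evaluated on the same string $\kw\|\Sndr$ of length $\klen+\ulen$.

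First, I would write both games out explicitly using \Cref{fig:CoreS-Hybrid} and \Cref{fig:CoreR-Hybrid}.
\begin{itemize}
\item In $\hame{3,\klen+\ulen,1}$, every $\OTag(\Sndrr,\kwd,\Rcvrr)$ response has $\tg[1]$ containing the term $\Z[\Rcvrr]\tg[0]$ plus $\SRFdef{\klen+\ulen}(\kwd\|\Sndrr)$ applied to the appropriate projection of $\tg[0]$. Every $\OChal$ response contains the transposed analogue with $\Zt[\Rcvr_{\perm(\ij)}]\chl[\ij,0]$.
\item In $\hame{4,\klen+\ulen,1}$, the term $\RRFdef{\klen+\ulen}$ is evaluated on the $(\klen+\ulen+0)$-bit prefix of $\kw\|\Sndr\|\Rcvr$, which is again $\kw\|\Sndr$.
\end{itemize}
I would check that $\N[\ii,b]$ and $\X[j]$ no longer appear in either game beyond what is published in $\ppC$ and $\OKey$/$\OCor$. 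I would also check that $\Z[j]$ is still the original sum $\SumiDT{\ulen+1}{2\ulen}\J[j,i,{\msgii[j]{i}}]$ in both games.

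Second, I would argue that $\SRFdef{\klen+\ulen}$ and $\RRFdef{\klen+\ulen}$ are both lazily sampled uniform functions from $\{0,1\}^{\klen+\ulen}$ to $\Zp^{\nRAd\times\nRCB}$. They are queried on exactly the same inputs in the same order, so the map $\RRFdef{\klen+\ulen}:=\SRFdef{\klen+\ulen}$ is a perfect coupling. With this coupling, every $\OKey$, $\OCor$, $\OTag$ and $\OChal$ response is literally identical in the two games. The $\Finalize$ condition is also identical, so the joint view of $\AA$, and hence its output, is identically distributed.

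The only step requiring care is confirming that the definitions line up bit for bit. The $\tg[0]$ sampling in the ``$,1$'' subgame must carry no bit-dependent restriction from the last sender step. The ending of the $(\cdot,5)$ transitions of the sender phase must also re-merge $\SZFdef{\klen+\ulen}$ and $\SOFdef{\klen+\ulen}$ into a single function on the span $(\Bzs\|\Bos)$, as at the start of the receiver phase. If the figures index these differently, I would insert the trivial re-merging identity (valid since $(\Bzs\|\Bos)$ is a basis) to identify them, still with zero loss.
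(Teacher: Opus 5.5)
Your proposal is correct and is essentially the paper's argument: the paper simply \emph{defines} $\hame{4,\klen+\ulen,1}$ to coincide with $\hame{3,\klen+\ulen,1}$, because the receiver-phase random function on the zero-bit receiver prefix is evaluated on exactly $\kw\|\Sndr$, and your identity coupling of the lazily sampled functions just makes this explicit. (Two small slips in your side checks, neither of which affects the argument since the games are literally identical: $\N[\ii,b]$ and $\X[j]$ still appear in every $\OTag$/$\OChal$ response of these hybrids, and the random functions in the figures are additionally indexed by the pair $(\Sndr,\Rcvr)$, not only by the bit string $\kw\|\Sndr$.)
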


\begin{lemma}$(\hame{4,\klen+\ulen+i,1}$ to $\hame{4,\klen+\ulen+i,2})$\label{lem:H{4,i,1}-H{4,i,2}}
    For any $\ppt$ adversary $\AA$ in the $\fullcpa$ security model, there exists a $\ppt$ adversary $\AB$ such that
{  $|\adv{\AA}{\hame{4,\klen+\ulen+i,1}}-\adv{\AA}{\hame{4,\klen+\ulen+i,2}}|\leq {2}\Adv{\AB,\Gt}{|\QTg|\mhyf\matDH[\nRB,\nCB]}+{\frac{2\times2\nCB}{p-1}}.$}
\end{lemma}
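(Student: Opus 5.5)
This lemma is the receiver-bit counterpart of \Cref{lem:H{2,i,1}-H{2,i,2}} and \Cref{lem:H{3,i,1}-H{3,i,2}}, and I would prove it the same way. The only change between $\hame{4,\klen+\ulen+i,1}$ and $\hame{4,\klen+\ulen+i,2}$ is how the tag randomness $\tg[0]$ is sampled in $\OTag$ responses. In $\hame{4,\klen+\ulen+i,1}$ it is uniform in $\Zp^{\nRB}$. In $\hame{4,\klen+\ulen+i,2}$ it is sampled from $\mathrm{span}(\Bzs)$ or $\mathrm{span}(\Bos)$ according to the bit $\msgii[\Rcvrr]{i+1}$ of the queried receiver. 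Here $\Bzs,\Bos\in\Zp^{\nRB\times\nCB}$ are fresh matrices sampled by the reduction, with $\nRB=2\nCB+1$ rows (or whatever dimension the paper fixes) so that both fit. So the plan is to pass through an intermediate hybrid. First move all tags with $\msgii[\Rcvrr]{i+1}=0$ into $\mathrm{span}(\Bzs)$, then all tags with bit $1$ into $\mathrm{span}(\Bos)$. Each step is one application of $|\QTg|$-fold $\matDH[\nRB,\nCB]$ in $\Gt$, which accounts for the factor $2$ in the bound.

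For a single step, the reduction $\AB$ receives $(\Tw{\Bzs},\Tw{\h[1]},\ldots,\Tw{\h[|\QTg|]})$, where each $\h[c]$ is either $\Bzs\w[c]$ or uniform. The secrets the simulation needs ($\N[\ii,b]$, the $\J[j,\ii,b]$, and the random functions $\RRFdef{\klen+\ulen+i}$ together with their earlier-stage components) are all matrices over $\Zp$ that are independent of $\Bzs$. So $\AB$ samples them itself and uses the fresh $\Bos$ it knows. It answers $\OTag$ queries whose relevant receiver bit is $0$ by setting $\tg[0]=\h[c]$ and computing $\tg[1]$ as a $\Zp$-linear function of $\Tw{\h[c]}$. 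That is, $\tg[1]$ is the secret matrix $\SumiDT[\ii]{1}{\klen}\N[\ii,{\msgii[\kwd]{\ii}}]+\X[\Sndrr]+\Z[\Rcvrr]$ plus the injected random-function term (both known in the clear), multiplied by $\h[c]$ in the exponent. The remaining tags are answered with uniform $\tg[0]$.

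$\OChal$ responses are computable because $\AB$ knows all the secret matrices and simply works in $\Go$. $\OKey$ and $\OCor$ are answered from the sampled $\J$'s. For this, $\ppC$ must contain $\Tw{\B}$ and $\On{\B}$ for a $\B$ whose span is consistent with the game. In the earlier games of this family, $\B$ plays only the role of a public generator, so I would let $\AB$ sample $\B$ itself, independently of $\Bzs$. If $\h[c]$ is uniform, the view is $\hame{4,\klen+\ulen+i,1}$; if $\h[c]=\Bzs\w[c]$, it is the intermediate hybrid. The second step uses $\Bos$ symmetrically.

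The statistical term $\frac{2\times2\nCB}{p-1}$ comes from two sources. The random self-reducibility bound (\Fact3) contributes $1/(p-1)$ per step. The remaining slack covers conditioning on $\Bzs,\Bos$ being full rank with $(\Bzs||\Bos)$ also full rank, which each fail with probability at most about $\nCB/(p-1)$.

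I expect the main obstacle to be checking that $\AB$ never needs an element of $\Gt$ it cannot compute. The injected random-function terms and the $\OKey$ output $\Tw{\X[j]\B}$ must not involve the challenge matrix $\Bzs$ in an unknown way. I would check this against the explicit description of $\hame{4,\klen+\ulen+i,1}$ in \Cref{fig:CoreR-Hybrid}, confirming that $\Bzs,\Bos$ enter only through the sampling of $\tg[0]$. If they also appear in $\ppC$, I would instead embed the public $\Tw{\Bzs}$ as the given matrix of the assumption, which remains simulatable.
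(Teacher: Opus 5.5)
Your strategy is the paper's. Its proof defers, via the sender-bit version, to the proof of \Cref{lem:H{2,i,1}-H{2,i,2}}: an intermediate hybrid first moves only the tags whose $(i+1)$-th receiver bit is $0$, then the bit-$1$ tags. Each step is a reduction to $|\QTg|\mhyf\matDH[\nRB,\nCB]$ in $\Gt$ that embeds the challenge matrix $\M$ as the new basis and simulates everything else in the clear.

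The concrete error is in the target hybrid. In $\hame{4,\klen+\ulen+i,2}$ the tag is $\tg[{0}]=\B\r+\B_{\Rcvrr_{i+1}}\r_{\Rcvrr_{i+1}}$. It therefore lies in $\Span(\B\|\Bz)$ or $\Span(\B\|\Bo)$, built from the unstarred matrices. The starred $\Bzs,\Bos$ are their duals ($\Bt\Bzs=\Bot\Bzs=\Zro$, $\Bt\Bos=\Bzt\Bos=\Zro$, jointly spanning $\Bp$), and they only enter from $\hame{4,\klen+\ulen+i,3}$ on.
\begin{itemize}
\item Answering a bit-$0$ tag query with $\tg[{0}]=\f[c]$ puts the tag in the $\nCB$-dimensional $\Span(\M)$, not the $2\nCB$-dimensional $\Span(\B\|\Bz)$. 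So your reduction does not simulate $\hame{4,\klen+\ulen+i,2}$.
\item If your $\Bzs,\Bos$ are read literally as the paper's starred matrices, a bit-$1$ tag in $\Span(\Bos)$ need not satisfy $\Bzst\tg[{0}]=\Zro$, since nothing forces $\Bzst\Bos=\Zro$. That property is exactly what the step to $\hame{4,\klen+\ulen+i,4}$ uses to make re-randomising the $\Bzs$-component on bit-$1$ inputs invisible in tags.
\end{itemize}

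The repair is what the paper's reduction does. $\AB$ samples $\B$ (hence $\Bp$ with $\Bt\Bp=\Zro$) and $\Bo$ itself, and lets $\M$ play $\Bz$. It answers the $c$-th bit-$0$ tag query with $\tg[{0}]=\B\r+\f[c]$ for fresh $\r\sample\Zp^{\nCB}$. This is uniform if $\f[c]$ is uniform, and uniform in $\Span(\B\|\M)$ otherwise. The second step, for bit-$1$ tags, is symmetric.

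Your accounting of the statistical term is also off. No random self-reducibility is used here: the bound is already stated for the $|\QTg|$-fold problem, and that reduction is applied only when summing up in \Cref{thm:Core}. Each $2\nCB/(p-1)$ is the probability that $(\B\|\M\|\Bo)$ fails to generate $\Zp^{\nRB}$, as the game's setup requires. This requirement also forces $\nRB=3\nCB$, not $2\nCB+1$.
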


\begin{lemma}$(\hame{4,\klen+\ulen+i,2}$ to $\hame{4,\klen+\ulen+i,3})$\label{lem:H{4,i,2}-H{4,i,3}}
    For any adversary $\AA$,
    $|\adv{\AA}{\hame{4,\klen+\ulen+i,2}}-\adv{\AA}{\hame{4,\klen+\ulen+i,3}}|\leq \frac{\nCB}{p}.$
\end{lemma}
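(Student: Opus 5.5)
The plan is to treat this transition as a purely information-theoretic change of variables. It mirrors \Cref{lem:H{2,i,2}-H{2,i,3}} and \Cref{lem:H{3,i,2}-H{3,i,3}}, with the keyword-bit (resp.\ sender-bit) random function replaced by the receiver-indexed one $\RRFdef{\klen+\ulen+i}$. In $\hame{4,\klen+\ulen+i,2}$ the experiment already samples the matrices $\Bzs,\Bos$ and draws each $\tg[0]$ from a span determined by the bit $\msgii[\Rcvr]{i+1}$. The random function $\RRFdef{\klen+\ulen+i}$, with values in $\Zp^{\nRAd\times\nRCB}$, is evaluated on-the-fly, and it only ever enters the $\OTag$ and $\OChal$ responses through the fixed matrix $(\Bzs\|\Bos)$.

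First, I will condition on the event that the matrix $(\Bzs\|\Bos)$, which has $\nRCB$ columns, is invertible. Equivalently, the basis $(\B\|\Bzs\|\Bos)$ sampled in the game has full rank. The probability that a uniformly sampled square matrix over $\Zp$ of this size is singular is at most $\sum_{t}p^{-t}$, taken over its $\nCB$-column block structure. A column-by-column union bound yields at most $\nCB/p$, which is exactly the loss stated in the lemma.

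Second, on this event I will define the two on-the-fly functions $\RZFdef{\klen+\ulen+i},\ROFdef{\klen+\ulen+i}:\{0,1\}^{\klen+\ulen+i}\rightarrow\Zp^{\nRAd\times\nCB}$ by writing
\[\RRFdef{\klen+\ulen+i}(x)=\big(\RZFdef{\klen+\ulen+i}(x)\,\|\,\ROFdef{\klen+\ulen+i}(x)\big)(\Bzs\|\Bos)^{-1}.\]
Right multiplication by a fixed invertible matrix is a bijection on $\Zp^{\nRAd\times\nRCB}$. Hence, if $\RRFdef{\klen+\ulen+i}(x)$ is uniform and independent across distinct inputs $x$, then so are the coordinate pairs $(\RZFdef{\klen+\ulen+i}(x),\ROFdef{\klen+\ulen+i}(x))$, and conversely. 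Lazily sampling either representation therefore yields identically distributed oracle answers. This holds across all $\OKey,\OCor,\OTag,\OChal$ queries, because corrupted keys $\sky_j$ involve only $\X[j],\Z[j]$ and never the injected function. Consequently $\hame{4,\klen+\ulen+i,2}$ and $\hame{4,\klen+\ulen+i,3}$ coincide conditioned on invertibility, and the lemma follows.

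The main obstacle is the bookkeeping. I need to check that every place where $\RRFdef{\klen+\ulen+i}$ appears, namely $\tg[1]$ multiplied by $\tg[0]$ and $\chl[\ij,1]$ involving its transpose, is consistently rewritten in terms of the new coordinates. No other game quantity may depend on $(\Bzs\|\Bos)$ in a way that breaks the bijection argument. The failure probability itself is routine.
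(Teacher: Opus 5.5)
Your strategy (re-parametrize the injected random function in another basis of the same space and pay $\nCB/p$ for a rank failure) is the paper's. However, the step that carries the entire content of the lemma is mis-specified.

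Since $\Bzs,\Bos\in\Zp^{\nRB\times\nCB}$, the matrix $(\Bzs\|\Bos)$ lies in $\Zp^{\nRB\times\nRCB}$ with $\nRB=\nRCB+\nCB$. It is tall, not square. So ``$(\Bzs\|\Bos)$ is invertible'' is meaningless, and your definition $\RRFdef{\klen+\ulen+i}(x)=(\RZFdef{\klen+\ulen+i}(x)\|\ROFdef{\klen+\ulen+i}(x))(\Bzs\|\Bos)^{-1}$ does not typecheck. Moreover, in $\hame{4,\klen+\ulen+i,2}$ the function $\RRFdef{\klen+\ulen+i}$ does not enter through $(\Bzs\|\Bos)$ at all. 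It appears as $\RRFdef{\klen+\ulen+i}(x)\Bpt\tg[{0}]$ in $\OTag$ and as $\Bp\,\RRFdef{\klen+\ulen+i}(x)^{\top}\chl[\ij,0]$ in $\OChal$. The lemma hinges on the relation between $\Bp$ and $(\Bzs\|\Bos)$, and your proposal never invokes it.

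The fix is as follows. By construction the columns of both $\Bp$ and $(\Bzs\|\Bos)$ lie in $\ker\Bt$, which has dimension $\nRCB$ and is spanned by $\Bp$. Condition on $(\Bzs\|\Bos)$ having full column rank $\nRCB$. This is the paper's condition that $(\Bzs,\Bos)$ generates $\Bp$, and it is the event charged $\nCB/p$; your condition that $(\B\|\Bzs\|\Bos)$ has full rank implies it. Then $(\Bzs\|\Bos)=\Bp\mathbf{G}$ for a unique invertible $\mathbf{G}\in\Zp^{\nRCB\times\nRCB}$. Setting $\RRFdef{\klen+\ulen+i}(x):=(\RZFdef{\klen+\ulen+i}(x)\|\ROFdef{\klen+\ulen+i}(x))\,\mathbf{G}^{\top}$ gives
\[
\RRFdef{\klen+\ulen+i}(x)\Bpt=\RZFdef{\klen+\ulen+i}(x)\Bzst+\ROFdef{\klen+\ulen+i}(x)\Bost ,
\]
together with its transpose for the $\OChal$ term. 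These are precisely the $\hame{4,\klen+\ulen+i,3}$ expressions. Right multiplication by $\mathbf{G}^{\top}$ is a bijection of $\Zp^{\nRAd\times\nRCB}$. So your lazy-sampling argument, and the remark that $\sky_j$ never involves the function, then go through unchanged.

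Your justification of the failure probability should likewise concern rank deficiency of $(\Bzs\|\Bos)$, i.e.\ of $\Bzs$ or $\Bos$ within their $\nCB$-dimensional sampling subspaces. It should not be about a uniformly random square matrix.
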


\begin{lemma}$(\hame{4,\klen+\ulen+i,3}$ to $\hame{4,\klen+\ulen+i,4})$\label{lem:H{4,i,3}-H{4,i,4}}
    For any $\ppt$ adversary $\AA$ in the $\fullcpa$ security model, there exists a $\ppt$ adversary $\AB$ such that
    {$|\adv{\AA}{\hame{4,\klen+\ulen+i,3}}-\adv{\AA}{\hame{4,\klen+\ulen+i,4}}|\leq \Adv{\AB,\Go}{\prll|\QChl|\mhyf\matDH[\nRA,\nCA]}+ {\frac{|\QChl|+2}{p-1}}.$}
\end{lemma}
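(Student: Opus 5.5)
We prove this by a single reduction to the parallel $|\QChl|$-fold $\matDH[\nRA,\nCA]$ problem in $\Go$, mirroring the keyword-phase transition $\hame{2,i,3}\to\hame{2,i,4}$ (\Cref{lem:H{2,i,3}-H{2,i,4}}). Here the bit being consumed is $\msgii[\Rcvr]{i+1}$ rather than $\msgii[\kw]{i+1}$.

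In $\hame{4,\klen+\ulen+i,3}$, the random function $\RRFdef{\klen+\ulen+i}$ is split over the span $(\Bzs||\Bos)$ into $\RZFdef{\klen+\ulen+i}$ and $\ROFdef{\klen+\ulen+i}$. Since $\hame{4,\klen+\ulen+i,2}$, each tag's $\tg[0]$ is sampled from $\Span{\Bzs}$ or $\Span{\Bos}$ according to the bit $\msgii[\Rcvr]{i+1}$. The goal of $\hame{4,\klen+\ulen+i,4}$ is to replace $\RZFdef{\klen+\ulen+i}(\cdot)$ by $\RZFdef{\klen+\ulen+i+1}(\cdot\,||\,\msgii[\Rcvr]{i+1})$. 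On tags with $\msgii[\Rcvr]{i+1}=0$ this is just a relabelling, and the relevant component is invisible on tags with $\msgii[\Rcvr]{i+1}=1$, since $\tg[0]\in\Span{\Bos}$ annihilates the $\Bzs$-dual part. So tags are perfectly simulable in both games, and the whole change is visible only through $\OChal$ responses, where $\chl[\ij,1]$ carries $\Bzs$-components multiplied by $\chl[\ij,0]$.

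The reduction $\AB$ receives the parallel matDH instance $\On{\M},\{\On{\h_c}\}_{c}$ and embeds it into the $\chl[\ij,0]$ of challenge query $c$ (one instance per receiver slot, replicated via the random self-reducibility of \cite{JC:EHKRV17}). It defines the fresh part of $\RZFdef{\klen+\ulen+i+1}$ on the bit $1$ branch implicitly, as $\RZFdef{\klen+\ulen+i}(\cdot)+\widehat{\mathsf{RF}}(\cdot)$ with the new randomness multiplied against $\M$-structured vectors. When $\h_c=\M\w_c$ the added term collapses to something computable from $\RZFdef{\klen+\ulen+i}$, giving $\hame{4,\klen+\ulen+i,3}$. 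When $\h_c$ is uniform, it gives fresh independent randomness on the $\Rcvr$-prefix branch, i.e.\ $\hame{4,\klen+\ulen+i,4}$.

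$\AB$ generates everything else itself: $\B$, the vectors $\Bzs,\Bos$, all $\N[\ii,b]$, and all $\J[j,\ii,b]$. It therefore answers $\OKey$ and $\OCor$ honestly. Corruption causes no problem, because the Core restriction forbids corrupting any receiver appearing in $\QChl\cup\QTg$, so the implicitly defined randomness never needs to be revealed through $\sky_j$. Permutations $\perm$ and the $\ell$ receiver slots are handled slotwise, using a distinct instance for each pair $(c,\ij)$.

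The main obstacle is statistical: guaranteeing that the simulated $\chl[\ij,0]$ are uniform in $\Zp^{\nRAd}$ and that the embedded matrix has the needed full rank. These conditions fail only with probability bounded by $(|\QChl|+2)/(p-1)$ (a union bound over the full-rank events of $\M$ and of the per-query re-randomizations). Together with the computational term, this yields the claimed bound $\Adv{\AB,\Go}{\prll|\QChl|\mhyf\matDH[\nRA,\nCA]}+\frac{|\QChl|+2}{p-1}$.
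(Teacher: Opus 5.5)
Your outline has the right shape: only $\OChal$ answers with $\msgii[\Rcvr]{i+1}=1$ change, because on bit-$1$ tags the $\Bzs$-component is annihilated. (A minor slip: from $\hame{4,\klen+\ulen+i,2}$ on, $\tg[0]$ lies in the span of $(\B||\Bz)$ or of $(\B||\Bo)$, not in the span of the dual matrices. The annihilation you need is $\Bzst\B=\Bzst\Bo=\Zro$, whereas $\Bzst\Bos$ need not vanish.)

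The first genuine gap is your claim that, in the case $\mathbf{h}_c=\M\mathbf{w}_c$, the added term ``collapses to something computable from $\RZFdef{\klen+\ulen+i}$''. In that case the term $\AB$ adds to a bit-$1$ challenge is $\Bzs\Md\Mu^{-1}\chl[\ij,0]$. It is not zero, $\AB$ cannot compute it, and nothing in $\hame{4,\klen+\ulen+i,3}$ accounts for it unless you argue it away. The paper's argument is inherited from the keyword step $\hame{2,i,3}\to\hame{2,i,4}$. There the real case is shown to equal the left game by absorbing this fixed unknown term into an implicitly shifted bit-$1$ secret matrix, $\N[i+1,1]=\hN[i+1,1]+\Mu^{-\top}\Mdt\Bzst$. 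In the receiver phase the analogous matrix is the summand $\J[j,\ulen+i+1,1]$ of every $\Z[j]$. The shift is invisible for three reasons. First, $\ppC$, the $\OKey$ outputs and $\sky_j$ contain only $\B$-projections such as $\Tw{\Z[j]\B}$. Second, every tag whose receiver has bit $1$ has $\tg[0]$ in the span of $(\B||\Bo)$. Third, $\Bzst\B=\Bzst\Bo=\Zro$. So $\AB$ does not know the actual key matrices in the clear. The reason corruptions are harmless is this annihilation, not the non-corruption restriction you invoke.

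The second gap is how you index the instance. $\hame{4,\klen+\ulen+i,4}$ injects one fixed fresh matrix per input $\kw\!\parallel\!\Sndr\!\parallel\!\msgi[\Rcvr]{i}$. Since $\Oct$ queries may repeat, several challenge occurrences can hit the same input. With a separate single-vector instance for each (query, slot) pair, the random case gives every occurrence independent randomness. That is a different game as soon as an input repeats, and the difference is not covered by any step of your argument. Moreover, a single column cannot be re-randomized into uniform $\chl[\ij,0]\in\Zp^{\nRAd}$ for several occurrences.

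The paper instead groups the $\prll|\QChl|$-fold instance into blocks $\F[c]$ of $\prll$ columns, one block per distinct input $c$. For every occurrence it sets $\chl[\ij,0]=\Fu[c]\chl[\ij,0]'$ with fresh $\chl[\ij,0]'$, and on bit-$1$ inputs it adds $\Bzs\Fd[c]\chl[\ij,0]'$. Since $\Fd[c]=\Md\Mu^{-1}\Fu[c]+\R[c]$, the added term equals $\Bzs\Md\Mu^{-1}\chl[\ij,0]+\Bzs\R[c]\Fu[c]^{-1}\chl[\ij,0]$. The first part is absorbed by the implicit shift above. The second part defines, through $\R[c]\Fu[c]^{-1}$, one consistent fresh value at input $c$, which is zero in the real case. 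Finally, invertibility of $\Mu$ and of every $\Fu[c]$ accounts for the $(|\QChl|+2)/(p-1)$ term.
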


\begin{lemma}$(\hame{4,\klen+\ulen+i,4}$ to $\hame{4,\klen+\ulen+i,5})$\label{lem:H{4,i,4}-H{4,i,5}}
    For any $\ppt$ adversary $\AA$ in the $\fullcpa$ security model, there exists a $\ppt$ adversary $\AB$ such that
{$|\adv{\AA}{\hame{4,\klen+\ulen+i,4}}-\adv{\AA}{\hame{4,\klen+\ulen+i,5}}|\leq \Adv{\AB,\Go}{\prll|\QChl|\mhyf\matDH[\nRA,\nCA]}+ {\frac{|\QChl|+2}{p-1}}.$}
\end{lemma}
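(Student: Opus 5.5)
The plan mirrors Lemma~\ref{lem:H{4,i,3}-H{4,i,4}} and, before it, Lemmas~\ref{lem:H{2,i,4}-H{2,i,5}} and \ref{lem:H{3,i,4}-H{3,i,5}}: the same argument, now applied to the ``one''-branch function of the receiver phase. The tools are a computational switch on the $\Go$-side challenge components, followed by an information-theoretic injection that exploits the kernel structure of $(\Bzs||\Bos)$.

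\textbf{Where $\ROFdef{\klen+\ulen+i}$ is visible in $\hame{4,\klen+\ulen+i,4}$.} The injected randomness is $\ZFdef{}(\cdot)\Bzs^{\top}+\OFdef{}(\cdot)\Bos^{\top}$, with $\Bzs^\top\B_1=\Bos^\top\B_0=\mathbf 0$. Since $\hame{4,\klen+\ulen+i,2}$, every $\tg[0]$ is sampled in $\mathrm{span}(\B_{\beta})$, where $\beta$ is the $(i+1)$-th bit of the tag's receiver. Hence $\ROFdef{\klen+\ulen+i}$ enters a tag only when $\beta=1$; for $\beta=0$ it is multiplied by $\Bos^\top\B_0=\mathbf 0$. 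Keys are unaffected: $\sky_j$ and $\Tw{\X[j]\B}$ are simulated with the original $\X[j],\Z[j]$. The injected function is indexed by $(\Sndr,\kw,\Rcvr)$ prefixes, and by the Core-Lemma restriction no corrupted user appears in $\QTg\cup\QChl$, so corruptions leak nothing about it.

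\textbf{Step 1 (computational).} I would first make the challenges reveal only a projection of the function. Using a $\prll|\QChl|$-fold $\matDH[\nRA,\nCA]$ instance over $\Go$, I replace the uniform $\chl[\ij,0]$ of each challenge whose receiver $\Rcvr_{\perm(\ij)}$ has $(i+1)$-th bit $0$ by a vector drawn from a structured subspace. The reduction embeds the instance into $\chl[\ij,0]$ and computes $\chl[\ij,1]$ itself; this is possible because it knows $\N[\ii,b]$, $\X[j]$, $\Z[j]$ and all function values in the clear, and tags and keys live only in $\Gt$. The loss is $\Adv{\AB,\Go}{\prll|\QChl|\mhyf\matDH[\nRA,\nCA]}$, plus $\frac{1}{p-1}$ from the full-rank and random self-reducibility conditions (\Fact3).

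\textbf{Step 2 (statistical).} I would then redefine
\[
\ROFdef{\klen+\ulen+i+1}(x\|\beta)=\ROFdef{\klen+\ulen+i}(x)+(1-\beta)\,\mathsf{F}'(x)
\]
with a fresh on-the-fly random function $\mathsf{F}'$. For $\beta=0$, tags never see this function (Step~1 observation), and the structured challenge components absorb the extra term, so the two games are identically distributed outside a bad event of probability at most $\frac{|\QChl|}{p-1}$. That bad event is some structured $\chl[\ij,0]$ degenerating, e.g.\ failing to span the required directions. I expect this step to be the main obstacle: one must verify simultaneously, over all challenges sharing the same prefix $x$, that the added $\mathsf F'(x)$ is perfectly masked. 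This is also where the permutation $\perm$ over receivers and the freshness condition $\QTg\cap\QChl=\emptyset$ enter.

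\textbf{Step 3.} Finally, I switch the structured $\chl[\ij,0]$ back to uniform with a second application of the same assumption. Since the two switches are parallel and both $\matDH$ instances can be folded into one reduction over the challenge set, the total is
\[
\Adv{\AB,\Go}{\prll|\QChl|\mhyf\matDH[\nRA,\nCA]}+\frac{|\QChl|+2}{p-1},
\]
as claimed in the statement.
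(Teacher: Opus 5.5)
Your placement of the increment on the $0$-branch is right: bit-$0$ tags never see it, since $\Bt\Bos=\Zro$ and $\Bzt\Bos=\Zro$. However, the challenge-side mechanism is missing, and the one you sketch does not work.

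\textbf{Step 1 and Step 2.} In the Core-Lemma games $\chl[\ij,0]$ is uniform in $\Zp^{\nRAd}$, and the paper never restricts it to a subspace. Embedding a $\matDH[\nRA,\nCA]$ instance into $\chl[\ij,0]$ alone (via $\Fu[c]$) gives uniform vectors in both worlds, so it switches nothing. Your Step~2 is left as an unproven claim, and no purely statistical step can do the job. A change of variables on a key component only absorbs a term $\Bos\mathbf G^{\top}\chl[\ij,0]$ with one fixed $\mathbf G$ shared by all prefixes using that component. It can never produce a fresh $\mathsf F'(x)$ for each prefix $x$.

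\textbf{What the paper does instead.} The paper's reduction, inherited from \Cref{lem:H{2,i,3}-H{2,i,4}}, is a single step that gets per-prefix freshness from random self-reducibility. It assigns one $\prll$-column block $\F[c]=\iCol{\Mu\K[c]}{\Md\K[c]+\R[c]}$ to each challenge prefix $c$. It sets $\chl[\ij,0]=\Fu[c]\chl[\ij,0]'$, which is uniform in both worlds. For bit-$0$ challenges it adds $\Bos\Fd[c]\chl[\ij,0]'$ to $\chl[\ij,1]$. Since $\Bos\Fd[c]\chl[\ij,0]'=\Bos\Md\Mu^{-1}\chl[\ij,0]+\Bos\R[c]\Fu[c]^{-1}\chl[\ij,0]$, the two summands play different roles:
\begin{itemize}
\item The first summand is absorbed by implicitly adding $\Mu^{-\top}\Mdt\Bost$ to a hidden key component used only by bit-$0$ queries (the analogue of $\N[i+1,b]$ in \Cref{fig:AB-H{2,i,3}-H{2,i,4}}). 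This changes nothing published over $\Gt$, because $\Bost$ annihilates $\B$ and $\Bz$.
\item The second summand is exactly the new increment at prefix $c$, and it is consistent across all challenges sharing that prefix.
\end{itemize}
In particular, the reduction must \emph{not} know all secrets in $\Zp$, contrary to what you assume in Step~1.

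\textbf{Error accounting.} Your accounting also does not give the stated bound. Steps~1 and~3 are two computational transitions separated by a change of game. By the triangle inequality they cost $2\cdot\Adv{\AB,\Go}{\prll|\QChl|\mhyf\matDH[\nRA,\nCA]}$, and they cannot be ``folded'' into one reduction. The single advantage term in the lemma arises because the paper's reduction outputs $\hame{4,\klen+\ulen+i,4}$ on a real instance ($\R[c]=\Zro$) and $\hame{4,\klen+\ulen+i,5}$ on a random one, with no intermediate game. The additive $\frac{|\QChl|+2}{p-1}$ covers invertibility of $\Mu$ and of each $\Fu[c]$, plus the random self-reduction. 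It does not come from a degenerate-vector event in a statistical step.
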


\begin{lemma}$(\hame{4,L,1}$ to $\hame{5})$\label{lem:H{4,L,1}-H{5}}
    For any adversary $\AA$, $\adv{\AA}{\hame{4,L,1}}=\adv{\AA}{\hame{5}}$
    %\[|\adv{\AA}{\hame{2,L-1,5}}-\adv{\AA}{\hame{3}}|{\color{red}\leq {2\times\nRCB}\Adv{\AB,\Gt}{\matDH[\nRB,\nCB]}+{\frac{2\times2\nCB}{p-1}}}.\]
\end{lemma}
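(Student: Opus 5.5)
The plan is to show that $\hame{4,L,1}$ (with $L=\klen+2\ulen$) and $\hame{5}$ are identically distributed. I would argue this information-theoretically: the random function injected in the final step of the hybrid chain is evaluated at the full string $\Sndrr\|\kwd\|\Rcvrr$, and that evaluation acts as a one-time pad on every tag.

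\textbf{Step 1: write out the tag component.} I would first state explicitly what $\hame{4,L,1}$ outputs. After all $\klen+2\ulen$ bits have been consumed, each $\OTag(\Sndrr,\kwd,\Rcvrr)$ response has the form
$$\tg[1]=\Big(\SumiDT[\ii]{1}{\klen}\N[\ii,{\msgii[\kwd]{\ii}}]+\X[\Sndrr]+\Z[\Rcvrr]\Big)\tg[0]+\RRFdef{L}(\Sndrr\|\kwd\|\Rcvrr)\,\tg[0].$$
Here $\RRFdef{L}:\{0,1\}^{L}\rightarrow \Zp^{\nRAd\times \nRCB}$ is a truly random function, lazily sampled, acting on the component of $\tg[0]$ outside the span of $\B$. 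Each $\OChal$ component $\chl[\ij,1]$ carries the analogous term $\RRFdef{L}(\Sndr\|\kw\|\Rcvr_{\perm(\ij)})^{\top}\chl[\ij,0]$.

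\textbf{Step 2: argue freshness of each tag's evaluation point.} I would then use three facts.
\begin{itemize}
\item $\OTag$ queries are never repeated.
\item The winning condition forces $\QTg\cap\QChl=\emptyset$, so no tag input is ever an evaluation point used by a challenge.
\item $\OKey$ and $\OCor$ only reveal $\Tw{\X[j]\B}$ and $\Tw{\Z[j]\B}$. These never involve $\RRFdef{L}$, which lives only in the $\B$-orthogonal direction.
\end{itemize}
Together these give that, for each tag query, the value $\RRFdef{L}(\Sndrr\|\kwd\|\Rcvrr)$ is a fresh uniform matrix, independent of the adversary's entire remaining view. This holds conditioned on the adversary's adaptive choices, because the evaluation point is fixed before the value is sampled lazily.

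\textbf{Step 3: conclude uniformity and handle the residual terms.} Since $\tg[0]$ has a nonzero component outside the span of $\B$, the product $\RRFdef{L}(\cdot)\tg[0]$ is uniform in $\Zp^{\nRAd}$. It therefore perfectly masks the structured part of $\tg[1]$, which becomes an independent uniform $\tg'$, exactly as in $\hame{5}$.

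I expect this last step to be the main obstacle. It requires confirming that in $\hame{4,L,1}$ the vector $\tg[0]$ has already been sampled so that its non-$\B$ component is nonzero with certainty. This should follow from the re-sampling done in the preceding $(\cdot,2)$-type hybrids, and it is what makes the claimed equality exact rather than up to an error of $\frac{1}{p}$. If it is not exact, I would absorb a $\frac{1}{p}$ term into the bounds of the earlier lemmas.

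The challenge side also needs care. The challenge components still contain their own fresh function values, and those must be handled symmetrically in the later step $\hame{5}\to\hame{6}$. The point to check is that rewriting the tags here does not alter their joint distribution, which again follows from the disjointness $\QTg\cap\QChl=\emptyset$.
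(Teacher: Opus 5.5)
Your proposal is correct and is essentially the paper's argument: the full-length random function is evaluated at tag inputs that are pairwise distinct and disjoint from every challenge input. Each such value is therefore fresh, which makes every $\tg[1]$ an independent uniform vector.

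Your extra care about $\Bpt\tg[0]\neq 0$ is warranted, and the paper skips it. However, in $\hame{4,L,1}$ the vector $\tg[0]$ is sampled uniformly from $\Zp^{\nRB}$; the structured sampling appears only in the $(\cdot,2)$--$(\cdot,5)$ hybrids. So the equality holds only up to an additive $|\QTg|\cdot p^{-\nRCB}$. Your fallback already covers this, though the term is most naturally charged to this lemma itself rather than to the earlier ones.
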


\begin{lemma}$(\hame{5}$ to $\hame{6})$\label{lem:H{5}-H{6}}
    For any $\ppt$ adversary $\AA$ in the $\fullcpa$ security model, there exists a $\ppt$ adversary $\AB$ such that
{$|\adv{\AA}{\hame{5}}-\adv{\AA}{\hame{6}}|{\leq \Adv{\AB,\Go}{\prll|\QChl|\mhyf\matDH[\nRB,\nCB]}+\frac{|\QChl|+2}{p-1}}.$}
\end{lemma}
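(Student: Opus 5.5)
The plan is to show that once $\hame{5}$ has made every $\tg[1]$ uniform, the only secret-dependent material left in the adversary's view is in the $\OChal$ responses. Those can be made uniform by one application of the $\prll|\QChl|$-fold $\matDH[\nRB,\nCB]$ assumption in $\Go$, plus a statistical step.

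\emph{Step 1 (what the view depends on).} I first list, in $\hame{5}$, everything $\AA$ sees that involves $\N[\ii,b],\X[j],\Z[j]$. This is $\Tw{\N[\ii,b]\B}$ in $\ppC$, $\Tw{\X[j]\B}$ from $\OKey$, $\sky_j$ for corrupted $j$, and $\chl[\ij,1]$. The tags carry no information in $\hame{5}$, since $\tg[1]$ is fresh.

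I then decompose every matrix $\M\in\{\N[\ii,b],\X[j],\Z[j]\}$ along $(\B\,\|\,\B^{\perp})$. I would use the randomness injected during the $\hame{2},\hame{3},\hame{4}$ chains: in $\hame{4,\klen+2\ulen,\cdot}$ the aggregated key carries $\RRFdef{\klen+2\ulen}(\kw\|\Sndr\|\Rcvr)$ in the $\B^{\perp}$-direction. Under this decomposition each
$\chl[\ij,1]=(\text{term determined by the }\B\text{-projections})+(\text{random-function term indexed by }(\Sndr,\kw,\Rcvr_{\perm(\ij)}))^{\top}\chl[\ij,0]$.
The game's restrictions are $\QTg\cap\QChl=\emptyset$ and that no user appearing in $\QChl$ is corrupted. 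Together they guarantee that the random-function value on every challenged tuple never appears in any tag or corrupted key. So it enters only through challenge ciphertexts.

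\emph{Step 2 (computational step).} The difficulty is that a tuple may be challenged many times, since $\Oct$ allows repetitions. Then the vectors $\RF{}{\cdot}^{\top}\chl[\ij,0]$ for one tuple are correlated, and I cannot conclude uniformity information-theoretically. To fix this, the reduction $\AB$ takes a $\prll|\QChl|$-fold $\matDH[\nRB,\nCB]$ instance in $\Go$, with one instance per challenged tuple, of the form $\On{\D}$, $\On{\h[1]},\ldots$, where $\D\sample\Dk[\nRB,\nCB]$.

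$\AB$ simulates everything else itself. It picks $\N[\ii,b],\X[j],\Z[j]$ honestly except for the hidden $\B^{\perp}$-component, and it sets each challenge's random-function contribution to $\On{\h[\cdot]}$ of the corresponding instance. When the $\h$'s lie in the span of $\D$, this reproduces the $\hame{5}$ distribution, because the tuple's random-function value is implicitly $\D\w$ for a consistent $\w$. When the $\h$'s are uniform, every $\chl[\ij,1]$ is uniform and independent, which is exactly $\Exp{\AA}{\Core}{1}=\hame{6}$.

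The additive $\frac{|\QChl|+2}{p-1}$ comes from the random self-reducibility ($\Fact3$) and from conditioning on the sampled matrices ($\chl[\ij,0]$ projections and $\B$-related blocks) being full rank.

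\emph{Main obstacle.} The delicate step is Step 2's embedding. $\AB$ must produce the $\B$-projection terms of $\chl[\ij,1]$ (which need $\Nt[\ii,b],\Xt[j],\Zt[j]$ in the clear, or at least in $\Go$) while simultaneously publishing $\Tw{\N[\ii,b]\B}$ and $\Tw{\X[j]\B}$ in $\Gt$. My first attempt is to let $\AB$ sample all of $\N,\X,\Z$ in the clear. It then keeps only the tuple-indexed random function implicit, defined lazily per tuple via the $\matDH$ instance, which is possible because that function never needs to be evaluated outside $\OChal$. Finally, I would verify that the permutation $\perm$ and repeated $\Oct$ queries map consistently onto instances, so that one tuple always reuses the same $\D$-instance.
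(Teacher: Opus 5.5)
Your Step~2 embedding, as written, does not reproduce $\hame{5}$ in the real branch. In $\hame{5}$ the random-function part of $\chl[\ij,1]$ is $\Bp\mathbf{F}_c^{\top}\chl[\ij,0]$, where $\mathbf{F}_c\in\Zp^{\nRAd\times\nRCB}$ is the random function's value on tuple $c$. This term lies in $\mathrm{span}(\Bp)$ and is a fixed linear function of the vector $\chl[\ij,0]$ that $\AA$ sees. Suppose you sample $\chl[\ij,0]$ yourself and substitute a sample $\mathbf{h}=\mathbf{D}\mathbf{w}$, with $\mathbf{w}$ hidden from you. Then the added vector is not confined to $\mathrm{span}(\Bp)$. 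Over repeated queries on one tuple, it is also unrelated to the $\chl[\ij,0]$'s, so the two distributions differ. (The phrase ``the random-function value is implicitly $\mathbf{D}\mathbf{w}$'' also conflates a matrix with a vector.)

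The missing idea is to take $\chl[\ij,0]$ \emph{from} the instance. Split $\mathbf{h}$ into its first $\nRAd$ and last $\nRCB$ coordinates, set $\chl[\ij,0]:=\mathbf{h}_{\mathrm{top}}$, and add $\Bp\mathbf{h}_{\mathrm{bot}}$. In the real branch the added term equals $\Bp\mathbf{D}_{\mathrm{bot}}\mathbf{D}_{\mathrm{top}}^{-1}\chl[\ij,0]$. This is what the paper's reduction does, via $\chl[\ij,0]=\Fu[c]\chl[\ij,0]'$ together with the extra term $\Bp\Fd[c]\chl[\ij,0]'$. It absorbs the real-branch term into the implicit shift $\N[1,b]=\hN[1,b]+\Mu^{-\top}\Mdt\Bpt$. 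That shift is invisible in $\Tw{\N[\ii,b]\B}$ because $\Bpt\B=\Zro$, and tags no longer involve $\N$ in $\hame{5}$.

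With your variant (one matrix $\mathbf{D}_c$ per tuple, a fresh sample per query) you can instead let $\mathbf{D}_{c,\mathrm{bot}}\mathbf{D}_{c,\mathrm{top}}^{-1}$ \emph{be} $\mathbf{F}_c^{\top}$. This does handle repeated $\Oct$ queries. The paper's reuse of one $\F[c]$ with fresh $\chl[\ij,0]'$ does not: its random branch only adds another per-entry linear map $\R[c]\Fu[c]^{-1}$.

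Second, even with the correct embedding only the $\Bp$-component of $\chl[\ij,1]$ becomes uniform. The $\B$-component is
$\Bt\chl[\ij,1]=(\Sum{\ii}{[\klen]}\N[\ii,\kw_{\ii}]\B+\X[\Sndr]\B+\Z[\Rcvr]\B)^{\top}\chl[\ij,0]$,
and nothing you add along $\Bp$ changes it. Both $\Tw{\N[\ii,b]\B}$ (from $\ppC$) and $\Tw{\X[j]\B}$ (from $\OKey$) are public. So pairing $\On{\chl[\ij,1]}$ with $\Tw{\B}$ and dividing out the known part gives $\AA$ the value $\chl[\ij,0]^{\top}\Z[\Rcvr]\B$ in $\GT$. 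This is a fixed linear function of $\chl[\ij,0]$ across all challenges that share the receiver $\Rcvr$.

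Hence ``every $\chl[\ij,1]$ is uniform and independent, i.e.\ $\hame{6}$'' does not follow, and you need an extra hybrid. One option is a per-receiver MDDH step in $\Go$ on $(\chl[\ij,0];(\Z[\Rcvr]\B)^{\top}\chl[\ij,0])$. It would use the fact that challenged receivers are never corrupted, so $\Tw{\Z[\Rcvr]\B}$ is never released. The paper's written reduction also only adds a term along $\Bp$, so it does not supply this step either.
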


%!TEX spellcheck = en_US
%!TEX root = ../main.tex

\begin{figure*}[!h]
    \scriptsize
    %\centering
    \hspace{-1.0cm}
    \fbox{\parbox[c][117mm][c]{1.12\textwidth}{
        \centering\hspace{-1.0cm}{$\hame{1}$},\lfbox[rounded]{$\hame{2,i,1}$},\lfbox[dotted]{$\hame{2,i,2}$,\lfbox[border-style=dashed]{$\hame{2,i,3}$},\lfbox[background-color=lightgray!60!white,border-color=lightgray!60!white]{$\hame{2,i,4}$},\lfbox{$\hame{2,i,5}$}}\\ \ \\ 
        % , \lfbox[background-color=lightgray!60!white,border-color=black]{$\hame{3}$},  %
        % , \lfbox[border-style=double,border-width=2pt]{$\hame{4}$}\\ \ \\
        % \hspace{-2cm}
        \begin{minipage}{0.54\textwidth}
            \underline{$\Init(1^\secpp)$}
            \begin{algorithmic}[1]
                \State $\abG\sample \ABSGen(1^\secpp)$ %=(p,\Go,\Gt,\GT,\go,\gt,e)
                \State $\B$,\lfbox[dotted]{$\Bz,\Bo$}, \lfbox[border-style=dashed]{\lfbox{$\Bzs,\Bos$}} $\sample\Dk[\nRB,\nCB]$ s.t.
                \item[$-$] $(\B,\Bz,\Bo)$ generates $\Zp^{\nRB}$
                \item[$-$] $(\Bzs,\Bos)$ generates $\Bp$
                \item[$-$] $\Bt\Bzs=\Zro$, $\Bot\Bzs=\Zro$, $\Bt\Bos=\Zro$, $\Bzt\Bos=\Zro$ 
                \State {\color{\confcolorr}For $(\ii,b)\in[\klen]\times\{0,1\}: \N[\ii,b]\sample \Zp^{(\nRAd)\times (\nRB)}$} %\hfill $\bslash \hame{1}-\hame{4}$

                \State $\ppC=(\On{\B},\Tw{\B},{\color{\confcolorrr}\left\{\Tw{\N[\ii,b]\B}\right\}_{\!\substack{\!\ii\in [\klen] \\ \!b\in\{0,1\}}}})$
                \item[]
                % \item[]
                % \item[]
                % \item[]
            \end{algorithmic}
            \underline{$\OChal({\Sndr}, \kw, {\RSet})$}
            \begin{algorithmic}[1]
                \State Let $\RSet=\{\Rcvr_1,\ldots,\Rcvr_{\ell}\}$
                \State $\perm\sample \Perm{[\ell]}$ 
                \State $\QChl=\QChl\cup \{({\Sndr},\kw,{\Rcvr_i})_{i\in[\ell]}\}$
                \State $\Chl=((\On{\chl[\ij,0]},\On{\chl[\ij,1]})_{\ij\in[\ell]})$ s.t.
                % \State Return $\Chl=((\On{\chl[j,0]},\On{\chl[j,1]})_{j\in[\ell]})$
                \item[]$\chl[j,0]\sample\Zp^{\nRAd}$ 
                \item[]$\chl[j,1]={(\Sum{\ii}{[\klen]}\Nt[\ii,{\msgii[\kw]{\ii}}]+\Xt[\Sndr]+\Zt[\Rcvr_{\perm(j)}])\chl[j,0]}$
                \item[]\hspace{0.3cm}\lfbox[rounded]{$+\Bp(\RF[{\Sndr,\Rcvr_{\perm(j)}}]{i}{\msgi[\kw]{i}}^{\top})\chl[j,0]$}  % +\ZRF[\Rcvr_{\perm(j)}]{i}{\msgi[\kw]{i}}^{\top}
                \item[]\hspace{0.3cm}\lfbox[border-style=dashed]{$+(\Bzs\ZF[{\Sndr,\Rcvr_{\perm(j)}}]{i}{\msgi[\kw]{i}}^{\top}+\Bos\OF[{\Sndr,\Rcvr_{\perm(j)}}]{i}{\msgi[\kw]{i}}^{\top})\chl[j,0]$} %+\Bos(\XOF[\Sndr]{i}{\msgi[\kw]{i}}^{\top}+\ZOF[\Rcvr_{\perm(j)}]{i}{\msgi[\kw]{i}}^{\top})  
                \item[]\hspace{0.3cm}\lfbox[background-color=lightgray!60!white,border-color=lightgray!60!white]{$+(\Bzs\ZF[{\Sndr,\Rcvr_{\perm(j)}}]{i+1}{\msgi[\kw]{i+1}}^{\top}+\Bos\OF[{\Sndr,\Rcvr_{\perm(j)}}]{i}{\msgi[\kw]{i}}^{\top})\chl[j,0]$}  %+(\Bzs\ZZF[\Rcvr_{\perm(j)}]{i}{\msgi[\kw]{i}}^{\top}+\Bos\ZOF[\Rcvr_{\perm(j)}]{i}{\msgi[\kw]{i}}^{\top})
                \item[]\hspace{0.3cm}\lfbox{$+(\Bzs\ZF[{\Sndr,\Rcvr_{\perm(j)}}]{i+1}{\msgi[\kw]{i+1}}^{\top}+\Bos\OF[{\Sndr,\Rcvr_{\perm(j)}}]{i+1}{\msgi[\kw]{i+1}}^{\top})\chl[j,0]$}  % +(\Bzs\ZZF[\Rcvr_{\perm(j)}]{i+1}{\msgi[\kw]{i+1}}^{\top}+\Bos\ZOF[\Rcvr_{\perm(j)}]{i+1}{\msgi[\kw]{i+1}}^{\top})
                % \item[]\hspace{0.3cm}\lfbox[background-color=lightgray!60!white,border-color=black]{$+\Bp\RF{\Sndr}{\kw}{\Rcvr_{\perm(j)}}^{\top}\chl[j,0]$}  
%%%                \item[]\lfbox[border-style=double,border-width=2pt]{$\chl[j,1]\sample \Zp^{\nRB}$}
                \State Return $\Chl$ %=((\On{\chl[j,0]},\On{\chl[j,1]})_{j\in[\ell]})$
                \item[]
            \end{algorithmic}        
        \underline{$\OCor(j)$}
            \begin{algorithmic}[1]
                \State $\QCr=\QCr\sqcup \{(j,\sky_j)\}$ from $\QSk$
                \State Return $\sky_j$
                % \item[]
            \end{algorithmic}    
        \end{minipage}
        \vline \hspace{1pt}
        \begin{minipage}{0.55\textwidth}
            \underline{$\OKey(j)$}
            \begin{algorithmic}[1]
                \State For $(\ii,b)\in[2\ulen]\times\{0,1\}$: $\J[j,\ii,b]\sample\Zp^{(\nRAd)\times (\nRB)}$ %\hfill $\bslash \hame{0}$
                \State ${\color{\confcolorr}\X[j] =  \SumiDT{1}{\ulen}\J[j,i,{\msgii[j]{i}}],\Z[j] = \SumiDT{\ulen+1}{2\ulen}\J[j,i,{\msgii[j]{i}}]}$ %\hfill $\bslash \hame{1}-\hame{4}$   
                \State $\sky_j=(\left(\Tw{\X[j]\B},\Tw{\Z[j]\B}\right))$
                \State $\QSk=\QSk\sqcup \{(j,\sky_j)\}$
                \State Return $(j,{\color{\confcolor}\Tw{\X[j]\B}})$
                \item[]
                % \item[]
            \end{algorithmic}
            \underline{$\OTag({\Sndrr},\kwd,{\Rcvrr})$}   
            \begin{algorithmic}[1]
                \State $\QTg=\QTg\sqcup \{({\Sndrr},\kwd,{\Rcvrr})\}$
                %\State $\r\sample\Zp^{\nCB}$
                \State $\Tg=(\Tw{\tg[{0}]},\Tw{\tg[{1}]})$ where
                \item[]$\tg[{0}]\sample\Zp^{\nRB}$ %={\B\r}$
                \item[]\lfbox[dotted]{$\tg[{0}] = \B\r+\B_{\kwd_{i+1}}\r_{\kwd_{i+1}}$ for $\r,\r_{\kwd_{i+1}}\sample\Zp^{\nCB}$}   
                \item[]$\tg[{1}]={(\Sum{\ii}{[\klen]}\N[\ii,{\msgii[\kwd]{\ii}}]+\X[\Sndrr]+\Z[\Rcvrr])\tg[{0}]}$
                \item[]\hspace{0.3cm}\lfbox[rounded]{$+(\RF[{\Sndrr,\Rcvrr}]{i}{\msgi[\kwd]{i}})\Bpt\tg[{0}]$}  
                \item[]\hspace{0.3cm}\lfbox[border-style=dashed]{$+((\ZF[{\Sndrr,\Rcvrr}]{i}{\msgi[\kwd]{i}}\Bzst+\OF[{\Sndrr,\Rcvrr}]{i}{\msgi[\kwd]{i}}\Bost))\tg[{0}]$}   
                %{$+((\XZF[\Sndrr]{i}{\msgi[\kwd]{i}}\Bzst+\XOF[\Sndrr]{i}{\msgi[\kwd]{i}}\Bost)+(\ZZF[\Rcvrr]{i}{\msgi[\kwd]{i}}\Bzst+\ZOF[\Rcvrr]{i}{\msgi[\kwd]{i}})\Bost)\tg[{0}]$}   
                \item[]\hspace{0.3cm}\lfbox[background-color=lightgray!60!white,border-color=lightgray!60!white]{$+(\ZF[{\Sndrr,\Rcvrr}]{i+1}{\msgi[\kwd]{i+1}}\Bzst+\OF[{\Sndrr,\Rcvrr}]{i}{\msgi[\kwd]{i}}\Bost)\tg[{0}]$} 
                %{$+((\XZF[\Sndrr]{i+1}{\msgi[\kwd]{i+1}}\Bzst+\XOF[\Sndrr]{i+1}{\msgi[\kwd]{i+1}}\Bost)+(\ZZF[\Rcvrr]{i}{\msgi[\kwd]{i}}\Bzst+\ZOF[\Rcvrr]{i}{\msgi[\kwd]{i}})\Bost)\tg[{0}]$}
                \item[]\hspace{0.2cm}
                \lfbox{$+(\ZF[{\Sndrr,\Rcvrr}]{i+1}{\msgi[\kwd]{i+1}}\Bzst+\OF[{\Sndrr,\Rcvrr}]{i+1}{\msgi[\kwd]{i+1}}\Bost)\tg[{0}]$} 
                %\lfbox{$+((\XZF[\Sndrr]{i+1}{\msgi[\kwd]{i+1}}\Bzst+\XOF[\Sndrr]{i+1}{\msgi[\kwd]{i+1}}\Bost)+(\ZZF[\Rcvrr]{i+1}{\msgi[\kwd]{i+1}}\Bzst+\ZOF[\Rcvrr]{i+1}{\msgi[\kwd]{i+1}})\Bost)\tg[{0}]$}
                %\lfbox[background-color=lightgray!60!white,border-color=black]{$+\RF{\Sndrr}{\kwd}{\Rcvrr}\Bpt\tg[{0}]$}  
%%%                \item[]\hspace{0.0cm}\lfbox[background-color=lightgray!60!white,border-color=black]{$\tg[{1}]\sample \Zp^{\nRAd}$}
                % \item[]\lfbox[border-style=double,border-width=2pt]{$\tg[{1}]\sample \Zp^{\nRAd}$}
                \State Return $\Tg$ 
                \item[]
            \end{algorithmic}
            \underline{$\Finalize(\bee\in\{0,1\})$}
            \begin{algorithmic}[1]
                \State Return $\bee\wedge\ (\QTg\cap\QChl=\emptyset)$
                % \item[]\hspace{0.2cm}${\color{\confcolor}\wedge\ (\QSk\subset \SSk)}$
                \item[]\hspace{0.2cm}${\color{\confcolor}\wedge\ (\forall\ (u,\dontcare)\in\QCr, (u,\dontcare,\dontcare),(\dontcare,\dontcare,u)\notin {\color{\confcolor}\QTg\cup}\QChl)}$
                \item[]
                % \item[]
            \end{algorithmic}  
        \end{minipage}
    }
    }
    \caption{Hybrids for the transition from {\color{\confcolor}$\hame{1}$ to $\hame{2,\klen-1,5}$}.}
    \label{fig:Core-Hybrid}
\end{figure*}

\begin{figure*}[!h]
    \scriptsize
    %\centering
    \hspace{-1.17cm}
    \fbox{\parbox[c][117mm][c]{1.39\textwidth}{
        \centering\hspace{-1.0cm}{}\lfbox[rounded]{$\hame{3,\klen+i,1}$},\lfbox[dotted]{$\hame{3,\klen+i,2}$,\lfbox[border-style=dashed]{$\hame{3,\klen+i,3}$},\lfbox[background-color=lightgray!60!white,border-color=lightgray!60!white]{$\hame{3,\klen+i,4}$},\lfbox{$\hame{3,\klen+i,5}$}}\\ \ \\ 
        % ,\lfbox[background-color=lightgray!60!white,border-color=black]{$\hame{3}$} %$\hame{0},\hame{1}$
        % ,\lfbox[border-style=double,border-width=2pt]{$\hame{4}$}\\ \ \\
        \hspace{-0.4cm}
        \begin{minipage}{0.67\textwidth}
            \underline{$\Init(1^\secpp)$}
            \begin{algorithmic}[1]
                \State $\abG\sample \ABSGen(1^\secpp)$ %=(p,\Go,\Gt,\GT,\go,\gt,e)
                \State $\B$,\lfbox[dotted]{$\Bz,\Bo$}, \lfbox[border-style=dashed]{\lfbox{$\Bzs,\Bos$}} $\sample\Dk[\nRB,\nCB]$ s.t.
                \item[$-$] $(\B,\Bz,\Bo)$ generates $\Zp^{\nRB}$
                \item[$-$] $(\Bzs,\Bos)$ generates $\Bp$
                \item[$-$] $\Bt\Bzs=\Zro$, $\Bot\Bzs=\Zro$, $\Bt\Bos=\Zro$, $\Bzt\Bos=\Zro$ 
                \State {\color{\confcolorr}For $(\ii,b)\in[\klen]\times\{0,1\}: \N[\ii,b]\sample \Zp^{(\nRAd)\times (\nRB)}$} %$\qquad\bslash \hame{1}-\hame{4}$
                \State $\ppC=(\On{\B},\Tw{\B},{\color{\confcolorrr}\left\{\Tw{\N[\ii,b]\B}\right\}_{\!\substack{\!\ii\in [\klen] \\ \!b\in\{0,1\}}}})$
                \item[]
                % \item[]
                % \item[]
                % \item[]
            \end{algorithmic}
            \underline{$\OChal({\Sndr}, \kw, {\RSet})$}
            \begin{algorithmic}[1]
                \State Let $\RSet=\{\Rcvr_1,\ldots,\Rcvr_{\ell}\}$
                \State $\perm\sample \Perm{[\ell]}$ 
                \State $\QChl=\QChl\cup \{({\Sndr},\kw,{\Rcvr_{\ij}})_{\ij\in[\ell]}\}$
                \State $\Chl=((\On{\chl[\ij,0]},\On{\chl[\ij,1]})_{\ij\in[\ell]})$ s.t.
                % \State Return $\Chl=((\On{\chl[j,0]},\On{\chl[j,1]})_{j\in[\ell]})$
                \item[]$\chl[j,0]\sample\Zp^{\nRAd}$ 
                \item[]$\chl[j,1]={(\Sum{\ii}{[\klen]}\Nt[\ii,{\msgii[\kw]{\ii}}]+\Xt[\Sndr]+\Zt[\Rcvr_{\perm(j)}])\chl[j,0]}$
                \item[]\hspace{0.3cm}\lfbox[rounded]{$+\Bp(\SRF[{\Sndr,\Rcvr_{\perm(j)}}]{\klen+i}{\kw\!\parallel\!\msgi[\Sndr]{i}}^{\top})\chl[j,0]$}  % +\ZRF[\Rcvr_{\perm(j)}]{i}{\msgi[\kw]{i}}^{\top}
                \item[]\hspace{0.3cm}\lfbox[border-style=dashed]{$+(\Bzs\SZF[{\Sndr,\Rcvr_{\perm(j)}}]{\klen+i}{\kw\!\parallel\!\msgi[\Sndr]{i}}^{\top}+\Bos\SOF[{\Sndr,\Rcvr_{\perm(j)}}]{\klen+i}{\kw\!\parallel\!\msgi[\Sndr]{i}}^{\top})\chl[j,0]$} %+\Bos(\XOF[\Sndr]{i}{\msgi[\kw]{i}}^{\top}+\ZOF[\Rcvr_{\perm(j)}]{i}{\msgi[\kw]{i}}^{\top})  
                \item[]\hspace{0.3cm}\lfbox[background-color=lightgray!60!white,border-color=lightgray!60!white]{$+(\Bzs\SZF[{\Sndr,\Rcvr_{\perm(j)}}]{\klen+i+1}{\kw\!\parallel\!\msgi[\Sndr]{i+1}}^{\top}+\Bos\SOF[{\Sndr,\Rcvr_{\perm(j)}}]{\klen+i}{\kw\!\parallel\!\msgi[\Sndr]{i}}^{\top})\chl[j,0]$}  %+(\Bzs\ZZF[\Rcvr_{\perm(j)}]{i}{\msgi[\kw]{i}}^{\top}+\Bos\ZOF[\Rcvr_{\perm(j)}]{i}{\msgi[\kw]{i}}^{\top})
                \item[]\hspace{0.3cm}\lfbox{$+(\Bzs\SZF[{\Sndr,\Rcvr_{\perm(j)}}]{\klen+i+1}{\kw\!\parallel\!\msgi[\Sndr]{i+1}}^{\top}+\Bos\SOF[{\Sndr,\Rcvr_{\perm(j)}}]{\klen+i+1}{\kw\!\parallel\!\msgi[\Sndr]{i+1}}^{\top})\chl[j,0]$}  % +(\Bzs\ZZF[\Rcvr_{\perm(j)}]{i+1}{\msgi[\kw]{i+1}}^{\top}+\Bos\ZOF[\Rcvr_{\perm(j)}]{i+1}{\msgi[\kw]{i+1}}^{\top})
                % \item[]\hspace{0.3cm}\lfbox[background-color=lightgray!60!white,border-color=black]{$+\Bp\RF{\Sndr}{\kw}{\Rcvr_{\perm(j)}}^{\top}\chl[j,0]$}  
%%%                \item[]\lfbox[border-style=double,border-width=2pt]{$\chl[j,1]\sample \Zp^{\nRB}$}
                \State Return $\Chl$ %=((\On{\chl[j,0]},\On{\chl[j,1]})_{j\in[\ell]})$
                \item[]
            \end{algorithmic}        
        \underline{$\OCor(j)$}
            \begin{algorithmic}[1]
                \State $\QCr=\QCr\sqcup \{(j,\sky_j)\}$ from $\QSk$
                \State Return $\sky_j$
                % \item[]
            \end{algorithmic}    
        \end{minipage}
        \vline \hspace{1pt}
        \begin{minipage}{0.68\textwidth}
            \underline{$\OKey(j)$}
            \begin{algorithmic}[1]
                \State For $(\ii,b)\in[\klen]\times\{0,1\}$: $\J[j,\ii,b]\sample \Zp^{(\nRAd)\times (\nRB)}$.
                \State{$\color{\confcolorr}\X[j] = \SumiDT{1}{\ulen}\J[j,i,{\msgii[j]{i}}],\Z[j] =\SumiDT{\ulen+1}{2\ulen}\J[j,i,{\msgii[j]{i}}]$} 
                \State $\sky_j=\left(\Tw{\X[j]\B},\Tw{\Z[j]\B}\right)$
                \State $\QSk=\QSk\sqcup \{(j,\sky_j)\}$
                \State Return $(j,{\color{\confcolor}\Tw{\X[j]\B}})$
                \item[]
                % \item[]
            \end{algorithmic}
            \underline{$\OTag({\Sndrr},\kwd,{\Rcvrr})$}   
            \begin{algorithmic}[1]
                \State $\QTg=\QTg\sqcup \{({\Sndrr},\kwd,{\Rcvrr})\}$
                %\State $\r\sample\Zp^{\nCB}$
                \State $\Tg=(\Tw{\tg[{0}]},\Tw{\tg[{1}]})$ where
                \item[]$\tg[{0}]\sample\Zp^{\nRB}$ %={\B\r}$
                \item[]\lfbox[dotted]{$\tg[{0}] = \B\r+\B_{\Sndrr_{i+1}}\r_{\Sndrr_{i+1}}$ for $\r,\r_{\Sndrr_{i+1}}\sample\Zp^{\nCB}$}   
                \item[]$\tg[{1}]={(\Sum{\ii}{[\klen]}\N[\ii,{\msgii[\kwd]{\ii}}]+\X[\Sndrr]+\Z[\Rcvrr])\tg[{0}]}$
                \item[]\hspace{0.3cm}\lfbox[rounded]{$+(\SRF[{\Sndrr,\Rcvrr}]{\klen+i}{\kwd\!\parallel\!\msgi[\Sndrr]{i}})\Bpt\tg[{0}]$}  
                \item[]\hspace{0.3cm}\lfbox[border-style=dashed]{$+((\SZF[{\Sndrr,\Rcvrr}]{\klen+i}{\kwd\!\parallel\!\msgi[\Sndrr]{i}}\Bzst+\SOF[{\Sndrr,\Rcvrr}]{\klen+i}{\kwd\!\parallel\!\msgi[\Sndrr]{i}}\Bost))\tg[{0}]$}   
                %{$+((\XZF[\Sndrr]{i}{\msgi[\kwd]{i}}\Bzst+\XOF[\Sndrr]{i}{\msgi[\kwd]{i}}\Bost)+(\ZZF[\Rcvrr]{i}{\msgi[\kwd]{i}}\Bzst+\ZOF[\Rcvrr]{i}{\msgi[\kwd]{i}})\Bost)\tg[{0}]$}   
                \item[]\hspace{0.3cm}\lfbox[background-color=lightgray!60!white,border-color=lightgray!60!white]{$+(\SZF[{\Sndrr,\Rcvrr}]{\klen+i+1}{\kwd\!\parallel\!\msgi[\Sndrr]{i+1}}\Bzst+\SOF[{\Sndrr,\Rcvrr}]{\klen+i}{\kwd\!\parallel\!\msgi[\Sndrr]{i}}\Bost)\tg[{0}]$} 
                %{$+((\XZF[\Sndrr]{i+1}{\msgi[\kwd]{i+1}}\Bzst+\XOF[\Sndrr]{i+1}{\msgi[\kwd]{i+1}}\Bost)+(\ZZF[\Rcvrr]{i}{\msgi[\kwd]{i}}\Bzst+\ZOF[\Rcvrr]{i}{\msgi[\kwd]{i}})\Bost)\tg[{0}]$}
                \item[]\hspace{0.2cm}
                \lfbox{$+(\ZF[{\Sndrr,\Rcvrr}]{\klen+i+1}{\kwd\!\parallel\!\msgi[\Sndrr]{i+1}}\Bzst+\SOF[{\Sndrr,\Rcvrr}]{\klen+i+1}{\kwd\!\parallel\!\msgi[\Sndrr]{i+1}}\Bost)\tg[{0}]$} 
                %\lfbox{$+((\XZF[\Sndrr]{i+1}{\msgi[\kwd]{i+1}}\Bzst+\XOF[\Sndrr]{i+1}{\msgi[\kwd]{i+1}}\Bost)+(\ZZF[\Rcvrr]{i+1}{\msgi[\kwd]{i+1}}\Bzst+\ZOF[\Rcvrr]{i+1}{\msgi[\kwd]{i+1}})\Bost)\tg[{0}]$}
                %\lfbox[background-color=lightgray!60!white,border-color=black]{$+\RF{\Sndrr}{\kwd}{\Rcvrr}\Bpt\tg[{0}]$}  
%%%                \item[]\hspace{0.0cm}\lfbox[background-color=lightgray!60!white,border-color=black]{$\tg[{1}]\sample \Zp^{\nRAd}$}
                % \item[]\lfbox[border-style=double,border-width=2pt]{$\tg[{1}]\sample \Zp^{\nRAd}$}
                \State Return $\Tg$ 
                \item[]
            \end{algorithmic}
            \underline{$\Finalize(\bee\in\{0,1\})$}
            \begin{algorithmic}[1]
                \State Return $\bee\wedge\ (\QTg\cap\QChl=\emptyset)$
                % \item[]\hspace{0.2cm}${\color{\confcolor}\wedge\ (\QSk\subset \SSk)}$
                \item[]\hspace{0.2cm}${\color{\confcolor}\wedge\ (\forall\ (u,\dontcare)\in\QCr, (u,\dontcare,\dontcare),(\dontcare,\dontcare,u)\notin {\color{\confcolor}\QTg\cup}\QChl)}$
                \item[]
                % \item[]
            \end{algorithmic}  
        \end{minipage}
    }
    }
    \caption{Hybrids for the transition from {\color{\confcolor}$\hame{3,\klen,1}$ to $\hame{3,\klen+\ulen-1,5}$}.}
    \label{fig:CoreS-Hybrid}
\end{figure*}

\begin{figure*}[!h]
    \scriptsize
    %\centering
    \hspace{-2.9cm}
    \fbox{\parbox[c][134mm][c]{1.55\textwidth}{
        \centering\hspace{-1.0cm}{}\lfbox[rounded]{$\hame{4,\klen+\ulen+i,1}$},\lfbox[dotted]{$\hame{4,\klen+\ulen+i,2}$,\lfbox[border-style=dashed]{$\hame{4,\klen+\ulen+i,3}$},\lfbox[background-color=lightgray!60!white,border-color=lightgray!60!white]{$\hame{4,\klen+\ulen+i,4}$},\lfbox{$\hame{4,\klen+\ulen+i,5}$}}, 
        \lfbox[background-color=lightgray!60!white,border-color=black]{$\hame{5}$}, 
        \lfbox[border-style=double,border-width=2pt]{$\hame{6}$}\\ \ \\
        \hspace{-0.2cm}
        \begin{minipage}{0.60\textwidth}%{0.80\textwidth}
            \underline{$\Init(1^\secpp)$}
            \begin{algorithmic}[1]
                \State $\abG\sample \ABSGen(1^\secpp)$ %=(p,\Go,\Gt,\GT,\go,\gt,e)
                \State $\B$,\lfbox[dotted]{$\Bz,\Bo$}, \lfbox[border-style=dashed]{\lfbox{$\Bzs,\Bos$}} $\sample\Dk[\nRB,\nCB]$ s.t.
                \item[$-$] $(\B,\Bz,\Bo)$ generates $\Zp^{\nRB}$
                \item[$-$] $(\Bzs,\Bos)$ generates $\Bp$
                \item[$-$] $\Bt\Bzs=\Zro$, $\Bot\Bzs=\Zro$, $\Bt\Bos=\Zro$, $\Bzt\Bos=\Zro$ 
                \State {\color{\confcolorr}For $(\ii,b)\in[\klen]\times\{0,1\}: \N[\ii,b]\sample \Zp^{(\nRAd)\times (\nRB)}$} %$\qquad\bslash \hame{1}-\hame{4}$
                \State $\ppC=(\On{\B},\Tw{\B},{\color{\confcolorrr}\left\{\Tw{\N[\ii,b]\B}\right\}_{\!\substack{\!\ii\in [\klen] \\ \!b\in\{0,1\}}}})$
                \item[]
                % \item[]
                % \item[]
                % \item[]
            \end{algorithmic}
            \underline{$\OChal({\Sndr}, \kw, {\RSet})$}
            \begin{algorithmic}[1]
                \State Let $\RSet=\{\Rcvr_1,\ldots,\Rcvr_{\ell}\}$
                \State $\perm\sample \Perm{[\ell]}$ 
                \State $\QChl=\QChl\cup \{({\Sndr},\kw,{\Rcvr_i})_{i\in[\ell]}\}$
                \State $\Chl=((\On{\chl[\ij,0]},\On{\chl[\ij,1]})_{\ij\in[\ell]})$ s.t.
                % \State Return $\Chl=((\On{\chl[j,0]},\On{\chl[j,1]})_{j\in[\ell]})$
                \item[]$\chl[j,0]\sample\Zp^{\nRAd}$ 
                \item[]$\chl[j,1]={(\Sum{\ii}{[\klen]}\Nt[\ii,{\msgii[\kw]{\ii}}]+\Xt[\Sndr]+\Zt[\Rcvr_{\perm(j)}])\chl[j,0]}$
                \item[]\hspace{0.3cm}\lfbox[rounded]{$+\Bp(\RRF[{\Sndr,\Rcvr_{\perm(j)}}]{\klen+\ulen+i}{\kw\!\parallel\!\Sndr\!\parallel\!\msgi[{{\Rcvr_{\perm(j)}}}]{i}}^{\top})\chl[j,0]$}  % +\ZRF[\Rcvr_{\perm(j)}]{i}{\msgi[\kw]{i}}^{\top}
                \item[]\hspace{0.3cm}\lfbox[border-style=dashed]{$+(\Bzs\RZF[{\Sndr,\Rcvr_{\perm(j)}}]{\klen+\ulen+i}{\kw\!\parallel\!\Sndr\!\parallel\!\msgi[{{\Rcvr_{\perm(j)}}}]{i}}^{\top}+\Bos\ROF[{\Sndr,\Rcvr_{\perm(j)}}]{\klen+\ulen+i}{\kw\!\parallel\!\Sndr\!\parallel\!\msgi[{{\Rcvr_{\perm(j)}}}]{i}}^{\top})\chl[j,0]$} %+\Bos(\XOF[\Sndr]{i}{\msgi[\kw]{i}}^{\top}+\ZOF[\Rcvr_{\perm(j)}]{i}{\msgi[\kw]{i}}^{\top})  
                \item[]\hspace{0.3cm}\lfbox[background-color=lightgray!60!white,border-color=lightgray!60!white]{$+(\Bzs\RZF[{\Sndr,\Rcvr_{\perm(j)}}]{\klen+\ulen+i+1}{\kw\!\parallel\!\Sndr\!\parallel\!\msgi[{{\Rcvr_{\perm(j)}}}]{i+1}}^{\top}+\Bos\ROF[{\Sndr,\Rcvr_{\perm(j)}}]{\klen+\ulen+i}{\kw\!\parallel\!\Sndr\!\parallel\!\msgi[{{\Rcvr_{\perm(j)}}}]{i}}^{\top})\chl[j,0]$}  %+(\Bzs\ZZF[\Rcvr_{\perm(j)}]{i}{\msgi[\kw]{i}}^{\top}+\Bos\ZOF[\Rcvr_{\perm(j)}]{i}{\msgi[\kw]{i}}^{\top})
                \item[]\hspace{0.3cm}\lfbox{$+(\Bzs\RZF[{\Sndr,\Rcvr_{\perm(j)}}]{\klen+\ulen+i+1}{\kw\!\parallel\!\Sndr\!\parallel\!\msgi[{{\Rcvr_{\perm(j)}}}]{i+1}}^{\top}\chl[j,0]$}
                \item[]\hspace{3cm}\lfbox{$+\Bos\ROF[{\Sndr,\Rcvr_{\perm(j)}}]{\klen+\ulen+i+1}{\kw\!\parallel\!\Sndr\!\parallel\!\msgi[{{\Rcvr_{\perm(j)}}}]{i+1}}^{\top})\chl[j,0]$}  % +(\Bzs\ZZF[\Rcvr_{\perm(j)}]{i+1}{\msgi[\kw]{i+1}}^{\top}+\Bos\ZOF[\Rcvr_{\perm(j)}]{i+1}{\msgi[\kw]{i+1}}^{\top})
                % \item[]\hspace{0.3cm}\lfbox[background-color=lightgray!60!white,border-color=black]{$+\Bp\RF{\Sndr}{\kw}{\Rcvr_{\perm(j)}}^{\top}\chl[j,0]$}  
                \item[]\lfbox[border-style=double,border-width=2pt]{$\chl[j,1]\sample \Zp^{\nRB}$}
                \State Return $\Chl$ %=((\On{\chl[j,0]},\On{\chl[j,1]})_{j\in[\ell]})$
                \item[]
            \end{algorithmic}        
        \underline{$\OCor(j)$}
            \begin{algorithmic}[1]
                \State $\QCr=\QCr\sqcup \{(j,\sky_j)\}$ from $\QSk$
                \State Return $\sky_j$
                % \item[]
            \end{algorithmic}    
        \end{minipage}
        \vline \hspace{1pt}
        \begin{minipage}{0.72\textwidth}
            \hspace{-0.05cm}
            \underline{$\OKey(j)$}
            \begin{algorithmic}[1]
                \State For $(\ii,b)\in[\klen]\times\{0,1\}$: $\J[j,\ii,{b}]\sample\Zp^{(\nRAd)\times (\nRB)}$
    %\item[]\hspace{0.35cm}$\X[j,\ii,b],\Z[j,\ii,b]\sample\Zp^{(\nRAd)\times (\nRB)}$ $\qquad\qquad\qquad\quad\bslash \hame{0}$
                \State ${\color{\confcolorr}\X[j] = \SumiDT{1}{\ulen}\J[j,i,{\msgii[j]{i}}], \Z[j] =  \SumiDT{\ulen+1}{2\ulen}\J[j,i,{\msgii[j]{i}}]}$ %$\quad\bslash \hame{1}-\hame{4}$   
                \State $\sky_j=(\left(\Tw{\X[j]\B},\Tw{\Z[j]\B}\right)_{\substack{{\ii\in[\klen]}\\{b\in\{0,1\}}}})$
                \State $\QSk=\QSk\sqcup \{(j,\sky_j)\}$
                \State Return $(j,{\color{\confcolor}\Tw{\X[j]\B}})$
                \item[]
                % \item[]
            \end{algorithmic}
            \underline{$\OTag({\Sndrr},\kwd,{\Rcvrr})$}   
            \begin{algorithmic}[1]
                \State $\QTg=\QTg\sqcup \{({\Sndrr},\kwd,{\Rcvrr})\}$
                %\State $\r\sample\Zp^{\nCB}$
                \State $\Tg=(\Tw{\tg[{0}]},\Tw{\tg[{1}]})$ where
                \item[]$\tg[{0}]\sample\Zp^{\nRB}$ %={\B\r}$
                \item[]\lfbox[dotted]{$\tg[{0}] = \B\r+\B_{\Rcvrr_{i+1}}\r_{\Rcvrr_{i+1}}$ for $\r,\r_{\Rcvrr_{i+1}}\sample\Zp^{\nCB}$}   
                \item[]$\tg[{1}]={(\Sum{\ii}{[\klen]}\N[\ii,{\msgii[\kwd]{\ii}}]+\X[\Sndrr]+\Z[\Rcvrr])\tg[{0}]}$
                \item[]\hspace{0.3cm}\lfbox[rounded]{$+(\RF[{\Sndrr,\Rcvrr}]{\klen+\ulen+i}{\kwd\!\parallel\!\Sndrr\!\parallel\!\msgi[\Rcvrr]{i}})\Bpt\tg[{0}]$}  
                \item[]\hspace{0.3cm}\lfbox[border-style=dashed]{$+((\ZF[{\Sndrr,\Rcvrr}]{\klen+\ulen+i}{\kwd\!\parallel\!\Sndrr\!\parallel\!\msgi[\Rcvrr]{i}}\Bzst+\OF[{\Sndrr,\Rcvrr}]{i}{\msgi[\Rcvrr]{i}}\Bost))\tg[{0}]$}   
                %{$+((\XZF[\Sndrr]{i}{\msgi[\kwd]{i}}\Bzst+\XOF[\Sndrr]{i}{\msgi[\kwd]{i}}\Bost)+(\ZZF[\Rcvrr]{i}{\msgi[\kwd]{i}}\Bzst+\ZOF[\Rcvrr]{i}{\msgi[\kwd]{i}})\Bost)\tg[{0}]$}   
                \item[]\hspace{0.3cm}\lfbox[background-color=lightgray!60!white,border-color=lightgray!60!white]{$+(\ZF[{\Sndrr,\Rcvrr}]{\klen+\ulen+i+1}{\kwd\!\parallel\!\Sndrr\!\parallel\!\msgi[\Rcvrr]{i+1}}\Bzst\!+\!\OF[{\Sndrr,\Rcvrr}]{\klen+\ulen+i}{\kwd\!\parallel\!\Sndrr\!\parallel\!\msgi[\Rcvrr]{i}}\Bost)\tg[{0}]$} 
                %{$+((\XZF[\Sndrr]{i+1}{\msgi[\kwd]{i+1}}\Bzst+\XOF[\Sndrr]{i+1}{\msgi[\kwd]{i+1}}\Bost)+(\ZZF[\Rcvrr]{i}{\msgi[\kwd]{i}}\Bzst+\ZOF[\Rcvrr]{i}{\msgi[\kwd]{i}})\Bost)\tg[{0}]$}
                \item[]\hspace{0.2cm}
                \lfbox{$+(\ZF[{\Sndrr,\Rcvrr}]{\klen+\ulen+i+1}{\kwd\!\parallel\!\Sndrr\!\parallel\!\msgi[\Rcvrr]{i+1}}\Bzst\tg[0]$}
                \item[]\hspace{3cm}
                \lfbox{$+\OF[{\Sndrr,\Rcvrr}]{\klen+\ulen+i+1}{\kwd\!\parallel\!\Sndrr\!\parallel\!\msgi[\Rcvrr]{i+1}}\Bost)\tg[{0}]$} 
                %\lfbox{$+((\XZF[\Sndrr]{i+1}{\msgi[\kwd]{i+1}}\Bzst+\XOF[\Sndrr]{i+1}{\msgi[\kwd]{i+1}}\Bost)+(\ZZF[\Rcvrr]{i+1}{\msgi[\kwd]{i+1}}\Bzst+\ZOF[\Rcvrr]{i+1}{\msgi[\kwd]{i+1}})\Bost)\tg[{0}]$}
                %\lfbox[background-color=lightgray!60!white,border-color=black]{$+\RF{\Sndrr}{\kwd}{\Rcvrr}\Bpt\tg[{0}]$}  
                \item[]\hspace{0.0cm}\lfbox[background-color=lightgray!60!white,border-color=black]{$\tg[{1}]\sample \Zp^{\nRAd}$}
                % \item[]\lfbox[border-style=double,border-width=2pt]{$\tg[{1}]\sample \Zp^{\nRAd}$}
                \State Return $\Tg$ 
                \item[]
            \end{algorithmic}
            \underline{$\Finalize(\bee\in\{0,1\})$}
            \begin{algorithmic}[1]
                \State Return $\bee\wedge\ (\QTg\cap\QChl=\emptyset)$
                % \item[]\hspace{0.2cm}${\color{\confcolor}\wedge\ (\QSk\subset \SSk)}$
                \item[]\hspace{0.2cm}${\color{\confcolor}\wedge\ (\forall\ (u,\dontcare)\in\QCr, (u,\dontcare,\dontcare),(\dontcare,\dontcare,u)\notin {\color{\confcolor}\QTg\cup}\QChl)}$
                \item[]
                % \item[]
            \end{algorithmic}  
        \end{minipage}
    }
    }
    \caption{Hybrids for the transition from {\color{\confcolor}$\hame{4,\klen+\ulen,1}$ to $\hame{6}$}.}
    \label{fig:CoreR-Hybrid}
\end{figure*}

\subsection{Proof of Lemmas for Core-Lemma}\label{sec:Core-Lemma-Proofs}
% \begin{proof}[Proof of \Cref{lem:H{0}-H{1}}]
%     As $\J[\ii,b]$ are sampled uniformly at random for all $(\ii,b)\in[L]\times\{0,1\}$, $\X[j,\ii,b],\Z[j,\ii,b]$ are defined uniformly at random.
%     % % For every entity $j$ queried, we replace $\X[j,1,b]$ by $\X[j,1,b]+\XRF[j]{0}{\epsilon}\Bpt$ and $\Z[j,1,b]$ by $\Z[j,1,b]+\ZRF[j]{0}{\epsilon}\Bpt$ for $b\in\{0,1\}$. Observe that, $\sky_j$ stays the same. This modification updates $\tg[{1}]={\Sum{\ii}{[L]}(\X[\Sndrr,\ii,\kwd_{\ii}]+\Z[\Rcvrr,\ii,\kwd_{\ii}])\tg[{0}]}+(\XRF[\Sndrr]{i}{\msgi[\kwd]{i}}+\ZRF[\Rcvrr]{i}{\msgi[\kwd]{i}})\Bpt\tg[{0}]$ and $\chl[\ij,1]={\Sum{\ii}{[L]}(\Xt[\Sndr,\ii,\kw_{\ii}]+\Zt[\Rcvr_{\perm(\ij)},\ii,\kw_{\ii}])\chl[\ij,0]}+\Bp(\XRF[\Sndr]{i}{\msgi[\kw]{i}}^{\top}+\ZRF[\Rcvr_{\perm(\ij)}]{i}{\msgi[\kw]{i}}^{\top})\chl[\ij,0]$ for all $\ij\in[\ell]$.
%     % For all $(\rbtSndr,\rbtkw,\rbtRcvr)\in \QTg\sqcup\QChl$, we have $\RF[{\rbtSndr,\rbtRcvr}]{0}{\msgi[\rbtkw]{0}}=\RF[{\rbtSndr,\rbtRcvr}]{0}{\epsilon}$ and $\X[\rbtSndr,1,b]+\Z[\rbtRcvr,1,b]$ is identically distributed to $\X[\rbtSndr,1,b]+\Z[\rbtRcvr,1,b]+\RF[{\rbtSndr,\rbtRcvr}]{0}{\epsilon}\Bpt$ for $\X[\rbtSndr,1,b],\Z[\rbtRcvr,1,b]\sample \Zp^{(\nRAd)\times (\nRB)}$ where $b\in\{0,1\}$.
% \end{proof}

\begin{proof}[Proof of \Cref{lem:H{1}-H{2,0,1}}]
    Suppose we replace $\N[\ii,b]$ by $\N[\ii,b]+\RF[\rbtSndr]{0}{\msgi[\rbtkw]{0}}\Bpt$ for $b\in\{0,1\}$. Observe that, $\pp$ stays the same. 
    %This modification updates $\tg[{1}]={\Sum{\ii}{[L]}(\X[\Sndrr,\ii,\kwd_{\ii}]+\Z[\Rcvrr,\ii,\kwd_{\ii}])\tg[{0}]}+(\XRF[\Sndrr]{i}{\msgi[\kwd]{i}}+\ZRF[\Rcvrr]{i}{\msgi[\kwd]{i}})\Bpt\tg[{0}]$ and $\chl[\ij,1]={\Sum{\ii}{[L]}(\Xt[\Sndr,\ii,\kw_{\ii}]+\Zt[\Rcvr_{\perm(\ij)},\ii,\kw_{\ii}])\chl[\ij,0]}+\Bp(\XRF[\Sndr]{i}{\msgi[\kw]{i}}^{\top}+\ZRF[\Rcvr_{\perm(\ij)}]{i}{\msgi[\kw]{i}}^{\top})\chl[\ij,0]$ for all $\ij\in[\ell]$.
    For all $(\rbtSndr,\rbtkw,\rbtRcvr)\in \QTg\sqcup\QChl$, we have $\RF[{\rbtSndr,\rbtRcvr}]{0}{\msgi[\rbtkw]{0}}=\RF[{\rbtSndr,\rbtRcvr}]{0}{\epsilon}$ and $\N[\ii,b]$ is identically distributed to $\N[\ii,b]+\RF[{\rbtSndr,\rbtRcvr}]{0}{\epsilon}\Bpt$ for $b\in\{0,1\}$. %$\X[\rbtSndr,1,b],\Z[\rbtRcvr,1,b]\sample \Zp^{(\nRAd)\times (\nRB)}$ where 
\end{proof}

\begin{proof}[Proof of \Cref{lem:H{2,i,1}-H{2,i,2}}]
    To argue this indistinguishability, we introduce an intermediate game $\hame{2,i,1.5}$ where $\tg[{0}]$ is sampled from $\Span(\B||\Bz)$ if $\kwd_{i+1}=0$ and is sampled from $\Zp^{\nRB}$ if $\kwd_{i+1}=1$. We then argue that, $\hame{2,i,1}$ and $\hame{2,i,1.5}$ are computationally indistinguishable under $|\QTg|\mhyf$fold $\matDH[\nRB,\nCB]$ assumption. The same argument can also be applied to show $\hame{2,i,1.5}$ and $\hame{2,i,2}$ are computationally indistinguishable.

    To argue the indistinguishability of $\hame{2,i,1}$ and $\hame{2,i,1.5}$, we devise an $\ppt$ algorithm $\AB$ next. Let $\AB$ receive $|\QTg|\mhyf$fold $\matDH[\nRB,\nCB]$ instance $(\Tw{\M},\Tw{\f[1]},\ldots,\Tw{\f[|\QTg|]})$.
    \begin{description}
        \item[$\Setup$.] $\AB$ samples $\B,\Bo\sample\Zp^{\nRB\times \nCB}$ and $\Bp\sample\Zp^{\nRB\times\nRCB}$ s.t. $\Bt\Bp=\Zro$. $\AB$ also chooses $\N[\ii,b]\sample \Zp^{(\nRAd)\times (\nRB)}$ for all $(\ii,b)\in[\klen]\times\{0,1\}$ uniformly at random. 
        \item[$\OKey$.] On every query on $j$, $\AB$ samples $\J[\ii,b]\sample \Zp^{(\nRAd)\times (\nRB)}$ for all $(\ii,b)\in[2\ulen]\times\{0,1\}$ uniformly at random sets $\X[j] = \SumiDT{1}{\ulen}\J[i,{\msgii[j]{i}}]$ and $\Z[j] = \SumiDT{\ulen+1}{2\ulen}\J[i,{\msgii[j]{i}}]$  for all $(\ii,b)\in[2\ulen]\times\{0,1\}$. $\AB$ defines random functions $\RF[j]{i}{\msgi[\kwd]{i}}\in \Zp^{\nRAd\times \nRCB}$ on the fly. It updates $(j,\sky_j)$ in $\QSk$ and returns $j$. %$\sky_j=(\left(\Tw{\X[j,\ii,b]\B},\Tw{\Z[j,\ii,b]\B}\right)_{\substack{{\ii\in[L]}\\{b\in\{0,1\}}}})$. 
        \item[$\OCor$.] On every query on $j$, $\AB$ executes $\OKey(j)$ to retrieve $\sky_j$ if $(j,\dontcare)\notin\QSk$. It finally returns $\sky_j$ and updates $(j,\sky_j)$ in $\QCr$. 
        \item[$\OChal$.] Follows the description in \Cref{fig:Core-Hybrid}.
        \item[$\OTag$.] For any $j^{th}$ query $j\in[|\QTg|]$, $\AB$ simulates $\tg[{0}]$ as following:
        {\small\[
            \tg[{0}]=\begin{cases}
                \B\r+\f[j] & \text{for } \r\sample\Zp^{\nCB} \text{ if } \kwd_{i+1}=0\\
                \z & \text{for } \z\sample\Zp^{\nRB} \text{ if } \kwd_{i+1}=1\\
            \end{cases}
        \]}
    \end{description}
    Since the challenger chose $\M\sample\Zp^{\nRB\times \nCB}$ and $\AB$ chose $\B,\Bo\sample\Zp^{\nRB\times \nCB}$ uniformly at random, $(\B||\M||\Bo)$ is a full-rank matrix with overwhelming probability {\color{\confcolor}$1-\frac{\nCB+\nCB}{p-1}$}.
    Thus, $|\adv{\AA}{\hame{2,i,1}}-\adv{\AA}{\hame{2,i,1.5}}|\leq \Adv{\AB,\Gt}{|\QTg|\mhyf\matDH[\nRB,\nCB]}+\frac{2\times\nCB}{p-1}$.
    A similar argument shows that $|\adv{\AA}{\hame{2,i,1.5}}-\adv{\AA}{\hame{2,i,2}}|\leq \Adv{\AB,\Gt}{|\QTg|\mhyf\matDH[\nRB,\nCB]}+\frac{2\times\nCB}{p-1}$.
\end{proof}

\begin{proof}[Proof of \Cref{lem:H{2,i,2}-H{2,i,3}}]
    We show that, these two games are statistically indistinguishable. In $\hame{2,i,2}$, we rewrite $\RF[{\rbtSndr,\rbtRcvr}]{i}{\msgi[\rbtkw]{i}}$ as following: %and $\OF[j]{i}{\msgi[\kwd]{i}}$
    {\small    
    \begin{equation*}
        \begin{aligned}
            \RF[{\rbtSndr,\rbtRcvr}]{i}{\msgi[\rbtkw]{i}}\Bpt = \ZF[{\rbtSndr,\rbtRcvr}]{i}{\msgi[\rbtkw]{i}}\Bzs+\OF[{\rbtSndr,\rbtRcvr}]{i}{\msgi[\rbtkw]{i}}\Bos ,%&\quad &
            % \ZRF[j]{i}{\msgi[\kwd]{i}}\Bpt = \ZZF[j]{i}{\msgi[\kwd]{i}}\Bzs+\ZOF[j]{i}{\msgi[\kwd]{i}}\Bos\\
        \end{aligned}
    \end{equation*}
    }

    \noindent for all $(\rbtSndr,\rbtkw,\rbtRcvr)\in \QTg\sqcup\QChl$ where $\ZFdef[{\rbtSndr,\rbtRcvr}]{i},\OFdef[{\rbtSndr,\rbtRcvr}]{i}:\{0,1\}^{i}\rightarrow \Zp^{\nRAd\times \nCB}$ are independent random functions.
    Since, $(\Bzs||\Bos)$ generates $\Bpt$ with overwhelming probability $1-\frac{\nCB}{p}$, and $\ZFdef[{\rbtSndr,\rbtRcvr}]{i},\OFdef[{\rbtSndr,\rbtRcvr}]{i}:\{0,1\}^{i}\rightarrow \Zp^{\nRAd\times \nCB}$ are independent random functions, $\RFdef[{\rbtSndr,\rbtRcvr}]{i}:\{0,1\}^{i}\rightarrow \Zp^{\nRAd\times \nRCB}$ are random functions as well. %,$ $\ZZFdef[j]{i},\ZOFdef[j]{i},\ZRFdef[j]{i}
\end{proof}

\begin{proof}[Proof of \Cref{lem:H{2,i,3}-H{2,i,4}}]
    To argue the indistinguishability of $\hame{2,i,3}$ and $\hame{2,i,4}$, we define 

    {\small
    \begin{equation*}
            \begin{aligned}
                \ZF[{\rbtSndr,\rbtRcvr}]{i+1}{\msgi[\rbtkw]{i+1}} =
                \begin{cases}
                    \ZF[{\rbtSndr,\rbtRcvr}]{i}{\msgi[\rbtkw]{i}} & \text{if } \rbtkw_{i+1}=0\\
                    \ZF[{\rbtSndr,\rbtRcvr}]{i}{\msgi[\rbtkw]{i}}+\tZF[{\rbtSndr,\rbtRcvr}]{i}{\msgi[\rbtkw]{i}} & \text{if } \rbtkw_{i+1}=1\\                    
                \end{cases},
            \end{aligned}
        \end{equation*}
    }

    \noindent for all $(\rbtSndr,\rbtkw,\rbtRcvr)\in \QTg\sqcup\QChl$ where $\tZFdef[{\rbtSndr,\rbtRcvr}]{i}:\{0,1\}^{i}\rightarrow \Zp^{\nRAd\times \nCB}$ are independent random functions. %,\tZZFdef[j]{i}

    We devise the reduction $\AB$ next in \Cref{fig:AB-H{2,i,3}-H{2,i,4}}. 
    $\AB$ receives $\nRAd|\QChl|\mhyf$fold $\matDH[\nRA,\nCA]$ instance $(\On{\M},\On{\f[1]},\ldots,\On{\f[\nRAd|\QChl|]})$. 
    % Observe that, $\{\sky_j\}_{j\in\QSk}$ stayed the same even after entropy injection via $\XZF[j]{0}{\epsilon},\XOF[j]{0}{\epsilon},$ $\ZZF[j]{0}{\epsilon},\ZOF[j]{0}{\epsilon}$. 
    It is easy to verify that, the $\OTag$ responses are same in both $\hame{2,i,3}$ and $\hame{2,i,4}$. We analyze $\OChal$ responses for the two games next. First we assume that $\Mu\in\Zp^{\nRAd\times \nRAd}$ is invertible which happens with overwhelming probability ${\color{\confcolor}1-\frac{1}{p-1}}$. Observe that $\F[c]=\iCol{\Mu\K[c]}{\Md\K[c]+\R[c]}$ where $\R[c]$ is either $\Zro_{\nRAd}$ or uniformly random in $\Zp^{\nRAd\times \nRAd}$. We assume that $\Fu[c]$ is invertible which happens with overwhelming probability ${\color{\confcolor}1-\frac{1}{p-1}}$. Thus, $\chl[\ij,0]=\Fu[c]\chl[\ij,0]'$ are uniformly random quantities as both $\Fu[c]$ and $\chl[\ij,0]'$ are uniformly random. Now we analyze $\chl[\ij,1]$ for all $\ij\in[\ell]$ for all the challenge queries.
    \begin{enumerate}
        \item For $\kw_{i+1}=0$, the $\OChal$ responses are same in both $\hame{2,i,3}$ and $\hame{2,i,4}$.
        \item For $\kw_{i+1}=1$, 
        {\scriptsize %\color{red}
            \begin{equation*}%\hspace{-1.0cm}
                \begin{aligned}
                    \chl[\ij,1]&={(\Sum{\ii}{[\klen]}\hNt[\ii,{\msgii[\kw]{\ii}}]+\hXt[\Sndr]+\hZt[\Rcvr_{\perm(j)}])\chl[\ij,0]}+\Bzs\ZF[{\Sndr,\Rcvr_{\perm(\ij)}}]{i}{\msgi[\kw]{i}}^{\top}\chl[\ij,0]\\
                    &\qquad\qquad\qquad\qquad\qquad\qquad+\Bos\OF[{\Sndr,\Rcvr_{\perm(\ij)}}]{i}{\msgi[\kw]{i}}^{\top}\chl[\ij,0]+{\Bzs\Fd[c]\chl[\ij,0]'}\\
                    &={(\Sum{\ii}{[\klen]}\hNt[\ii,{\msgii[\kw]{\ii}}]+\hXt[\Sndr]+\hZt[\Rcvr_{\perm(j)}])\chl[\ij,0]}
                    +\Bzs\ZF[{\Sndr,\Rcvr_{\perm(\ij)}}]{i}{\msgi[\kw]{i}}^{\top}\chl[\ij,0]\\
                    &\qquad\qquad+\Bos\OF[{\Sndr,\Rcvr_{\perm(\ij)}}]{i}{\msgi[\kw]{i}}^{\top}\chl[\ij,0]+{\Bzs\Md\Mu^{-1}\Fu[c]\chl[\ij,0]'+\Bzs\R[c]\chl[\ij,0]'}\\
                    &={(\Sum{\ii}{[\klen]\setminus\{i+1\}}\hNt[\ii,{\msgii[\kw]{\ii}}]+\hXt[\Sndr]+\hZt[\Rcvr_{\perm(j)}])\chl[\ij,0]}+\hNt[i+1,{\msgii[\kw]{i+1}}]\chl[\ij,0]\\
                    &\quad+\Bzs\Md\Mu^{-1}\chl[\ij,0]+(\Bzs\ZF[{\Sndr,\Rcvr_{\perm(\ij)}}]{i}{\msgi[\kw]{i}}^{\top}+\Bos\OF[{\Sndr,\Rcvr_{\perm(\ij)}}]{i}{\msgi[\kw]{i}}^{\top})\chl[\ij,0]\\
                    &\quad+\Bzs\R[c]\chl[\ij,0]'\\
                    &={(\Sum{\ii}{[\klen]}\Nt[\ii,{\msgii[\kw]{\ii}}]+\Xt[\Sndr]+\Zt[\Rcvr_{\perm(j)}])\chl[\ij,0]}+\Bzs\ZF[{\Sndr,\Rcvr_{\perm(\ij)}}]{i}{\msgi[\kw]{i}}^{\top}\chl[\ij,0]\\
                    &\qquad\qquad\qquad+(\Bzs\R[c]\Fu[c]^{-1}+\Bos\OF[{\Sndr,\Rcvr_{\perm(\ij)}}]{i}{\msgi[\kw]{i}}^{\top})\chl[\ij,0]\\                
                \end{aligned}
            \end{equation*}
        }
        \end{enumerate}
        If $\R[c]=\Zro_{\nRAd}$, $\AB$ simulates $\hame{2,i,3}$ and if $\R[c]$ is uniformly random, we implicitly set $\tZF[{\Sndr,\Rcvr_{\perm(\ij)}}]{i}{\msgi[\kw]{i}}=\R[c]\Fu[c]^{-1}$ and simulate $\hame{2,i,4}$. Thus, $|\adv{\AA}{\hame{2,i,3}}-\adv{\AA}{\hame{2,i,4}}|\leq \Adv{\AB,\Go}{\prll|\QChl|\mhyf\matDH[\nRA,\nCA]}+\frac{|\QChl|+2}{p-1}$.

    \begin{figure*}[ht]
        \scriptsize
        %\centering
        \hspace{-1.2cm}
        \fbox{
            % \parbox[c][143mm][c]{1.13\linewidth}{
            % \centering\hspace{-4.5cm}{$\hame{1}$},\lfbox[rounded]{$\hame{2,i,1}$},\lfbox[dotted]{$\hame{2,i,2}$,\lfbox[border-style=dashed]{$\hame{2,i,3}$},\lfbox[background-color=lightgray!60!white,border-color=lightgray!60!white]{$\hame{2,i,4}$},\lfbox{$\hame{2,i,5}$}}, \lfbox[background-color=lightgray!60!white,border-color=black]{$\hame{3}$}, \lfbox[border-style=double,border-width=2pt]{$\hame{4}$}\\ \ \\
            % \hspace{-1cm}
            \begin{minipage}{0.55\textwidth}%{0.47\textwidth}
                \underline{$\Init(1^\secpp)$}
                \begin{algorithmic}[1]
                    \State $\abG\sample \ABSGen(1^\secpp)$ %=(p,\Go,\Gt,\GT,\go,\gt,e)
                    \State From $\prll|\QChl|\mhyf$fold $\matDH[\nRA,\nCA]$ instance, 
                    \item[]\hspace{0.25cm}define $\F[c]=(\f[(c-1)\prll+1]||\ldots||\f[c\prll])\in\Zp^{\nRA\times \prll}$
                    \item[]\hspace{0.25cm}for $c\in[|\QChl|]$ 
                    \State $\B$,{$\Bz,\Bo$}, {{$\Bzs,\Bos$}} $\sample\Dk[\nRB,\nCB]$ s.t.
                    \item[$-$] $(\B,\Bz,\Bo)$ generates $\Zp^{\nRB}$
                    \item[$-$] $(\Bzs,\Bos)$ generates $\Bp$
                    \item[$-$] $\Bt\Bzs=\Zro$, $\Bot\Bzs=\Zro$ 
                    \item[$-$] $\Bt\Bos=\Zro$, $\Bzt\Bos=\Zro$ 
                    \State {\color{\confcolorr}For $(\ii,b)\in[\klen]\times\{0,1\}: \N[\ii,b]=\hN[\ii,b]\sample \Zp^{(\nRAd)\times (\nRB)}$}
                    \item[] $\bslash $ Implicit: $\N[i+1,b]=\hN[i+1,b]+\Mu^{-\top}\Mdt\Bzst$
                    \State $\ppC=(\On{\B},\Tw{\B},{\color{\confcolorrr}\left\{\Tw{\N[\ii,b]\B}\right\}_{\!\substack{\!\ii\in [\klen] \\ \!b\in\{0,1\}}}})$
                    \item[] 
                    % \item[]
                \end{algorithmic}
                \underline{$\OKey(j)$}
                \begin{algorithmic}[1]
                    \State For $(\ii,b)\in[2\ulen]\times\{0,1\}$: $\J[j,\ii,b]\sample \Zp^{(\nRAd)\times (\nRB)}$ 
                \State ${\color{\confcolorr}\hX[j] = \SumiDT{1}{\ulen}\J[j,i,{\msgii[j]{i}}],\hZ[j] = \SumiDT{\ulen+1}{2\ulen}\J[j,i,{\msgii[j]{i}}]}$ %$\quad\bslash \hame{1}-\hame{4}$   
                    \State $\X[j]=\hX[j],\quad\Z[j]=\hZ[j]$
                    % \item[]//Implicit: $\X[j,i+1,b]=\hX[j,i+1,b]+\Mu^{-\top}\Mdt\Bzst$
                    % \item[]//Implicit: $\Z[j,i+1,b]=\hZ[j,i+1,b]+\Mu^{-\top}\Mdt\Bzst$
                    % \item[]//for all $j\in \QTg\sqcup\QChl$ 
                    %}}}
                    \State $\sky_j=\left(\Tw{\X[j]\B},\Tw{\Z[j]\B}\right)$
                    \State $\QSk=\QSk\sqcup \{(j,\sky_j)\}$
                    \State Return $(j,{\color{\confcolor}\Tw{\X[j]\B}})$
                    \item[]
                    % \item[]
                \end{algorithmic}
                \underline{$\OCor(j)$}
                \begin{algorithmic}[1]
                    \State $\QCr=\QCr\sqcup \{(j,\sky_j)\}$ from $\QSk$
                    \State Return $\sky_j$
                    \item[]
                \end{algorithmic}     
            \end{minipage}
            \vline \hspace{1pt}
            \begin{minipage}{0.85\textwidth}%{0.75\textwidth}
                \underline{$\OTag({\Sndrr},\kwd,{\Rcvrr})$}   
                \begin{algorithmic}[1]
                    \State $\QTg=\QTg\sqcup \{({\Sndrr},\kwd,{\Rcvrr})\}$
                    %\State $\r\sample\Zp^{\nCB}$
                    \State $\Tg=(\Tw{\tg[{0}]},\Tw{\tg[{1}]})$ where
                    % \item[]$\tg[{0}]\sample\Zp^{\nRB}$ %={\B\r}$
                    \item[]{$\tg[{0}] = \B\r+\B_{\kwd_{i+1}}\r_{\kwd_{i+1}}$ for $\r,\r_{\kwd_{i+1}}\sample\Zp^{\nCB}$}   
                    \item[]$\tg[{1}]={(\Sum{\ii}{[\klen]}\N[\ii,{\msgii[\kwd]{\ii}}]+\X[\Sndrr]+\Z[\Rcvrr])\tg[{0}]}$
                    {$+((\ZF[{\Sndrr,\Rcvrr}]{i+1}{\msgi[\kwd]{i+1}}\Bzst+\OF[{\Sndrr,\Rcvrr}]{i}{\msgi[\kwd]{i}}\Bost))\tg[{0}]$} %+(\ZZF[\Rcvrr]{i}{\msgi[\kwd]{i}}\Bzst+\ZOF[\Rcvrr]{i}{\msgi[\kwd]{i}})\Bost
                    \item[] 
                    \item[]\hspace{0.3cm}$={(\Sum{\ii}{[\klen]}\hN[\ii,{\msgii[\kwd]{\ii}}]+\hX[\Sndrr]+\hZ[\Rcvrr])\tg[{0}]}$ 
                    % \item[]\hspace{0.35cm}
                    {$+((\ZF[{\Sndrr,\Rcvrr}]{i}{\msgi[\kwd]{i}}\Bzst+\OF[{\Sndrr,\Rcvrr}]{i}{\msgi[\kwd]{i}}\Bost))\tg[{0}]$} %+(\ZZF[\Rcvrr]{i}{\msgi[\kwd]{i}}\Bzst+\ZOF[\Rcvrr]{i}{\msgi[\kwd]{i}})\Bost   
                    \State Return $\Tg$ 
                    \item[]
                \end{algorithmic}
\underline{$\OChal({\Sndr}, \kw, {\RSet})$} \hspace{2cm}  
                \begin{algorithmic}[1]
                    \State Let $\RSet=\{\Rcvr_1,\ldots,\Rcvr_{\ell}\}$
                    \State $\perm\sample \Perm{[\ell]}$ 
                    \State $\QChl=\QChl\cup \{({\Sndr},\kw,{\Rcvr_\ij})_{\ij\in[\ell]}\}$
                    % \State $\Chl=((\On{\chl[\ij,0]},\On{\chl[\ij,1]})_{\ij\in[\ell]})$ s.t.
                    \State For all $(\Sndr,\Rcvr_{\perm(\ij)})\in \{\Sndr\}\times \RSet:$ 
                    \item[]\hspace{0.35cm}Let $(\Sndr,\msgi[\kw]{i}||\dontcare||\cdots||\dontcare,\Rcvr_{\perm(\ij)})$ be the $c^{th}$ entry of $\QChl$
                    \item[]\hspace{0.35cm}$\chl[\ij,0]=\Fu[c]\chl[\ij,0]'$ for $\chl[\ij,0]'\sample\Zp^{\prll}$ 
                    %\item[]\hspace{0.3cm}$={\Sum{\ii}{[\klen]}(\hXt[\Sndr,\ii,\kw_{\ii}]+\hZt[\Rcvr_{\perm(\ij)},\ii,\kw_{\ii}])\chl[\ij,0]}$
                    \item[]\hspace{0.35cm}$\chl[\ij,1]={(\Sum{\ii}{[\klen]}\hNt[\ii,{\msgii[\kw]{\ii}}]+\hXt[\Sndr]+\hZt[\Rcvr_{\perm(\ij)}])\chl[\ij,0]}$
                    % \item[]\hspace{0.35cm}
                    {$+((\Bzs\ZF[{\Sndr,\Rcvr}]{i}{\msgi[\kw]{i}}^{\top}+\Bos\OF[{\Sndr,\Rcvr}]{i}{\msgi[\kw]{i}}^{\top}))\chl[\ij,0]$}  % +(\Bzs\ZZF[\Rcvr_{\perm(\ij)}]{i}{\msgi[\kw]{i}}^{\top}+\Bos\ZOF[\Rcvr_{\perm(\ij)}]{i}{\msgi[\kw]{i}}^{\top})
                    % \item[] for // {\color{red}$c^{th}$ new $(\Sndr,\msgi[\kw]{i},\Rcvr)$ pair where $\Rcvr\in \RSet$}
                    % \item[]$\chl[\ij,1]={\Sum{\ii}{[\klen]}(\hXt[\Sndr,\ii,\kw_{\ii}]+\hZt[\Rcvr_{\perm(\ij)},\ii,\kw_{\ii}])\chl[\ij,0]}$
                    % \item[]\hspace{0.35cm}{$+((\Bzs\XZF[\Sndr]{i+1}{\msgi[\kw]{i+1}}^{\top}+\Bos\XOF[\Sndr]{i}{\msgi[\kw]{i}}^{\top})+(\Bzs\ZZF[\Rcvr_
                    % {\perm(\ij)}]{i}{\msgi[\kw]{i}}^{\top}+\Bos\ZOF[\Rcvr_{\perm(\ij)}]{i}{\msgi[\kw]{i}}^{\top}))\chl[\ij,0]$}  
                    \State If $\kw_{i+1}=1$: $\chl[\ij,1]=\chl[\ij,1]+{\Bzs\Fd[c]\chl[\ij,0]'}$
                    \State Return $\Chl=((\On{\chl[\ij,0]},\On{\chl[\ij,1]})_{\ij\in[\ell]})$
                    \item[]
                \end{algorithmic}        
                \underline{$\Finalize(\bee\in\{0,1\})$}
                \begin{algorithmic}[1]
                    \State Return $\bee\wedge\ (\QTg\cap\QChl=\emptyset){\color{\confcolor}\wedge\ (\forall  (u,\dontcare)\in\QCr, (u,\dontcare,\dontcare),(\dontcare,\dontcare,u)\notin {\color{\confcolor}\QTg\cup}\QChl)}$
                    \item[]
                \end{algorithmic}      
            \end{minipage}
        % }
        }
        \caption{The reduction $\AB$ in the proof of \Cref{lem:H{2,i,3}-H{2,i,4}}.}
        \label{fig:AB-H{2,i,3}-H{2,i,4}}
    \end{figure*}
\end{proof}

\begin{proof}[Proof of \Cref{lem:H{2,i,4}-H{2,i,5}}]
    % Similar to the proof of \Cref{lem:H{2,i,3}-H{2,i,4}} except we define an intermediate game $\hame{2,i,4.5}$.
    To argue indistinguishability of $\hame{2,i,4}$ and $\hame{2,i,5}$ following the proof of \Cref{lem:H{2,i,3}-H{2,i,4}}. 
    Here, we define 

    {\small
    \begin{equation*}
            \begin{aligned}
                \OF[{\rbtSndr,\rbtRcvr}]{i+1}{\msgi[\rbtkw]{i+1}} =
                \begin{cases}
                    \OF[{\rbtSndr,\rbtRcvr}]{i}{\msgi[\rbtkw]{i}} & \text{if } \rbtkw_{i+1}=1\\
                    \OF[{\rbtSndr,\rbtRcvr}]{i}{\msgi[\rbtkw]{i}}+\tOF[j]{i}{\msgi[\rbtkw]{i}} & \text{if } \rbtkw_{i+1}=0\\                    
                \end{cases},
            \end{aligned}
        \end{equation*}
    }

    \noindent for all $(\rbtSndr,\rbtkw,\rbtRcvr)\in \QTg\sqcup\QChl$ where $\tOFdef[{\rbtSndr,\rbtRcvr}]{i}:\{0,1\}^{i}\rightarrow \Zp^{\nRAd\times \nCB}$ are independent random functions. %,\tZZFdef[j]{i}
    Rest of the proof is same as proving indistinguishability of $\hame{2,i,3}$ and $\hame{2,i,4}$ with the roles of $0$ and $1$ swapped.
    Thus, $|\adv{\AA}{\hame{2,i,4}}-\adv{\AA}{\hame{2,i,5}}|\leq \Adv{\AB,\Go}{\prll|\QChl|\mhyf\matDH[\nRA,\nCA]}+\frac{|\QChl|+2}{p-1}$. %$\nRAd|\QChl|\mhyf$fold 

\end{proof}

\begin{proof}[Proof of \Cref{lem:H{2,k,1}-H{3,k,1}}]
    Note that, we define $\hame{3,\klen,1}=\hame{2,\klen,1}$ as a stepping stone to inject randomness into $\SRFdef[\rbtSndr]{\klen}(\rbtkw)$. 
    Looking ahead, the goal is going to define $\SRFdef[\rbtSndr]{\klen+i}$ depending on the $(\klen+i)$-length bit string $\rbtkw\!\parallel\!\msgi[\rbtSndr]{i}$ and utlize $\Sndrr_{i+1}$ to define $\tg[0]$ for any tag query on $(\Sndrr,\dontcare,\dontcare)$.
    Observe that, $(\Sum{\ii}{[\klen]}\N[\ii,{\msgii[\rbtkw]{\ii}}]+\X[\Sndr]+\Z[\Rcvr_{\perm(j)}])=\Sum{\ii}{[\klen]}\N[\ii,{\msgii[\rbtkw]{\ii}}]+\SumiDT[{\ii}]{1}{\ulen} \J[\rbtSndr,\ii,\rbtSndr_{\ii}]+\SumiDT[{\ii}]{\ulen+1}{2\ulen} \J[\rbtRcvr,\ii,\rbtRcvr_{\ii}]$.
    Thus, to attain the goal mentioned above, we modify $\SumiDT[{\ii}]{1}{\ulen} \J[\rbtSndr,\ii,\rbtSndr_{\ii}]+\SRFdef[\rbtSndr]{\klen}(\rbtkw)$ into $\SumiDT[{\ii}]{1}{\ulen} \J[\rbtSndr,\ii,\rbtSndr_{\ii}]+\SRFdef[\rbtSndr]{\klen+\ulen}(\rbtkw\!\parallel\!\rbtSndr)$ iteratively via $\hame{3,\klen,1},\ldots,\hame{3,\klen+\ulen,1}$.
\end{proof}

\begin{proof}[Proof of \Cref{lem:H{3,i,1}-H{3,i,2}}]
    This proof is same as the proof of proof of \Cref{lem:H{2,i,1}-H{2,i,2}} except that $\SRFdef[\rbtSndr]{\klen+i}$ takes a $(\klen+i)$-bit string an input $\rbtkw\!\parallel\! \msgi[\rbtSndr]{i}$.
\end{proof}

\begin{proof}[Proof of \Cref{lem:H{3,i,2}-H{3,i,3}}]
    This proof is same as the proof of proof of \Cref{lem:H{2,i,2}-H{2,i,3}} except that $\SZFdef[\rbtSndr]{\klen+i},\SOFdef[\rbtSndr]{\klen+i}$ both take a $(\klen+i)$-bit string an input $\rbtkw\!\parallel\! \msgi[\rbtSndr]{i}$.
\end{proof}

\begin{proof}[Proof of \Cref{lem:H{3,i,3}-H{3,i,4}}]
        This proof is same as the proof of proof of \Cref{lem:H{2,i,3}-H{2,i,4}} except that $\SZFdef[\rbtSndr]{\klen+i+1}$ takes a $(\klen+i+1)$-bit string an input $\rbtkw\!\parallel\! \msgi[\rbtSndr]{i+1}$ and $\SOFdef[\rbtSndr]{\klen+i}$ takes a $(\klen+i)$-bit string an input $\rbtkw\!\parallel\! \msgi[\rbtSndr]{i}$.
\end{proof}

\begin{proof}[Proof of \Cref{lem:H{3,i,4}-H{3,i,5}}]
    This proof is same as the proof of proof of \Cref{lem:H{2,i,4}-H{2,i,5}} except that $\SZFdef[\rbtSndr]{\klen+i+1},\SOFdef[\rbtSndr]{\klen+i+1}$ both take a $(\klen+i+1)$-bit string an input $\rbtkw\!\parallel\! \msgi[\rbtSndr]{i+1}$.
\end{proof}

\begin{proof}[Proof of \Cref{lem:H{3,ku,1}-H{4,ku,1}}]
    Note that, we define $\hame{4,\klen+\ulen,1}=\hame{3,\klen+\ulen,1}$ as a stepping stone to inject randomness into $\RRFdef[\rbtRcvr]{\klen+\ulen}(\rbtkw\!\parallel\!\rbtSndr)$. 
    Looking ahead, the goal is going to define $\RRFdef[\rbtRcvr]{\klen+\ulen+i}$ depending on the $(\klen+\ulen+i)$-length bit string $\rbtkw\!\parallel\!\rbtSndr\!\parallel\!\msgi[\rbtRcvr]{i}$ and utlize $\Rcvrr_{i+1}$ to define $\tg[0]$ for any tag query on $(\dontcare,\dontcare,\Rcvrr)$.
    As we already observed, $(\Sum{\ii}{[\klen]}\N[\ii,{\msgii[\rbtkw]{\ii}}]+\X[\Sndr]+\Z[\Rcvr_{\perm(j)}])=\Sum{\ii}{[\klen]}\N[\ii,{\msgii[\rbtkw]{\ii}}]+\SumiDT[{\ii}]{1}{\ulen} \J[\rbtSndr,\ii,\rbtSndr_{\ii}]+\SumiDT[{\ii}]{\ulen+1}{2\ulen} \J[\rbtRcvr,\ii,\rbtRcvr_{\ii}]$.
Thus, to attain the goal mentioned above, we modify $\SumiDT[{\ii}]{\ulen+1}{2\ulen} \J[\rbtRcvr,\ii,\rbtRcvr_{\ii}]+\SRFdef[\rbtRcvr]{\klen+\ulen}(\rbtkw\!\parallel\!\rbtSndr)$ into $\SumiDT[{\ii}]{\ulen+1}{2\ulen} \J[\rbtRcvr,\ii,\rbtRcvr_{\ii}]+\SRFdef[\rbtRcvr]{\klen+2\ulen}(\rbtkw\!\parallel\!\rbtSndr\!\parallel\!\rbtRcvr)$ iteratively via $\hame{4,\klen+\ulen,1},\ldots,\hame{4,\klen+2\ulen,1}$.
\end{proof}

\begin{proof}[Proof of \Cref{lem:H{4,i,1}-H{4,i,2}}]
    This proof is same as the proof of proof of \Cref{lem:H{3,i,1}-H{3,i,2}} except that $\RRFdef[\rbtRcvr]{\klen+\ulen+i}$ takes a $(\klen+\ulen+i)$-bit string an input $\rbtkw\!\parallel\!\rbtSndr\!\parallel\!\msgi[\rbtRcvr]{i}$.
\end{proof}

\begin{proof}[Proof of \Cref{lem:H{4,i,2}-H{4,i,3}}]
    This proof is same as the proof of proof of \Cref{lem:H{3,i,2}-H{3,i,3}} except that $\RZFdef[\rbtRcvr]{\klen+\ulen+i},\SOFdef[\rbtRcvr]{\klen+\ulen+i}$ both take a $(\klen+\ulen+i)$-bit string an input $\rbtkw\!\parallel\!\rbtSndr\!\parallel\!\msgi[\rbtRcvr]{i}$.
\end{proof}

\begin{proof}[Proof of \Cref{lem:H{4,i,3}-H{4,i,4}}]
        This proof is same as the proof of proof of \Cref{lem:H{3,i,3}-H{3,i,4}} except that $\RZFdef[\rbtRcvr]{\klen+\ulen+i+1}$ takes a $(\klen+\ulen+i+1)$-bit string an input $\rbtkw\!\parallel\!\rbtSndr\!\parallel\!\msgi[\rbtRcvr]{i+1}$ and $\ROFdef[\rbtRcvr]{\klen+\ulen+i}$ takes a $(\klen+\ulen+i)$-bit string an input $\rbtkw\!\parallel\!\rbtSndr\!\parallel\!\msgi[\rbtRcvr]{i}$.
\end{proof}

\begin{proof}[Proof of \Cref{lem:H{4,i,4}-H{4,i,5}}]
    This proof is same as the proof of proof of \Cref{lem:H{3,i,4}-H{3,i,5}} except that $\RZFdef[\rbtRcvr]{\klen+\ulen+i+1},\SOFdef[\rbtRcvr]{\klen+\ulen+i+1}$ both take a $(\klen+\ulen+i+1)$-bit string an input $\rbtkw\!\parallel\!\rbtSndr\!\parallel\!\msgi[\rbtRcvr]{i+1}$.
\end{proof}

\begin{proof}[Proof of \Cref{lem:H{4,L,1}-H{5}}]
    Firstly, in $\hame{4,L,5}$, $\msgi[\rbtkw\!\parallel\!\rbtSndr\!\parallel\!\rbtRcvr]{L}=\rbtkw\!\parallel\!\rbtSndr\!\parallel\!\rbtRcvr$ for $(\rbtkw,\rbtSndr,\rbtRcvr)\in\{0,1\}^L$. 
    Since $\QTg\cap \QChl=\emptyset$, $(\rbtSndr,\rbtkw,\rbtRcvr)\in\QTg$ are not repeated anywhere else. 
    Moreover, as per our restriction, $\OTag$ queries are distinct.
    Therefore, uniformly random $\RF[{\rbtSndr,\rbtRcvr}]{}{\rbtkw\!\parallel\!\rbtSndr\!\parallel\!\rbtRcvr}\in\Zp^{\nRAd\times \nRCB}$ is never repeated.
    This makes $\tg[1]$ for all $(\rbtSndr,\rbtkw,\rbtRcvr)\in\QTg$ uniformly random elements. 
    Thus, $\adv{}{\hame{4,L,5}}=\adv{}{\hame{5}}$.
\end{proof}

\begin{proof}[Proof of \Cref{lem:H{5}-H{6}}]
    We devise the reduction $\AB$ next in \Cref{fig:AB-H{3}-H{4}}. 
    $\AB$ receives $\prll|\QChl|\mhyf$fold $\matDH[\nRB,\nCB]$ instance $(\On{\M},\On{\f[1]},\ldots,\On{\f[\prll|\QChl|]})$. 

    % Observe that, $\{\sky_j\}_{j\in\QSk}$ stayed the same even after entropy injection via $\XZF[j]{0}{\epsilon},\XOF[j]{0}{\epsilon},$ $\ZZF[j]{0}{\epsilon},\ZOF[j]{0}{\epsilon}$. 
    $\OTag$ responses in both $\hame{5}$ and $\hame{6}$ are sampled uniformly at random. 
    We analyze $\OChal$ responses for the two games next. 
    First we assume that $\Mu\in\Zp^{\nRAd\times \nRAd}$ is invertible which happens with overwhelming probability ${\color{\confcolor}1-\frac{1}{p-1}}$. Observe that $\F[c]=\iCol{\Mu\K[c]}{\Md\K[c]+\R[c]}$ where $\R[c]$ is either $\Zro_{\nRCB\times\prll}$ or uniformly random in $\Zp^{\nRCB\times \prll}$. 
    We assume that $\Fu[c]$ is invertible which happens with overwhelming probability ${\color{\confcolor}1-\frac{1}{p-1}}$. 
    Thus, $\chl[\ij,0]=\Fu[c]\chl[\ij,0]'$ are uniformly random quantities as both $\Fu[c]$ and $\chl[\ij,0]'$ are uniformly random. Now we analyze $\chl[\ij,1]$ for all $\ij\in[\ell]$ for all the challenge queries.
    % \begin{enumerate}
    %     % \item For $\kw_{i+1}=0$, the $\OChal$ responses are same in both $\hame{2,i,3}$ and $\hame{2,i,4}$.
    %     \item For $\kw_{i+1}=1$, 

    {\scriptsize %\color{red}
            \begin{equation*}\hspace{-0.6cm}
                \begin{aligned}
                    \chl[\ij,1]
                    &={(\Sum{\ii}{[\klen]}\hNt[\ii,{\msgii[\kw]{\ii}}]+\hXt[\Sndr]+\hZt[\Rcvr_{\perm(\ij)}])\chl[\ij,0]}+{\Bp\RF[{\Sndr,\Rcvr_{\perm(\ij)}}]{{\klen+2\ulen}}{\kw\!\parallel\!\Sndr\!\parallel\!\Rcvr_{\perm(\ij)}}^{\top}\chl[\ij,0]}\\
                    &\qquad +\Bp\Fd[c]\chl[\ij,0]'\\
                    &={(\Sum{\ii}{[\klen]}\hNt[\ii,{\msgii[\kw]{\ii}}]+\hXt[\Sndr]+\hZt[\Rcvr_{\perm(\ij)}])\chl[\ij,0]}+{\Bp\RF[{\Sndr,\Rcvr_{\perm(\ij)}}]{{\klen+2\ulen}}{\kw\!\parallel\!\Sndr\!\parallel\!\Rcvr_{\perm(\ij)}}^{\top}\chl[\ij,0]}\\
                    &\qquad +{\Bp\Md\Mu^{-1}\Fu[c]\chl[\ij,0]'+\Bp\R[c]\chl[\ij,0]'}\\
                    &={(\Sum{\ii}{[\klen]\setminus\{1\}}\hNt[\ii,{\msgii[\kw]{\ii}}]+\hXt[\Sndr]+\hZt[\Rcvr_{\perm(\ij)}])\chl[\ij,0]}+(\hNt[1,{\msgii[\kw]{1}}]+\Bp\Md\Mu^{-1})\chl[\ij,0]\\
                    &\qquad +{\Bp\RF[{\Sndr,\Rcvr_{\perm(\ij)}}]{{\klen+2\ulen}}{\kw\!\parallel\!\Sndr\!\parallel\!\Rcvr_{\perm(\ij)}}^{\top}\chl[\ij,0]+\Bp\R[c]\Fu[c]^{-1}\chl[\ij,0]}\\
                    &={(\Sum{\ii}{[\klen]}\Nt[\ii,{\msgii[\kw]{\ii}}]+\Xt[\Sndr]+\Zt[\Rcvr_{\perm(\ij)}])\chl[\ij,0]}
                    +\Bp\RF[{\Sndr,\Rcvr_{\perm(\ij)}}]{{\klen+2\ulen}}{\kw\!\parallel\!\Sndr\!\parallel\!\Rcvr_{\perm(\ij)}}^{\top}\chl[\ij,0]\\
                    &\qquad +\Bp\R[c]\Fu[c]^{-1}\chl[\ij,0]\\
                \end{aligned}
            \end{equation*}
        }

        If $\R[c]=\Zro_{\nRCB\times \prll}$, $\AB$ simulates $\hame{5}$ and if $\R[c]$ is uniformly random, $\AB$ simulates $\hame{6}$. Thus, $|\adv{\AA}{\hame{5}}-\adv{\AA}{\hame{6}}|\leq \Adv{\AB,\Go}{\prll|\QChl|\mhyf\matDH[\nRB,\nCB]}+\frac{|\QChl|+2}{p-1}$.

    \begin{figure*}[ht]
        \scriptsize
        %\centering
        \hspace{-2.5cm}
        \fbox{
            % \parbox[c][143mm][c]{1.13\linewidth}{
            % \centering\hspace{-4.5cm}{$\hame{1}$},\lfbox[rounded]{$\hame{2,i,1}$},\lfbox[dotted]{$\hame{2,i,2}$,\lfbox[border-style=dashed]{$\hame{2,i,3}$},\lfbox[background-color=lightgray!60!white,border-color=lightgray!60!white]{$\hame{2,i,4}$},\lfbox{$\hame{2,i,5}$}}, \lfbox[background-color=lightgray!60!white,border-color=black]{$\hame{3}$}, \lfbox[border-style=double,border-width=2pt]{$\hame{4}$}\\ \ \\
            % \hspace{-1cm}
            \begin{minipage}{0.55\textwidth}%{0.47\textwidth}
                \underline{$\Init(1^\secpp)$}
                \begin{algorithmic}[1]
                    \State $\abG\sample \ABSGen(1^\secpp)$ %=(p,\Go,\Gt,\GT,\go,\gt,e)
                    \State From $\nCB|\QChl|\mhyf$fold $\matDH[\nRB,\nCB]$ instance, 
                    \item[]\hspace{0.2cm}define $\F[c]=(\f[(c-1)\nRAd+1]||\ldots||\f[c\nRAd])\in\Zp^{\nRB\times \nCB}$
                    \item[]\hspace{0.2cm}for $c\in[|\QChl|]$ 
                    \State $\B\sample\Dk[\nRB,\nCB]$, $\Bp\sample\Zp^{\nRB\times\nRCB}$ s.t. $\Bpt\B=\Zro$
                    \State {\color{\confcolorr}For $(\ii,b)\in[\klen]\times\{0,1\}: \N[\ii,b]=\hN[\ii,b]\sample \Zp^{(\nRAd)\times (\nRB)}$}
                    \item[] Implicit: $\N[1,b]=\hN[1,b]+\Mu^{-\top}\Mdt\Bzst$
                    \State $\ppC=(\On{\B},\Tw{\B},
                    {\color{\confcolorrr}\left\{\Tw{\N[\ii,b]\B}\right\}_{\!\substack{\!\ii\in [\klen] \\ \!b\in\{0,1\}}}})$.
                    \item[] 
                    % \item[]
                \end{algorithmic}
                \underline{$\OKey(j)$}
                \begin{algorithmic}[1]
                    \State For $(\ii,b)\in[\klen]\times\{0,1\}$: $J[j,\ii,b]\sample \Zp^{(\nRAd)\times (\nRB)}$
                    \State ${\color{\confcolorr}
                    \hXt[j] = \SumiDT{1}{\ulen}\Jt[j,i,{\msgii[j]{i}}], 
                    \hZt[j] = \SumiDT{\ulen+1}{2\ulen}\Jt[j,i,{\msgii[j]{i}}]}$   
                    % \item[]\hspace{0.35cm}$\hX[j,\ii,b],\hZ[j,\ii,b]\sample \Zp^{(\nRAd)\times (\nRB)}$
                    \State $\X[j]=\hX[j],\quad\Z[j]=\hZ[j]$
                    % \item[]%\lfbox[rounded]{\lfbox[dotted]{\lfbox[background-color=lightgray!60!white,border-color=black]{
                    %     \parbox{0.75\textwidth}{
                    %         Set $\hX[j,1,0]=\hX[j,1,0]+\XRF[j]{0}{\epsilon}\Bpt$\\
                    %         Set $\hX[j,1,1]=\hX[j,1,1]+\XRF[j]{0}{\epsilon}\Bpt$\\
                    %         Set $\hZ[j,1,0]=\hZ[j,1,0]+\ZRF[j]{0}{\epsilon}\Bpt$\\
                    %         Set $\hZ[j,1,1]=\hZ[j,1,1]+\ZRF[j]{0}{\epsilon}\Bpt$\\
                    %     }
                    % \item[]//Implicit: $\X[j,i+1,b]=\hX[j,i+1,b]+\Mu^{-\top}\Mdt\Bzst$
                    % \item[]//Implicit: $\Z[j,i+1,b]=\hZ[j,i+1,b]+\Mu^{-\top}\Mdt\Bzst$
                    % \item[]//for all $j\in \QChl$ 
                    
                    \State $\sky_j=(\left(\Tw{\X[j,\ii,b]\B},\Tw{\Z[j,\ii,b]\B}\right)_{\substack{{\ii\in[L]}\\{b\in\{0,1\}}}})$
                    \State $\QSk=\QSk\sqcup \{(j,\sky_j)\}$
                    \State Return $(j,{\color{\confcolor}\Tw{\X[j]\B}})$
                    % \item[]
                    % \item[]
                \end{algorithmic}
            \underline{$\OCor(j)$}
                \begin{algorithmic}[1]
                    \State $\QCr=\QCr\sqcup \{(j,\sky_j)\}$ from $\QSk$
                    \State Return $\sky_j$
                    % \item[]
                \end{algorithmic}    
                \end{minipage}
            \vline \hspace{1pt}
            \begin{minipage}{0.75\textwidth}
                \underline{$\OTag({\Sndrr},\kwd,{\Rcvrr})$}   
                \begin{algorithmic}[1]
                    \State $\QTg=\QTg\sqcup \{({\Sndrr},\kwd,{\Rcvrr})\}$
                    %\State $\r\sample\Zp^{\nCB}$
                    \State $\Tg=(\Tw{\tg[{0}]},\Tw{\tg[{1}]})$ where {$\tg[{0}]\sample \Zp^{\nRB}$, $\tg[{1}]\sample \Zp^{\nRAd}$.}   
                    \State Return $\Tg$ 
                    \item[]
                \end{algorithmic}
                \underline{$\OChal({\Sndr}, \kw, {\RSet})$} \hspace{2cm}  
                \begin{algorithmic}[1]
                    \State Let $\RSet=\{\Rcvr_1,\ldots,\Rcvr_{\ell}\}$
                    % \State For all $\Usr\in \{\Sndr\}\sqcup \RSet:$
                    % \item[]\hspace{0.35cm}If $\Usr\notin\QSk:$ 
                    % \item[]\hspace{0.70cm}$\hX[\Usr,\ii,b],\hZ[\Usr,\ii,b]\sample \Zp^{(\nRAd)\times (\nRB)}\ \forall (\ii,b)\in[L]\times\{0,1\}$ 
                    % \item[]\hspace{0.70cm}$\X[j,1,b]=\hX[j,1,b]+\Mu^{-\top}\Mdt$ and $\Z[j,1,b]=\hZ[j,1,b]+\Mu^{-\top}\Mdt$ 
                    \State $\perm\sample \Perm{[\ell]}$ 
                    \State $\QChl=\QChl\cup \{({\Sndr},\kw,{\Rcvr_\ij})_{\ij\in[\ell]}\}$
                    \State For all $(\Sndr,\Rcvr_{\perm(\ij)})\in \{\Sndr\}\times \RSet:$ 
                    \item[]\hspace{0.35cm}Let $(\Sndr,\kw,\Rcvr_{\perm(\ij)})$ be the $c^{th}$ entry of $\QChl$
                    % \State $\Chl=((\On{\chl[\ij,0]},\On{\chl[\ij,1]})_{\ij\in[\ell]})$ s.t.
                    \item[]\hspace{0.35cm}$\chl[\ij,0]=\Fu[c]\chl[\ij,0]'$ for $\chl[\ij,0]'\sample\Zp^{\nRAd}$ 
                    %\item[]\hspace{0.3cm}$={\Sum{\ii}{[L]}(\hXt[\Sndr,\ii,\kw_{\ii}]+\hZt[\Rcvr_{\perm(\ij)},\ii,\kw_{\ii}])\chl[\ij,0]}$
                    \item[]\hspace{0.35cm}$\chl[\ij,1]={(\Sum{\ii}{[\klen]}\hNt[\ii,{\msgii[\kw]{\ii}}]+\hXt[\Sndr]+\hZt[\Rcvr_{\perm(\ij)}])\chl[\ij,0]}$
                    % \item[]\hspace{0.35cm}
                    % \hfill
                    ${+\Bp\RF[{\Sndr,\Rcvr_{\perm(\ij)}}]{\klen+2\ulen}{\kw\!\parallel\!\Sndr\!\parallel\!\Rcvr_{\perm(\ij)}}^{\top}\chl[\ij,0]}+\Bp\Fd[c]\chl[\ij,0]'$
                    % {$+((\Bzs\ZF[{\Sndr,\Rcvr}]{i}{\msgi[\kw]{i}}^{\top}+\Bos\OF[{\Sndr,\Rcvr}]{i}{\msgi[\kw]{i}}^{\top}))\chl[\ij,0]$}  % +(\Bzs\ZZF[\Rcvr_{\perm(\ij)}]{i}{\msgi[\kw]{i}}^{\top}+\Bos\ZOF[\Rcvr_{\perm(\ij)}]{i}{\msgi[\kw]{i}}^{\top})
                    % \item[] %for // {\color{red}$c^{th}$ new $(\Sndr,\msgi[\kw]{i},\Rcvr)$ pair where $\Rcvr\in \RSet$}
                    % \item[]$\chl[\ij,1]={\Sum{\ii}{[L]}(\hXt[\Sndr,\ii,\kw_{\ii}]+\hZt[\Rcvr_{\perm(\ij)},\ii,\kw_{\ii}])\chl[\ij,0]}$
                    % \item[]\hspace{0.35cm}{$+((\Bzs\XZF[\Sndr]{i+1}{\msgi[\kw]{i+1}}^{\top}+\Bos\XOF[\Sndr]{i}{\msgi[\kw]{i}}^{\top})+(\Bzs\ZZF[\Rcvr_
                    % {\perm(\ij)}]{i}{\msgi[\kw]{i}}^{\top}+\Bos\ZOF[\Rcvr_{\perm(\ij)}]{i}{\msgi[\kw]{i}}^{\top}))\chl[\ij,0]$}  
                    % \State If $\kw_{i+1}=1$: $\chl[\ij,1]=\chl[\ij,1]+{2\Bzs\Fd[c]\chl[\ij,0]}$
                    \State Return $\Chl=((\On{\chl[\ij,0]},\On{\chl[\ij,1]})_{\ij\in[\ell]})$
                    \item[]
                \end{algorithmic}        
                \underline{$\Finalize(\bee\in\{0,1\})$}
                \begin{algorithmic}[1]
                    \State Return $\bee\wedge\ (\QTg\cap\QChl=\emptyset){\color{\confcolor}\wedge\ (\forall  (u,\dontcare)\in\QCr, (u,\dontcare,\dontcare),(\dontcare,\dontcare,u)\notin {\color{\confcolor}\QTg\cup}\QChl)}$
                    % \item[]
                \end{algorithmic}      
            \end{minipage}
        % }
        }
        \caption{The reduction $\AB$ in the proof of \Cref{lem:H{5}-H{6}}.}
        \label{fig:AB-H{3}-H{4}}
    \end{figure*}

    % {\color{red}NOT SUFFICIENT}
    % Due to the natural restriction, $\QTg\cap \QChl=\emptyset$.
    % Thus, for all $(\Sndrr,\kwd,\Rcvrr)\in \QTg$, for all $(\Sndr,\kw,\Rcvr)\in \QChl$, $(\Sndrr,\kwd,\Rcvrr)$ \emph{does not match} $(\Sndr,\kw,\Rcvr)$. 
    % Therefore, $\RF{\Sndrr}{\kwd}{\Rcvrr}$ will be independent of $\RF{\Sndr}{\kw}{\Rcvr}$ for all $(\Sndrr,\kwd,\Rcvrr)\in \QTg$, for all $(\Sndr,\kw,\Rcvr)\in \QChl$.
    % Being defined as sum of two random functions, output of $\mathsf{RF}:\USet\times\KWd\times\USet\rightarrow \Zp^{\nRAd\times \nRCB}$ is already uniformly distributed.
    % Due to the restrictions on non-duplicated challenges, we see that both $\RF{\Sndrr}{\kwd}{\Rcvrr}$ and $\RF{\Sndr}{\kw}{\Rcvr}$ are used only once for all $(\Sndrr,\kwd,\Rcvrr)\in \QTg$, for all $(\Sndr,\kw,\Rcvr)\in \QChl$.
    % Therefore, we replace $\RF{\Sndrr}{\kwd}{\Rcvrr}$ and $\RF{\Sndr}{\kw}{\Rcvr}$ by random vectors of appropriate dimensions for all $(\Sndrr,\kwd,\Rcvrr)\in \QTg$, for all $(\Sndr,\kw,\Rcvr)\in \QChl$.
\end{proof}

\section{Experimental Results}
\label{sec:Comparison}
%!TEX spellcheck = en_US
%!TEX root = ../main.tex

% {\color{red}EXPERIMENTAL RESULTS}
We implement $\baeks$ (\Cref{fig:Construction-BAEKS}) in python $3.6.9$ using the Charm $0.50$ framework \cite{JCEng:AGMPRGR13}. 
We use MNT224 curve for pairings as it is a Type-III curve in PBC and existing works \cite{CCS:AgrCha17} also use this for their implementation. Interestingly, \cite{CCS:AgrCha17} introduced the $\bmatDH$ assumption and we prove our construction secure under $\lmatDH$ assumption which is hard under $\bmatDH$ assumption. {\color{\confcolor}We have run our experiments on Virtualbox VM running Ubuntu 18.04 where the VM is allocated one processor and 4GB RAM (fixed by the Virtualbox Setting).
}
% \Cref{fig:Group-Operations} lists the average time taken by different operations on MNT224. 

% \begin{figure}
%     \begin{tabular}{|c|c|c|}
%         \hline
%         Groups & Multiplication & Exponentiation \\
%         \hline
%     \end{tabular}
%     \caption{Average time (in miliseconds) taken by various operations on MNT224 curve.}
%     \label{fig:Group-Operations}
% \end{figure}

\Cref{fig:Count-Operations} then lists the number of various operations in $\Go$, $\Gt$ and $\GT$ for $\Setup$, $\Kgen$, $\SEnc$, $\Tgen$ and $\Test$.
Note that, $\Setup$ is run only once by a trusted third party at the starting of the system, 
$\Kgen$ is run by each user independently only once when they join. 
$\SEnc$ and $\Tgen$ are respectively ciphertext generation and trapdoor generation algorithms which will be run by the users over and over.
$\Test$ is another algorithm, that is run by the cloud server many times. 
It is only natural to aim for making $\SEnc$, $\Tgen$ and $\Test$ simple and efficient.
Observe that, all the computations does not involve $\ulen$ and only $\Setup$ and $\SEnc$ computation varies with $\klen=\poly$ linearly. 
As $\emm$ denotes security level (predecided), all other algorithms are extremely efficient.
In our $\baeks$, $\Test$ executes ${2\ell}$-many bilinear pairings to test a \emph{match} between a ciphertext and a trapdoor, in the worst case.
Setting $\ell=1$,we get a $\paeks$, where $\Test$ executes only $2$ bilinear pairings to test a \emph{match} between a ciphertext and a trapdoor.

% |A|= (m x n), |B| = (n x p)
% AB computation: #Exp  = mnp
% AB computation: #Mult = m(n-1)p

% g_1^A (2k x k) x 1
% g_1^B (3k x k) x 1
% K^T A (3k x 2k) x (2k x k) x 2\alpha 
% g_2^A (2k x k) x 1
% g_2^B (3k x k) x 1
% K B   (2k x 3k) x (3k x k) x 2\alpha 
% U^T A (3k x 2k) x (2k x k)
% V B   (2k x 3k) x (3k x k)
% U B   (2k x 3k) x (3k x k)
% V^T A (3k x 2k) x (2k x k)
% K^T A + U^T A + V^T A   : 3k x k x (\alpha+2) mult
%(K^T A + U^T A + V^T A) s: (3k x k) x (k x 1)
% A s                     : (2k x k) x (k x 1)
% K B + U B + V B         : 2k x k x (\alpha+2) mult
%(K B + U B + V B) r      : (2k x k) x (k x 1)
% B r                     : (3k x k) x (k x 1)
% c_0^T k_1               : (1 x 2k) x (2k x 1)
% c_1^T k_2               : (1 x 3k) x (3k x 1)

{\setlength{\tabcolsep}{1pt}
\begin{figure*}
    \centering
    \scriptsize
    \begin{tabular}{|c|c|c|c|c|c|c|}
        \hline
                & \multicolumn{2}{|c|}{$\Go$} & \multicolumn{2}{|c|}{$\Gt$} & \multicolumn{2}{|c|}{$\GT$} \\ \cline{2-7}
                & Mul & Exp & Mul & Exp & Mul & Pairing \\ \hline  
        $\Setup$& $6\klen\emm^2(2\emm-1)$ & $(5+12\klen\emm)\emm^2$ &$4\klen\emm^2(3\emm-1)$ & $(5+12\klen\emm)\emm^2$ & $\mhyf$ & $\mhyf$ \\ \hline  
        $\Kgen$ & $6\emm^2(2\emm-1)$ & $12\emm^3$ & $4\emm^2(3\emm-1)$ & $12\emm^3$ & $\mhyf$ & $\mhyf$ \\ \hline  
        $\SEnc$ & $(3\klen\emm+11\emm-5)\emm\ell$ & $5\emm^2\ell$ & $\mhyf$ & $\mhyf$ & $\mhyf$ & $\mhyf$ \\ \hline  
        $\Tgen$ & $\mhyf$ & $\mhyf$ & $(2\klen\emm+9\emm-5)\emm$ & $5\emm^2$ & $\mhyf$ & $\mhyf$ \\ \hline  
        $\Test$ &  $\mhyf$ & $\mhyf$ & $\mhyf$ & $\mhyf$ & $(5\emm-2)\ell$ & $5\emm\ell$ \\ \hline  
    \end{tabular}
    \caption{Number of various operations in $\Go$, $\Gt$ and $\GT$ for $\Setup$, $\Kgen$, $\SEnc$, $\Tgen$ and $\Test$ of $\baeks$. Here, $\emm$ denotes the security level ($\emm>1$), $|\KWd|=2^{\klen},|\USet|=2^{\ulen}$ and receiver set size is upper-bounded by $\ell$. \\To see computation cost of our PAEKS, one can set $\ell=1$.}
    \label{fig:Count-Operations}
\end{figure*}
}

Note that, $\lmatDH[k]$ for security level $k=1$ is an easy problem, and therefore we consider $k=2$ in all our experiments. 
In our experiments, we consider $\Users$ to be a list of numbers where we denote an arbitrary user as $i\in\Users=\{1,2,\ldots,100\}$. 
Note that, each user runs $\Kgen$ on their own, independently than the rest of the users. 
Therefore, in the experiment it is reasonable to fix a list of users $\Users$.  
We also consider different values of $\ell$ i.e. different sizes of receiver sets such as $\{1, 5, 10, 50, 100\}$ %,200,500,1000,10000\}$.
Note that, for $\ell=1$, BAEKS becomes PAEKS. 
We also have considered different $\ulen$ from $\{128,512\}$ and different $\klen$ from $\{128,512\}$.
Therefore, as a subproduct of this work, we also achieve a tightly secure PAEKS in the multi-users, multi-challenge settings with adaptive corruptions.

We only present a small description of our performance in \Cref{fig:Experiment-Time}.
The PAEKS is already extremely comparably efficient with existing works and provide tight security. 
Our simple proof-of-concept BAEKS code running on virtualbox shows practicality of our protocol. Making tournament-wise parallel multiplication of group elements running on multiple cores will certainly make our protocol much more efficient.

% We also consider keyword lengths $L\in\{128,256\}$ in our experiments.
% Thus, \Cref{fig:Setup}, \Cref{fig:Kgen}, \Cref{fig:SrchEnc}, \Cref{fig:Tgen} and \Cref{fig:Test}.

{
\setlength{\tabcolsep}{1.0pt}
\begin{figure*}
    \hspace{-0.5cm}
    % \centering
    \scriptsize
    \begin{tabular}{|c|c|c|c|c|c|c|c|c|c|c|c|c|c|c|c|c|c|c|c|c|c|c|c|c|c|c|c|}
        \hline 
            &        &\multicolumn{10}{|c|}{$\klen=128$}   & \multicolumn{10}{|c|}{$\klen=512$}\\ \cline{3-22}
            &        &\multicolumn{ 5}{|c|}{$\ulen=128$}   & \multicolumn{5}{|c|}{$\ulen=512$}&\multicolumn{ 5}{|c|}{$\ulen=128$}   & \multicolumn{5}{|c|}{$\ulen=512$}\\ \cline{3-22}   
            &        & $S$ & $K$   & $S$ & $T$ & $T$ & $S$ & $K$ & $S$ & $T$ & $T$ & $S$ & $K$ & $S$ & $T$ & $T$ & $S$ & $K$ & $S$ & $T$ & $T$\\            
            &        & $E$ & $E$   & $R$ & $R$ & $E$ & $E$ & $E$   & $R$ & $R$ & $E$ & $E$ & $E$   & $R$ & $R$ & $E$ & $E$ & $E$   & $R$ & $R$ & $E$ \\            
            &        & $T$ & $Y$   & $C$ & $A$ & $S$ & $T$ & $Y$   & $C$ & $A$ & $S$ & $T$ & $Y$   & $C$ & $A$ & $S$ & $T$ & $Y$   & $C$ & $A$ & $S$ \\            
            &        & $U$ & $G$   & $H$ & $P$ & $T$ & $U$ & $G$   & $H$ & $P$ & $T$ & $U$ & $G$   & $H$ & $P$ & $T$ & $U$ & $G$   & $H$ & $P$ & $T$ \\            
            &        & $P$ & $E$   & $E$ & $G$ & & $P$ & $E$   & $E$ & $G$ & & $P$ & $E$   & $E$ & $G$ &  & $P$ & $E$   & $E$ & $G$ & \\            
            &        & & $N$   & $N$ & $E$ & &      & $N$   & $N$ & $E$ &  &      & $N$   & $N$ & $E$ &  &      & $N$   & $N$ & $E$ & \\             
            &        & &  & $C$ & $N$ & & &   & $C$ & $N$ & & &   & $C$ & $N$ & & &   & $C$ & $N$ & \\ \hline
            \multirow{5}{*}{$\ell$} & $1$    & $80.38$  & $0.77$      & $0.14$ & $0.02$ & $0.05$ &$82.41$& $1.11$  & $0.15$ & $0.026$  & $0.05$& $307.34$  & $0.72$      & $0.21$ & $0.046$ & $0.05$ &$82.41$& $1.11$  & $0.15$ & $0.026$  & $0.05$\\
            & $5$& 79.84 & 0.76 & 0.14 & 0.08 & 0.25 & 81.67 & 1.16 & 0.16 & 0.09 & 0.26 & 308.58 & 0.71 & 0.20 & 0.11 & 0.23 & 308.67 & 1.07 & 0.21 & 0.10 & 0.24\\
            & $10$& 79.56 & 0.76 & 0.14 & 0.16 & 0.48 & 87.84 & 1.13 & 0.14 & 0.18 & 0.36 & 289.22 & 0.75 & 0.21 & 0.18 & 0.33 & 307.25 & 1.05 & 0.20 & 0.19 & 0.46\\
            & $50$& 80.58 & 0.75 & 0.14 & 0.82 & 2.44 & 81.48 & 1.14 & 0.14 & 0.81 & 2.48 & 318.98 & 0.73 & 0.20 & 0.81 & 2.38 & 307.40 & 1.07 & 0.21 & 0.82 & 2.34 \\
            & $100$   & $47.89$  & $0.43$      & $0.08$  & $1.08$ & $2.46$& $47.75$  & $0.63$      & $0.08$  & $1.09$  & $2.45$ & $237.31$  & $0.43$      & $0.12$  & $1.09$ & $2.47$ & $207.47$  & $0.63$      & $0.12$  & $1.1$  & $2.43$ \\
            \hline
    \end{tabular}
    \caption{Average time (in seconds) taken by various algorithms of $\baeks$ (\Cref{sec:BaEKS-Cons}) for receiver set of size $\ell$. For $\ell=1$, we present performance of our PAEKS.}
    \label{fig:Experiment-Time}
\end{figure*}
}

% In \Cref{fig:Experiment-Time}, we see that $\Kgen$ takes a significant amount of time. We here note that, each user will run $\Kgen$ only one, at the time of their joining.
% Functions such as $\SEnc$, $\Tgen$ and $\Test$ are quite efficient considering the receiver set size.
% We further note that, this implementation is only a proof of concept and one can further optimize the protocol by coding it in C/C++.
% We believe our construction can be deployed in the practical setting.

\section{Conclusion}\label{sec:Conclude}
This paper revamps existing BAEKS security models of \cite{ACISP:Mukherjee23,IET:Emura23} to consider ciphertext and trapdoor security from the lens of sender, receiver, and keyword privacy.
We further propose a consistency definition for BAEKS.
Following this, we propose a new \emph{statistically consistent} construction of BAEKS which we prove to be secure in the standard model. 
More precisely, our novel BAEKS construction achieves adaptive $\fullcpa$ security (in terms of both ciphertext and trapdoor) under the standard assumption (lateral) Matrix Diffie-Hellman $(\lmatDH)$.
% Interestingly, our construction still achieves asymptotic efficiency similar to that of \cite{ACISP:LHYSTH21}.
In terms of the ciphertext and key size, our scheme is a little less efficient than \cite{ACISP:Mukherjee23} but we achieve tight $\fullcpa$ security and can withstand adversary corrupting a few users. 
For future work, one might aim for a new construction that achieves more robust security in the presence of malicious adversaries making their own key pairs.

\newpage

\bibliographystyle{./styl_files/splncs04}
\bibliography{./cryptobib/abbrev3,./cryptobib/crypto,./references}

% \newpage
% \appendices
\appendix

% \section{Comments on Consistency of Emura's PAEKS}\label{sec:Emura-Correctness}
% \input{tex_files/0D_Emura}

\section{Comments on Security Proof of Ling et al.'s PEKS}\label{sec:Ling-Security}
%!TEX spellcheck = en_US
%!TEX root = ../main.tex

Ling \etal \cite{PKC:LZCHQ24} proposed two multi-user PEKS constructions. We informally recall important parts of their constructions next.

\begin{enumerate}
    \item Construction-I: 
    \begin{description}
        \item[Params.]\ $\pp=(\mathcal{G},g_1\in G_N,h_1\in H_N)$ where $\mathbb{G}= (N = p_1p_2, G_N , H_N , G_T , e)$ where $order(g_1)=order(h_1)=p_1$
        \item[Key-Pair.]\ $\pk_j=(g_1^x,g_1^y,e(g_1,h_1)^{\alpha})$, $\sk_j=(x,y,\alpha)\sample \mathbb{Z}_N^3$.
        \item[Ciphertext.]\ \ \ $U=g_1^r$, $V=g_1^{(x\kw+y)r}$, $W=\delta\cdot e(g_1,h_1)^{\alpha r}$, $\delta$ for $r\sample\mathbb{Z}_N$, $\delta\sample G_T$.
        \item[Trapdoor.]\ \ $A=h_1^s$, $B=h_1^{\alpha+s(x\kwd+y)}$ for $s\sample \mathbb{Z}_N$. 
        \item[Test.]\ $\frac{e(U,B)}{e(V,A)}\cdot \delta \iseq W$.
    \end{description}
    \item Construction-II: 
    \begin{description}
        \item[Params.]\ $\pp=(\mathcal{G},g_1\in G_N,h_1\in H_N)$ where $\mathbb{G}= (N = p_1p_2, G_N , H_N , G_T , e)$ where $order(g_1)=order(h_1)=p_1$
        \item[Key-Pair.]\ $\pk_j=(g_1^x,e(g_1,h_1)^{\alpha})$, $\sk_j=(x,\alpha)\sample \mathbb{Z}_N^3$.
        \item[Ciphertext.]\ \ \ $V=g_1^{(x+\kw)r}$, $W=\delta\cdot e(g_1,h_1)^{\alpha r}$, $\delta$ for $r\sample\mathbb{Z}_N$, $\delta\sample G_T$.
        \item[Trapdoor.]\ \ $A=h_1^{\frac{\alpha}{x+\kwd}}$. 
        \item[Test.]\ $e(V,A)\cdot \delta\iseq W$. 
    \end{description}
\end{enumerate}
Ling \etal \cite{PKC:LZCHQ24} claimed both these schemes are PEKS in the multi-user, multi-ciphertext setting and have provided two proofs.
Both the proofs define three types of ciphertexts -- $(i)$ type-0, $(ii)$ type-1 and $(iii)$ type-2. 
Both the proofs are three steps where,
\begin{itemize}
    \item $\game{0}\approx\game{1}$: First, they replace $g_1^r$ (in type-0 ciphertext) with $g_1^rg_2^{\hat{r}}$ (and make the ciphertexts type-1) by means of subgroup decision assumption.
    \item $\game{1}\approx\game{2}$: Then, they use $DDH$ on $G_2$ to inject randomness in $V$ (and make the ciphertexts type-2). Random self-reducibility of $DDH$ allows them to modify all challenge $V$. 
    \item $\game{2}$ is information-theoretic secure: Finally, they argue that type-2 ciphertext $V=g_1^{(x\kw+y)r}g_2^{\hat{z}}$ in the $G_2$ space has sufficient entropy to hide all the hashes involving different keywords $\kw$. 
\end{itemize}

\paragraph{Our Observations.}
Both these constructions use well-known structures Boneh-Boyen hash where the first Construction uses the affine hash, whereas the second construction uses exponent inversion-based hash \cite{C:BonBoy04}. To the best of our knowledge, none of these hashes are known to be tightly secure for multi-challenge setting. A critical analysis of the proof shows that, the final step where \cite{PKC:LZCHQ24} has claimed to have sufficient entropy does not really hold. To see this, let us assume a simpler situation where an adversary has received a challenge ciphertext on $\kw_{\bee}\in\{\kw_0,\kw_1\}$. Consider the adversary makes a trapdoor query on $\kwd\neq \kw$. Since the trapdoor responses are normal (without any $G_2$) component, pairing $A$ (of trapdoor) with $V$ removes entropy due to $G_2$. One can easily check the effect of the following two pairings $e(V,A)$ where $V$ is type-0 and $e(V,A)$ when $V$ is a type-2 ciphertext. Therefore, $\game{2}$ is not information theoretically secure. Therefore, the proofs are incomplete.

Moreover, \cite{PKC:LZCHQ24} did not argue consistency which is a definitive characteristic of PEKS.

\section{Differences in Security Notions} \label{sec:Different-Security}

\subsection{Unforgeability of Ciphertexts and Trapdoors.}\label{sec:Different-Security_Integrity}
% \fbox{
    \setlength{\tabcolsep}{10pt}
    \begin{figure*}
        \scriptsize
        \centering
        \begin{tabular}{|l|}
            \hline \\
    {$\Adv{\AA,\baeks}{\ctcma}=
\left| 
    \Pr 
    \left[
        \begin{aligned}
            &\Test(\Trp, \chCt)=1\\
            &\wedge\ (\chSndr,\dontcare,\dontcare)\notin \Qsk\\ 
            &\wedge\ (\pk_{\chSndr},\chkw,\pk_{\chRcvr})\notin \Qct\\    
        \end{aligned}
        \left|
%        \Test(\Trp, \chCt)=1:
        \begin{aligned}
            & \pp\leftarrow\Setup(1^\secpp,\ell), \Qsk,\Qct \leftarrow \emptyset\\
            & (\pk_{\chSndr},\chkw,\pk_{\chRSet}, \chCt)\leftarrow\AA^{\Oct(\cdot,\cdot,\cdot),\Otrp(\cdot,\cdot,\cdot),\Osk(\cdot),\Opk(\cdot)}(\pp) \\
            & \exists \chRcvr\in \chRSet \text{s.t.} \Trp\leftarrow\Tgen(\pk_{\chSndr},\chkw,\sk_{\chRcvr})\\
        \end{aligned}
        \right.
    \right]
  \right|$ 
    }\\ \ \\ 
    {
    $\Adv{\AA,\baeks}{\trapcma}=
    \left| 
        \Pr 
        \left[
            \begin{aligned}
                &\Test(\chTrp, \Ct)=1\\
                &\wedge\ (\chRcvrr,\dontcare,\dontcare)\notin \Qsk\\     
            \end{aligned}
            \left|
            \begin{aligned}
                & \pp\leftarrow\Setup(1^\secpp), \Qsk \leftarrow \emptyset\\
                & (\pk_{\chSndrr},\chkwd,\pk_{\chRcvrr}, \chTrp)\leftarrow\AA^{\Oct(\cdot,\cdot,\cdot),\Otrp(\cdot,\cdot,\cdot),\Osk(\cdot),\Opk(\cdot)}(\pp) \\
                & \Ct\leftarrow\SEnc(\sk_{\chSndrr},\chkwd,\pk_{\chRcvrr})\\
            \end{aligned}
            \right.
        \right]
      \right|
    $ 
    }  \\ \ \\
    \hline          
\end{tabular}
\caption{Unforgeability of Ciphertexts and Trapdoors}
    \label{fig:CMA}
\end{figure*}
% }

A BAEKS scheme $\baeks$ satisfies ciphertext unforgeability $\ctcma$ if for all $\ppt$ adversary $\AA$, the advantage of $\AA$ is $\Adv{\AA,\baeks}{\ctcma}=\neglgbl$ where $\AA$'s advantage is defined in \Cref{fig:CMA} provided $\AA$ is 
% {\small\[\Adv{\AA,\baeks}{\ctcma}=
% \left| 
%     \Pr 
%     \left[
%         \begin{aligned}
%             &\Test(\Trp, \chCt)=1\\
%             &\wedge\ (\chSndr,\dontcare,\dontcare)\notin \Qsk\\ 
%             &\wedge\ (\pk_{\chSndr},\chkw,\pk_{\chRcvr})\notin \Qct\\    
%         \end{aligned}
%         \left|
% %        \Test(\Trp, \chCt)=1:
%         \begin{aligned}
%             & \pp\leftarrow\Setup(1^\secpp,\ell), \Qsk,\Qct \leftarrow \emptyset\\
%             & (\pk_{\chSndr},\chkw,\pk_{\chRSet}, \chCt)\leftarrow\AA^{\Oct(\cdot,\cdot,\cdot),\Otrp(\cdot,\cdot,\cdot),\Osk(\cdot)}(\pp) \\
%             & \exists \chRcvr\in \chRSet \text{s.t.} \Trp\leftarrow\Tgen(\pk_{\chSndr},\chkw,\sk_{\chRcvr})\\
%         \end{aligned}
%         \right.
%     \right]
%   \right|
% \] 
% }
given access to following oracles: 
\begin{itemize}
    \item $\Oct$ is an oracle that on input $(\pk_{\Sndr},\kw,\pk_{\RSet})$ outputs $\Enc(\pp,\sk_{\Sndr},\kw,\pk_{\RSet})$. It maintains a list $\Qct$ of $(\pk_{\Sndr},\kw,\pk_{\Rcvr})$ for all $\Rcvr\in\RSet$ queried.

    \item $\Otrp$ is an oracle that on input $(\pk_{\Sndrr},\kwd,\pk_{\Rcvrr})$ outputs $\Tgen(\pp,\pk_{\Sndrr},\kwd,\sk_{\Rcvrr})$.
    
    \item $\Osk$ is an oracle that on input $j$ outputs $(\sk_j,\pk_j)\leftarrow\Kgen(\pp,j)$ after storing $(j,\sk_{j},\pk_{j})$ in $\Qsk$. %and $(j,\pk_{j})$ in $\Qpk$.

    \item $\Opk$ is an oracle that on input $j$ outputs $\pk_j$ where $(\sk_j,\pk_j)\leftarrow\Kgen(\pp,j)$.

%    \item $\Opk$ is an oracle that on input $j$ outputs $\pk_j$ where $(\sk_j,\pk_j)\leftarrow\Kgen(\pp,j)$ after storing $(j,\sk_{j},\pk_{j})$ in $\Qsk$ and $(j,\pk_{j})$ in $\Qpk$.
\end{itemize}

A BAEKS scheme $\baeks$ satisfies trapdoor unforgeability $\trapcma$ if for all $\ppt$ adversary $\AA$, the advantage of $\AA$ is $\Adv{\AA,\baeks}{\trapcma}=\neglgbl$ which is defined in \Cref{fig:CMA} provided $\AA$ is
% {\small\[\Adv{\AA,\baeks}{\trapcma}=
% \left| 
%     \Pr 
%     \left[
%         \begin{aligned}
%             &\Test(\chTrp, \Ct)=1\\
%             &\wedge\ (\chRcvrr,\dontcare,\dontcare)\notin \Qsk\\     
%         \end{aligned}
%         \left|
%         \begin{aligned}
%             & \pp\leftarrow\Setup(1^\secpp), \Qsk \leftarrow \emptyset\\
%             & (\pk_{\chSndrr},\chkwd,\pk_{\chRcvrr}, \chTrp)\leftarrow\AA^{\Oct(\cdot,\cdot,\cdot),\Otrp(\cdot,\cdot,\cdot),\Osk(\cdot)}(\pp) \\
%             & \Ct\leftarrow\SEnc(\sk_{\chSndrr},\chkwd,\pk_{\chRcvrr})\\
%         \end{aligned}
%         \right.
%     \right]
%   \right|
% \] 
% }
given access to following oracles: 
\begin{itemize}
    \item $\Oct$ is an oracle that on input $(\pk_{\Sndr},\kw,\pk_{\RSet})$ outputs $\Enc(\pp,\sk_{\Sndr},\kw,\pk_{\RSet})$. It maintains a list $\Qct$ of $(\pk_{\Sndr},\kw,\pk_{\Rcvr})$ for all $\Rcvr\in\RSet$ queried.

    \item $\Otrp$ is an oracle that on input $(\pk_{\Sndrr},\kwd,\pk_{\Rcvrr})$ outputs $\Tgen(\pp,\pk_{\Sndrr},\kwd,\sk_{\Rcvrr})$.
    
    \item $\Osk$ is an oracle that on input $j$ outputs $(\sk_j,\pk_j)\leftarrow\Kgen(\pp,j)$ after storing $(j,\sk_{j},\pk_{j})$ in $\Qsk$. %and $(j,\pk_{j})$ in $\Qpk$.

    \item $\Opk$ is an oracle that on input $j$ outputs $\pk_j$ where $(\sk_j,\pk_j)\leftarrow\Kgen(\pp,j)$.

    %\item $\Opk$ is an oracle that on input $j$ outputs $\pk_j$ where $(\sk_j,\pk_j)\leftarrow\Kgen(\pp,j)$ after storing $(j,\sk_{j},\pk_{j})$ in $\Qsk$ and $(j,\pk_{j})$ in $\Qpk$.
\end{itemize} 

Now we argue that if a BAEKS scheme is $\fullcpa$ secure, then it is both $\ctcma$ and $\trapcma$ secure. We argue this in two steps. First we show that, given a $\ppt$ adversary that breaks $\ctcma$ of a BAEKS construction, we can construct another $\ppt$ adversary that breaks $\fullcpa$ of the BAEKS construction. Consider $\AC$ is the $\fullcpa$ challenger and $\AA$ the $\ctcma$ adversary.

\begin{itemize}
    \item $\SetuP$: $\AC$ gives out $\pp$ that $\AB$ forwards to $\AA$.
    \item $\QPhase{}$: When $\AA$ makes oracle queries on $\Oct$ or $\Otrp$ or $\Osk$, $\AB$ forwards it to the corresponding oracles modeled by $\AC$. $\AB$ forwards the responses to $\AA$. Note that, $\AB$ forwards $\AA$'s query on $\Oct$ or $\Otrp$ by putting them on both the sides of the challenge tuple. $\AB$ maintains a list $\Qct$ which memorizes $(\pk_{\Sndr},\kw,\pk_{\Rcvr})$ for all $\Rcvr\in\RSet$ when $\AA$ makes $\Oct$ queries on $(\pk_{\Sndr},\kw,\pk_{\RSet})$.   %$\AA$ and $\AB$ respect the natural restrictions. 
    \item $\Forge$: At some point of time, $\AA$ produces a forgery $(\pk_{\chSndr},\chkw,\pk_{\chRSet}, \chCt)$ s.t. $\exists \chRcvr\in \chRSet$ for which $\Test(\Tgen(\pk_{\chSndr},\chkw,\sk_{\chRcvr}), \chCt)=1$ subject to the restriction $(\pk_{\chSndr},\chkw,\pk_{\chRcvr})$ is not in $\Qct$ and no secret key query has been made on $\chSndr$. 
    \begin{enumerate}
        \item $\AB$ makes a $\Otrp$ query on $((\pk_{\chSndr},\chkw,\pk_{\chRcvr}),(\pk_{\Sndrr},\chkw,\pk_{\chRcvr}))$ where $\Sndrr\neq \chSndr$ s.t. $(\pk_{\Sndrr},\chkw,\pk_{\chRcvr})$ is not in $\Qct$.
        \item $\AC$ responds with $\chTrp_{\bee}$ for its choice of $\bee\in\{L,R\}$ (symbolic of "Left" and "Right").
        \item $\AB$ runs $\Test(\chTrp,\chCt)$. If it is $1$, $\AB$ outputs its guess $\bee'=L$; otherwise outputs $\bee'=R$.
    \end{enumerate}
\end{itemize}
As $(\pk_{\chSndr},\chkw,\pk_{\chRcvr}), (\pk_{\Sndrr},\chkw,\pk_{\chRcvr})\notin \Qct$, $\AB$ is allowed to make the challenge $\Otrp$ query. If $\chTrp_{\bee}$ encoded $(\pk_{\chSndr},\chkw,\pk_{\chRcvr})$, the correctness of BAEKS ensures $\Test$ outputs $1$ and $\AB$ correctly guesses $L$. If $\chTrp_{\bee}$ encoded $(\pk_{\Sndrr},\chkw,\pk_{\chRcvr})$, the consistency of BAEKS will ensure $\Test$ outputs $0$ (and $\AB$ correctly guesses $R$) except with negligible probability.

It is easy to see that $\fullcpa$ security of a BAEKS scheme would also ensure $\trapcma$ security of the BAEKS scheme following a reduction very similar to the above.

% \section{Correctness and Consistency}\label{sec:Correct-Consistent}
% \input{tex_files/0A_Correct_Consistent}

% \section{Security Proof of Core-Lemma}\label{sec:Core-Security}
% % \input{tex_files/0C_Core_Lemma_Games}
% \input{tex_files/0C_Core_Lemma_Lemmas}

% \section{Security Proof of Full Security of BAEKS}\label{sec:Full-Proofs}
% \input{tex_files/0B_Full_Security_Games}
% \input{tex_files/0B_Full_Security_Lemmas}

\end{document}